\tikzset{  
    thickline/.style={line width=2pt}, 
}  
\crefname{equation}{}{}
\newcommand{\eat}[1]{}
\newcommand{\floor}[1]{\left\lfloor #1 \right\rfloor}
\newcommand{\ceil}[1]{\left\lceil #1 \right\rceil}
\newcommand{\assigned}[1]{[ #1 ]}
\newcommand{\calX}{\mathcal{X}}
\newcommand{\calF}{\mathcal{F}}
\newcommand{\calV}{\mathcal{V}}
\newcommand{\calL}{\mathcal{L}}
\newcommand{\calU}{\mathcal{U}}
\newcommand{\calN}{\mathcal{N}}
\newcommand{\ww}{\boldsymbol{\widetilde{w}}}
\newcommand{\da}{\downarrow}
\newcommand{\boldu}{\boldsymbol{u}}
\newcommand{\boldv}{\boldsymbol{v}}
\newcommand{\boldw}{\boldsymbol{w}}
\newcommand{\boldt}{\boldsymbol{t}}
\newcommand{\boldo}{\boldsymbol{o}}
\newcommand{\boldd}{\boldsymbol{d}}
\newcommand{\bbone}{\mathbf{1}}
\newcommand{\norm}{f}
\newcommand{\minnorm}{\textsf{MinNorm}}
\newcommand{\LBO}{\textsf{LogBgt}}
\newcommand{\minnormintcov}{\textsf{MinNorm-IntCov}}
\newcommand{\LBOintcov}{\textsf{LogBgt-IntCov}}
\newcommand{\LBOtreecov}{\textsf{LogBgt-TreeCov}}
\newcommand{\minnormsetcov}{\textsf{MinNorm-SetCov}}
\newcommand{\LBOsetcov}{\textsf{LogBgt-SetCov}}
\newcommand{\minnormknapcov}{\textsf{MinNorm-KnapCov}}
\newcommand{\LBOknapcov}{\textsf{LogBgt-KnapCov}}
\newcommand{\minnormmatch}{\textsf{MinNorm-PerMat}}
\newcommand{\LBOmatch}{\textsf{LogBgt-PerMat}}
\newcommand{\minnormpath}{\textsf{MinNorm-Path}}
\newcommand{\LBOpath}{\textsf{LogBgt-Path}}
\newcommand{\minnormcut}{\textsf{MinNorm-Cut}}
\newcommand{\LBOcut}{\textsf{LogBgt-Cut}}
\newcommand{\hv}{\widehat{v}}
\newcommand{\myproblem}{\mathfrak{A}}
\newcommand{\Algo}{\mathsf{Alg}}
\newcommand{\topdash}[1]{\mbox{Top-$#1$}}
\newcommand{\topp}[1]{\textsc{Top}_{#1}}
\newcommand{\topl}[2]{\textsc{Top}_{#1}\left(#2\right)}
\newcommand{\ordd}[1]{\textsc{Ord}_{#1}}
\newcommand{\ordered}[2]{\textsc{Ord}_{#1}\left(#2\right)}
\newcommand{\trU}{\operatorname{\calU^{tr}}}
\newcommand{\trS}[1]{%
  \ifx\relax#1\relax
    S^{\operatorname{tr}} 
  \else
    S^{\operatorname{tr}}_{#1} 
  \fi
}
\newcommand{\trF}{\operatorname{\calF^{tr}}}
\newcommand{\treta}{\operatorname{\eta^{tr}}}
\newcommand{\intU}{\operatorname{\calU^{int}}}
\newcommand{\intS}[1]{%
  \ifx\relax#1\relax
    S^{\operatorname{int}} 
  \else
    S^{\operatorname{int}}_{#1} 
  \fi
}
\newcommand{\intD}{D^{\operatorname{int}}}
\newcommand{\intF}{\operatorname{\calF^{int}}}
\newcommand{\inteta}{\operatorname{\eta^{int}}}
\newcommand{\ldU}{\operatorname{\calU^{ld}}}
\newcommand{\ldS}[1]{%
  \ifx\relax#1\relax
    S^{\operatorname{ld}} 
  \else
    S^{\operatorname{ld}}_{#1} 
  \fi
}
\newcommand{\ldD}{D^{\operatorname{ld}}}
\newcommand{\ldeta}{\operatorname{\eta^{ld}}}
\newcommand{\lamD}{D^{\operatorname{lam}}}
\newcommand{\trM}{\operatorname{M^{tr}}}
\newcommand{\lamU}{\operatorname{\calU^{lam}}}
\newcommand{\lamS}[1]{%
  \ifx\relax#1\relax
    S^{\operatorname{lam}}
  \else
    S^{\operatorname{lam}}_{#1} 
  \fi
}
\newcommand{\lameta}{\operatorname{\eta^{lam}}}
\newcommand{\ldR}{R^{\operatorname{ld}}}
\newcommand{\lamR}{R^{\operatorname{lam}}}
\newcommand{\pathU}{\operatorname{\calU^{path}}}
\newcommand{\pathS}[1]{%
  \ifx\relax#1\relax
    S^{\operatorname{path}} 
  \else
    S^{\operatorname{path}}_{#1} 
  \fi
}
\newcommand{\pathF}{\operatorname{\calF^{path}}}
\newcommand{\patheta}{\operatorname{\eta^{path}}}
\newcommand{\pathG}{G^{\operatorname{path}}}
\newcommand{\pathV}{V^{\operatorname{path}}}
\newcommand{\pmU}{\operatorname{\calU^{pm}}}
\newcommand{\pmS}[1]{%
  \ifx\relax#1\relax
    S^{\operatorname{pm}} 
  \else
    S^{\operatorname{pm}}_{#1} 
  \fi
}
\newcommand{\pmF}{\operatorname{\calF^{pm}}}
\newcommand{\pmeta}{\operatorname{\eta^{pm}}}
\newcommand{\pmG}{G^{\operatorname{pm}}}
\newcommand{\cutU}{\operatorname{\calU^{cut}}}
\newcommand{\cutS}[1]{%
  \ifx\relax#1\relax
    S^{\operatorname{cut}} 
  \else
    S^{\operatorname{cut}}_{#1} 
  \fi
}
\newcommand{\cutF}{\operatorname{\calF^{cut}}}
\newcommand{\cuteta}{\operatorname{\eta^{cut}}}
\newcommand{\cutG}{G^{\operatorname{cut}}}
\newcommand{\cutV}{V^{\operatorname{cut}}}
\newcommand{\Des}{\operatorname{Des}}
\newcommand{\Ch}{\operatorname{Ch}}
\newcommand{\Anc}{\operatorname{Anc}}
\newcommand{\Leaf}{\operatorname{Leaf}}
\newcommand{\Lev}{\operatorname{Id}}
\newcommand{\Par}{\operatorname{Par}}
\newcommand{\E}{\mathbb{E}}
\newcommand{\Z}{\mathbb{Z}}
\newcommand{\R}{\mathbb{R}}
\newcommand{\Rpos}{\mathbb{R}_+}
\newcommand{\POS}{\textsc{POS}}
\newcommand{\poly}{\mathsf{poly}}
\newtheorem{theorem}{Theorem}[section]
\newtheorem{lemma}[theorem]{Lemma}
\newtheorem{claim}[theorem]{Claim}
\theoremstyle{definition}
\newtheorem{definition}{Definition}[section]
\newcommand{\jian}[1]{{\it \color{red}{Jian: #1}}}
\newcommand{\ckw}[1]{}
\newcommand{\zyr}[1]{{\it \color{blue}{zyr: #1}}}
\newenvironment{proofofthm}[1] 
  {\vspace{0.2cm}
  \par
  \noindent 
  {\em Proof of Theorem #1.}\mbox{}} 
  {\hfill$\square$} 
\newenvironment{proofoflem}[1] 
  {\vspace{0.2cm}
  \par
  \noindent 
  {\em Proof of Lemma #1.}\mbox{}} 
  {\hfill$\square$} 
\newcommand{\keywords}[1]{%
  \vspace{1em}%
  \noindent\textbf{Keywords:} #1
}
\title{New Results on a General Class of Minimum Norm Optimization Problems}
\author{
Kuowen Chen\thanks{Tsinghua University, \texttt{ckw22@mails.tsinghua.edu.cn}}
\and
Jian Li\thanks{Tsinghua University, \texttt{lapordge@gmail.com}}
\and
Yuval Rabani\thanks{The Hebrew University of Jerusalem, \texttt{yrabani@cs.huji.ac.il}}
\and
Yiran Zhang\thanks{Tsinghua University, \texttt{zhangyir22@mails.tsinghua.edu.cn}}
}
\date{}
\begin{document}

\maketitle
\begin{abstract}
We study the general norm optimization for combinatorial problems, 
initiated by Chakrabarty and Swamy (STOC 2019).
We propose a general formulation
that captures a large class of combinatorial structures: 
we are given a set $\mathcal{U}$ of $n$ weighted elements
and a family of {\em feasible} subsets $\mathcal{F}$.
Each subset $S\in \mathcal{F}$ is called a feasible solution/set of the problem.
We denote the {\em value vector} by $\boldsymbol{v}=\{\boldsymbol{v}_i\}_{i\in [n]}$,
where $\boldsymbol{v}_i\geq 0$ is the value of element $i$.
For any subset $S\subseteq \mathcal{U}$, we use  
$\boldsymbol{v}[S]$ to denote the $n$-dimensional vector 
$\{v_e\cdot \mathbf{1}[e\in S]\}_{e\in \mathcal{U}}$ (i.e., we zero out
all entries that are not in $S$).
Let $f: \mathbb{R}^n\rightarrow\mathbb{R}_+$ be a symmetric monotone norm function.
Our goal is to minimize the norm objective
$f(\boldsymbol{v}[S])$ 
over feasible subset $S\in \mathcal{F}$.
The problem significantly generalizes the corresponding min-sum and min-max problems.

We present a general equivalent reduction of the norm minimization problem to 
a multi-criteria optimization problem with logarithmic budget constraints, up to 
a constant approximation factor. Leveraging this reduction, we obtain
constant factor approximation algorithms for the norm minimization versions of
several covering problems,
such as interval cover, multi-dimensional knapsack cover, and logarithmic factor approximation
for set cover.
We also study the norm minimization versions for perfect matching, $s$-$t$ path and $s$-$t$ cut.
We show the natural linear programming relaxations for these problems 
have a large integrality gap. To complement the negative result, we show that, for perfect matching, it is possible to 
obtain a bi-criteria result: for any constant $\epsilon,\delta>0$, we can find in polynomial time a nearly perfect matching
(i.e., a matching that matches at least $1-\epsilon$ proportion of vertices)
and its cost is at most $(8+\delta)$ times of the optimum for perfect matching.
Moreover, we establish the existence of a polynomial-time $O(\log\log n)$-approximation algorithm for the norm minimization variant of the $s$-$t$ path problem. Specifically, our algorithm achieves an $\alpha$-approximation with a time complexity of $n^{O(\log\log n / \alpha)}$, where $9 \leq \alpha \leq \log\log n$.
\end{abstract}

\keywords{Approximation Algorithms, Minimum Norm Optimization, Linear Programming}

\section{Introduction}
\label{sec:intro}

In many optimization problems, a feasible solution typically induces a multi-dimensional value vector (e.g., by the subset of elements of the solution), 
and the objective of the optimization problem is to minimize either the total sum (i.e., $\ell_1$ norm) or the maximum (i.e., $\ell_\infty$ norm) of the vector entry.
For example, in the minimum perfect matching problem, the solution is a subset of edges and the induced value vector is the weight vector of the matching (i.e., each entry of the vector is the weight of edge if the edge is in the matching and 0 for a non-matching edge) and we would like to minimize the total sum. 
Many of such problems are fundamental in combinatorial optimization but require different algorithms for their min-sum and min-max variants
(and other possible variants).
Recently there have been a rise of interests in developing algorithms for more general objectives, such as $\ell_p$ norms \cite{azar2005convex,golovin2008all}, top-$\ell$ norms \cite{maalouly2022exact}, ordered norms \cite{byrka2018constant,chakrabarty2018interpolating} and more general norms
\cite{chakrabarty2019approximation,chakrabarty2019simpler,ibrahimpur2020approximation,deng2022generalized,abbasi2023parameterized,kesselheim2024supermodular}, as interpolation or generalization of min-sum and min-max objectives. The algorithmic study of such generalizations helps unify, interpolate and generalize classic objectives and algorithmic
techniques. 

The study of approximation algorithm for general norm minimization problems is initiated by Chakrabarty and Swamy \cite{chakrabarty2019approximation}. They studied two fundamental problems, load balancing and $k$-clustering, and provided constant factor approximation algorithm for these problems. For load balancing, the induced value vector is the vector of machine loads and for 
$k$-clustering the vector is the vector of service costs.
Subsequently, the norm minimization has been studied for a variety of other combinatorial problem
such as general machine scheduling problem \cite{deng2022generalized}, stochastic optimization problems \cite{ibrahimpur2020approximation}, 
online algorithms 
\cite{patton2023submodular},
parameterized algorithms \cite{abbasi2023parameterized} etc.
In this paper, we study the norm optimization problem for a general set of combinatorial problems.
In our problem, a feasible set is a subset of elements
and the multi-dimensional value vector is induced by the subset of elements of the solution.
Our problem is defined formally as follows:

\begin{definition} (The Norm Minimization Problem (\minnorm))
We are given a set $\calU=[n]$ of $n$ weighted elements
and a family of {\em feasible} subsets $\calF$.
Each subset $S\in \calF$ is called a feasible solution/set of the problem.
We denote the {\em value vector} by $\boldv =\{\boldv_i\}_{i\in [n]}$,
where $\boldv_i\geq 0$ is the value of element $i$.
We say a subset $S\subseteq \calU$ {\em feasible} if
$S\in \calF$.
For any subset $S\subseteq \calU$, we use  
$\boldv[S]$ to denote the $n$-dimensional vector 
$\{v_e\cdot \bbone[e\in S]\}_{e\in \calU}$ (i.e., we zero out
all entries that are not in $S$), and we call $\boldv[S]$
{\em the value vector induced by $S$}.
Let $\norm: \R^n\rightarrow\Rpos$ be a symmetric monotone norm function.
Given the norm function $\norm(\cdot)$, our goal is to find a feasible solution in $\calF$ such that the norm of the value vector induced by the solution is minimized, i.e., we aim to solve the following optimization problem
$$
\text{\minnorm:}\quad\qquad
\text{minimize } \quad f(\boldv\assigned{S})  \quad \quad 
\text{ subject to } \quad \quad  S\in \calF. \qquad\qquad
$$
\end{definition}

Note that the case $f(\boldv\assigned{S})= \sum_{e\in S} v_e$ is the most studied min-sum objective and we call the corresponding problem the {\em original optimization problem}. Other interesting norms include $\ell_p$ norms, \topdash{\ell} norms (the sum of top-$\ell$ entries), ordered norms (see its definition in Section~\ref{section:preliminary}).
Note that our general framework covers the $k$-clustering studied in \cite{chakrabarty2019approximation}:
in the $k$-clustering problem, the universe $\calU$
is the set of edges and each feasible solution in $\calF$ is a subset of edges that corresponds to a $k$-clustering.
The load balancing problem does not directly fit into our framework, since
one needs to first aggregate the processing times to machine loads, then apply the norm.

Before stating our results, we briefly mention some results that are either known or very simple to derive. 
\begin{enumerate}
    \item (Matroid) Suppose the feasible set $\calF$ is a matroid
    and a feasible solution is a basis of this matroid.
    In fact, the greedy solution (i.e., the optimal min-sum solution)
    is the optimal solution for any monotone symmetric norm.
    This is a folklore result and can be easily seen as follows:
    First, it is easy to establish
    the following observation, using the exchange property of matroid: 
    We use $\topp{\ell}(S)$ to denote the sum of largest $\ell$ elements of 
    $S$. For any $\ell\in \mathbb{Z}_{\geq 1}$ and 
    any basis $S\in \calF$,
    $\topp{\ell}(S_{\text{greedy}})\leq \topp{\ell}(S)$
    where $S_{\text{greedy}}$ is the basis obtained by the greedy algorithm.
    Then using the majorization lemma by Hardy, Littlewood and P\`olya
    (Lemma~\ref{lemma:majorization}),
    we can conclude $S_{\text{greedy}}$ is optimal for any monotone symmetric norm.
    \item (Vertex Cover)
    We first relax the problem to the following convex program:
    $$\text{min.}\,\,\,\, f(v_1x_1,\ldots, v_nx_n) \quad\text{s.t. }\,\,\,\
    x_i+x_j\geq 1 \text{ for any }(i,j)\in E.$$
    The objective is convex since $f$ is norm (in particular the triangle inequality of norm). Then, we solve the convex program
    and round all $x_i\geq 1/2$ to $1$ and others to $0$. It is easy to see this gives a 2-approximation (using the property $f(\alpha x)=\alpha f(x)$ for $\alpha\geq 0$).
    \item (Set Cover) The norm-minimization set cover problem 
    is a special case of the generalized load balancing problem introduced in
    \cite{deng2022generalized}. Here is the reduction:
    each element corresponds to a job and each subset to a machine;
    if element $i$ is in set $S_j$, the processing time $p_{ij}=1$,
    otherwise $p_{ij}=\infty$; the inner norm of each machine is the max
    norm (i.e., $\ell_\infty$) and the outer norm is $f(\cdot)$.
    Hence, this implies an $O(\log n)$-approximation for norm-minimization set cover problem using the general result in \cite{deng2022generalized}. The algorithm in \cite{deng2022generalized} is based on a fairly involved configuration LP. In Appendix~\ref{sec:set-cover}, we provide a much simpler randomized rounding algorithm that is also 
    an $O(\log n)$-approximation. Note this is optimal up to a constant factor given the approximation hardness of set cover \cite{feige1998threshold,dinur2014analytical}.
    \item ($\topp{\ell}$ and Ordered Norms)
    If the min-sum problem can be solved or approximated efficiently, one can also solve or approximate the corresponding $\topp{\ell}$ and ordered norm optimization problems. This mostly follows from known techniques in \cite{byrka2018constant,chakrabarty2019approximation,maalouly2022exact}. 
    For completeness, the general and formal statements are provided in Appendix~\ref{sec:ordered-norm}.
\end{enumerate}

\subsubsection*{Our Contributions}

Our technical contribution can be summarized as follows:
\begin{enumerate}
    \item (Theorem~\ref{thm:equivalence}) We present a general reduction of the norm minimization problem to a multi-criteria optimization problem with logarithmic budget constraints, up to a constant approximation factor.
    This immediately implies an
    $O(\alpha \log n)$-approximation for the \minnorm\ problem 
    if there is a poly-time $\alpha$-approximation for the 
    corresponding weight minimization problem (See Theorem~\ref{thm:logapprox}).
    
    \item 
    Leveraging the reduction in Theorem~\ref{thm:equivalence}, 
    we obtain constant factor approximation algorithms for the norm minimization versions of
    several covering problems,
    such as interval covering (Theorem~\ref{thm:inter-cover}), 
    multi-dimensional knapsack cover (Theorem~\ref{thm:knap-1} and Theorem~\ref{thm:knap-2}).
    These algorithms are based on rounding the natural linear programming relaxation 
    of the multi-criteria optimization problem, possibly with a careful enumeration
    of partial solutions.
    For set cover, we obtain a simple randomized approximation algorithm 
    with approximation factor $O(\log n)$ (Theorem~\ref{thm:setcover}), which is much simpler than the general algorithm in \cite{deng2022generalized}.

    

     


    \item 
    We also study the norm minimization versions for perfect matching, $s$-$t$ path and $s$-$t$ cut. We show the natural linear programming relaxations for these problems 
    have a large integrality gap (Theorem~\ref{thm:intgap-1} and Theorem~\ref{thm:intgap-2}). This indicates that
    it may be difficult to achieve constant approximation factors for these problems.
    
    \item 
    To complement the above negative result, we show that, for perfect matching, it is possible to obtain a bi-criteria approximation: for any constant $\epsilon>0$, we can find a nearly perfect matching that
    matches at least $1-\epsilon$ proportion of vertices
    and the norm of this solution is at most $(8+\delta)$ times of the optimum for perfect matching where $\delta$ is any positive real constant
    (Theorem~\ref{matching-i5}). 
    
    \item We present an approximate dynamic programming approach that yields a $\alpha$-approximation $n^{O(\log\log n/\alpha)}$-time algorithm for the min-norm $s$-$t$ path problem for $9\le \alpha\le \log\log n$ (\cref{thm:algo-path}), demonstrating an alternative technique for solving norm minimization problems beyond LP rounding.
        
\end{enumerate}

\section{Related Work}

\vspace{0.2cm}
{\bf Top-$\ell$ and Ordered Optimization:}
As a special case of general norm optimization, ordered optimization for combinatorial optimization problems have received significant attention in the recent years. 
In fact, an ordered norm can be written as a conical combination of top-$\ell$ norms
(see Claim~\ref{lemma:ordered:conic:decomp}).
The ordered k-median problem was first studied by Byrka et al. \cite{byrka2018constant} and Aouad and Segev~\cite{aouad2019ordered}.
Byrka et al. \cite{byrka2018constant} obtained the first constant factor approximation algorithm (the factor is $38+\epsilon$).
Independently, Chakrabarty and Swamy~\cite{chakrabarty2018interpolating} obtained 
an algorithm with approximation factor $18$ for the top-$\ell$ norm), which can be combined with the enumeration procedure of Aouad and Segev~\cite{aouad2019ordered} to get the same factor for the general ordered $k$-median.
The current best known approximation is $5$, by Chakrabarty and Swamy \cite{chakrabarty2019approximation}.
Deng and Zhang \cite{deng2020ordered} studied ordered $k$-median with outliers and obtained a constant factor approximation algorithm.
Maalouly and Wulf \cite{maalouly2022exact} studied the top-$\ell$ norm optimization for 
the matching problem and obtained an polynomial time exact algorithm
(see also Theorem~\ref{thm:topl-1} in Appendix~\ref{sec:topl-norm-opt}).
Braverman et al. studied coreset construction for ordered clustering problems \cite{braverman2019coresets} which was motivated by applications in machine learning.
Batra et al. \cite{batra2023tight} studied the ordered min-sum vertex cover problem and obtained the first poly-time approximation approximation with approximation factor $2+\epsilon$.

\vspace{0.2cm}
\noindent
{\bf General Symmetric Norm Optimization:}
Chakrabarty and Swamy \cite{chakrabarty2019approximation} first studied general monotone symmetric norm objectives for clustering and unrelated machine load balancing and obtained constant factor approximation algorithms, substantially generalizing the results for $k$-Median and $k$-Center and makespan minimization for unrelated machine scheduling. 
In a subsequent paper \cite{chakrabarty2019simpler}, they obtained a simpler algorithm for 
load balancing that achieves an approximation factor of $2+\epsilon$.
Abbasi et al. \cite{abbasi2023parameterized} studied the parametrized
algorithms for the general norm clustering problems
and provided the first EPAS (efficient parameterized approximation scheme).
Deng et al. \cite{deng2022generalized} introduced the generalized load balancing problem, 
which further generalizes the problem studied by \cite{chakrabarty2019simpler}. 
In the generalized load balancing problem, the load of a machine $i$ is a symmetric, monotone (inner) norm of the vector of processing times of jobs assigned to $i$. 
The generalized makespan is another (outer) norm aggregating the loads.
The goal is to find an assignment of jobs to minimize the generalized makespan.
They obtained a logarithmic factor approximation, which is optimal up to constant factor
since the problem generalizes the set cover problem.
For the special case where the inner norms are top-$k$ norms,
Ayyadevara et al. \cite{ayyadevara2023minimizing} showed the natural 
configuration LP has a $\Omega(\log^{1/2}n)$ integrality gap.

\vspace{0.2cm}
\noindent
{\bf Submodular/Supermodular Optimization:}
Optimizing submodular/supermodular function under various combinatorial constraints
is another important class of optimization problems with general objectives and  
has been studied extensively in the literature. See e.g., \cite{calinescu2007maximizing,lee2010submodular,chekuri2011submodular,buchbinder2014submodular} and the survey \cite{krause2014submodular}.
However, note that results for submodular functions
does not imply results for general symmetric monotone norms, since a general symmetric monotone norm is not necessarily a submodular function (see e.g., \cite{deng2022generalized}).

Patton et al. \cite{patton2023submodular} studied submodular norm objectives (i.e., norms that also satisfies continuous submodular property). They showed that it can approximate well-known classes of norms, such as $\ell_p$ norms, ordered norms, and symmetric norms and applied it to a variety of problems such as online facility location, stochastic probing, and generalized load balancing.
Recently, Kesselheim et al. \cite{kesselheim2024supermodular} introduced the notion of p-supermodular norm and showed that every symmetric norm can be approximated by a p-supermodular norm. Leveraging the result, they obtain new algorithms 
online load-balancing and bandits with knapsacks, stochastic probing and so on.

\vspace{0.2cm}
\noindent
{\bf Multi-budgeted Optimization:}
There is a body of literature in the problem of optimizing linear or 
submodular objectives over a combinatorial structure with
additional budget constraints 
(see e.g., \cite{ravi1996constrained,camerini1992random,grandoni2009iterative,grandoni2010approximation,chekuri2011multi,berger2011budgeted,grandoni2014new}).
For a single budget constraint,
randomized or deterministic PTASes
have been developed for various combinatorial optimization problems
(e.g. spanning trees with a linear budget \cite{ravi1996constrained}). 
Assuming that a pseudopolynomial time algorithm for the exact version of the problems exists, Grandoni and Zenklusen showed that one can obtain a
PTAS for the corresponding problem with any fixed number of linear budgets \cite{grandoni2010approximation}.
More powerful techniques such as randomized dependent rounding and iterative rounding
have been developed to handle more general submodular objectives and/or other combinatorial structures such as matroid or intersection of matroid
(e.g., \cite{grandoni2009iterative,grandoni2010approximation,chekuri2011multi,grandoni2014new}).
Iterative rounding technique \cite{grandoni2009iterative,linhares2020approximate}
has been used in general norm minimization problems \cite{chakrabarty2019approximation,chakrabarty2019simpler}. Our algorithms for matching
(Section~\ref{sec:matching}) and knapsack cover (Section~\ref{sec:knapsackcover}) also adopt the technique.

\section{Preliminaries}
\label{section:preliminary}

Throughout this paper, for vector $\boldv\in\Rpos^n$, define $\boldv^\da$ as the non-increasingly sorted version of $\boldv$, and $\boldv[S]=\{\boldv_j\cdot\bbone[j\in S]\}_{j\in [n]}$ for any $S\subseteq [n]$.
Let $\topp{k}:\R^n\rightarrow\Rpos$ be the top-$k$ norm that returns the sum of the $k$ largest \emph{absolute values} of entries in any vector, $k\leq |n|$.
Denote $[n]$ as the set of positive integers no larger than $n\in\Z$, and $a^+=\max\{a,0\},\,a\in\R$.

We say function $\norm:\R^n\rightarrow\Rpos$ is a {\em norm} if: 
(i) $\norm(\boldv)=0$ if and only if $\boldv=0$, 
(ii) $\norm(\boldu+\boldv)\leq\norm(\boldu)+\norm(\boldv)$ for all $\boldu,\boldv\in\R^n$, 
(iii) $\norm(\theta\boldv)=|\theta|\norm(\boldv)$ for all $\boldv\in\R^n,\theta\in\R$. 
A norm $\norm$ is {\em monotone} if $\norm(\boldv)\leq\norm(\boldu)$ for all $0\leq\boldv\leq\boldu$, and {\em symmetric} if $\norm(\boldv)=\norm(\boldv')$ for any permutation $\boldv'$ of $\boldv$. 
We are also interested in the following special monotone symmetric norms.

\vspace{0.3cm}
\noindent 
{\bf \topdash{\ell} norms.}
Let $\ell\in[n]$.
A function is a \topdash{\ell} norm, denoted by $\topp{\ell}:\R^n\rightarrow\Rpos$, if for each input vector $\boldv\in\R^n$ it returns the sum of the largest $\ell$ \emph{absolute values} of entries in $\boldv$.
For non-negative vectors, it simply returns the sum of the largest $\ell$ entries.
We notice that by letting $\ell\in\{1,n\}$, $\topp{\ell}$ recovers the $\calL_\infty$ and $\calL_1$ norms, respectively, thus it generalizes the latter two.

\vspace{0.3cm}
\noindent 
{\bf Ordered norms.}
Let $\boldv\in\Rpos^{n}$ be a non-increasing non-negative vector.
For each vector $\boldv\in\R^{\calX}$, let $\boldv^\da\in\R^{|\calX|}$ denote its non-increasingly sorted version and define $|\boldv|=\{|\boldv_i|:i\in\calX\}\in\Rpos^{\calX}$. 
A function is a $\boldw$-ordered norm (or simply an ordered norm), denoted by 
$\ordd{\boldw}:\R^{\calX}\rightarrow\Rpos$, if for each input vector $\boldv\in\R^{\calX}$ it returns the inner product of $\boldw$ and $|\boldv|^\da$; we obtain $\ordered{\boldw}{\boldv}=\boldw^\top\boldv^\da$ whenever $\boldv\in\Rpos^{\calX}$.
It is easy to see that, by having $\boldv$ as a vector of $\ell$ 1s followed by $(|\calX|-\ell)$ 0s, $\ordd{\boldw}$ recovers $\topp{\ell}$.
On the other hand, it is known that each ordered norm can be written as a conical combination of \topdash{\ell} norms, as in the following claim.

\begin{claim}
\label{lemma:ordered:conic:decomp}
(See, e.g., \cite{chakrabarty2019approximation}).
For each $\boldv\in\Rpos^{\calX}$ and another non-increasing vector $\boldw\in\Rpos^{|\calX|}$, one has
\[
    \ordered{\boldw}{\boldv}=\sum_{\ell=1}^{|\calX|}(\boldw_{\ell}-\boldw_{\ell+1})\topl{\ell}{\boldv},
\]
where we define $\boldv_{|\calX|+1}=0$.
\end{claim}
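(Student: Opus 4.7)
The plan is to prove the identity by a direct Abel summation (telescoping) argument on the sorted vector. Let $n=|\calX|$ and denote the non-increasingly sorted version of $\boldv$ by $\boldv^\da\in\Rpos^n$. By definition of the ordered norm,
\[
\ordered{\boldw}{\boldv}=\boldw^\top\boldv^\da=\sum_{i=1}^n \boldw_i\,\boldv^\da_i,
\]
and by definition of the top-$\ell$ norm (applied to a non-negative vector) $\topl{\ell}{\boldv}=\sum_{i=1}^\ell \boldv^\da_i$. So the only thing to verify is the algebraic identity
\[
\sum_{i=1}^n \boldw_i\,\boldv^\da_i \;=\; \sum_{\ell=1}^n (\boldw_\ell-\boldw_{\ell+1})\sum_{i=1}^\ell \boldv^\da_i,
\]
with the boundary convention $\boldw_{n+1}=0$.

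To prove this identity, I would swap the order of summation on the right-hand side: each term $\boldv^\da_i$ appears in the inner sum exactly for those $\ell\in\{i,i+1,\ldots,n\}$, so
\[
\sum_{\ell=1}^n (\boldw_\ell-\boldw_{\ell+1})\sum_{i=1}^\ell \boldv^\da_i \;=\; \sum_{i=1}^n \boldv^\da_i \sum_{\ell=i}^n (\boldw_\ell-\boldw_{\ell+1}).
\]
The inner sum is a telescope that collapses to $\boldw_i-\boldw_{n+1}=\boldw_i$, which yields $\sum_{i=1}^n \boldw_i\,\boldv^\da_i=\ordered{\boldw}{\boldv}$, finishing the argument.

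There is no real obstacle here: the identity is a one-line consequence of summation by parts once both sides are expressed in terms of the sorted vector $\boldv^\da$. The only small point to note is that the monotonicity assumption $\boldw_1\ge \boldw_2\ge \cdots \ge \boldw_n\ge 0$ together with the convention $\boldw_{n+1}=0$ ensures that the coefficients $\boldw_\ell-\boldw_{\ell+1}$ are non-negative, so the identity genuinely expresses $\ordd{\boldw}$ as a \emph{conical} (not merely linear) combination of top-$\ell$ norms, which is what is needed for later applications.
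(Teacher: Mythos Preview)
Your proof is correct; this is the standard Abel summation argument. The paper does not give its own proof of this claim (it merely cites \cite{chakrabarty2019approximation}), so there is nothing to compare against, and your argument is exactly the expected one. Note only that the boundary convention in the statement should read $\boldw_{|\calX|+1}=0$ rather than $\boldv_{|\calX|+1}=0$, which you have silently corrected.
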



The following lemma is due to Hardy, Littlewood and P\`olya.~\cite{hardy1934inequalities}.

\begin{lemma}\label{lemma:majorization}
(\cite{hardy1934inequalities}).
	If $\boldv,\boldu\in\Rpos^{\calX}$ and $\alpha\geq0$ satisfy $\topl{\ell}
{\boldv}\leq\alpha\cdot\topl{\ell}{\boldu}$ for each $\ell\in[|\calX|]$, one has 
$f(\boldv)\leq\alpha\cdot f(\boldu)$ for any symmetric monotone norm 
$f:\R^{\calX}\rightarrow\Rpos$.
\end{lemma}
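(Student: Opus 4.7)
The plan is to reduce the claim, via standard duality, to the case of $\boldw$-ordered norms with non-increasing $\boldw\ge 0$, and then—via Claim~\ref{lemma:ordered:conic:decomp}—to the top-$\ell$ hypothesis. Absolute homogeneity of any norm lets us rescale $\boldu$ by $\alpha$, so we may as well take $\alpha=1$; the hypothesis then becomes classical weak majorization of $\boldv$ by $\boldu$.

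For the first reduction, I would write $f(\boldv)=\sup_{\boldw\in K}\langle\boldw,\boldv\rangle$ for $\boldv\in\Rpos^{\calX}$, where $K=\{\boldw\in\Rpos^{\calX}:\langle\boldw,\boldv'\rangle\le f(\boldv')\text{ for all }\boldv'\in\Rpos^{\calX}\}$ is the polar body of $f|_{\Rpos^{\calX}}$. Two crucial properties of $K$ are: (i) $K\subseteq\Rpos^{\calX}$, which is legitimate because $f$ is monotone on $\Rpos^{\calX}$ (any supporting functional may be replaced by its positive part without violating the supporting inequality), and (ii) $K$ is permutation-invariant by the symmetry of $f$. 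For any $\boldv\in\Rpos^{\calX}$ and $\boldw\in K$, the rearrangement inequality gives $\langle\boldw,\boldv\rangle\le\langle\boldw^\da,\boldv^\da\rangle=\ordered{\boldw^\da}{\boldv}$, and $\boldw^\da\in K$ by (ii). Consequently $f(\boldv)=\sup_{\boldw\in K}\ordered{\boldw^\da}{\boldv}$, exhibiting $f$ as a pointwise supremum of ordered norms with non-increasing non-negative weight vectors.

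Next I would apply Claim~\ref{lemma:ordered:conic:decomp} to each such $\boldw^\da$, writing $\ordered{\boldw^\da}{\boldv}=\sum_{\ell=1}^{|\calX|}(\boldw^\da_\ell-\boldw^\da_{\ell+1})\topl{\ell}{\boldv}$ with non-negative coefficients. Inserting the hypothesis $\topl{\ell}{\boldv}\le\alpha\cdot\topl{\ell}{\boldu}$ termwise yields $\ordered{\boldw^\da}{\boldv}\le\alpha\cdot\ordered{\boldw^\da}{\boldu}$ for every $\boldw\in K$, and taking the supremum over $\boldw$ on both sides concludes $f(\boldv)\le\alpha\cdot f(\boldu)$. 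The hard part is step (i) above, namely the standard-but-subtle convex-analytic fact that the polar set of a monotone norm can be taken inside the non-negative orthant; once this is settled, the remainder is a direct chain of inequalities stitching together Claim~\ref{lemma:ordered:conic:decomp} with the rearrangement inequality.
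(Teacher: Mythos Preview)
The paper does not prove this lemma; it simply attributes it to Hardy, Littlewood and P\'olya and moves on. So there is no ``paper's proof'' to compare against.

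Your argument is correct. Representing a symmetric monotone norm on $\Rpos^{\calX}$ as a supremum of ordered norms and then invoking Claim~\ref{lemma:ordered:conic:decomp} is exactly the modern route (this is essentially the structural lemma in \cite{chakrabarty2019approximation}, which the paper relies on throughout). Your flagged step~(i) goes through as you indicate: given any $\boldw$ in the dual unit ball and any $\boldv'\ge 0$, set $\boldv''=\boldv'\cdot\bbone[\boldw>0]$; monotonicity gives $f(\boldv'')\le f(\boldv')$, while $\langle\boldw^+,\boldv'\rangle=\langle\boldw,\boldv''\rangle\le f(\boldv'')$, so $\boldw^+$ remains feasible and dominates $\boldw$ against every $\boldv\ge 0$. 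The classical Hardy--Littlewood--P\'olya proof proceeds instead via doubly stochastic matrices and Robin Hood transfers; your approach is the one that meshes with the paper's top-$\ell$/ordered-norm machinery, and in that sense is the more natural choice here.
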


The following results can be found in standard combinatorial optimization textbooks, e.g., Schrijver's book \cite{schrijver2003combinatorial}.

\begin{lemma}
\label{theorem:laminar:intersection}
Let $\calN_1,\calN_2$ be two laminar families on a common ground set $\calX$ and $A\in\{0,1\}^{(\calN_1\cup\calN_2)\times\calX}$ be the incidence matrix of $\calN_1\cup\calN_2$.
Then $A$ is totally unimodular.
\end{lemma}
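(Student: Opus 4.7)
The plan is to verify the Ghouila–Houri criterion, which states that a $\{0,1\}$-matrix is totally unimodular if and only if for every subset $R$ of its rows there exists a signing $\epsilon : R \to \{+1,-1\}$ such that for every column $j$, $\sum_{S \in R} \epsilon(S)\, A_{S,j} \in \{-1,0,+1\}$. The matrix here has rows indexed by $\calN_1 \cup \calN_2$ (with sets appearing in both families treated as two distinct rows) and columns indexed by $\calX$.

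Fix an arbitrary subset $R \subseteq \calN_1 \cup \calN_2$ and split it as $R = R_1 \sqcup R_2$ with $R_i = R \cap \calN_i$. Each $R_i$ is laminar, and so carries a natural forest structure $F_i$: the parent of $S \in R_i$ is the minimal strict superset of $S$ lying in $R_i$, and roots are the maximal members. Let $d_i(S)$ denote the depth of $S$ in $F_i$, with roots at depth $0$. The first key observation, coming directly from laminarity, is that for every element $e \in \calX$ and every $i$, the sets of $R_i$ containing $e$ form a chain running up from the unique deepest set containing $e$ to the topmost one, occupying \emph{consecutive} depths in $F_i$. Consequently, the depth-parity signing $\epsilon_i(S) = (-1)^{d_i(S)}$ guarantees that $\sum_{S \in R_i:\, e \in S} \epsilon_i(S) \in \{-1,0,+1\}$ for every column $e$.

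The main obstacle is to combine $\epsilon_1$ and $\epsilon_2$ into a single signing on $R$ without blowing the column discrepancy up to $\pm 2$. The degree of freedom I would exploit is that negating $\epsilon_i$ on any connected component (tree) of $F_i$ leaves the in-family discrepancy at every column inside $\{-1,0,+1\}$ (only the sign flips). Concretely, I would construct an auxiliary bipartite graph $H$ whose two vertex classes are the connected components of $F_1$ and of $F_2$, and insert an edge for every element $e$ where the currently computed partial sums from $R_1$ and $R_2$ would collide at $+2$ or $-2$. The task is then to choose which components to negate so as to break every such collision; this reduces to a $2$-coloring of $H$, and one shows $H$ is bipartite-friendly by tracking, along each chain of sets containing $e$, the way depth-parities propagate through the two forests.

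Once Ghouila–Houri's condition is verified for every $R$, total unimodularity of $A$ follows. I expect the delicate part to be the consistency of the simultaneous parity flips across all columns at once; as a sanity check and backup, one can instead cite that the incidence matrix of a single laminar family is a network matrix (each $\calN_i$-incidence is representable via the forest $F_i$) and invoke the Seymour-style closure results that place the vertical concatenation of the two network matrices in the TU class, which is essentially the form of the argument given in Schrijver~\cite{schrijver2003combinatorial}.
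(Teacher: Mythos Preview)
The paper does not prove this lemma; it is listed among the preliminaries as a standard fact with a citation to Schrijver's textbook. So there is no ``paper's own proof'' to compare against, and your Ghouila--Houri approach is a perfectly legitimate way to establish the result.

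Your argument works, but the combination step is over-engineered, and the part you flag as ``delicate'' is in fact trivial once you sharpen your first observation. You note that the sets of $R_i$ containing a fixed $e$ occupy consecutive depths in $F_i$; what you should also note is that this chain always \emph{starts at a root} (depth~$0$): if $e\in S\in R_i$ then every ancestor of $S$ in $F_i$ is a superset of $S$ and hence also contains $e$. Consequently the depth-parity sum satisfies
\[
\sigma_i(e)\;=\;\sum_{S\in R_i:\,e\in S}(-1)^{d_i(S)}\;=\;\sum_{j=0}^{k}(-1)^j\;\in\;\{0,1\},
\]
not merely $\{-1,0,1\}$. Hence signing $R_1$ by $(-1)^{d_1(\cdot)}$ and $R_2$ by $-(-1)^{d_2(\cdot)}$ already gives column sums $\sigma_1(e)-\sigma_2(e)\in\{-1,0,1\}$ for every $e$, with no need for an auxiliary graph, per-component flips, or any $2$-coloring argument. (Equivalently: in your bipartite graph $H$ every edge carries the same ``opposite'' constraint, so the bipartition itself is the coloring.)

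Your backup sentence is not correct as stated: the vertical concatenation of two arbitrary network matrices need \emph{not} be totally unimodular, so one cannot simply ``invoke Seymour-style closure'' on the stack. What does work---and is closer to the textbook argument---is to realize the \emph{combined} matrix as a single network matrix by gluing the two laminar forests at a common root and routing each element of $\calX$ as a single arc; but that is a different construction than stacking.
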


\begin{lemma}
\label{theorem:TUM}
Let $A$ be a totally unimodular $m\times n$ matrix and $b\in\Z^m$, then the linear program $P=\{Ax\leq b\}$ has integral vertex solutions.
\end{lemma}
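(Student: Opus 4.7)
The plan is to prove integrality of every vertex of $P=\{x:Ax\le b\}$ by combining the standard characterization of vertices via tight constraints with Cramer's rule, and then invoking the defining property of totally unimodular matrices.

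First, I would recall the characterization of a vertex of a polyhedron: $x^\star$ is a vertex of $P$ if and only if there exist $n$ linearly independent rows of $A$, say indexed by a subset $I\subseteq [m]$ with $|I|=n$, such that the corresponding constraints are tight at $x^\star$. Let $A_I\in\Z^{n\times n}$ denote the submatrix of $A$ with row set $I$, and let $b_I\in\Z^n$ denote the corresponding entries of $b$. Then $x^\star$ is the unique solution of the square linear system $A_I x = b_I$.

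Next I would invoke Cramer's rule: for each coordinate $i\in[n]$,
\[
    x^\star_i \;=\; \frac{\det(A_I^{(i)})}{\det(A_I)},
\]
where $A_I^{(i)}$ is obtained from $A_I$ by replacing its $i$-th column with $b_I$. Since $A$ is totally unimodular and $A_I$ is a nonsingular square submatrix of $A$, the definition of total unimodularity gives $\det(A_I)\in\{-1,+1\}$. Moreover, $A_I^{(i)}$ has integer entries because every entry of $A$ lies in $\{-1,0,1\}$ and $b\in\Z^m$, so $\det(A_I^{(i)})\in\Z$ by expansion of the determinant. Hence each $x^\star_i$ is an integer, and the vertex $x^\star$ is integral.

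I don't expect any genuine obstacle here, since the result is classical; the only care needed is in checking that the square submatrix $A_I$ singled out by the tight-constraint characterization is indeed nonsingular (which is built into the definition of a vertex) so that the total unimodularity conclusion $|\det(A_I)|=1$ applies. Once this is in place, the chain from Cramer's rule to integrality is immediate, and the lemma follows.
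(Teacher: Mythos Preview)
Your argument is correct and is precisely the standard textbook proof via Cramer's rule. The paper does not give its own proof of this lemma; it simply states it as a known fact and cites Schrijver's \emph{Combinatorial Optimization}, so your proposal in effect supplies the classical argument that the paper defers to the literature.
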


Here are some well-known results from linear programming.

\begin{lemma}
\label{lem:extreme:points}
(See, e.g., Chapter 2 of \cite{10.5555/548834}). Consider a linear program with the polyhedron 
$$
P = \{x \in \mathbb{R} ^n : A_1x = b_1, \,\,A_2x \leq b_2, \,\, A_3x \geq b_3\},
$$ 
where $A_1 \in \mathbb{R}^{m_1 \times n}, A_2 \in \mathbb{R}^{m_2 \times n}, A_3 \in \mathbb{R}^{m_3 \times n}$ are matrices and $b_1 \in \mathbb{R}^{m_1}, b_2 \in \mathbb{R}^{m_2}, b_3 \in \mathbb{R}^{m_3}$ are vectors. 
A point $x^* \in \mathbb{R}^n$ is an extreme point (or a basic feasible solution) of the linear program if and only if:
(1) $x^* \in P$, and
(2) there exists a set of $n$ linearly independent rows of the matrix $A = \begin{pmatrix} A_1 \\ A_2 \\ A_3 \end{pmatrix}$, which can be assembled into a submatrix $A_0$ such that
$A_0 x^* = b_0,$ 
where $b_0$ is the subvector of $b$ corresponding to the rows in $A_0$.
\end{lemma}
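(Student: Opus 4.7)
The plan is to establish the characterization by proving both directions of the biconditional, using the standard contradiction arguments from linear programming theory. Both directions hinge on the observation that active constraints at a point determine a uniqueness/degree-of-freedom dichotomy.

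For the easier direction ($\Leftarrow$), assume $x^*\in P$ and that there exist $n$ linearly independent rows assembled into a submatrix $A_0$ with $A_0 x^*=b_0$. I would argue by contradiction: suppose $x^*=\tfrac{1}{2}(y+z)$ with $y,z\in P$ distinct. For each row of $A_0$ coming from $A_1$, equality is forced on both $y$ and $z$ since the constraint is an equality in $P$. For each row coming from $A_2$, say $a_i^\top x^*=b_{2,i}$, feasibility gives $a_i^\top y\leq b_{2,i}$ and $a_i^\top z\leq b_{2,i}$, and averaging to $b_{2,i}$ forces both to be tight; the $A_3$ rows are symmetric. Thus $A_0 y=A_0 z=b_0$, and since $A_0$ has rank $n$, this uniquely determines $y=z=x^*$, contradicting $y\neq z$.

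For the converse ($\Rightarrow$), suppose $x^*$ is an extreme point. Let $A_0$ gather all constraints active at $x^*$ (equalities from $A_1$ and tight inequalities from $A_2,A_3$). If $\operatorname{rank}(A_0)<n$, there exists a nonzero $d\in\mathbb{R}^n$ with $A_0 d=0$. I would then show that for sufficiently small $\epsilon>0$, both $x^*\pm\epsilon d$ lie in $P$: the $A_1$ rows are preserved since $A_1 d=0$ (as they sit inside $A_0$); active inequality constraints stay satisfied since $A_0 d=0$ makes the corresponding coordinate directions move along the constraint surface; non-active inequality constraints retain their strict slack for $\epsilon$ smaller than the minimum slack divided by the maximum coefficient magnitude. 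Then $x^*=\tfrac{1}{2}(x^*+\epsilon d)+\tfrac{1}{2}(x^*-\epsilon d)$ writes $x^*$ as a convex combination of two distinct feasible points, contradicting extremality.

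The main (and essentially only) subtle step is the second direction, where one must carefully split the constraints into active versus inactive and argue that each class behaves differently under the perturbation $\pm\epsilon d$: active constraints are preserved exactly by the null-space condition $A_0 d=0$, while inactive ones are preserved by a uniform continuity/slack argument picking $\epsilon$ small enough. Once this bookkeeping is in place, the result is routine and is exactly the standard fact that vertices of a polyhedron are characterized by tight constraint systems of full rank, as presented in Chapter~2 of \cite{10.5555/548834}; no problem-specific ingredient is needed.
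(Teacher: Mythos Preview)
Your proof is correct and is exactly the standard textbook argument; the paper itself does not prove this lemma but simply cites it as a known fact from Chapter~2 of \cite{10.5555/548834}, so there is no paper-specific approach to compare against. Both directions are handled cleanly, and your one point of care---that all rows of $A_1$ are automatically active and hence lie in $A_0$, so $A_1 d=0$---is noted correctly.
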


\begin{lemma}
\label{lem:LinearProg-2}
For a linear program with the polyhedron 
$$
P = \{x \in \mathbb{R}^n : 
A_1x = b_1, \,\, A_2x \leq b_2, \,\, A_3x \geq b_3\},
$$
where $A_1 \in \mathbb{R}^{m_1 \times n}, A_2 \in \mathbb{R}^{m_2 \times n}, A_3 \in \mathbb{R}^{m_3 \times n}$ are matrices and $b_1 \in \mathbb{R}^{m_1}, b_2 \in \mathbb{R}^{m_2}, b_3 \in \mathbb{R}^{m_3}$ are vectors. Let $x^*$ be an extreme point (or basic feasible solution) of the linear program. For an index set $Q \subseteq [n]$ (recall $x^*\assigned{Q}$ is the vector that retains only the elements indexed by $Q$), $x^*\assigned{Q}$ is still an extreme point of the linear program with polyhedron 
$$
P' = \{ x\assigned{Q} : x \in P, \,\,
x_i = x^*_i \text{ for all } i \in [n] \setminus Q \}.
$$ 
\end{lemma}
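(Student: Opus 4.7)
\textbf{Proof plan for Lemma \ref{lem:LinearProg-2}.}
The plan is a short proof by contradiction based on the standard convexity characterization of an extreme point, which is equivalent to the algebraic characterization in Lemma \ref{lem:extreme:points}: a point $x^* \in P$ is an extreme point of $P$ if and only if there do not exist two distinct points $y_1, y_2 \in P$ with $x^* = \tfrac{1}{2}(y_1 + y_2)$. I would first invoke this equivalence (a classical fact from polyhedral theory, provable by noting that any nontrivial convex combination $x^* = \tfrac{1}{2}(y_1+y_2)$ in $P$ forces every constraint tight at $x^*$ to be tight at both $y_1$ and $y_2$, and conversely any violation of the algebraic condition supplies a direction along which $x^*$ can be perturbed inside $P$).

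Next, I would assume for contradiction that $x^*\assigned{Q}$ is not an extreme point of $P'$. Then there exist distinct $z_1, z_2 \in P'$ with $x^*\assigned{Q} = \tfrac{1}{2}(z_1 + z_2)$. By the definition of $P'$, each $z_k$ lifts to a point $\tilde{z}_k \in P$ such that $\tilde{z}_k\assigned{Q} = z_k$ and $(\tilde{z}_k)_i = x^*_i$ for all $i \in [n]\setminus Q$. Form $\tilde{z} = \tfrac{1}{2}(\tilde{z}_1 + \tilde{z}_2)$: on coordinates in $Q$ it equals $\tfrac{1}{2}(z_1+z_2) = x^*\assigned{Q}$, and on coordinates in $[n]\setminus Q$ it equals $x^*_i$ since both lifts already agree with $x^*$ there. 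Hence $\tilde{z} = x^*$, while $\tilde{z}_1 \neq \tilde{z}_2$ (they differ on $Q$ since $z_1 \neq z_2$). This exhibits $x^*$ as a nontrivial midpoint of two distinct points in $P$, contradicting the hypothesis that $x^*$ is an extreme point of $P$.

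The argument contains no serious obstacle; the only point requiring a little care is verifying that the lifted points $\tilde{z}_1, \tilde{z}_2$ indeed belong to $P$, which follows immediately because the definition of $P'$ precisely encodes membership in $P$ together with the coordinate restriction $x_i = x^*_i$ for $i \notin Q$. An alternative, equally direct route would use the algebraic characterization: of the $n$ linearly independent rows of $A$ tight at $x^*$, the restrictions of those rows to coordinates in $Q$ span $\R^{|Q|}$ (else a nonzero vector $v \in \R^{|Q|}$, extended by zeros outside $Q$, would be orthogonal to a basis of $\R^n$), and these restrictions index $|Q|$ linearly independent tight constraints of $P'$ at $x^*\assigned{Q}$.
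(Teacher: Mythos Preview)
Your main argument via the convexity characterization of extreme points is correct and complete; the lifting step works exactly as you describe since membership in $P'$ is by definition equivalent to the lifted vector lying in $P$. This is a genuinely different route from the paper's proof, which proceeds purely through the algebraic characterization of Lemma~\ref{lem:extreme:points}: starting from a basis $A_0 \in \mathbb{R}^{n\times n}$ of tight constraints at $x^*$, the paper deletes the columns indexed by $[n]\setminus Q$, observes that the remaining $n\times |Q|$ matrix still has linearly independent columns (hence rank $|Q|$), and extracts $|Q|$ linearly independent rows to serve as a basis for $x^*\assigned{Q}$ in $P'$. Your closing ``alternative route'' paragraph is in fact exactly this argument.

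Your convexity proof is slightly more elementary and makes no appeal to rank computations; the paper's algebraic proof has the minor advantage of being constructive (it explicitly produces the basis witnessing extremality in $P'$), which aligns with how the lemma is later invoked in the rounding analyses. Either is fine here.
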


\begin{proof}
The lemma is standard and we provide a proof for completeness.
Clearly, $x^* \in P'$. 
Assume that $x^*$ corresponds to a basis $A_0x = b_0$ for the polyhedron $P$, where $A_0 \in \mathbb{R}^{n \times n}$. 
By removing the columns corresponding to the index set $[n] \setminus Q$, we denote $A_0^{(Q)} \in \mathbb{R}^{n \times |Q|}$ as the modified matrix.
Since the columns of $A_0$ are linearly independent, the columns of $A_0^{(Q)}$ are also linearly independent. Therefore, $A_0^{(Q)}$ is of rank $|Q|$, and we can select a submatrix $A_0'$ corresponding to exactly $|Q|$ linearly independent rows of $A_0^{(Q)}$.
The constraints of $P'$ can be interpreted as removing all variables in $[n] \setminus Q$, and $A_0'$ corresponds to a basis for $P'$. 
Thus, $x^*\assigned{Q}$ is also an extreme point of the linear program with polyhedron $P'$.
\end{proof}






\section{A General Reduction to Multi-Budgeted Optimization Problem}
\label{sec:reduction}

In this section, we provide an equivalent formulation for the general symmetric norm minimization problem \minnorm\ (up to constant approximation factor). Recall that as defined in Section~\ref{sec:intro}, 
we are given a set $\calU$ of $n$ elements, and 
$\calF$ represents a family of feasible subsets of $\calU$.
The goal of \minnorm\ is to find a feasible subset $S\in\calF$ to minimize $f(\boldv[S])$,
where $f$ is a symmetric monotone norm function. 
We say that we find a $c$-approximation for the problem for some $c\geq 1$,
if we can find an $S$ such that $f(\boldv[S])\leq c\cdot f(\boldv[S^*])$, where $S^*$ is the optimal solution. 
Since a general norm function is quite abstract and hard to deal with,
we formulate the following (equivalent, up to constant approximation factor) optimization problem which is more combinatorial in nature. 


\begin{definition}
[Logarithmic Budgeted Optimization (\LBO)]
\label{def:LBO}
The input of a Logarithmic Budgeted Optimization Problem is a tuple $\eta=(\calU;S_1,S_2,\ldots,S_{T};\calF)$, where:
\begin{itemize}
    \item $\calU$ is a finite set with $n$ elements.
    \item $S_1,S_2,\cdots, S_{T}$ are disjoint subsets of $\calU$, where $T=\ceil{\log n}$ is the number of sets. For $1\le i\le T$, We refer to $S_i$ as the $i$-th group, and for any $u\in S_i$, we call $i$ the group index of $u$. 
    \item $\calF$ is a family of feasible subsets of $\calU$. The size of $|\calF|$ may be exponentially large in  $n$, but we ensure that there exists a polynomial-time algorithm to decide whether $D\in \calF$ for a subset $D\subseteq \calU$.
\end{itemize}
For any $c'>0$, we say a subset $D\subseteq \calU$ is a \textbf{$c'$-valid} solution if and only if:
\begin{enumerate}
    \item $D$ satisfies the feasibility constraint, i.e., $D\in \calF$, and 
    \item $|D\cap S_i|\le c'\times 2^i$ for all $1\le i\le T$.
\end{enumerate}
For any $c\ge c_0\ge 1$, we define $(c,c_0)$-\LBO\ problems as follows: 
Given an input $\eta$, the goal is 
to find a $c$-valid solution or certify that
there is no $c_0$-valid solution.
In particular, we denote $(c,1)$-\LBO\ 
as $c$-\LBO.
\end{definition}

Notice that the structure of a problem is defined by $\calU$ and $\calF$ (for example, the vertex cover problem is given by vertex set $\calU$ and $\calF$ contains all subsets of $\calU$ corresponding to a vertex cover), so each problem corresponds to a \minnorm\ version and an \LBO\ version.
We show that solving \LBO\ is equivalent to approximating \minnorm, up to constant approximation factors. 
In fact, the reduction from norm approximation to optimization problem with multiple budgets has been implicitly developed in prior work \cite{chakrabarty2019approximation,ibrahimpur2021minimum}. 
For generality and ease of usage, we encapsulate the reduction in the following general theorem.

\begin{theorem}
\label{thm:equivalence}
For any $c\geq 1$ ($c$ can depend on $n$) and $\epsilon>0$,
if we can solve $c$-\LBO\ in polynomial time, 
we can approximate the \minnorm\ problem within a factor of $(4c+\epsilon)$ in polynomial time.
On the other hand, if we can find a $c$-approximation for \minnorm\
in polynomial time, we can solve the $47c^2$-\LBO\ in polynomial time.
\end{theorem}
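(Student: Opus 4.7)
The theorem packages two reductions; both are driven by Lemma~\ref{lemma:majorization}, the Hardy--Littlewood--P\`olya principle that lets us compare symmetric monotone norms through their top-$\ell$ sums. Throughout, I will write $b_\ell := f(\mathbf{1}_{[\ell]})$ for the norm of an $\ell$-sparse $0/1$ vector; by symmetry and subadditivity of $f$, the sequence $b_\ell$ is monotone and doubling-subadditive ($b_{2\ell} \leq 2 b_\ell$), a fact I will exploit repeatedly.

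\textbf{Forward direction ($c$-\LBO\ implies $(4c+\epsilon)$-approximation for \minnorm).} My plan is to enumerate polynomially many guesses $F = (1+\epsilon')^k F_0$ for $F^\star := f(\boldv[S^*])$ (which absorbs the additive $\epsilon$), and for each $F$ set up an \LBO\ instance. Concretely, I restrict the ground set to $\{e \in \calU : v_e \leq F\}$ (monotonicity of $f$ forces $v_e \leq F^\star \leq F$ for every $e \in S^*$), inherit the feasibility $\calF$, and define the groups $S_j := \{e : F/b_{2^j} < v_e \leq F/b_{2^{j-1}}\}$ for $j = 1,\ldots,T$. The first key step is to show that $S^*$ is a $1$-valid solution: since $\boldv[S^*]^\da$ dominates $\boldv[S^*]^\da_\ell \cdot \mathbf{1}_{[\ell]}$ entry-wise, Lemma~\ref{lemma:majorization} gives $\boldv[S^*]^\da_\ell \cdot b_\ell \leq F^\star \leq F$, so fewer than $2^j$ entries of $\boldv[S^*]$ exceed $F/b_{2^j}$, and in particular $|S^* \cap S_j| \leq |S^* \cap (S_1 \cup \cdots \cup S_j)| < 2^j$. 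Invoking the $c$-\LBO\ oracle returns a $c$-valid $D \in \calF$; cumulating the per-group bounds $|D \cap S_k| \leq c \cdot 2^k$ yields $\boldv[D]^\da_{2c \cdot 2^j + 1} \leq F/b_{2^j}$ for every $j$. The final step -- deducing $f(\boldv[D]) \leq 4cF$ -- I plan to carry out by a top-$\ell$ comparison: build a ``replicated optimum'' $\boldu$ whose sorted entries list each $\boldv[S^*]^\da_{2^j}$ with multiplicity roughly $4c$, use the subadditivity $b_{4c \ell} \leq 4c \cdot b_\ell$ to bound $f(\boldu) \leq 4c F^\star$, and verify $\topp{\ell}(\boldv[D]) \leq \topp{\ell}(\boldu)$ for every $\ell \in [n]$ before invoking Lemma~\ref{lemma:majorization}.

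\textbf{Reverse direction ($c$-approximate \minnorm\ implies $47c^2$-\LBO).} Starting from an \LBO\ input $\eta = (\calU; S_1,\ldots,S_T; \calF)$, I construct a \minnorm\ instance on the same ground set and feasibility, assign values $v_e = 2^i$ for $e \in S_i$ (and $v_e = 0$ elsewhere), and pick a symmetric monotone norm tailored to respond to per-scale over-use -- a natural candidate is a conic combination of the doubling top-$2^j$ norms, weighted so that any $1$-valid \LBO\ solution hits an explicit target value $V^\star$. Running the \minnorm\ $c$-approximation returns some $D' \in \calF$ with $f(\boldv[D']) \leq c V^\star$; decoding the top-$2^j$ contributions scale-by-scale then yields $|D' \cap S_j| \leq O(c^2) \cdot 2^j$. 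One factor of $c$ is the norm-approximation loss; the second arises because each of the $T$ scales must absorb a share of the combined approximation slack when translating a single norm value back into simultaneous per-group count bounds, and careful bookkeeping of the constants gives the stated $47c^2$.

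\textbf{Main obstacle.} The decisive technical point is obtaining the tight constant $4c$ in the forward direction: a naive triangle-inequality or layer-cake decomposition of $f(\boldv[D])$ over the $T = \lceil \log n \rceil$ \LBO\ groups costs a spurious $\log n$ factor, so the $O(c)$ constant can only be reached through a direct sorted-vector comparison via Lemma~\ref{lemma:majorization}. Building the right replicated majorant $\boldu$ and verifying $\topp{\ell}(\boldv[D]) \leq \topp{\ell}(\boldu)$ simultaneously for every $\ell$ is the delicate part, and it relies jointly on the doubling structure of the \LBO\ budgets $c \cdot 2^i$ and the subadditivity of $b_\ell$; I anticipate the detailed argument to closely parallel the implicit reductions in the prior work cited immediately before the theorem.
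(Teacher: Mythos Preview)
Your forward direction has a genuine gap: the groups $S_j = \{e : F/b_{2^j} < v_e \leq F/b_{2^{j-1}}\}$ depend only on the norm $f$ and the guessed value $F$, not on the structure of the optimal solution $S^*$. That is not enough to get a factor independent of $\log n$. Take $f = \ell_1$ (so $b_\ell = \ell$), an instance where $S^* = \{e^*\}$ with $v_{e^*} = F$, and another feasible set $D$ that contains, for each $j \in [T]$, exactly $c \cdot 2^j$ elements of value just above $F/2^j$. Then $S^*$ is $1$-valid (it sits in $S_1$ alone) and $D$ is $c$-valid, so the \LBO\ oracle is entitled to return $D$; but $\|\boldv[D]\|_1 \approx \sum_{j=1}^T c \cdot 2^j \cdot (F/2^j) = cFT = \Theta(cF\log n)$, not $O(cF)$. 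Your proposed majorant $\boldu$ built from replicated copies of $\boldv[S^*]$ has only $O(c)$ nonzero entries here, so $\topp{\ell}(\boldv[D]) \leq \topp{\ell}(\boldu)$ already fails at $\ell = n$. The majorization lemma is not the obstacle; the grouping is.

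The paper's proof avoids this by making the groups adapt to $S^*$. It enumerates, via Lemma~\ref{lm:polyguess} (the Chakrabarty--Swamy threshold enumeration), a polynomial family of vectors $\boldt \in \R^{\POS}$, one of which satisfies $\boldt_\ell \approx o^\da_\ell$ at every dyadic $\ell$, and places $e$ into $S_{i+1}$ when $\boldt_{2^{i+1}} < v_e \leq \boldt_{2^i}$. In the $\ell_1$ example above the correct guess has $\boldt_1 \approx F$ and $\boldt_\ell = 0$ for $\ell \geq 2$, so \emph{all} nontrivial elements land in $S_1$, forcing $|D \cap S_1| \leq 2c$ and hence $\|\boldv[D]\|_1 = O(cF)$. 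The comparison vector in the paper is $g(\boldt)$ (the step function induced by $\boldt$), and Lemmas~\ref{lem:d-2}--\ref{lem:d-3} play the role you assigned to your replicated $\boldu$. In short, you must guess $T$ numbers (the dyadic order statistics of the optimum), not just one.

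Your reverse direction is in the right spirit but vague; the paper instantiates it with values $v_e = 2^{-i/2}$ for $e \in S_i$ and the norm $f(\boldv) = \max_i \topp{2^i - 2}(\boldv)/2^{i/2}$ (a max rather than a conic sum of top norms), then tracks the constants to reach $47c^2$.
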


In the following, we give a formal proof for \cref{thm:equivalence}.
We also give another form of it in \cref{thm:diffequiv} that will be used in the following sections.

To prove Theorem~\ref{thm:equivalence}, we first define 
some notations.
Let the optimal solution for the \minnorm\ problem be $S^*$. Define $\boldu = \boldv[S^*]$ and let $o = \boldu^{\da}$ (recall that $\boldu^{\da}$ represents the vector obtained by sorting the elements of $\boldu$ in non-increasing order).

Let $\delta,\varepsilon$ be (small) positive constants. 
Define the set of positions
$$
\POS=\{\min\{2^s,n\}:s\geq 0\}.
$$
We need the following lemma, which is proved in \cite{chakrabarty2019approximation}.
The lemma says that we can in polynomial time construct a poly-sized set 
of threshold vectors such that there is one threshold vector
which is close to the optimal vector for all positions in $\POS$.

\begin{lemma}
\label{lm:polyguess}
\cite{chakrabarty2019approximation}
Let $\varepsilon$ be a fixed small positive constant. Suppose we can obtain in polynomial time a set $Q$ of polynomial size that is guaranteed to contain a real number in $[o_1^\da,(1+\varepsilon)o_1^\da]$.
Then, in time $O\left(|Q|\max\{(n/\varepsilon)^{O(1/\varepsilon)},\text{poly}(n)\}\right)$, we can obtain a set $T$ of polynomial many vectors in $\R^{\POS}$, which contains a threshold vector (denoted by $\boldsymbol{t}^*$) satisfying:
$o_\ell^\da\leq t^*_\ell\leq (1+\varepsilon)o_\ell^\da$ if $o_\ell^\da\geq \varepsilon o_1^\da/n$ and $t^*_\ell=0$ otherwise.
Moreover, for all $\ell\in \POS$,
$t^*_\ell$ is either 0 or a power of $1+\varepsilon$.
\end{lemma}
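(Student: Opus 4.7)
The plan is: for every guess $q \in Q$ of the scale $o_1^\da$, enumerate all non-increasing length-$|\POS|$ sequences whose entries are drawn from a small candidate set of powers of $(1+\varepsilon)$, and control the total count via a stars-and-bars estimate. The ingredient that makes the enumeration small is that the "interesting" range of $o_\ell^\da$ spans only $O(\log(n/\varepsilon)/\varepsilon)$ multiplicative levels, and that monotonicity of the sorted vector turns the naive cross-product of choices into a multiset count.

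First I would fix a guess $q \in Q$ and define the candidate value set $V_q := \{0\}\cup\{(1+\varepsilon)^j : (1+\varepsilon)^j\in[\varepsilon q/((1+\varepsilon) n),\,(1+\varepsilon) q]\}$. Whenever $q$ satisfies $o_1^\da \le q \le (1+\varepsilon)o_1^\da$ (which is guaranteed for at least one $q\in Q$), this interval safely covers every $o_\ell^\da$ exceeding $\varepsilon o_1^\da/n$. A direct calculation bounds $|V_q|$ by $L := 1 + O(\log(n/\varepsilon)/\varepsilon)$. The output set $T$ is then defined as the collection of all non-increasing vectors indexed by $\POS$ with coordinates in $V_q$, ranging over all $q \in Q$.

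For correctness I would exhibit an explicit witness $\boldt^* \in T$. For the $q$ that realizes the guarantee on $Q$, set $t^*_\ell := 0$ whenever $o_\ell^\da < \varepsilon o_1^\da/n$, and otherwise let $t^*_\ell$ be the smallest element of $V_q$ at least $o_\ell^\da$. The two verifications are: $(t^*_\ell)_{\ell\in\POS}$ is non-increasing because $\boldo^\da$ is non-increasing and rounding up to the next power of $(1+\varepsilon)$ preserves order; and the approximation bound $o_\ell^\da \le t^*_\ell \le (1+\varepsilon) o_\ell^\da$ in the non-zero regime is immediate from the definition of "smallest power of $(1+\varepsilon)$ at least $o_\ell^\da$".

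The step I expect to require the most care is the counting. A naive bound of $L^{|\POS|}$ on sequences would be super-polynomial for small constant $\varepsilon$; the saving is that one only enumerates non-increasing sequences, whose count is the multiset coefficient $\binom{|\POS|+L-1}{|\POS|} \le 2^{|\POS|+L}$. Plugging $|\POS| = O(\log n)$ and $L = O(\log(n/\varepsilon)/\varepsilon)$ yields $(n/\varepsilon)^{O(1/\varepsilon)}$. Multiplying by $|Q|$ and the polynomial emission cost per vector gives the claimed running time, and $|T|$ inherits the same bound — polynomial for every fixed $\varepsilon$, as required.
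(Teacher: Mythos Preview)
The paper does not give its own proof of this lemma: it is quoted verbatim as a result of Chakrabarty and Swamy~\cite{chakrabarty2019approximation}, with the sentence ``We need the following lemma, which is proved in \cite{chakrabarty2019approximation}.'' Your proposal is a correct reconstruction of the standard argument behind that lemma --- enumerate, for each scale guess $q\in Q$, all non-increasing $|\POS|$-tuples drawn from $\{0\}$ together with the $O(\log(n/\varepsilon)/\varepsilon)$ relevant powers of $1+\varepsilon$, and bound the count by the multiset coefficient --- so there is nothing to contrast against in the present paper.
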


Let $\text{advance}(i)$ be the smallest element in $\POS$ that is at least $i$, for 
$1\leq i\leq n$.
For each $\boldt\in\R^\POS$, define $g(\boldt)$ to be the $n$-dimensional vector such that $g(\boldt)_i=\boldt_{\text{advance}(i)}$. 

\begin{lemma}
\label{lem:d-2}
Given a norm function $f$, if $\boldt$ is a valid threshold (i.e., it satisfies the statement in Lemma~\ref{lm:polyguess}), 
we have
$$f(g(\boldt))\leq (1+\varepsilon)f(\boldo).$$
\end{lemma}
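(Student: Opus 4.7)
The plan is to reduce the claim to a pointwise (coordinate-wise) inequality $g(\boldt)_i \leq (1+\varepsilon)\boldo_i$ for every $i\in[n]$. Once this pointwise bound is in hand, the conclusion follows immediately from the monotonicity of $f$ together with positive homogeneity: $f(g(\boldt)) \leq f((1+\varepsilon)\boldo) = (1+\varepsilon) f(\boldo)$. The reason this particular comparison is the right one is that $\boldo = \boldu^\da$ is already arranged in non-increasing order, and the step function $g(\boldt)$ assigns to position $i$ a value indexed by $\text{advance}(i) \geq i$, i.e., a position no earlier than $i$ in the sorted order of $\boldo$.

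To verify the pointwise inequality, I would fix an arbitrary $i\in[n]$ and let $\ell = \text{advance}(i)$, so that $\ell\geq i$ and $g(\boldt)_i = t^*_\ell$ by the definition of $g$. Applying Lemma~\ref{lm:polyguess} at coordinate $\ell$ splits the argument into two cases. If $o^\da_\ell < \varepsilon o^\da_1/n$, then $t^*_\ell = 0$ and the bound holds trivially. Otherwise $t^*_\ell \leq (1+\varepsilon)o^\da_\ell$; since $\boldo^\da$ is non-increasing and $\ell\geq i$, we have $o^\da_\ell \leq o^\da_i = \boldo_i$, and chaining gives $g(\boldt)_i \leq (1+\varepsilon)\boldo_i$, as needed.

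The proof is essentially bookkeeping once this decomposition is in place. The only subtlety worth flagging is that $\boldt^*$ itself need not be strictly non-increasing (the slack factor $(1+\varepsilon)$ can, in principle, introduce mild reversals between adjacent positions in $\POS$), but this is irrelevant to the argument above: entries of $g(\boldt)$ are never compared to one another, only to the corresponding entry of $\boldo$ at the same coordinate. So no real obstacle arises. If one prefers, the last step can be routed through the majorization lemma (Lemma~\ref{lemma:majorization}) by deducing $\topl{\ell}{g(\boldt)} \leq (1+\varepsilon)\topl{\ell}{\boldo}$ from the coordinate-wise bound, but the monotonicity-based route above is the more direct one.
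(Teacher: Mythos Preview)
Your proposal is correct and matches the paper's own proof essentially line for line: establish the coordinate-wise bound $g(\boldt)_i = t^*_{\text{advance}(i)} \leq (1+\varepsilon)\,o_{\text{advance}(i)} \leq (1+\varepsilon)\,o_i$ (using that $\boldo=\boldu^\da$ is non-increasing and $\text{advance}(i)\ge i$), then apply monotonicity and homogeneity of $f$. If anything, your write-up is slightly more careful, as you explicitly separate out the case $t^*_\ell=0$ that the paper's proof leaves implicit.
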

\begin{proof}
By the construction of $g$, we can observe that
$$
g(\boldt)_i=\boldt_{\text{advance}(i)}\leq (1+\varepsilon)o_{\text{advance}(i)}
\leq (1+\varepsilon)o_{i}.
$$
Hence, $f(g(\boldt))\leq f((1+\varepsilon)\boldo)=(1+\varepsilon)f(\boldo)$,
which proves the lemma.
\end{proof}

Moreover, we have the following lemma:
\begin{lemma}
\label{lem:d-3}
We are given an $n$-dimensional vector $\boldu$,
a threshold vector $\boldt$, and 
a positive integer $c\geq 1$.
If for each $\ell\in \POS$ ($1\leq i\leq T$), there are at most $c\ell$ entries in 
$\boldu$ that is larger than $\boldt_\ell$ and the largest entry of $\boldu$
is at most $\boldt_1$, then $f(\boldu)\leq 2cf(g(\boldt)).$
\end{lemma}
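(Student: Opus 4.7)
The plan is to apply the Hardy--Littlewood--P\`olya majorization lemma (\cref{lemma:majorization}), reducing the goal $f(\boldu)\le 2c\,f(g(\boldt))$ to showing $\topl{k}{\boldu}\le 2c\,\topl{k}{g(\boldt)}$ for every $k\in[n]$. To handle the whole family of top-$k$ inequalities uniformly, I would introduce an auxiliary vector $\widehat\boldu\in\Rpos^n$ defined by $\widehat\boldu_i:=g(\boldt)_{\lceil i/(2c)\rceil}$, i.e., the vector obtained from $g(\boldt)$ by replicating each coordinate exactly $2c$ times. Since $f$ is symmetric I may assume without loss of generality that $\boldt$ is non-increasing, so that $g(\boldt)$ and $\widehat\boldu$ are non-increasing as well.

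Step 1 is the pointwise bound $\boldu^\da_i\le\widehat\boldu_i$ for every $i\in[n]$. For $i\in[1,2c]$ one has $\widehat\boldu_i=g(\boldt)_1=\boldt_1$, and the bound follows directly from the hypothesis $\max_j\boldu_j\le\boldt_1$. For $i$ in a later block $[(j-1)2c+1,\,2cj]$ with $j\ge 2$, one has $\widehat\boldu_i=\boldt_{\ell_j}$ where $\ell_j:=\text{advance}(j)\in\POS$; the elementary arithmetic fact $\text{advance}(j)\le 2(j-1)$ for $j\ge 2$ (a case check on whether $j$ is a power of two, plus the edge case where $\text{advance}$ caps at $n$) gives $c\ell_j+1\le 2c(j-1)+1\le i$, so the hypothesis ``at most $c\ell_j$ entries of $\boldu$ exceed $\boldt_{\ell_j}$'' yields $\boldu^\da_i\le\boldu^\da_{c\ell_j+1}\le\boldt_{\ell_j}=\widehat\boldu_i$.

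Step 2 is to compare $\topl{k}{\widehat\boldu}$ with $\topl{k}{g(\boldt)}$. Since the multiset of coordinates of $\widehat\boldu$ consists of $2c$ copies of each $g(\boldt)_j$ in non-increasing order, writing $k=2cm+r$ with $0\le r<2c$ gives $\topl{k}{\widehat\boldu}=2c\,\topl{m}{g(\boldt)}+r\cdot g(\boldt)_{m+1}$. For $c\ge 1$ and $k\ge 1$ one has $k\ge m+1$, whence $2c\,\topl{k}{g(\boldt)}\ge 2c\,\topl{m+1}{g(\boldt)}=2c\,\topl{m}{g(\boldt)}+2c\cdot g(\boldt)_{m+1}\ge\topl{k}{\widehat\boldu}$. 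Chaining the two steps gives $\topl{k}{\boldu}=\sum_{i=1}^k\boldu^\da_i\le\sum_{i=1}^k\widehat\boldu_i=\topl{k}{\widehat\boldu}\le 2c\,\topl{k}{g(\boldt)}$ for every $k$, and \cref{lemma:majorization} finishes the proof. The main obstacle I anticipate is precisely the arithmetic inequality $\text{advance}(j)\le 2(j-1)$ for $j\ge 2$: it is the bridge linking the $2c$-stretch of $\widehat\boldu$ to the ``$c\ell$-slack'' in the hypothesis and is exactly what pins the final factor to $2c$.
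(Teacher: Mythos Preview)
Your proposal is correct and follows essentially the same route as the paper: both introduce the same auxiliary vector (the paper's $\boldu'$ is exactly your $\widehat\boldu$, the first $n$ coordinates of the $2c$-fold repetition of $g(\boldt)$) and establish the same pointwise domination $\boldu^\da_i\le\widehat\boldu_i$ via the same counting argument. The only cosmetic difference is in the second half: the paper observes directly that $\widehat\boldu$ decomposes into $2c$ pieces each dominated by $g(\boldt)$ and applies the triangle inequality of the norm, whereas you verify the top-$k$ inequalities and invoke the majorization lemma---equivalent arguments yielding the same bound.
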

\begin{proof}
We construct an $n$-dimensional vector $\boldu'$. 
For each $\ell\in \POS$ and $\ell'$ which is the largest element in $\POS\cup \{0\}$ that is less than $\ell$,
we have $\boldu'_i=t_\ell$ for $\min\{n,2c\ell'\}<i\leq \min\{n,2c\ell\}$.
So there are at least
$\min\{n,2c\ell\}$ elements in $\boldu'$ that is at least $\boldt_\ell$.

We consider the $i$th element in $\boldu^\da$. Let $\ell$ be the smallest number in $\POS$ such that $i\leq 2c\ell$.
Then if $\ell>1$, $\ell/2$ is also in $\POS$, and thus $i>c\ell$.
By definition of $\boldu'$, we have ${\boldu'_i}^\da=\boldt_\ell$.
If $\boldu^\da_i>\boldt_\ell$
then $\ell>1$ because the largest element in $\boldu$ is at most $\boldt_1$, and then the number of elements larger than $\boldt_\ell$ in $\boldu$ is at least $i>c\ell$, contradiction to the condition of the lemma.
Therefore, for any $1\leq i\leq n$, $\boldu_i^\da\leq {\boldu'_i}^\da$. This gives
$$f(\boldu)\leq f(\boldu').$$

Also, notice that for any $\ell\in\POS$ and $\ell'$ which is the largest element in $\POS\cup \{0\}$ that is less than $\ell$, there are $\ell-\ell'$ elements $\boldt_\ell$ in $g(\boldt)$.
So by duplicating $g(\boldt)$ for $2c$ times and choosing the largest $n$ elements, we gets $\boldu'$. Thus by triangle inequality of norm, we have $$f(\boldu')\leq 2cf(g(\boldt)).
$$
This completes the proof.
\end{proof}

Now we prove Theorem~\ref{thm:equivalence}
which establishes
the equivalence between norm approximation and
the \LBO\ problem.

\begin{proofofthm}{\ref{thm:equivalence}}
We first reduce the \minnorm\ problem to the \LBO\ problem. We enumerate all possible threshold vectors $\boldsymbol{t}\in \mathbb{R}^{\text{POS}}$ as defined in Lemma~\ref{lm:polyguess}
(note that there are only $n$ elements, so we just need to let $Q$ (in Lemma~\ref{lm:polyguess}) be the set of the values of all elements).
Suppose $\boldsymbol{t}$ is a valid guess. We can also assume that we have guessed the exact value of $o_1^\da$ (because it only has $n$ choices).
We construct sets $S_1,\cdots,S_T$ for \LBO\ 
in the following way. For each element $e\in \calU$, we do the following:
\begin{itemize}
    \item if its value $v_e$ is larger than $\boldt_1$, we do not add it to any set;
    \item if its value $v_e$ is at most $\max\{\boldt_n,\varepsilon o_1^\da/n\}=\max\{\boldt_n,\varepsilon \boldt_1^\da/n\}$, we add it to $S_{T}$;
    \item   otherwise, if its value $v_e$ is at most $\boldt_{\ell}$ and larger than $\boldt_{\text{next}(\ell)}$,
    where $\ell=2^i$, we add it to $S_{i+1}$, for $0\leq i\leq T-1$. Recall $\text{next}(\ell)$ means the next element in $\POS$.
\end{itemize}

Now, consider the optimal solution $\boldo=\boldu^\da[S^*]$ for \minnorm\ corresponding to the valid guess $\boldt$.
For $1\leq i\leq T-1$, consider how many elements in $S^*$ that are added in $S_i$. By definition, an element $e$ in $S_i$ have value $\boldv_e$ larger than $\boldt_{2^i}$.
Also, the value $v_e$
is larger than $\varepsilon o_1^\da/n$. 
For $\ell\in\POS$, $o_\ell\leq \boldt_\ell$ by definition of valid guess. 
Thus, since $\boldt$ is a valid guess, there are at most $2^i$ elements of $S^*$ from $S_i$
(note that this also holds for $i=T=\ceil{\log n}$ as there are $n$ elements).

If we can solve the $c$-\LBO\ problem, we can get a solution $S\in \calF$
such that there are at most $c\cdot 2^i$ elements in each $S_i$.
As each element $e$ with value $v_e>t_\ell$ for $\ell=2^i$ cannot be in $S_j$ for $j>i$, for each $\ell$, the number of elements with value larger than $t_\ell$ is at most $2c\ell$.
We partition $S$ into $A,B$ such that the elements in $A$ have values less than $\varepsilon o_1/n$ and 
the elements in $B$ have values at least $\varepsilon o_1/n$. 
We need this partition because of the condition 
$o_\ell^\da\geq \varepsilon o_1^\da/n$ in Lemma~\ref{lm:polyguess}.
Note that $A,B$ are not needed in the algorithm, but only useful in the analysis.
So we have
$$
f(\boldv[S])\leq f(\boldv[A])+f(\boldv[B])\leq n\cdot f(\varepsilon o_1/n)+4cf(g(\boldt))\leq \varepsilon f(\boldo)+4c(1+\varepsilon)f(\boldo)\leq 4c(1+2\varepsilon)f(\boldo)
$$
where the first inequality following from the triangle inequality of the norm,
the second from the definition of $A$ and Lemma~\ref{lem:d-3} and the third from Lemma~\ref{lem:d-2}.
Therefore, the first part of Theorem \ref{thm:equivalence} follows.

Now we prove the other direction of Theorem \ref{thm:equivalence}. 
Suppose we can $c$-approximate the norm optimization problem \minnorm\
for any monotone symmetric norm.
Consider the norm
$$
f(\mathbf{v})=\max_{\ell=2^i,1\leq i\leq \log (n+2)}\left\{\frac{\topp{\ell-2}(\mathbf{v})}{2^{i/2}}\right\}.
$$
We construct $\calU$ to be the union of $S_1,S_2,\cdots,S_T$ and
the elements in $S_i$ have value $1/2^{i/2}$ for any $1\leq i\leq T-1$. 
The elements in $S_T$ have value 0. 
So for the 1-valid optimal solution $S^*$ for the $\LBO$ problem, consider the value of $f(\boldv\assigned{S^*})$. For any $1\leq i\leq T-1$ and the corresponding $\ell=2^i$,
we can see that
$$
\topp{\ell-2}(\mathbf{v})=\sum_{j=1}^{i-1}\frac{2^j}{2^{j/2}}\leq \frac{2^{i/2}}{\sqrt{2}-1}.
$$
So the norm $f(\boldv\assigned{S^*})$ is at most $1/(\sqrt{2}-1)$.
    
Let $a=\ceil{\log \frac{4c^2}{(\sqrt{2}-1)^2}}$ and $c'=2^a$. 
For a feasible solution $S$ for the $\LBO$, suppose there are at least $2^a\cdot 2^j$
elements in $S_j$ (i.e. $|S\cap S_j|\geq 2^{a+j}$). 
We consider the $\topp{\ell-2}(\mathbf{v})$ for $i=a+j,\ell=2^i$ (As there are at most $n$ elements, $i\leq\log n$). 
It shows that the norm $f(\boldv[S])$ is at least $$\frac{2^{a+j}-2}{2^{(a+j)/2+j/2}}> \frac{1}{2}\sqrt{2^a}$$
as $a+j\geq 3$.
So consider the solution $D\in\calF$ of $\LBO$ we get from solving 
the $\minnorm$ problem by reduction above.
If the norm $f(\boldv[D])$ is larger than $c/(\sqrt{2}-1)$, there is no 1-valid solution for $\LBO$. Otherwise, the norm $f(\boldv[D])$ is at most $c/(\sqrt{2}-1)\leq \sqrt{2^a}/2$. Since the norm is at most $\frac{\sqrt{2^a}}{2}$, we know that the solution for the $\minnorm$ problem is a solution for the $c'$-$\LBO$ problem. As
$$c'\leq 2\times\frac{4c^2}{(\sqrt{2}-1)^2} <47c^2,$$
the second part of Theorem~\ref{thm:equivalence} follows.
\end{proofofthm}

\vspace{0.1cm}
\noindent
{\bf A Logarithmic Approximation:}
Based on \cref{thm:equivalence}, we can easily deduce the following general theorem.
We use $\myproblem$ to denote a general combinatorial optimization problem with the min-sum objective function
$\min_{S\in \calF} \boldv(S)$, 
where we write $\boldv(S)=\sum_{e\in S}v_e$
and 
$\calF$ is the set of feasible solutions.

\begin{theorem}
\label{thm:logapprox}
If there is a poly-time approximation algorithm for the min-sum problem $\myproblem$ (with approximation factor $\alpha\geq 1$), 
there is a poly-time factor $(4\alpha\lceil\log n\rceil+\epsilon)$ approximation algorithm for the corresponding \minnorm\ problem for any fixed constant $\epsilon>0$.
\end{theorem}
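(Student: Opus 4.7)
The plan is to reduce the task to the first direction of Theorem~\ref{thm:equivalence}: that theorem already converts any polynomial-time $c$-\LBO\ solver into a $(4c+\epsilon)$-approximation for \minnorm, so it suffices to show how to solve $c$-\LBO\ with $c=\alpha\lceil\log n\rceil$ using only the assumed $\alpha$-approximation oracle for the min-sum version of $\myproblem$ on the same feasibility structure $\calF$.

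First I would construct a min-sum instance of $\myproblem$ from a given \LBO\ input $(\calU;S_1,\ldots,S_T;\calF)$ with $T=\lceil\log n\rceil$ by assigning each element $e\in S_i$ the weight $v_e=2^{-i}$ (elements of $\calU$ not lying in any $S_i$ are discarded or given weight $0$) and keeping the feasibility family $\calF$ unchanged. The key observation is that any $1$-valid solution $D^*\in\calF$ has total weight
\[
\sum_{e\in D^*}v_e \;=\;\sum_{i=1}^{T}\frac{|D^*\cap S_i|}{2^i}\;\leq\;\sum_{i=1}^{T}\frac{2^i}{2^i}\;=\;\lceil\log n\rceil,
\]
so whenever the \LBO\ instance admits a $1$-valid solution, the min-sum optimum is at most $\lceil\log n\rceil$.

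Next I would run the $\alpha$-approximation algorithm on this min-sum instance, obtaining $D\in\calF$ with $\sum_{e\in D}v_e\leq\alpha\lceil\log n\rceil$; if the returned value exceeds this threshold, I safely certify that no $1$-valid solution exists. Otherwise, since the single term $|D\cap S_i|\cdot 2^{-i}$ is bounded by the whole weighted sum, one gets $|D\cap S_i|\leq \alpha\lceil\log n\rceil\cdot 2^i$ for every $i\in[T]$ simultaneously, so $D$ is a $(\alpha\lceil\log n\rceil)$-valid solution, i.e. we have solved $(\alpha\lceil\log n\rceil)$-\LBO\ in polynomial time. Feeding $c=\alpha\lceil\log n\rceil$ into Theorem~\ref{thm:equivalence} then yields the claimed $(4\alpha\lceil\log n\rceil+\epsilon)$-approximation for \minnorm.

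The proof is short and I do not expect any real obstacle; the only design choice is the geometric weighting $v_e=2^{-i}$ on $S_i$, which is precisely what forces the budget sum of a $1$-valid solution to telescope to $\lceil\log n\rceil$ and allows a single scalar min-sum guarantee to be decoded into $T$ simultaneous group budgets, losing only a factor of $\lceil\log n\rceil$ over the $\alpha$-approximation.
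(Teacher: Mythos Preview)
Your proposal is correct and essentially identical to the paper's own proof: both assign weight $2^{-i}$ to elements of $S_i$ (and $0$ elsewhere), observe that a $1$-valid solution has total weight at most $T=\lceil\log n\rceil$, run the $\alpha$-approximation to get a feasible $D$ with weight at most $\alpha T$, deduce $|D\cap S_i|\le \alpha T\cdot 2^i$ for every $i$, and then invoke Theorem~\ref{thm:equivalence} with $c=\alpha\lceil\log n\rceil$. Your explicit mention of the certification step (declaring ``no $1$-valid solution'' when the returned cost exceeds $\alpha\lceil\log n\rceil$) is a nice touch that the paper leaves implicit.
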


\begin{proof}
By \cref{thm:equivalence}, we just need to find a poly-time $O(\log n)$ approximation algorithm for the \LBO\ version.
Consider the input $\calU,\calF,S_1,\cdots,S_T$ where $T=\lceil \log n\rceil$ (recall the definitions in \cref{sec:reduction}).
Then we construct $v_e$ for each $e\in \calU$ by:
\begin{enumerate}
\item $v_e=0$ if $e\not\in S_1,S_2,\cdots,S_T$;
\item $v_e=1/2^i$ if $e\in S_i$.
\end{enumerate}
Then, if there is a $1$-valid solution for the \LBO\ problem, 
we can see that 
there is a feasible set $S^\star\in\calF$ with $\boldv(S^\star)\leq T$.
Then, by the assumption of the theorem, the approximation algorithm for $\myproblem$ can output a feasible solution $S\in \calF$ with $\boldv(S)\leq \alpha T$.
This further implies that
$S$ is a $\alpha T$-valid solution, because
$$
|S\cap S_i|=\boldv(S\cap S_i)\cdot 2^i\leq \alpha T \cdot 2^i,
\text{ for each } i\in [T].
$$
This means that the $\alpha T$-\LBO\ problem can be solved in polynomial time.
By \cref{thm:equivalence}, we complete the proof.
\end{proof}

\section{Multi-dimensional Knapsack Cover Problem} 
\label{sec:knapsackcover}
In this section, we consider the multi-dimensional knapsack cover problem
defined as follows.

\begin{definition}[Min-norm $d$-dimensional Knapsack Cover Problem (\minnormknapcov)]
Let $d$ be a positive integer.
We are given a set of items $\calU = \{1, 2, \dots, n\}$, where each item $i \in \calU$ has a weight vector $w_i \in \mathbb{R}^d$. The feasible set $\calF$ is defined as:

\[
\calF = \left\{ D \subseteq \calU : \sum_{v \in D} w_{v,i} \geq 1 \quad \forall i \in \{1, 2, \dots, d\} \right\}.
\]

Now, given a symmetric monotone norm $f$ and a value vector $\boldv \in \mathbb{R}_{\geq 0}^{\calU}$,
we can define the norm minimization problem for $d$-dimensional Knapsack Cover and denote it as \minnormknapcov.
\end{definition}

In light of \cref{thm:equivalence}, 
we introduce $T = \lceil \log n \rceil$ disjoint sets $S_1, S_2, \dots, S_T$ and consider the \LBO\ problem with $(\calU;S_1,\ldots,S_{T};\calF)$,
which We denote as 
\LBOknapcov.
We consider \LBOknapcov\ for two cases:
(1)
$d=O(1)$ and 
(2) \(d = O( \sqrt{\log n/\log\log W})\) ($W$ will be defined 
in Section~\ref{sec:Knap-2}).
For both cases, we use the following natural linear programming formulation for \LBOknapcov\ :

\begin{equation}
\begin{aligned}
	\min && 0 && & \\
    s.t. && \sum_{v\in \calU} x_v w_{v,i}\geq 1 && & \forall 1\leq i\leq d\\
    && \sum_{v\in S_j} x_v\leq 2^j && & \forall 1\leq j\leq T\\
    && 0\leq x_v\leq 1 && & \forall v\in S_j,1\leq j\leq T
\end{aligned}
\tag{LP-KnapCover-1}
\label{LP-KnapCover-1}
\end{equation}

For both cases, we develop a method called \textbf{partial enumeration}. Partial enumeration lists a subset of possible partial solutions for the first several groups. Here is the complete definition:

\begin{definition}[Partial Enumeration] \label{def:part-enum}
For a $(c,c_0)$-\LBO\ problem with set $S_1,S_2,\cdots S_{T}$, the \textbf{partial enumeration} algorithm first determine a quantity $T_0$ (depending on the problem at hand). The partial enumeration algorithm returns a subset $X\subseteq 2^{S_1}\times 2^{S_2}\times \cdots \times 2^{S_{T_0}}$. Each element of $X$ is a \textbf{partial solution} $(D_1,D_2,\cdots, D_{T_0})$ (Recall the definition of partial solution: $D_i\subseteq S_i$), and this algorithm ensures: 

\begin{enumerate}
    \item If there exists a $c_0$-valid solution, then at least one partial solution $(D_1,D_2,\cdots, D_{T_0})\in X$ satisfies
    that 
    there exists an $c$-valid \textbf{extended solution} (A solution $D$ is called an extended solution of a partial solution $(D_1,\cdots,D_{T_0})$ if $D\cap S_i=D_i$ for all $i=1,2,\cdots, T_0$).

    \item The size of $X$ is polynomial, and this partial enumeration algorithm runs in polynomial time.
\end{enumerate}
\end{definition}

\subsection{An Algorithm for $d=O(1)$}
\label{sec:Knap-1}
In this subsection, we design a polynomial-time constant-factor approximation algorithm for \minnormknapcov\
with $d=O(1)$. 
\footnote{
We are grateful to an anonymous reviewer for her/his insightful suggestions, which significantly simplifies the algorithm
in this subsection (in particular the rounding algorithm
Algorithm~\ref{Algo-IR-Knap}). 
Here we only present the simplified algorithm.
}
\begin{theorem} \label{thm:knap-1}
If $d$ is a constant, then for any 
constant $\varepsilon>0$, there exists an polynomial-time algorithm which can solve $(1+\varepsilon)$-\LBOknapcov . Thus we have a polynomial-time $(4+\varepsilon)$-approximation algorithm for \minnormknapcov\ when $d=O(1)$.
\end{theorem}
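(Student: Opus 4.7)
The plan is to combine a partial enumeration over the first few groups with LP rounding of the natural relaxation~\eqref{LP-KnapCover-1} for the remaining groups, exploiting the fact that when $d$ is constant every vertex of the residual LP has only $O(d)$ fractional coordinates. By Theorem~\ref{thm:equivalence}, producing a polynomial-time $(1+\varepsilon)$-\LBOknapcov\ solver automatically yields the claimed $(4+\varepsilon)$-approximation for \minnormknapcov\ after rescaling $\varepsilon$.

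Concretely, I would set $T_0 = \lceil \log(2d/\varepsilon)\rceil$, a constant for constant $d,\varepsilon$, and run a partial enumeration (in the sense of Definition~\ref{def:part-enum}) over every tuple $(D_1,\ldots,D_{T_0})$ with $D_j \subseteq S_j$ and $|D_j| \leq 2^j$. The number of candidate tuples is at most $\prod_{j=1}^{T_0}(|S_j|+1)^{2^j} = n^{O(2^{T_0})} = n^{O(d/\varepsilon)}$, which is polynomial in $n$. For each tuple I would form a residual LP from~\eqref{LP-KnapCover-1} by substituting $x_v=1$ for $v \in \bigcup_{j\le T_0} D_j$, $x_v=0$ for $v \in \bigcup_{j\le T_0}(S_j \setminus D_j)$, subtracting the already-satisfied portion of each coverage constraint, and retaining the remaining variables over $S_{T_0+1} \cup \cdots \cup S_T$ together with their group-size and box constraints.

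The key structural step is to take a vertex solution $x^*$ of each feasible residual LP. By Lemma~\ref{lem:extreme:points}, the number of fractional coordinates $f$ is at most the number of tight non-box constraints, i.e., at most the $d$ coverage constraints plus the tight group-size constraints for $j > T_0$. Any tight group constraint with only one fractional variable would coincide with a box constraint, so every tight group constraint contains at least two fractional variables, giving at most $f/2$ such constraints. Combining, $f \leq d + f/2$, hence $f \leq 2d$. I then round every fractional coordinate of $x^*$ up to $1$. Coverage constraints are preserved (they can only improve under rounding up), and the enumerated groups automatically satisfy $|D\cap S_j|=|D_j|\le 2^j$. For $j > T_0$, at most $2d$ rounded-up items can land in $S_j$, so $|D\cap S_j|\le 2^j+2d\le (1+\varepsilon)2^j$ by the choice of $T_0$; hence $D$ is $(1+\varepsilon)$-valid.

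Correctness of the partial enumeration is immediate: if a $1$-valid solution $D^*$ exists, then $(D^*\cap S_1,\ldots,D^*\cap S_{T_0})$ is one of the enumerated tuples, and for that guess the residual LP is feasible (witnessed integrally by $D^*\setminus\bigcup_{j\le T_0}S_j$). The main technical hurdle is the vertex-solution bound $f\leq 2d$—this is what buys the constant-factor slack and forces the quantitative choice of $T_0$. A secondary subtlety is that the $2d$ rounded-up items could all concentrate in a single group $S_j$, so we must ensure $2d\le \varepsilon\cdot 2^j$ for \emph{every} $j>T_0$, which is precisely guaranteed by $T_0\ge \log(2d/\varepsilon)$; any less aggressive partial enumeration would break the per-group validity bound.
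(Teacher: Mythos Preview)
Your proposal is correct and follows essentially the same approach as the paper: partial enumeration over the first $T_0=\Theta(\log(d/\varepsilon))$ groups combined with LP rounding, where the key observation is that any vertex of the residual LP has at most $O(d)$ fractional coordinates (because the groups are disjoint and each tight group constraint, having integer right-hand side, must involve at least two fractional variables). The paper's Lemma~\ref{lem:Knap-3} refines the count slightly to obtain at most $d+1$ extra items per group rather than your global bound of $2d$, but this only shifts the constant in the choice of $T_0$ and does not change the argument.
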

In this subsection, the partial enumeration algorithm needs an integer $T_0>\log d$ (we discuss  how to determine the exact value of $T_0$ later). Recall that a partial solution can be described as a vector $(D_1,D_2,\cdots, D_{T_0})$, where $D_i\subseteq S_i$ for $i=1,2,\cdots, T_0$.
Our partial enumeration algorithm 
in this section simply outputs 
the set of all possible \textbf{partial solution}s 
(i.e., $X=X^{all}_1\times X^{all}_2\times \ldots \times X^{all}_{T_0}$ where $X^{all}_i=\lbrace X'\subseteq S_i: |X'|\le 2^i\rbrace$). The number of of such partial solutions 
can be bounded as in Lemma~\ref{lem:Knap-2}.  

Then we need to extend a partial solution to a complete solution, i.e., construct the rest of the sets $D_{T_0+1},D_{T_0+2},\cdots,D_T$.
For a partial solution $(D_1, D_2, \dots, D_{T_0})$, a solution $D \subseteq \calU$ is called an \textbf{Extended Solution} of $(D_1, D_2, \dots, D_{T_0})$ if and only if $D \cap S_j = D_j \quad \text{for each } 1 \leq j \leq T_0$. 
Without loss of generality, we sometimes describe an extended solution in the form of $(D_1, D_2, D_3, \dots, D_T)$. 

After the enumeration, we sequentially check these partial solutions. For each partial solution $(D_1, D_2, \dots, D_{T_0})$, 
we consider the following linear program:

\begin{equation}
\begin{aligned}
    \min \quad & 0 \\
    \text{s.t.} \quad & \sum_{T_0 < j \leq T} \sum_{u \in S_j} x_u w_{u,i} \geq \max \left(0,1-\sum_{1\le j\le T_0} \sum_{u\in D_j} w_{u,i}\right) \quad & \forall 1 \leq i \leq d \\
    & \sum_{u \in S_j} x_u \leq 2^j \quad & \forall T_0 < j \leq T \\
    & 0 \leq x_u \leq 1 \quad & \forall u \in S_j, \; T_0 < j \leq T
\end{aligned}
\tag{LP-KnapCover}
\label{LP-KnapCover}
\end{equation}

We solve the LP \cref{LP-KnapCover} and perform the following steps:

\begin{enumerate}
    \item Check the feasibility of 
    Linear Programming \cref{LP-KnapCover}. If there is no feasible solution, then return "No Solution" directly. 
    Otherwise, obtain an extreme point $x^{*}$ for \cref{LP-KnapCover}.

    \item For any $T_0< j\leq T$, round 
    all nonzero $x^{*}_u$ to 1 for all $u \in S_j$. More specifically, we add them to our extended solution: $D_j \gets \{u \in S_j : x^{*}_u > 0\}$ for all $j$ such that $T_0 < j \leq T$.
\end{enumerate}

\begin{algorithm}
\caption{Rounding Algorithm for $d$-dimensional Knapsack Cover Problem} \label{Algo-IR-Knap}
\KwResult{Solution $D$ or "No Solution"}
$D \gets D_1 \cup D_2 \cup \dots \cup D_{T_0}$\;
\If{\cref{LP-KnapCover} has no solution}{
    Return "No Solution"\;
}
Solve Linear Program \cref{LP-KnapCover} and obtain an extreme point $x^{*}$\;
$D \gets D \cup \{u \in S_j \mid T_0 < j\leq T, x^{*}_u > 0\}$\;
Return $D$\;
\end{algorithm}

\begin{lemma} \label{lem:Knap-2}
There exists a $\exp(O(2^{T_0}\log n))$-time algorithm to enumerate all partial solutions $(D_1, \cdots, D_{T_0})$ satisfying $|D_i| \le 2^i$ for all $1 \le i \le T_0$. 
\end{lemma}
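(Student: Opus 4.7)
The plan is a direct counting argument combined with a straightforward enumeration procedure. For each group index $1 \le i \le T_0$, I would first count the number of subsets $D_i \subseteq S_i$ with $|D_i| \le 2^i$. Since $|S_i| \le n$, this count is at most
\[
\sum_{k=0}^{2^i} \binom{|S_i|}{k} \;\le\; (n+1)^{2^i},
\]
using the standard bound that the number of subsets of size at most $m$ of an $n$-element set is at most $(n+1)^m$.

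Next, since a partial solution $(D_1, \dots, D_{T_0})$ is just a tuple of one such subset per group, the total number of partial solutions to be enumerated is bounded by
\[
\prod_{i=1}^{T_0} (n+1)^{2^i} \;=\; (n+1)^{\sum_{i=1}^{T_0} 2^i} \;\le\; (n+1)^{2^{T_0+1}} \;=\; \exp\bigl(O(2^{T_0}\log n)\bigr).
\]
The enumeration algorithm itself is then immediate: for each $i$ from $1$ to $T_0$, list all subsets of $S_i$ of cardinality at most $2^i$ (for example by iterating over all $\binom{|S_i|}{k}$ combinations for each $k \le 2^i$), then output the Cartesian product of these lists. Each individual subset is produced in time polynomial in $n$, and the total output size dominates the running time, giving overall $\exp(O(2^{T_0}\log n))$ time.

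There is no real obstacle here; the only thing to be slightly careful about is the bound $\sum_{i=1}^{T_0} 2^i = 2^{T_0+1} - 2 = O(2^{T_0})$, which ensures that the exponent in the product telescopes to the claimed form. I would simply state this bound explicitly and conclude.
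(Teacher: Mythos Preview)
Your proposal is correct and follows essentially the same counting argument as the paper: bound the number of admissible $D_i$ per group, take the product, and observe that $\sum_{i=1}^{T_0} 2^i = O(2^{T_0})$. Your use of the encoding bound $\sum_{k=0}^{m}\binom{n}{k}\le (n+1)^m$ is in fact slightly cleaner than the paper's argument, which bounds the sum by $(2^i+1)\binom{n}{2^i}$ and then needs a small case split on $i=1$ versus $i\ge 2$ to get $n^{2^i}$; both routes arrive at the same $\exp(O(2^{T_0}\log n))$ conclusion.
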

\begin{proof}
Assume that $d=o(n)$ and $n$ is large enough. We need to select at most $2^i$ items from $S_i$ for $1\leq i\leq T_0$. The number of ways to choose these items is not greater than: 
\[
\sum_{j=0}^{2^i} \binom{|S_i|}{j} \leq \sum_{j=0}^{2^i} \binom{n}{j}\leq (2^i+1)\binom{n}{2^i}\leq (2^i+1)\cdot\frac{n^{2^i}}{(2^i)!}.
\]
It is at most $n^{2^i}$ when $i\geq 2$. When $i=1$, as $n\geq 4$,
\[
\sum_{j=0}^{2^i} \binom{|S_i|}{j} \leq \sum_{j=0}^{2^i} \binom{n}{j}=1+n+\frac{n(n-1)}{2}\leq n^2.
\]
Therefore, the total number of partial solutions is no more than:
\[
\prod_{i=1}^{T_0} n^{2^i} = n^{\sum_{i=0}^{T_0} 2^i} \le n^{2^{T_0+1}}.
\]
Hence, the time complexity of enumeration 
algorithm can be bounded $\exp(O(2^{T_0}\log n))$.
\end{proof}

\begin{lemma} \label{lem:Knap-3}
If Algorithm \ref{Algo-IR-Knap} returns a solution $D$, then $|D \cap S_j| \le \left( \frac{d}{2^{T_0}}+1\right)\cdot 2^{j}$ for all $T_0 < j \le T$.
\end{lemma}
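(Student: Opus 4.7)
My plan is to analyze the extreme point $x^*$ produced by Algorithm~\ref{Algo-IR-Knap} and to bound $|D \cap S_j|$ for each $T_0 < j \le T$ by splitting it into the rounded-to-one part $F_j = \{u \in S_j : x^*_u = 1\}$ and the fractionally-rounded part $I_j = \{u \in S_j : 0 < x^*_u < 1\}$. Note that the algorithm sets $D \cap S_j = F_j \cup I_j$ for $j > T_0$. The capacity constraint in \cref{LP-KnapCover} immediately yields $|F_j| \le \sum_{u \in S_j} x^*_u \le 2^j$, so the entire task reduces to controlling $|I_j|$.

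The main obstacle is that the naive global extreme-point argument for \cref{LP-KnapCover} only shows that the \emph{total} number of fractional coordinates across all groups is at most $d + (T - T_0) = d + O(\log n)$, which is far too weak to yield a per-group bound of the required form $O(d\cdot 2^{j-T_0})$. I would get around this by invoking \cref{lem:LinearProg-2} with $Q = S_j$: after fixing $x_u = x^*_u$ for every $u \notin S_j$, the restriction $x^*|_{S_j}$ is still an extreme point of the reduced polyhedron on the variables $\{x_u\}_{u \in S_j}$. This reduced polyhedron retains the $d$ covering constraints (with right-hand sides shifted by the contributions of the fixed outside variables), the single capacity constraint $\sum_{u \in S_j} x_u \le 2^j$, and the box constraints $0 \le x_u \le 1$ for $u \in S_j$, so there are exactly $d + 1$ non-box constraints.

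Applying the extreme-point characterization of \cref{lem:extreme:points} to this reduced LP on $|S_j|$ variables, each of the $|S_j| - |I_j|$ integral coordinates contributes a linearly independent tight box constraint, so the remaining $|I_j|$ linearly independent tight constraints in a basis must come from the $d + 1$ non-box constraints, forcing $|I_j| \le d + 1$. Combining with $|F_j| \le 2^j$ gives $|D \cap S_j| \le 2^j + d + 1$. Finally, since $j > T_0$ and $d \ge 1$ so that $d + 1 \le 2d \le d \cdot 2^{j - T_0}$, we conclude $|D \cap S_j| \le 2^j + d \cdot 2^{j - T_0} = \bigl(d/2^{T_0} + 1\bigr) \cdot 2^j$, as required.
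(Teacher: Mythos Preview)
Your proof is correct and reaches the same quantitative bound $|D\cap S_j|\le 2^j+d+1$ as the paper, but the route is genuinely different. The paper works globally: after stripping out the integral coordinates of $x^*$ (via \cref{lem:LinearProg-2}), it counts $m_1$ tight feasibility constraints and $m_2$ tight cardinality constraints, observes that every tight cardinality constraint (having integer right-hand side) must carry at least two fractional variables, deduces $m_2\le m_1\le d$, and then bounds the number of fractional coordinates in a single group $S_j$ by $m_1+m_2-2\max(m_2-1,0)\le d+1$. Your argument instead restricts \emph{locally} to $Q=S_j$ via \cref{lem:LinearProg-2}: because the cardinality constraints for all other groups $S_{j'}$ involve none of the variables in $S_j$, they become vacuous after the restriction, leaving only the $d$ covering constraints plus the single capacity constraint for $S_j$. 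The extreme-point dimension count on the reduced polytope then immediately gives $|I_j|\le d+1$.

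Your local restriction is cleaner and avoids the somewhat delicate combinatorial bookkeeping on $m_1,m_2$; the paper's global argument, on the other hand, gives a touch more structural information (e.g., the bound $m_2\le d$ on how many cardinality constraints can be tight simultaneously), though that extra information is not used downstream. Both approaches hinge on the same two facts: the capacity constraint bounds the integral part by $2^j$, and the scarcity of non-box constraints bounds the fractional part.
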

\begin{proof}

Let $x^{*}$ be an extreme point 
of \cref{LP-KnapCover}. 
By Lemma~\ref{lem:extreme:points}, if $x^*$ has $m$ dimensions, then there are $m$ linearly independent constraints of \ref{LP-KnapCover} that hold with equality at $x^*$.
If the basis contains equality $x_u = 0$ or $x_u = 1$ for some $u \in S_j$, then $x^{*}_u$ must be an integer. 
Excluding those constraints, There are two types of constraints:

\begin{enumerate}
    \item  \textbf{Feasibility Constraints}: $\sum_{T_0 < j \leq T} \sum_{u \in S_j} x_u w_{u,i} \geq \max\left(0,1-\sum_{1\le j\le T_0} \sum_{u\in D_j} w_{u,i}\right), \quad  \forall 1 \leq i \leq d$;
    \item  \textbf{Cardinality Constraints}: $\sum_{u \in S_j} x_u \leq 2^j, \quad \forall T_0 < j \leq T$.
\end{enumerate}

Without loss of generality, we assume that there are $m_1$ tight feasibility constraints, $m_2$ tight cardinality constraints, and $m_1+m_2$ fractional entries in $x^{*}$
(By \cref{lem:extreme:points} based on the linear program after removing integer entries according to Lemma~\ref{lem:LinearProg-2}). For a {\em tight} cardinality constraint corresponding to $S_j \,\,(T_0<j\leq T)$, since it only contains integer coefficients, $x^{*}\assigned{S_j}$ must include at least two fractional entries. Therefore, $m_1+m_2\ge 2m_2$,
(the total number of fractional entries is at least 
the number of fractional entries associated with cardinality constraint),
which implies $m_2\le m_1\le d$. For $T_0< j\leq T$, we think about the fractional entries in the constraint corresponding to $S_j$. The total number of fractional entries is no more than $m_1+m_2$ and each other group ($S_{j'}$, $j'\ne j$) contains at least two fractional entries. Therefore, the fractional variables from $D\cap S_j$ is at most $m_1+m_2-2\max(m_2-1,0)$.
Hence, we can conclude the following:
\begin{align*}
|D \cap S_j| &= \sum_{u \in S_j} \mathbf{1}_{x^{*}_u > 0} \le \left( \sum_{u \in S_j} \mathbf{1}_{0 < x^{*}_u < 1} \right) + \left( \sum_{u \in S_j} x^{*}_u \right) \\
& \le \left( m_1+m_2-2\max(0,m_2-1)\right) + 2^j \le d+1+2^{j}\le \left( \frac{d}{2^{T_0}}+1\right)\cdot 2^j.
\end{align*}
\end{proof}

\begin{proofofthm}{\ref{thm:knap-1}}
According to \cref{lem:Knap-2} and \ref{lem:Knap-3}, our algorithm 
runs in $\exp(O(2^{T_0}\log n))$ time,
and it can return a 
$\left( \frac{d}{2^{T_0}}+1\right)$-valid solution or confirm there is no $1$-valid solution.
Hence, for any constant $\epsilon>0$, we can choose $T_0=\log d+\log(1/\epsilon)+O(1)$. Then, the algorithm runs $\exp(O(d\log n/\epsilon))$-time for $(1+\epsilon)-$\LBOknapcov.

Finally, applying \cref{thm:equivalence}, we obtain a polynomial-time $(4+\varepsilon)$-approximation algorithm for the \minnormknapcov\ problem with $d=O(1)$.
\end{proofofthm}

\subsection{An Algorithm for Larger $d$}
\label{sec:Knap-2}

In this subsection, we provide a polynomial-time constant-factor approximation algorithm for $d = O\left( \sqrt{\frac{\log n}{\log\log W}}\right)$, where $W$ is defined as
$$W=\max_{1\le i\le d} \frac{\max_{u\in \calU} w_{u,i}}{\min_{u\in \calU,w_{u,i}>0} w_{u,i}}.$$
To ensure there exist valid solutions, $\lbrace u\in \calU: w_{u,i}>0\rbrace$ must be a non-empty set. Here is the main result in this subsection:
\begin{theorem} \label{thm:knap-2}
There exists a \(poly(n, \log(W))\) algorithm that can solve $2$-\LBOknapcov\ when $d=O\left(\sqrt{\frac{\log n}{\log\log W}}\right)$. Thus we have a $(4+\varepsilon)$-approximation algorithm for \minnormknapcov\ with $d=O\left(\sqrt{\frac{\log n}{\log \log W}}\right)$.
\end{theorem}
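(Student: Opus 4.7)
The plan is to follow the partial-enumeration plus LP-rounding blueprint of Section~\ref{sec:Knap-1}, augmented by a weight discretization that exploits the bounded ratio $W$. I would set $T_0=\lceil\log d\rceil$, so Lemma~\ref{lem:Knap-3} automatically yields $|D\cap S_j|\le (d/2^{T_0}+1)\cdot 2^j\le 2\cdot 2^j$ for every $j>T_0$; thus once a partial solution $(D_1,\ldots,D_{T_0})$ with $|D_j|\le 2^j$ is in hand, running Algorithm~\ref{Algo-IR-Knap} extends it to a $2$-valid full solution.

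The first obstacle is that Lemma~\ref{lem:Knap-2}'s naive enumeration costs $\exp(O(2^{T_0}\log n))=n^{\Theta(d)}$, which is super-polynomial in the regime $d=\Theta(\sqrt{\log n/\log\log W})$. To shrink the search space, I would first rescale each coordinate so its minimum nonzero weight equals $1$ and then round every nonzero weight up to the nearest power of $2$; this inflates weights by a factor of at most $2$, and the distortion is absorbed into the final $(4+\varepsilon)$-approximation constant. After this rounding, the residual LP~\cref{LP-KnapCover} depends on $(D_1,\ldots,D_{T_0})$ only through the per-group contribution vectors $c_j=\sum_{u\in D_j} w_u$; each coordinate of $c_j$ lies in $[0,2^j W]$ and, after a further snap to powers of $2$, takes only $O(j+\log W)$ distinct values, giving at most $O((j+\log W)^d)$ distinct ``templates'' per group.

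Enumerating templates $(c_1,\ldots,c_{T_0})$ across all $T_0$ groups costs
\[
\prod_{j=1}^{T_0} O\bigl((j+\log W)^d\bigr)\le \exp\bigl(O(d\log d\cdot \log\log W)\bigr).
\]
The key calculation is that $d\log d\cdot \log\log W\le O(\log n)$ in the assumed regime: since $d^2\log\log W\le \log n$ gives $d\log\log W\le \log n/d$, one has $d\log d\cdot \log\log W=\log d\cdot (d\log\log W)\le (\log d/d)\cdot\log n\le \log n$ (using $\log d\le d$). Hence the number of templates is $\mathrm{poly}(n,\log W)$.

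For each template, I would find a realizing partial solution $(D_1,\ldots,D_{T_0})$ (or discard the template) via a per-group dynamic program whose state tracks the rounded partial contribution vector in $\{0,\ldots,O(\log W)\}^d$, running in $\mathrm{poly}(n,\log W)$ time per group. I would then invoke Algorithm~\ref{Algo-IR-Knap} on the remaining groups to extend the partial solution to a $2$-valid full solution, and finally apply Theorem~\ref{thm:equivalence} to convert the resulting $2$-\LBOknapcov\ algorithm into the claimed $(4+\varepsilon)$-approximation for \minnormknapcov. The main obstacle is the template-count estimate $d\log d\cdot \log\log W=O(\log n)$; once this is in place, template realization and the LP-rounding on the extension are routine adaptations of the constructions in Section~\ref{sec:Knap-1}.
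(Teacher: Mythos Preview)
Your high-level plan---set $T_0=\lceil\log d\rceil$, enumerate partial solutions on $S_1,\ldots,S_{T_0}$, then invoke Algorithm~\ref{Algo-IR-Knap} and Lemma~\ref{lem:Knap-3} on the rest---matches the paper. The gap is in the enumeration step.

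You propose to enumerate snapped contribution vectors $(c_1,\ldots,c_{T_0})$ and, for each template, use a DP to recover some $(D_1,\ldots,D_{T_0})$ whose snapped contribution matches. But LP~\cref{LP-KnapCover} is feasible for a given partial solution only if its \emph{actual} contribution dominates (coordinate-wise) that of the optimum's restriction: only then does $D^*$'s extension certify feasibility. Matching the snapped template does not give domination---two subsets with the same snapped contribution can differ by a factor of~$2$ in actual contribution, in either direction---so for the ``correct'' template your DP may return a partial solution for which the residual LP is infeasible. The remark that ``the distortion is absorbed into the approximation constant'' does not fix this: a shortfall in the partial contribution enlarges the LP's right-hand side rather than relaxing a budget, and no later rounding step can compensate.

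The paper secures domination by a different enumeration. It partitions each $S_j$ into at most $(\log W+2)^d$ classes by the log-weight signature $(\lfloor\log(w_{u,i}/\gamma_{j,i})\rfloor)_i$, enumerates the per-class count vector $(c_k)_k$ with $\sum_k c_k\le 2^j$, and then selects $\min\{2c_k,|S_{j,k}|\}$ \emph{arbitrary} items from class~$k$. Lemma~\ref{lem:Knap-Enum-Replace} shows any two items in a class together dominate any single one, so this doubled selection is guaranteed to dominate $D^*$'s contribution, at the price of relaxing $|D_j|$ to $2\cdot 2^j$ (hence the final $2$-validity rather than $1$-validity). The number of count vectors is $\prod_{j\le T_0} r_j^{2^j}\le\exp(O(d^2\log\log W))$, which is $\mathrm{poly}(n)$ precisely when $d=O(\sqrt{\log n/\log\log W})$. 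Your weight-rounding step is essentially the paper's signature map; the divergence is that you then enumerate in the aggregate-contribution space, where the structural domination guarantee is lost.
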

This algorithm employs the same rounding procedure but modifies the partial enumeration method. The new algorithm choose $T_0=\lceil \log d\rceil$. For \( 1 \leq j \leq T_0 \), it partitions \( S_j \) into multiple subsets based on vectors of size \( d \), which represent the logarithms of weights. Instead of enumerating all subsets, we only enumerate the number of elements within each subset and then take double the number of any elements in this subset.

Now, we provide an overview of the algorithm for the \LBOknapcov . We first describe the partial enumeration algorithm. 
We first obtain a set $X_j\subseteq 2^{S_j}$ for each \(1 \le j \le T_0\), and then combine them by direct product: \(X = X_1 \times X_2 \times \cdots \times X_{T_0}\). For each \(j\), we classify \(w_{u,i}\) into \(O(\log W)\) groups for each coordinate \(i = 1, 2, \cdots, d\), and partition \(u \in S_j\) into a total of \(O((\log W)^d)\) groups. We then enumerate the number of elements in each group and select arbitrary elements in the groups.
We can show the number of such partial solutions is bounded by
a polynomial (Lemma~\ref{lem:Knap-Enum-1}).
For each partial solution \((D_1, D_2, \cdots D_{T_0}) \in X\), we use it as part of the input the rounding algorithm (Algorithm \ref{Algo-IR-Knap}). 

Since the rounding algorithm is the same as in the previous section, 
we only describe the partial enumeration procedure in details. 
The details are as follows:

\begin{enumerate}
\item For each dimension \(i\), find the minimum positive weight \(\gamma_{j,i}\) among all items in \(S_j\) (if all of them are 0, \(\gamma_{j,i}\) can be anything and we do not care about it).

\item Construct a modified vector \(w'_u\) for each \(u\) in \(S_j\):
\begin{itemize}
    \item If \(w_{u,i} > 0\), set \(w'_{u,i} = \lfloor \log(w_{u,i}/\gamma_{j,i}) \rfloor + 1\).
    \item If \(w_{u,i} = 0\), set \(w'_{u,i} = 0\).
\end{itemize}

\item We say that two vectors \(w'_u\) and \(w'_v\) are equal if 
\(w'_{u,i} = w'_{v,i}\) for all \(i = 1, 2, \cdots, d\). 
Let \(r_j\) as the number of different vector \(w'_u\)s for $u\in S_j$.

\item Partition \(S_j\) into \(r_j\) groups: \(S_{j,1}, S_{j,2}, \cdots, S_{j,r_j}\) based on different \(w'_u\). Formally, 
$u\in S_j$ and $v\in S_j$ belong to the same group
if $w'_u = w'_v$.

\item For each vector \(c \in \mathbb{Z}^{r_j}\) such that \(c_k \le |S_{j,k}|\) for all \(1 \le k \le r_j\) and $c_1+c_2+\cdots+c_{r_j}\leq 2^j$, pick a subset from \(S_j\) with exactly \(\min \{2c_k, |S_{j,k}|\}\) elements from each partition \(S_{j,k}\), and add this subset to \(X_j\). (Recall that $X_j\subseteq 2^{S_j}$ is a set of subsets of $S_j$. )

\item Finally, the algorithm returns $X=X_1\times X_2\times \ldots\times X_{T_0}$ as the set of partial solutions.
\end{enumerate}

The pseudo-code can be found in Algorithm~\ref{Algo-Enum-Knap}.
\begin{algorithm}
\caption{Partial Enumeration Algorithm for $d$-dimensional Knapsack Cover Problem} \label{Algo-Enum-Knap}
\KwResult{a set $X$ that $X\subseteq 2^{S_1}\times \cdots \times 2^{S_{T_0}}$}
$X_j\gets \emptyset\forall j=1,\cdots T_0$\;
\ForAll{$j=1$ to $T_0$}{
$\gamma_{j,i}\gets \min_{u\in S_j,w_{u,i}>0}\lbrace w_{u,i}\rbrace$ for all $1\le i\le T_0$\;
Define vector $w'_u\in \mathbb{R}^{d}$ for all $u\in S_j$\;
\ForAll{$u\in S_j$, $i\in \lbrace 1,2,\cdots d\rbrace$}{
    \eIf{$w_{u,i}=0$}{
        $w'_{u,i}\gets 0$\;
    }{
        $w'_{u,i}\gets 1+\lfloor \log(w_{u,i}/\gamma_{j,i})\rfloor$\;
    }
}
Partition $S_j$ into several sets $S_{j,1},S_{j,2},\cdots S_{j,r_j}$ based on different $w'$(For $u\in S_{j,x},v\in S_{j,y}$, $w'_u=w'_v$ if and only if $x=y$)\;

\ForAll{Vector $c\in \mathbb{Z}_{\ge 0}^{r_j}$ that $c_k\le |S_{j,k}|$ for all $1\le k\le r_j$ and $c_1+c_2+\cdots+c_{r_j}\leq 2^j$} {
    Pick a subset $D^{(c)}_j\subseteq S_j$ such that contains exactly $\min\lbrace 2c_k,|S_{j,k}|\rbrace$ elements in $S_{j,k}$ for all $1\le k\le r_j$\;

    $X_j\gets X_j\cup \lbrace D^{(c)}_j\rbrace$
}
}
Return $X=X_1\times \cdots \times X_{T_0}$\;
\end{algorithm}

\begin{lemma}\label{lem:Knap-Enum-1}
If \(d = O\left(\sqrt{\frac{\log n}{\log\log W}}\right)\), then \cref{Algo-Enum-Knap} runs in \(poly(n, \log(W))\)-time.
\end{lemma}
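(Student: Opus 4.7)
The plan is to bound $|X_j|$ separately for each $j\in\{1,\ldots,T_0\}$, multiply these bounds, and then invoke the hypothesis on $d$ to show the product is polynomial in $n$ and $\log W$. Throughout, I would check that the per-partial-solution work stays polynomial as well.

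First I would bound the number $r_j$ of equivalence classes inside $S_j$. By construction each coordinate $w'_{u,i}$ lies in $\{0,1,\ldots,\lfloor\log W\rfloor+1\}$, because either $w_{u,i}=0$ or $w_{u,i}/\gamma_{j,i}\in[1,W]$ by the choice of $\gamma_{j,i}$ as the minimum positive weight in coordinate $i$. Hence $r_j\le(\log W+2)^d\le(2\log W)^d$.

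Second I would count $|X_j|$. The subsets produced are in bijection with non-negative integer vectors $c\in\mathbb{Z}_{\ge 0}^{r_j}$ whose coordinates sum to at most $2^j$, so by the hockey-stick identity
\[
|X_j|=\binom{2^j+r_j}{r_j}\le(2^j+r_j)^{2^j}.
\]
Each element of $X_j$ is then constructed in $\operatorname{poly}(n,\log W)$ time: computing all $w'_u$ and the partition $S_{j,1},\ldots,S_{j,r_j}$ takes $O(nd\log W)$ time, after which selecting $\min\{2c_k,|S_{j,k}|\}$ elements from each part is linear.

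Third I would multiply across $j$. Using $T_0=\lceil\log d\rceil$ and the geometric sum $\sum_{j=1}^{T_0}2^j\le 2^{T_0+1}\le 4d$, together with the uniform bound $r_j\le(2\log W)^d$,
\[
|X|=\prod_{j=1}^{T_0}|X_j|\le\prod_{j=1}^{T_0}\bigl(2^j+(2\log W)^d\bigr)^{2^j}\le\bigl(2\,(2\log W)^d\bigr)^{4d}=2^{4d}\,(2\log W)^{4d^2}.
\]
The hypothesis $d=O(\sqrt{\log n/\log\log W})$ gives $d^2\log\log W=O(\log n)$, so $(2\log W)^{4d^2}=2^{O(d^2\log\log W)}=2^{O(\log n)}=n^{O(1)}$ and $2^{4d}=2^{O(\sqrt{\log n})}=n^{o(1)}$. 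Therefore $|X|=n^{O(1)}$, and combined with $\operatorname{poly}(n,\log W)$ per-element cost the total running time is $\operatorname{poly}(n,\log W)$.

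The main obstacle I anticipate is the arithmetic on the $4d^2$ exponent of $\log W$: this bound is tight, and it is precisely the regime $d^2\log\log W=O(\log n)$ that keeps $(\log W)^{d^2}$ polynomial in $n$; any weakening of the hypothesis on $d$ would push the bound to quasi-polynomial. The key quantitative trick is using $\sum_{j=1}^{T_0}2^j\le 4d$ (rather than the naive $2^jT_0$) to avoid picking up an additional $\log d$ factor in the exponent.
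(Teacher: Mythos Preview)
Your proof is correct and follows essentially the same approach as the paper: bound $r_j\le(\log W+2)^d$, count the vectors $c$ via stars and bars, multiply across $j$ using $\sum_{j\le T_0}2^j=O(d)$, and conclude from $d^2\log\log W=O(\log n)$. The only quibble is the word ``bijection'' (distinct $c$'s may yield the same subset $D_j^{(c)}$, and there is the additional constraint $c_k\le|S_{j,k}|$), but you only use this as an upper bound so the argument stands.
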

\begin{proof}
For an arbitrary $i$, notice that the number of different $w'_{u,i}$ ($u\in S_j$) is not larger than \(\log(W) + 2\). The number of elements in $S_i$ is no more than $d$, so we have:

\[
r_j \le (\log(W) + 2)^d = \exp(O(d \cdot \log\log(W)))
\]

To enumerate the vector \(c\) for $j$, we consider the number of non-negative integer solutions of:
$$y_1+y_2+\cdots+y_{r_j}\leq 2^j.$$
The number of solutions is well known to be
$$\binom{2^j+r_j-1}{r_j-1}=\binom{2^j+r_j-1}{2^j}\leq r_j^{2^j}.$$
Recall we just need to enumerate the first \(T_0 = \lceil \log d \rceil\) sets. Thus the total complexity is no more than:
\[
\prod_{j=1}^{T_0} r_j^{2^j} \le (\max_{j} r_j)^{2^{T_0}} = (\max_{j} r_j)^{O(d)} = \exp(O(d^2 \cdot \log\log(W)))
\]
which is \(poly(n, \log(W))\).
\end{proof}

\begin{lemma}\label{lem:Knap-Enum-Replace}
For any integer \(j, x\) and \(u_0, u_1, u_2 \in S_{j,x}\), \(w_{u_0,i} + w_{u_1,i} \ge w_{u_2,i}\) for all \(1 \le i \le d\). 
\end{lemma}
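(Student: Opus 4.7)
The plan is to unpack the definition of $w'$ used in Algorithm~\ref{Algo-Enum-Knap} and argue by case analysis on each coordinate $i$. Since $u_0,u_1,u_2$ belong to the same partition class $S_{j,x}$, by construction they share the identical discretized vector, so $w'_{u_0,i}=w'_{u_1,i}=w'_{u_2,i}$ for every $i\in[d]$. Fix such an $i$ and split into two cases based on this common value.

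If $w'_{u_0,i}=0$, then by definition $w_{u_0,i}=w_{u_1,i}=w_{u_2,i}=0$, and the inequality $w_{u_0,i}+w_{u_1,i}\ge w_{u_2,i}$ holds trivially. Otherwise the common value equals some integer $k\ge 1$, and the definition $w'_{u,i}=1+\lfloor\log(w_{u,i}/\gamma_{j,i})\rfloor$ forces $k-1\le \log(w_{u,i}/\gamma_{j,i})<k$ for each of $u\in\{u_0,u_1,u_2\}$. Equivalently,
\[
\gamma_{j,i}\cdot 2^{k-1}\le w_{u,i}<\gamma_{j,i}\cdot 2^k\qquad\text{for }u\in\{u_0,u_1,u_2\}.
\]

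Combining the lower bounds at $u_0$ and $u_1$ with the upper bound at $u_2$ then yields
\[
w_{u_0,i}+w_{u_1,i}\;\ge\;2\gamma_{j,i}\cdot 2^{k-1}\;=\;\gamma_{j,i}\cdot 2^{k}\;>\;w_{u_2,i},
\]
which is the desired inequality. Since the coordinate $i$ was arbitrary, the lemma follows. There is no real obstacle here; the only subtlety is confirming that $\gamma_{j,i}>0$ is well-defined whenever the case $k\ge 1$ actually occurs (which is immediate because $w_{u_0,i}>0$ implies some element of $S_j$ has positive $i$-th weight, so the minimum over such elements is positive).
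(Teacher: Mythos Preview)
Your proof is correct and follows essentially the same approach as the paper: both exploit that $u_0,u_1,u_2$ share the same discretized vector $w'$, split into the cases $w'_{\cdot,i}=0$ and $w'_{\cdot,i}\ge 1$, and in the latter case use the fact that all three weights lie in the same dyadic bucket so each of $w_{u_0,i},w_{u_1,i}$ is at least half of $w_{u_2,i}$. Your version is slightly more explicit in writing out the bucket endpoints $\gamma_{j,i}\cdot 2^{k-1}$ and $\gamma_{j,i}\cdot 2^k$, but the argument is the same.
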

\begin{proof}
By definition, we know \(w'_{u_0,i} = w'_{u_1,i} = w'_{u_2,i}\). If \(w'_{u_0,i} = 0\), then \(w_{u_0,i} = w_{u_1,i} = w_{u_2,i} = 0\). The inequality holds. Otherwise, \(\lfloor \log(w_{u_0,i}/\gamma_{j,i}) \rfloor = \lfloor \log(w_{u_2,i}/\gamma_{j,i}) \rfloor\). Then, we know
\(w_{u_0,i} \ge \frac{1}{2} w_{u_2,i}\). Similarly, \(w_{u_1,i} \ge \frac{1}{2} w_{u_2,i}\), so the inequality holds.
\end{proof}

\begin{lemma}\label{lem:Knap-Enum-2}
If the original problem has a $1$-valid solution, then there exists a partial solution $(D_1, D_2, \cdots D_{T_0})$ returned by 
Algorithm~\ref{Algo-Enum-Knap} 
such that there exists an extended solution $(D_1, \cdots D_T)$ satisfying:
\begin{enumerate}
    \item $|D_j| \le 2 \times 2^j$ for $1 \le j \le T_0$, and 
    \item $|D_j| \le 2^j$ for $T_0 < j \le T$.
\end{enumerate}
\end{lemma}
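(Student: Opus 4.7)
The plan is to produce, from a hypothetical $1$-valid solution $D^{*}$, a specific partial solution that \cref{Algo-Enum-Knap} is guaranteed to enumerate, together with an explicit extension. For each $j\le T_0$ fix the partition $S_{j,1},\ldots,S_{j,r_j}$ used by the algorithm and define $c_k := |D^{*}\cap S_{j,k}|$. Since $\sum_k c_k = |D^{*}\cap S_j|\le 2^j$ by $1$-validity, the vector $c=(c_1,\ldots,c_{r_j})$ is in the range enumerated in the inner loop, so the algorithm adds to $X_j$ some set $D_j^{(c)}$ containing exactly $\min\{2c_k,|S_{j,k}|\}$ elements of $S_{j,k}$. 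I will take $D_j := D_j^{(c)}$ for $j\le T_0$ and $D_j := D^{*}\cap S_j$ for $j>T_0$ as the claimed extended solution.

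The two cardinality bounds are then essentially bookkeeping: for $j\le T_0$,
\[
|D_j| \;=\; \sum_{k=1}^{r_j}\min\{2c_k,|S_{j,k}|\} \;\le\; 2\sum_{k=1}^{r_j} c_k \;\le\; 2\cdot 2^j,
\]
and for $j>T_0$, $|D_j|=|D^{*}\cap S_j|\le 2^j$ directly from $1$-validity of $D^{*}$.

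The main content is feasibility, i.e.\ $\sum_{j=1}^{T}\sum_{u\in D_j} w_{u,i}\ge 1$ for every coordinate $i$. It suffices to prove the subgroup dominance
\[
\sum_{u\in D_j\cap S_{j,k}} w_{u,i} \;\ge\; \sum_{u\in D^{*}\cap S_{j,k}} w_{u,i}
\qquad\text{for all }1\le j\le T_0,\; 1\le k\le r_j,
\]
since then summing over $k$, then over $j\le T_0$, and adding the trivial equality from $j>T_0$ gives at least the total coordinate-$i$ weight of $D^{*}$, which is $\ge 1$. To prove this inequality I split into two cases. If $|S_{j,k}|\le 2c_k$ then $D_j\cap S_{j,k}=S_{j,k}\supseteq D^{*}\cap S_{j,k}$ and the inequality is immediate. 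Otherwise $|D_j\cap S_{j,k}|=2c_k$: I arbitrarily pair the $2c_k$ chosen elements into $c_k$ disjoint pairs and fix any bijection from these pairs to the $c_k$ elements of $D^{*}\cap S_{j,k}$; by \cref{lem:Knap-Enum-Replace}, the coordinate-$i$ weight of each pair dominates that of its matched original element, and summing over the $c_k$ pairs yields the desired subgroup dominance.

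The only subtle point, and the one step that requires a little care, is the pairing argument combined with \cref{lem:Knap-Enum-Replace} inside a single subgroup; everything else is straightforward aggregation. I do not anticipate any obstacle beyond that careful case split between the two possibilities for $|S_{j,k}|$ versus $2c_k$.
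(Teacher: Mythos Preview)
Your proposal is correct and follows essentially the same approach as the paper's own proof: both pick the count vector $c_k=|D^{*}\cap S_{j,k}|$, take the algorithm's corresponding set $D_j^{(c)}$ for $j\le T_0$, reuse $D^{*}$ for $j>T_0$, and argue subgroup-by-subgroup weight dominance via the two cases $|S_{j,k}|\le 2c_k$ (full subgroup chosen) versus $|S_{j,k}|>2c_k$ (pair up and apply \cref{lem:Knap-Enum-Replace}). Your pairing argument is in fact more explicit than the paper's, which simply cites \cref{lem:Knap-Enum-Replace} without spelling out the bijection.
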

\begin{proof}
Assume that $(D^{*}_1, D^{*}_2, \cdots D^{*}_T)$ is a $1$-valid solution. For each $1 \le j \le T_0$, assume that the partial enumeration algorithm allocates $c_k = |D^{*}_j \cap S_{j,k}|$ for $1\le k\le r_j$. Then 
the algorithm chooses $\min \{2c_k, |S_{j,k}|\}$ elements from 
$S_{j,k}$.
We consider two cases:

\begin{enumerate}
    \item If $c_k < \frac{1}{2} |S_{j,k}|$, due to Lemma \ref{lem:Knap-Enum-Replace}, the sum of chosen weight vectors is no less than that of $S_{j,k} \cap D^{*}$ in all dimensions, and $|D_j \cap S_{j,k}| \le 2|D^{*}_j \cap S_{j,k}|$.
    \item Otherwise, if $c_k \ge \frac{1}{2} |S_{j,k}|$, then it selects the whole group $S_{j,k}$. The total weight
    of this group cannot be less, and the number of elements is also no more than $2|D^{*}_j \cap S_{j,k}|$.
\end{enumerate}
To summarize, we have that:
\begin{enumerate}
    \item $\sum_{u \in D_j \cap S_{j,k}} w_{u,i} \ge \sum_{u \in D^{*}_j \cap S_{j,k}} w_{u,i}$ for all such $i$, and
    \item $|D_j \cap S_{j,k}| \le 2|D^{*}_j \cap S_{j,k}|$.
\end{enumerate}
Therefore, $D^{*}_1, D^{*}_2, \cdots D^{*}_{T_0}, D_{T_0+1}, \cdots D_T$ form such an extended solution satisfying those constraints.
\end{proof}

\begin{lemma}
Algorithm~\ref{Algo-Enum-Knap} is a \textbf{Partial Enumeration Algorithm} for $2$-\LBOknapcov\ when $d=O(\sqrt{\frac{\log n}{\log\log W}})$. More specifically, it satisfies the following properties:

\begin{enumerate}

\item If this problem has a $1$-valid solution, at least one of its output partial solutions has a $2$-valid extended solution.

\item It always runs in polynomial time.

\end{enumerate}
\end{lemma}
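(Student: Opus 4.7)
The plan is to derive this lemma as an essentially immediate corollary of the two preceding lemmas, namely Lemma~\ref{lem:Knap-Enum-1} for the runtime bound and Lemma~\ref{lem:Knap-Enum-2} for the existence of a good extension. The only substantive task is to check that the extended solution supplied by Lemma~\ref{lem:Knap-Enum-2} fits Definition~\ref{def:LBO} as a $2$-valid solution, i.e.\ that it both lies in $\calF$ and respects the cardinality bound $|D\cap S_j|\le 2\cdot 2^j$ for every group index $j$.

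For the second property, I would simply cite Lemma~\ref{lem:Knap-Enum-1}, which already shows that Algorithm~\ref{Algo-Enum-Knap} runs in $\mathrm{poly}(n,\log W)$ time under the hypothesis $d=O(\sqrt{\log n/\log\log W})$. Note that $T_0=\lceil\log d\rceil$ is chosen precisely so that the product of group-enumeration counts $\prod_{j\le T_0} r_j^{2^j}$ stays polynomial in $n$ and $\log W$.

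For the first property, assume a $1$-valid solution $D^{*}$ exists. Lemma~\ref{lem:Knap-Enum-2} then supplies a partial solution $(D_1,\dots,D_{T_0})$ in the output $X$ together with an extension $(D_1,\dots,D_{T_0},D^{*}_{T_0+1},\dots,D^{*}_T)$ such that $|D_j|\le 2\cdot 2^j$ for $j\le T_0$ and $|D_j|\le 2^j$ for $j>T_0$. The cardinality requirement of Definition~\ref{def:LBO} with $c'=2$ is then immediate, since $2^j\le 2\cdot 2^j$. For feasibility, i.e.\ $\sum_{u\in D} w_{u,i}\ge 1$ for every coordinate $i$, I would use the inequality stated inside the proof of Lemma~\ref{lem:Knap-Enum-2}: for each $j\le T_0$ and each sub-group $S_{j,k}$, the weight of the chosen items in dimension $i$ is at least that of $D^{*}_j\cap S_{j,k}$, a fact ultimately resting on Lemma~\ref{lem:Knap-Enum-Replace}. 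Summing over $k$ and $j\le T_0$, and noting the higher groups are taken to equal $D^{*}_{T_0+1},\dots,D^{*}_T$, the dimension-$i$ total weight of the extension is at least that of $D^{*}$, so it still covers.

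The main obstacle, modest as it is, is nothing more than tracking these cardinality and weight bookkeepings coordinate by coordinate and sub-group by sub-group, and recognising that the partial-enumeration step enlarges the chosen set in each sub-group by at most a factor of $2$ while never decreasing the coordinate-wise weight contribution. Once this is observed, combining Lemma~\ref{lem:Knap-Enum-1} and Lemma~\ref{lem:Knap-Enum-2} yields both required properties directly.
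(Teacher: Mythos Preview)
Your proposal is correct and follows essentially the same approach as the paper, which simply invokes Lemma~\ref{lem:Knap-Enum-2} for the first property and Lemma~\ref{lem:Knap-Enum-1} for the second. Your extra care in explicitly verifying feasibility (that the extended solution lies in $\calF$) via the coordinate-wise weight inequality from Lemma~\ref{lem:Knap-Enum-Replace} is a reasonable expansion, since the statement of Lemma~\ref{lem:Knap-Enum-2} only records the cardinality bounds and leaves feasibility implicit in its proof.
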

\begin{proof}
\cref{lem:Knap-Enum-2} implies the first property and \cref{lem:Knap-Enum-1} implies the second property.
\end{proof}

\begin{proofofthm}{\ref{thm:knap-2}}
We combine the new enumeration method and the previous rounding algorithm (Algorithm \ref{Algo-IR-Knap}) with $T_0=\lceil \log d\rceil$.

Recall the output of Algorithm \ref{Algo-Enum-Knap} is a set of partial solutions: \(X \subseteq 2^{S_1} \times 2^{S_2} \times \cdots \times 2^{S_{T_0}}\). For any \((D_1, D_2, \cdots D_{T_0}) \in X\), we run Algorithm \ref{Algo-IR-Knap} with 
input
\((D_1, \ldots,D_{T_0})\).

\begin{itemize}
\item According to Lemma \ref{lem:Knap-Enum-2}, if there exists a $1$-valid solution, then \cref{Algo-Enum-Knap} can find a partial solution such that there exists an extended solution $D'$ satisfying \begin{itemize}
    \item $|D'\cap S_j|\le 2\cdot 2^j$ for all $1\le j\le T_0$.
    \item $|D'\cap S_j|\le 2^j$ for all $T_0 < j\le T$. 
\end{itemize}
\item If there exists an extended solution $D'$ such that $|D'\cap S_j|\le 2^j$ for all $T_0 < j\le T$, then the solution set of \cref{LP-KnapCover} is non-empty, 
so \cref{Algo-IR-Knap} can return a solution.

\item According to \cref{lem:Knap-3}, if \cref{Algo-IR-Knap} returns a solution $D$, then $|D\cap S_j|\le 2\cdot 2^j$ for all $T_0< j\le T$.  
\item Based on \cref{lem:Knap-Enum-1}, we know this algorithm is polynomial-time.
\end{itemize}
Combining the above conclusions, it is clear that there exists a $poly(n,\log W)$-time algorithm that can solve $2$-\LBOknapcov\ for $d=O(\left(\sqrt{\frac{\log n}{\log\log W}}\right))$.

\end{proofofthm}
\section{Interval Cover Problem}
\label{sec:intervalcover}

In this section, we study the norm minimization for the interval cover problem, which is defined as follows:

\begin{definition}[Min-norm Interval Cover Problem (\minnormintcov)]
Given a set $\intU$ of intervals and a target interval $\Gamma$ on the real axis, a feasible solution of this problem is a subset $D \subseteq \intU$ such that $D$ fully covers the target interval $\Gamma$ (i.e., $\Gamma \subseteq \bigcup_{I \in D} I$).
Suppose $\boldv\in \mathbb{R}_{\ge 0}^{\intU}$ is a value vector and
$f(\cdot)$ is a monotone symmetric norm function. 
Our goal is to find a feasible subset $D$ such that $f(\boldv\assigned{D})$ is minimized.
We denote the problem as \minnormintcov.
\end{definition}

As the algorithms and proofs of this section are complicated, we just provide our main ideas in this section, and defer the full proofs to \cref{appendix:intcov}.
In light of Theorem~\ref{thm:equivalence},
we can focus on obtaining a constant-factor approximation algorithm for $(c,c_0)$-\LBOintcov. 
The input of a \LBOintcov\ problem is a tuple $\inteta=(\intU;\intS{1},\ldots,\intS{T};\Gamma)$, which is the \LBO\ problem with input $(\intU;\intS{1},\ldots,\intS{T};\intF)$
where the set of feasible solutions is $\intF=\lbrace D\subseteq \intU : \Gamma\subseteq \bigcup_{I\in D} I\rbrace$. 

In this section, we begin by transforming the interval cover problem into a new problem called the {\em tree cover} problem. The definitions of both problems are provided in \cref{sec:Inter-Cover-1}. 
These transformation results in only a constant-factor loss in the approximation factor
(i.e., if a polynomial-time algorithm can solve $c$-\LBOtreecov\ for some constant $c$, then there exists a polynomial-time constant-factor approximation algorithm for \LBOintcov).

Next, we focus on \LBOtreecov. 
We first employ the \textbf{partial enumeration} algorithm, as defined in \cref{sec:knapsackcover}, to list partial solutions for the first $T_0 = \floor{\log\log\log n}$ sets. 
The details of this process are provided in \cref{sec:Inter-Cover-2}. Following partial enumeration, we apply a rounding algorithm to evaluate each partial solution. The entire  rounding process is detailed in \cref{sec:Inter-Cover-3}.

\subsection{From Interval Cover to Tree Cover}
\label{sec:Inter-Cover-1}

We first introduce some notations for the tree cover problem. Denote a rooted tree as $G = (V, E, r)$, where $(V, E)$ forms an undirected tree and $r$ is the root. For each node $u \in V$, let $\Ch(u)$ be the set of children of $u$, and $\Des(u)$ be the set of all descendants of $u$ (including $u$). 
It is easy to see that $\Des(u) = \{u\} \cup \Des(\Ch(u))$.
For a subset of vertices
$P \subseteq V$, we define $\Des(P) = \bigcup_{u \in P} \Des(u)$. 
We also define $\Par(u)$ as the parent node of $u$ and define $\Anc(u)$ as the set of ancestors of $u$ ($\Anc(r) = \{r\}$, and for any $u \in V \setminus \{r\}$, $\Anc(u) = \{u\} \cup \Anc(\Par(u))$). In addition, we define the set of leaves $\Leaf(G) = \{u \in V : \Ch(u) = \emptyset\}$.

\begin{definition}[$\LBO$ Tree Cover Problem (\LBOtreecov)]
We are given a tuple $\treta=(\trU;\trS{1},\ldots,\trS{T};G)$, where $G=(V,E,r)$ is a rooted tree and $\trU = V \setminus \{r\}$. $T=\lceil \log n\rceil$, and $\trS{1},\trS{2},\cdots \trS{T}$ is a partition of $V\backslash\{r\}$ (and $\trS{i}$ 
is called the $i$th group), and $r$ is not in any group. 
The partition $\trS{1},\trS{2},\cdots \trS{T}$ 
satisfies the following property:
For any node $u\in \trS{i}$ and an arbitrary child $v$ of $u$,
$v$ belongs to group $\trS{j}$ with $j>i$.
For each $u \in V \setminus \{r\}$, we define $\Lev(u) = j$ if $u \in \trS{j}$. In particular, we denote $\Lev(r) = 0$. So $\Lev(u) > \Lev(\Par(u))$ for all $u \in \trU$.

A feasible solution for the tree cover problem is 
a subset $D \subseteq \trU$ such that the descendants of $D$ covers all leaves. Formally, the feasible set is defined as
$$
\trF = \{D \subseteq \trU : \Leaf(G) \subseteq \Des(D)=\bigcup\nolimits_{u \in D} \Des(u)\}.
$$
\end{definition}

We prove the following theorem to reduce the interval cover problem to the tree cover problem. The proof of the theorem can be found in \cref{sec:proofof6.1}.

\begin{theorem}\label{thm:inter-to-tree}
If there exists a polynomial-time algorithm for the $(c, 8c_0)$-\LBOtreecov\ problem, then there exists a polynomial-time algorithm for the $(3c, c_0)$-\LBOintcov\ problem.
\end{theorem}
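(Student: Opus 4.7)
The plan is to build a tree cover instance $\treta$ from the given interval cover instance $\inteta$ such that (i) any $c_0$-valid interval cover solution induces an $8c_0$-valid tree cover solution, and (ii) any $c$-valid tree cover solution can be translated back to a $3c$-valid interval cover solution. Combined with the assumed algorithm for $(c,8c_0)$-\LBOtreecov, this gives the desired algorithm for $(3c,c_0)$-\LBOintcov.

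The construction will be a hierarchical (roughly dyadic) decomposition of $\Gamma$ into a rooted tree $G=(V,E,r)$ whose leaves correspond to the atomic sub-intervals of $\Gamma$ cut out by the endpoints of $\intU$. Each input interval $I\in\intU$ is attached to an appropriate vertex $v(I)$ of $G$, chosen so that $\Lev(v(I))$ matches the original group index $j$ of $I$ in $\intS{j}$. The groups $\trS{i}$ will be defined compatibly with this placement, respecting the property that children live in strictly higher-indexed groups than their parent. In this way, selecting a vertex $v$ in the tree corresponds to ``using'' its associated interval, and the tree-cover condition $\Leaf(G)\subseteq \Des(D')$ will mirror the interval-cover condition $\Gamma\subseteq \bigcup_{I\in D} I$.

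For the forward direction, given a $c_0$-valid solution $D^\star\subseteq\intU$ for \LBOintcov, I will produce a set $D'\subseteq\trU$ by replacing each $I\in D^\star$ with a small set of tree vertices whose associated sub-intervals collectively cover the interior of $I$. Since an interval at ``scale'' $j$ can cross at most a constant number of dyadic cells at any particular level, this replacement blows up the count in each group $\trS{i}$ by at most a small constant factor; the factor $8$ is the natural bound and likely comes from a three-bit shifting argument or from handling the at-most-$O(1)$ levels above/below the level of $v(I)$ that $I$ may touch. For the backward direction, given a $c$-valid $D'\subseteq\trU$, I will return, for each $v\in D'$, the (at most three) input intervals whose union covers the sub-interval of $v$ and whose endpoints straddle $v$'s sub-interval on the left, right and interior. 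This yields $|D\cap\intS{j}|\le 3c\cdot 2^j$ and a valid cover of $\Gamma$ because the subintervals of descendants of $D'$ cover all of $\Gamma$ in the tree instance.

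The main obstacle will be designing the dyadic tree and the vertex assignment $I\mapsto v(I)$ so that all three properties hold simultaneously: the tree's $\Lev(\cdot)$ agrees with the LBO group index of the corresponding interval, the forward mapping blows up each group by at most a factor of $8$, and the backward mapping pays at most a factor of $3$ while producing a genuine cover of $\Gamma$. In particular, I expect that care will be required to handle intervals whose length falls between two dyadic scales and intervals that straddle the boundaries of the dyadic cells; a standard cure is to consider a small constant number of shifted dyadic grids and/or to introduce auxiliary ``Steiner'' vertices so that every $I\in\intU$ has a canonical home in the tree. Once the construction is in place, the two directions should follow by straightforward counting on each level of the tree, and feasibility transfers in both directions by the containment relation between intervals and the dyadic subintervals associated to tree vertices.
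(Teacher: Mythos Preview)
Your proposal has a genuine gap that would prevent the construction from working. The group indices $j$ in $\intS{j}$ are assigned by the \LBO\ reduction according to the \emph{values} $v_e$ of the intervals (see Theorem~\ref{thm:equivalence}), not according to their lengths or geometric scale. An interval in $\intS{1}$ could be arbitrarily short, and an interval in $\intS{T}$ could cover all of $\Gamma$. Hence there is no way, in a dyadic tree, to place $I\in\intS{j}$ at a vertex $v(I)$ with $\Lev(v(I))=j$ while simultaneously maintaining the required tree-cover property $\Lev(u)>\Lev(\Par(u))$ and the geometric correspondence ``$v$'s subtree covers $v$'s sub-interval''. Your forward and backward counting arguments (constant number of dyadic cells crossed, three covering intervals per cell) implicitly assume this scale/group correspondence, so they do not go through.

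The paper avoids this issue by a three-stage combinatorial reduction that never appeals to dyadic scale. First, within each group $\intS{j}$ separately, a greedy left-to-right sweep makes the intervals pairwise disjoint (losing a factor $2$ forward). Second, the groups are processed in order $j=1,\ldots,T$; when a new interval $I\in\ldS{j+1}$ is inserted, overlapping intervals from earlier groups are extended and $I$ is split so that the collection remains a laminar family with the containment order respecting the group order (losing a factor $4$ forward and $3$ backward). Third, a laminar family with this monotone group property is exactly a tree-cover instance (lossless). The factors $8=2\cdot 4$ and $3$ thus arise from these two combinatorial steps, not from any dyadic or shifting argument.
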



Based on this theorem, we mainly need to deal with the tree cover problem in the following subsections.

\subsection{Partial Enumeration Method for Tree Cover Problem}
\label{sec:Inter-Cover-2}

In this subsection, we present a partial enumeration algorithm for 
the \LBOtreecov\ problem.
Recall that we introduced the concept of partial enumeration in \cref{sec:knapsackcover}. For a \LBOtreecov\ problem with input $(\trU;\trS{1},\trS{2},\ldots,\trS{T};G)$, where $G=(V,E,r)$ is a rooted tree, and $n=|\trU|$. 
we set $T_0 = \lfloor \log \log \log n \rfloor$ and perform partial enumeration for the first $T_0$ sets. The goal is to find a set $X \subseteq 2^{\trS{1}} \times 2^{\trS{2}} \times \dots \times 2^{\trS{T_0}}$ such that there exists a partial solution $(D_1, D_2, \dots, D_{T_0}) \in X$ satisfying: at least one of 
the partial solution can be extended to a $c$-valid solution for some constant $c$.

In this subsection, we define $\Lev(u)$ as the group index of $u$ for $u\in \trU$.
Now we focus on the \LBOtreecov\ problem. 
For each $u \in \trU$, define the first type of cost $C_1(u) = \frac{1}{2^{\Lev(u)}}$. We then define the second type of cost:

$$
C_2(u)=\left\lbrace \begin{aligned}
&C_1(u) && & \text{if $u\in \Leaf(G)$}\\
&\min \lbrace C_1(u),\sum_{v\in \Ch(u)} C_2(v)\rbrace&& & \text{if $u\not \in \Leaf(G)$}\\
\end{aligned}\right.
$$
Intuitively, the cost $C_1(u)$ represents the "cost" of selecting $u$, as it indicates the proportion of the group that $u$ occupies.
Meanwhile, $C_2(u)$ denotes the minimum cost required to cover $u$ using its descendants.

We now present the partial enumeration algorithm. The pseudo-code can be found in \cref{Algo-Enum-Tree} in \cref{sec:proofof6.2}. Here, we briefly describe the main idea of the partial enumeration algorithm.

We employ a depth-first search (DFS) strategy to explore most of the states in the search space. During the search process, we maintain two sets:
\begin{itemize}
    \item \( P\subseteq \trU \), representing the set of candidate elements that can still be explored, i.e., \( \Des(P) \) contains all uncovered leaves.
    \item \( D\subseteq \trU \), storing the elements that have already been selected as part of the partial solution.
\end{itemize}

Initially, $P=\Ch(r)$ is the child set of the root, and $D=\emptyset$. 
At each recursive step, we select \( u\in P \) with the smallest group index. The recursion proceeds by exploring two possibilities:
\begin{enumerate}
    \item Adding \( u \) to the partial solution, i.e., including \( u \) in \( D \) and continuing the search.
    \item Excluding \( u \) from the partial solution, i.e., replacing \( u \) with its child nodes while keeping \( D \) unchanged. (If \( u \) is a leaf, this option is not applicable.)
\end{enumerate}

The search terminates when \( (P, D) \) fails to satisfy at least one of the following conditions:
\begin{enumerate}
\item $\exists u\in P$, $\Lev(u)\le T_0$
\item $\forall u\in D$, $C_2(u) > \frac{1}{\log n}$
\item \( \left(\sum_{v \in D} C_2(v)\right) + \left(\sum_{u \in P} C_2(u)\right) \le 2c_0 T \)
\item \( \forall 1\le i \leq T_0, \quad |D \cap \trS{i}| \le 2c_0 \cdot 2^i \)
\end{enumerate}
The first and fourth conditions are derived from the objective of the Partial Enumeration Method. Regarding the second condition, we observe that for all \( u \in \trU \), it holds that \( \Lev(u) \leq T_0 \) and \( C_1(u) > \frac{1}{\log n} \). Furthermore, if \( C_2(u) \leq \frac{1}{\log n} \), the impact of ignoring \( u \) is negligible.

The third condition is based on the property that for any \( 2c_0 \)-valid solution \( D^{*} \subseteq \trU \), it satisfies:
\[
\sum_{u\in D^{*}} C_2(u) \leq 2c_0 T.
\]

Due to these conditions, for each group \( S_j \) where \( 1\le j\le T_0 \), we only need to determine at most \( 2c_0T^2 \) items. Consequently, we can establish that our partial enumeration algorithm runs in polynomial time.

We then present the following theorem:
\color{black}

\begin{theorem}\label{thm:tree-Enum}
If the \LBOtreecov\ problem with input $\treta$ has a $c_0$-valid solution, \cref{Algo-Enum-Tree} runs
in polynomial time and at least one of the output partial solutions has a $2c_0$-valid extended solution.
\end{theorem}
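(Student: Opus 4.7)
The plan is to prove polynomial runtime and correctness of enumeration separately.

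For runtime, I would bound the size of the binary DFS recursion tree via the four termination conditions. The element processed at any state has $\Lev(u)\le T_0$ (by condition 1 combined with the smallest-level selection rule), and each option 1 step adds to $D$ an element with $C_2(u)>1/\log n$; combined with condition 3 ($\sum_{v\in D\cup P}C_2(v)\le 2c_0T$), this caps the number of option 1 operations along any path by $O(c_0T\log n)$. For option 2 I would distinguish \emph{paying} operations (where $C_1(u)\le \sum_{v\in \Ch(u)}C_2(v)$, which strictly increase $\sum_{D\cup P}C_2$) from \emph{free} ones (where $C_2(u)=\sum_{v\in \Ch(u)}C_2(v)$); paying operations are controlled by the same $C_2$-budget, and free operations strictly raise $\Lev$ of the current minimum in $P$, so they cannot chain more than $T_0$ times before a paying step or a terminating state intervenes. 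Condition 4 caps $|D\cap \trS{j}|$ by $2c_0\cdot 2^j$, combining with the above to yield polynomially many leaves, and hence polynomially many output partial solutions.

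For correctness, I would transform a given $c_0$-valid solution $D^*$ into an extended solution $D'$ that the DFS explicitly reaches. For each $u\in D^*$ with $\Lev(u)\le T_0$ and $C_2(u)\le 1/\log n$, I would replace $u$ by descendants $v_1,v_2,\ldots$ whose total $C_1$-cost is at most $C_2(u)$, guaranteed to exist by the recursive definition of $C_2$; since each $C_1(v_i)\le 1/\log n$, we have $\Lev(v_i)>T_0$ for $n$ large. Groups $\trS{j}$ with $j\le T_0$ then inherit $|D'\cap \trS{j}|\le c_0\cdot 2^j$ from $D^*$ and every element of $D'\cap \trS{j}$ has $C_2>1/\log n$. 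For $j>T_0$, I would account for the replacement mass per source level: for each fixed $j_0\le T_0$, replacements contribute at most $c_0\cdot 2^j$ items to $\trS{j}$, because $\sum_{u\in D^*\cap \trS{j_0}}C_2(u)\le |D^*\cap \trS{j_0}|/2^{j_0}\le c_0$ and each $u$ deposits at most $C_2(u)\cdot 2^j$ items in $\trS{j}$, and a careful balancing across source levels then yields $|D'\cap \trS{j}|\le 2c_0\cdot 2^j$. I then trace the DFS along the branch that picks option 1 exactly when $u\in D'$ and option 2 otherwise: conditions 2 and 4 hold by construction of $D'$, and condition 3 follows from the invariant that $D\cup P$ is always an ancestor-cover of $D'$, so $\sum_{v\in D\cup P}C_2(v)\le \sum_{v\in D'}C_2(v)\le \sum_{v\in D'}C_1(v)\le 2c_0T$. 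At termination, $D$ coincides with $D'\cap(\trS{1}\cup\cdots\cup \trS{T_0})$, which is the desired partial solution.

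I expect the hardest step to be the per-target-group bound for $j>T_0$: summing crudely over source levels $j_0\le T_0$ gives a factor of $T_0=\Theta(\log\log\log n)$, so tightening to an absolute constant will require a global rebalancing of the replacement descendants — for instance shifting items between adjacent levels using the slack in the recursive definition of $C_2$ — so that every target group accumulates at most a constant multiple of $c_0\cdot 2^j$ replacement items, matching the $2c_0$-validity claim.
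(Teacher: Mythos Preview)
Your overall architecture matches the paper's, but you are missing the key simplification in the correctness argument, and your runtime bound is not organized in a way that closes.

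\textbf{Correctness.} You flag the $2c_0$ bound for $j>T_0$ as the hard step and propose per-source-level accounting followed by ``global rebalancing''. No rebalancing is needed: the paper bounds the \emph{total} $C_2$-mass of all replaced nodes at once. Every replaced node $u$ satisfies $\Lev(u)\le T_0$ and $C_2(u)\le 1/\log n$; since $D^*$ is $c_0$-valid, there are at most $\sum_{j\le T_0} c_0\cdot 2^j\le 2c_0\cdot 2^{T_0}\le 2c_0\log\log n$ such nodes, so their total $C_2$ is at most $2c_0\log\log n/\log n\le c_0$ (this is exactly Lemma~\ref{lem:tree-1}(c)). For each replaced $u$ one uses the canonical replacement set $R(u)$, which satisfies $\sum_{v\in R(u)}C_1(v)=C_2(u)$; hence the union of all replacement nodes has total $C_1$-mass at most $c_0$, and Lemma~\ref{lem:tree-1}(a) immediately gives at most $c_0\cdot 2^j$ replacement items in each $\trS{j}$. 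Adding the original $|D^*\cap\trS{j}|\le c_0\cdot 2^j$ yields $2c_0\cdot 2^j$ with no further work. Your per-source-level argument loses a factor $T_0$ only because it splits a sum that should be kept intact.

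\textbf{Runtime.} Your paying/free dichotomy for option~2 does not bound the number of branching decisions: a ``paying'' replacement can increase $\sum_{D\cup P}C_2$ by an arbitrarily small amount (even zero), so the $C_2$-budget alone does not cap how many occur, and free replacements need not raise the minimum level in $P$ (other level-$i$ nodes may remain). The paper instead counts branches level by level. When processing level $i$, the set $P\cap\trS{i}$ is fixed by the earlier decisions; among these, only nodes with $C_2>1/\log n$ create a binary choice (the rest are forced to option~2 by condition~2), and condition~3 bounds their number by $(2c_0T)/(1/\log n)\le 2c_0T^2$. Condition~4 says at most $2c_0\cdot 2^i$ of them may be chosen into $D$, so the level-$i$ branching factor is at most $(1+2c_0T^2)^{2c_0\cdot 2^i}$. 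Taking the product over $i\le T_0$ gives $\exp\bigl(O(2^{T_0}\log\log n)\bigr)=\exp\bigl(O((\log\log n)^2)\bigr)=n^{o(1)}$. You should reorganize your runtime argument around this per-level count; the path-length style bound you sketch does not obviously yield a polynomial.
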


The complete proof can be found in \cref{sec:proofof6.2}.
\subsection{A Rounding Algorithm for Tree Cover Problem}\label{sec:Inter-Cover-3}

We now focus on the $(c, c_0)$-\LBOtreecov\ problem with input $\treta=(\trU;\trS{1},\ldots,\trS{T};G)$, where $G=(V,E,r)$ is a rooted tree. Let $L=\Leaf(G)$ be the set of leaves. Recall that $\Anc(u)$ represents the set of ancestors of node $u$.
For sets $\calV,\calL\subseteq \trU$ and $c\ge 1$, we express the formulation of the linear program as follows:

\begin{equation}
\begin{aligned}
\min && 0 && & \\
s.t. && \sum_{v\in \Anc(u)\cap \calV} x_v=1 && & \forall u\in \calL\\
	 && \sum_{v\in \trS{i}\cap \calV} x_v\le c\cdot 2^i && & \forall T_0+1\le i\le T\\
    && x_v \ge 0&& & \forall v\in \calV\\
\end{aligned}
\tag{LP-Tree-Cover(c,$\calV$,$\calL$)}
\label{LP-Tree-Cover}
\end{equation}

We call $\sum_{v\in \trS{i}\cap \calV} x_v\le c\cdot 2^i$ {\em cardinality constraints}, and call $\sum_{v\in \Anc(u)\cap \calV} x_v=1$ {\em feasibility constraints}.
Recall that $T_0 = \lfloor \log \log \log n \rfloor$. Also, define $T_1=\floor{\log\log n}$.

The algorithm is as follows, and the pseudocode is \cref{Algo-Round-Tree} in \cref{appendix:intcov}:

\begin{enumerate}
\item Check if LP-Tree-Cover($2c_0$, $V_0$, $L_0$) has a feasible solution. If so, obtain an extreme point $x^{*}$. Otherwise, confirm that there is no such integral solution.

\item Remove the leaves $u$ with $x^{*}_u = 0$, and delete all the descendants of their parents. Then $\Par(u)$ becomes a leaf. Repeat this process until $x^{*}_u\neq 0$ for each leaf $u$. Let the modified node set and leaf set be $V_1$ and $L_1$, respectively.

\item For $u \in V_1$, attempt to round $x^{*}_u$. If $x^{*}_u \geq 1/2$, round it to $1$. If $x^{*}_u > 0$, and $u$ is not a leaf in $\trS{T_1+1} \cup \cdots \cup \trS{T}$, also round it to $1$. In all other cases, round $x^{*}_u$ to $0$. Let $D'$ be the set of nodes $u$ for which $x^{*}_u$ was rounded to $1$. Note that $D'$ may not cover $L_1$.

\item Remove all descendants in $D'$, and attempt to choose another set from $\trS{T_0+1} \cup \cdots \cup \trS{T_1}$ to cover all leaves. Formalize this objective as LP-Tree-Cover. Specifically, 
\begin{itemize}
\item $V_2 = (V_1 \setminus \Des(D')) \cap \{u\in \trU: T_0+1\le \Lev(u)\le T_1\}$, and

\item $L_2 = (V_2\cap L_1)\cup \{u \in V_2 :\exists v \in \Ch(u) \cap (V_1 \setminus V_2), (V_2 \setminus \Des(D')) \cap \Des(v) \cap L_1 \neq \emptyset\}$. 
\end{itemize}
To understand this, observe that \( V_2 \) consists of the nodes in the \((T_0+1)\)th to \( T \)th groups that remain uncovered. The set \( L_2 \) includes nodes in \( V_2 \) that are either leaves or have at least one uncovered child with a group index greater than \( T_0 \) (i.e., at least one descendant leaf remains uncovered).

Then solve LP-Tree-Cover($2c_0$, $V_2$, $L_2$). The fact that this problem must have feasible solutions is proved later, so we do not need to consider the case of no solution.

\item Let $x^{**}$ be an extreme point of LP-Tree-Cover($2c_0$, $V_2$, $L_2$). For each $u \in V_2$, round it to $1$ if and only if $x^{**}_u > 0$. Let $D'' = \{u \in V_2 : x^{**}_u > 0\}$, then $D''$ covers $L_2$.

\item Combine the three parts of the solution. That is, return $\left(\bigcup_{i=1}^{T_0} D_i\right) \cup D' \cup D''$.
\end{enumerate}


We prove the following lemma for \cref{Algo-Round-Tree}:

\begin{lemma}\label{lem:tree-round-3}
Let \( T_0 = \lfloor \log\log\log n \rfloor \). If a partial solution \( (D_1, D_2, \dots, D_{T_0}) \) has a \( 2c_0 \)-valid extended solution, then \cref{Algo-Round-Tree} finds a \( (4c_0 + 1) \)-valid solution.
\end{lemma}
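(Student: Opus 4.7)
The plan is to verify that Algorithm~\ref{Algo-Round-Tree} delivers in three senses: (a) both invoked LPs are feasible, so extreme points $x^*$ and $x^{**}$ exist; (b) the combined output $\bigl(\bigcup_{i\le T_0} D_i\bigr)\cup D' \cup D''$ is a valid tree cover, i.e.\ $\Leaf(G)\subseteq \Des(\text{output})$; and (c) for every group $\trS{j}$ the selection has size at most $(4c_0+1)\cdot 2^j$. Parts (a) and (b) are essentially bookkeeping; the real work lies in (c), and in particular in the middle range $T_0<j\le T_1$ where extreme-point analysis of LP-Tree-Cover must be carried out carefully. My choice of $T_0=\floor{\log\log\log n}$ and $T_1=\floor{\log\log n}$ is precisely what allows the preceding partial enumeration step to stay polynomial while the LP-based rounding closes with an additive slack of $+1$ per group.

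For (a), the hypothesized $2c_0$-valid extended solution $(D_1,\ldots,D_{T_0},D_{T_0+1}^\star,\ldots,D_T^\star)$ directly supplies an integer feasible point of LP-Tree-Cover$(2c_0, V_0, L_0)$. The second LP's feasibility I would establish by exhibiting $x^*$ restricted to $V_2$ as a feasible fractional solution: for each $u\in L_2$ the identity $\sum_{v\in\Anc(u)\cap V_1} x^*_v=1$ splits into a part already absorbed by the rounded set $D'$ and a residual supported on $V_2$, and the residual satisfies all cardinality constraints of the second LP. For (b), Step~2 preserves coverage because whenever a leaf $u$ has $x^*_u=0$, the feasibility equality forces a strict ancestor $v$ to carry positive mass, so promoting $v$ to a leaf of the shrunken tree $V_1$ loses no leaf. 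Then in Steps~3 and~5, every leaf of $V_1$ in groups $\le T_1$ with $x^*_v>0$ is included in $D'$, leaves in groups $>T_1$ with $x^*_v\ge 1/2$ are also in $D'$, and any residual uncovered leaf is put into $L_2$ and covered by $D''$.

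For the cardinality bound (c), I would split by range. For $1\le j\le T_0$, $|D_j|\le 2c_0\cdot 2^j$ is exactly the partial-enumeration hypothesis. For $T_1<j\le T$, Step~3 picks up at most $2\cdot 2c_0\cdot 2^j=4c_0\cdot 2^j$ leaves (those with $x^*_u\ge 1/2$, using the cardinality constraint $\sum_{v\in\trS{j}\cap V_1}x^*_v\le 2c_0\cdot 2^j$), plus a bounded number of non-leaves with $0<x^*_v<1/2$; the latter count is controlled via Lemma~\ref{lem:extreme:points}, since an extreme point of LP-Tree-Cover has few fractional non-leaf coordinates. For the middle range $T_0<j\le T_1$ I would invoke Lemma~\ref{lem:LinearProg-2} to strip out integer coordinates from $x^{**}$ and then count: the ancestor-cover constraints $\sum_{v\in\Anc(u)\cap V_2}x^{**}_v=1$ form a consecutive-ones (hence totally unimodular) system on a leaf ordering compatible with a DFS of $G$, and only the at most $T_1-T_0\le \log\log n$ cardinality constraints break TUness. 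A standard counting then gives that each tight cardinality constraint in group $\trS{j}$ forces at least two fractional variables to sit in $\trS{j}$, bounding the total fractional support and yielding $|D''\cap\trS{j}|\le (4c_0+1)\cdot 2^j$.

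The main obstacle is this last extreme-point accounting for the second LP: ensuring that the rounding produces only an additive $2^j$ slack over the leaf contribution of $4c_0\cdot 2^j$, rather than a larger additive overhead that would scale with $T_1-T_0$. The argument parallels the iterative-rounding style used in Section~\ref{sec:Knap-1} for knapsack cover and in \cite{chakrabarty2019approximation}: one peels off integer entries, lower-bounds the number of fractional variables per tight cardinality constraint by two (using the integer coefficient structure of the cardinality rows), and charges each tight constraint to its own group. Verifying that this charging is clean — i.e.\ that fractional variables "stay in their own group" rather than leaking across groups — is the delicate point I expect to spend most of the full proof on.
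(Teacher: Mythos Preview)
Your three-part structure (feasibility, coverage, cardinality) matches the paper's, and the high-range case $j>T_1$ is essentially right. Two points diverge, one of which is a genuine gap.

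\textbf{Second-LP feasibility (a gap).} Restricting $x^*$ to $V_2$ does \emph{not} give a feasible point. For $u\in L_2\subseteq V_2=V_1\setminus\Des(D')$ no ancestor of $u$ lies in $D'$, so there is no ``part already absorbed by $D'$'' to split off. When $u$ is of the second kind---with a child $v$ of group index $>T_1$ and an uncovered descendant leaf $w\in L_1$---one has $x^*_z=0$ for every strict ancestor $z$ of $w$ below $u$ (else $z\in D'$), while $0<x^*_w<\tfrac12$ (else $w\in D'$). Hence $\sum_{z\in\Anc(u)\cap V_2}x^*_z=1-x^*_w$, which can be anywhere in $(\tfrac12,1)$. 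The paper fixes this in Lemma~\ref{lem:tree-round-2} by \emph{doubling}: set $\bar x=2x^*[V_2]$ so every feasibility row is at least $1$, then iteratively push excess mass downward to restore the equalities. This doubling is also why the second LP carries budget parameter $4c_0$ rather than $2c_0$; without it your feasibility argument fails.

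\textbf{Middle-range cardinality (a different, heavier route).} Your plan for $T_0<j\le T_1$ via consecutive-ones/TU structure and per-row fractional counting would work, but the paper uses far less machinery. It simply repeats the one-line extreme-point count from the first LP: re-prune zero-valued leaves from $(V_2,L_2)$ to get $(V_3,L_3)$ with $x^{**}_u>0$ for all $u\in L_3$ (and $x^{**}[V_3]$ still extreme by Lemma~\ref{lem:LinearProg-2}, the same device behind Lemma~\ref{lem:tree-round-1}, which you should also invoke for $x^*[V_1]$). Then the number of nonzero \emph{non-leaf} variables is at most the number of cardinality rows, $T_1-T_0\le\log\log n\le 2^{T_0+1}\le 2^j$. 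No TU structure or ``two fractionals per tight row'' charging is needed; the feasibility equalities already account for all leaf variables, and only the cardinality rows can support nonzero internal nodes. So the part you flagged as the ``main obstacle'' is in fact the routine part; the delicate step is the second-LP feasibility above.

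A small slip in (b): leaves of $V_1$ in groups $\le T_1$ with $0<x^*_v<\tfrac12$ are \emph{not} placed in $D'$; they (or, for deep leaves $w$ with $\Lev(w)>T_1$ and $x^*_w<\tfrac12$, their lowest ancestor in $V_2$) are exactly what populates $L_2$ and is covered by $D''$.
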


By combining \cref{thm:tree-Enum} and \cref{lem:tree-round-3}, we establish the following theorem:

\begin{theorem}\label{thm:tree-solve}
For any \( c_0 \geq 1 \), there exists a polynomial-time algorithm for \( (4c_0 + 1, c_0) \)-\LBOtreecov.
\end{theorem}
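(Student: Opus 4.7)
The plan is to combine Theorem~\ref{thm:tree-Enum} and Lemma~\ref{lem:tree-round-3} in a straightforward ``enumerate-then-round'' pipeline. Specifically, given an instance $\treta=(\trU;\trS{1},\ldots,\trS{T};G)$, I first invoke the partial enumeration procedure (\cref{Algo-Enum-Tree}) to obtain the polynomial-size family $X\subseteq 2^{\trS{1}}\times\cdots\times 2^{\trS{T_0}}$ of candidate partial solutions with $T_0=\floor{\log\log\log n}$. Then, for every $(D_1,\ldots,D_{T_0})\in X$, I run the rounding procedure (\cref{Algo-Round-Tree}) on this partial solution. If any execution returns a feasible $D\in\trF$ with $|D\cap\trS{i}|\le (4c_0+1)\cdot 2^i$ for all $i$, output it. If every execution fails to produce such a $D$, output ``no $c_0$-valid solution exists''.

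The correctness proof splits into the two required guarantees of a $(4c_0+1,c_0)$-\LBOtreecov\ algorithm. First, whenever the algorithm returns a solution $D$, that solution is explicitly a $(4c_0+1)$-valid solution by the rounding step, so this direction is immediate. For the other direction, I argue the contrapositive: suppose some $c_0$-valid solution exists. By \cref{thm:tree-Enum}, the family $X$ contains at least one partial solution $(D_1,\ldots,D_{T_0})$ which admits a $2c_0$-valid extended solution. Then by \cref{lem:tree-round-3}, applying \cref{Algo-Round-Tree} to this particular partial solution yields a $(4c_0+1)$-valid solution, contradicting the assumption that no execution succeeded. Hence the algorithm cannot wrongly certify infeasibility.

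For the running time, \cref{thm:tree-Enum} guarantees that the partial enumeration step terminates in polynomial time and in particular returns $|X|=\poly(n)$ candidates. The rounding procedure \cref{Algo-Round-Tree} invoked on each candidate consists of solving $O(1)$ linear programs of polynomial size together with polynomial-time bookkeeping, so each iteration takes polynomial time. Thus the overall algorithm runs in $\poly(n)$ time.

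The main step to be careful about is confirming that the two building blocks compose correctly with matching constants: \cref{thm:tree-Enum} is stated with a factor-$2$ blowup (producing a $2c_0$-valid extended solution from a $c_0$-valid one), and \cref{lem:tree-round-3} consumes a $2c_0$-valid extended solution and produces a $(4c_0+1)$-valid one. Chaining the two gives exactly the $(4c_0+1,c_0)$ guarantee in the theorem, and nothing more subtle is required beyond verifying that the hypothesis of the rounding lemma is delivered by the enumeration lemma with the same parameter $c_0$.
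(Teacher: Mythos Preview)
Your proposal is correct and follows essentially the same approach as the paper: run the partial enumeration (\cref{Algo-Enum-Tree}) to obtain the polynomial-size family $X$, apply the rounding procedure (\cref{Algo-Round-Tree}) to each candidate, and chain \cref{thm:tree-Enum} with \cref{lem:tree-round-3} to verify that a $c_0$-valid instance always leads to a $(4c_0+1)$-valid output. Your explicit check that the returned $D$ satisfies the $(4c_0+1)$ bound before outputting is a sensible safeguard and matches what the paper does.
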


Furthermore, applying \cref{thm:inter-to-tree} and \cref{thm:tree-solve}, we obtain the following result:

\begin{theorem} \label{thm:inter-cover}
There exists a polynomial-time algorithm that solves the \( (3(32c_0 + 1), c_0) \)-\LBOintcov. Consequently, we obtain a polynomial-time constant-factor approximation algorithm for \minnormintcov.
\end{theorem}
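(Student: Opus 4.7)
The plan is to derive \cref{thm:inter-cover} by directly composing the three reductions established earlier, with careful parameter bookkeeping. First, I would invoke \cref{thm:tree-solve} with its parameter $c_0$ replaced by $8c_0$. This yields a polynomial-time algorithm that, whenever the tree-cover instance admits an $8c_0$-valid solution, returns a $(4\cdot 8c_0+1) = (32c_0+1)$-valid one; in the terminology of the section this is a polynomial-time algorithm for the $(32c_0+1,\,8c_0)$-\LBOtreecov\ problem.

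Next, I would feed this into \cref{thm:inter-to-tree}, which asserts that a polynomial-time algorithm for $(c,\,8c_0)$-\LBOtreecov\ implies a polynomial-time algorithm for $(3c,\,c_0)$-\LBOintcov. Substituting $c = 32c_0+1$ gives a polynomial-time algorithm for $(3(32c_0+1),\,c_0)$-\LBOintcov, which is precisely the first claim. This reduction is exactly the gadget constructed in the proof of \cref{thm:inter-to-tree} (transforming an interval instance into a laminar/tree instance with logarithmic-scale group decomposition), so invoking the theorem is enough.

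For the ``consequently'' part, I would specialize to $c_0 = 1$ and obtain a polynomial-time algorithm for $(99,\,1)$-\LBOintcov, i.e., $99$-\LBOintcov. Plugging $c=99$ into the forward direction of \cref{thm:equivalence} then produces, for every fixed $\epsilon>0$, a polynomial-time $(4\cdot 99+\epsilon) = (396+\epsilon)$-approximation algorithm for \minnormintcov, which is a constant factor and thus completes the theorem.

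Because the statement is a pure composition of \cref{thm:tree-solve}, \cref{thm:inter-to-tree}, and \cref{thm:equivalence}, I do not anticipate any substantive mathematical obstacle in this final step; all algorithmic and approximation content resides in the earlier results. The only point requiring care is the parameter substitution: \cref{thm:inter-to-tree} is calibrated to consume an \emph{$8c_0$-slack} tree-cover solver, so one must apply \cref{thm:tree-solve} at parameter $8c_0$ rather than $c_0$, which is precisely what produces the $32c_0+1$ factor (and hence the $3(32c_0+1)$ in the interval-cover guarantee).
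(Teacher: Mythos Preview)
Your proposal is correct and follows essentially the same approach as the paper: both compose \cref{thm:tree-solve} (instantiated at parameter $8c_0$) with \cref{thm:inter-to-tree} to obtain the $(3(32c_0+1),c_0)$-\LBOintcov\ algorithm, and then apply \cref{thm:equivalence} with $c_0=1$ for the constant-factor \minnormintcov\ consequence. Your write-up is in fact slightly more careful about the parameter substitution and spells out the final appeal to \cref{thm:equivalence} explicitly.
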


The complete proofs are provided in \cref{appendix:intcov}.

\section{Integrality Gap for Perfect Matching, $s$-$t$ Path, and $s$-$t$ Cut}
\label{sec:integralgap}

In this section, we argue that it may be challenging to achieve constant approximations for the norm optimization problems for perfect matching, $s$-$t$ path, and $s$-$t$ cut just by LP rounding. 
We show that the natural linear programs
have large integrality gaps.

\begin{definition}[Min-Norm $s$-$t$ Path Problem (\minnormpath)]
Given a directed graph $\pathG = (\pathV, \pathU)$ (here $\pathV$ is the set of vertices and $\pathU$ is the set of edges) and nodes $s, t \in \pathV$, define the feasible set:
$$
\pathF = \{D \subseteq \pathU : D \text{ forms a path from } s \text{ to } t \}.
$$
For the \minnorm\ version, we are also given a monotone symmetric norm $f$ and a value vector $\boldv \in \mathbb{R}_{\geq 0}^{\pathU}$. The goal is to select an $s$-$t$ path $D \in \pathF$ that minimizes $f(\boldv\assigned{D})$.
\end{definition}

In light of \cref{thm:equivalence}, we can define the \LBOpath\ problem with input tuple $\patheta = (\pathU; \pathS{1}, \ldots, \pathS{T}; \pathG; s; t)$, which is the \LBO\ problem defined in \cref{def:LBO} 
with input $(\pathU; \pathS{1}, \ldots, \pathS{T}; \pathF)$.

\begin{definition}[Min-Norm Perfect Matching Problem (\minnormmatch)]
Given a bipartite graph $\pmG = (L, R, \pmU)$
with $|L|=|R|$, define the feasible set:
$$
\pmF = \{D \subseteq \pmU : D \text{ forms a perfect matching in }\pmG \}.
$$
For the \minnorm\ version, we are also given a monotone symmetric norm $f$ and a value vector $\boldv \in \mathbb{R}_{\geq 0}^{\pmU}$. The goal is to select $D \in \pmF$ that minimizes $f(\boldv\assigned{D})$.
\end{definition}
We define the \LBOmatch\ problem with $\pmeta = (\pmU; \pmS{1}, \ldots, \pmS{T}; \cutG; s; t)$ as the \LBO\ problem with $(\pmU; \pmS{1}, \ldots, \pmS{T}; \pmF)$.

Due to \cref{thm:equivalence}, we establish the equivalence between approximating the \minnorm\ problem and the \LBO\ problem. Hence, 
if we can show that the \LBO\ version is hard
to approximate, then the same hardness (up to constant factor) also applies to the \minnorm\ version. 

Now, we consider using the linear programming rounding approach to approximate
the \LBO\ problem with $\eta = (\calU; S_1, \ldots, S_T; \calF)$.
Such an algorithm proceeds according to 
the following pipeline:

\begin{itemize}
    \item 
 First, we formulate the natural linear program (for $c \geq 1$) as the following:

\begin{equation}
\begin{aligned}
    \min && 0 \\
    s.t. && \text{$x$ satisfies} && \text{relaxed constraints of } \calF, \\
    && \sum_{u \in S_i} x_u \leq c \cdot 2^i && \forall 1 \leq i \leq T, \\
    && x_u \geq 0 && \forall u \in \calU.
\end{aligned}
\tag{LP-LBO-Original($\eta$, $c$)}
\label{LP-LBO-Original}
\end{equation}

\item 
Then, to solve the $(c, c_0)$-\LBO\ problem, we first check whether LP-LBO-Original($\eta$, $c_0$) has feasible solutions. If feasible, we use a rounding algorithm to find an integral solution for \eqref{LP-LBO-Original}.
\end{itemize}

Since the factor $c$ in LP-LBO-Original($\eta$, $c$)
determines the approximation factor,
we study  
the following linear program.

\begin{equation}
\begin{aligned}
    \min && z \\
    s.t. && \text{$x$ satisfies} && \text{relaxed constraints of } \calF, \\
    && \sum_{u \in S_i} x_u \leq z \cdot 2^i && \forall 1 \leq i \leq T, \\
    && x_u \geq 0 && \forall u \in \calU.
\end{aligned}
\tag{LP-LBO($\eta$)}
\label{LP-LBO}
\end{equation}

Clearly, if LP-LBO-Original($\eta$, $c$)
has a feasible solution, the optimal
value of LP-LBO($\eta$) is at most $c$.
Suppose we can round 
a fractional solution LP-LBO-Original($\eta$, $c$)
to an integral feasible 
$c'$-valid solution 
$\bar{x}$ 
for some constant $c'$
(i.e., $\bar{x}$ satisfies $\calF$ and 
$\sum_{u \in S_i} \bar{x}_u \leq c' \cdot 2^i \,\, \forall 1 \leq i \leq T
$).
Then, we get an 
integral solution for 
LP-LBO($\eta$) with objective value $c'$, contradicting the fact that
the integrality gap of LP-LBO($\eta$) is $\omega(1)$.
Hence, we can conclude that 
if the integrality gap of LP-LBO($\eta$) is $\omega(1)$,
it would be difficult to derive a constant factor approximation algorithm for both $\LBO$ and \minnorm-version of the problem
using the LP LP-LBO-Original($\eta, c$).

\subsection{Reduction from Perfect Matching to $s$-$t$ Path}\label{sec:integralgap-1}
For an \LBO\ perfect matching problem with $\pmeta = (\pmU; \pmS{1}, \ldots, \pmS{T}; \pmG)$, where $\pmG = (L, R, E)$ is a bipartite graph, we consider the following LP (on the left). The LP on the right is for \LBOpath\ with $\patheta = (\pathU; \pathS{1}, \ldots, \pathS{T}; \pathG; s; t)$.

\begin{minipage}{0.4\textwidth}
\begin{equation}
    \begin{aligned}
    \min && z && \\
     \text{s.t.} && \sum_{i \in L, (i, j) \in E} x_{i, j} = 1 && \forall j \in R,\\
     && \sum_{j \in R, (i, j) \in E} x_{i, j} = 1 && \forall i \in L,\\
    && \sum_{e \in \pmS{i}} x_e \leq z \cdot 2^{i} && \forall 1 \leq i \leq T,\\
    && 0 \leq x_{i, j} \leq 1 && \forall (i, j) \in \pmU. \\
    && \vphantom{\sum_{j \in \pathV, (s, j) \in \pathU} x_{s, j} = 1}
    \end{aligned}
\tag{LP-LBO-PM($\pmeta$)}
\label{LP-LBO-PM}
\end{equation}
\end{minipage}
\hspace{0.05\textwidth}
\begin{minipage}{0.4\textwidth}
\begin{equation}
    \begin{aligned}
    \min && z && \\
    \text{s.t.} && \sum_{j \in \pathV, (i, j) \in \pathU} x_{i, j} = 0 && \forall i \in \pathV \setminus \{s, t\},\\
    && \sum_{j \in \pathV, (s, j) \in \pathU} x_{s, j} = 1 && \\
    && \sum_{i \in \pathV, (i, t) \in \pathU} x_{i, t} = 1 && \\
    && \sum_{e \in S_i} x_e \leq z \cdot 2^{i} && \forall 1 \leq i \leq T,\\
    && 0 \leq x_e \leq 1 && \forall e \in \pathU.
    \end{aligned}
\tag{LP-LBO-Path($\patheta$)}
\label{LP-LBO-Path}
\end{equation}
\end{minipage}


We have the following theorem showing that \LBOpath\ problem is not harder than \LBOmatch\ problem.

\begin{theorem}\label{thm:intgap-0}
For any arbitrary function $\alpha(n)\ge 1$, We have the following conclusions:
\begin{enumerate}[label=(\alph*), format=\normalfont]
\item If the integrality gap of \cref{LP-LBO-PM} is no more than $\alpha(n)$ for all instances 
$\pmeta$ of the \LBOmatch\ problem, 
then the integrality gap of \cref{LP-LBO-Path} is also $O(\alpha(n))$ for all instances of 
the \LBOpath\ problem.

\item If we have a polynomial-time $\alpha(n)$-approximation algorithm for the \minnormmatch\ problem, then we have a polynomial-time $O(\alpha(n))$-approximation algorithm for the \minnormpath\ problem.
\end{enumerate}
\end{theorem}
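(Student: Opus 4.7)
The plan is to provide a combinatorial reduction from the $s$-$t$ path problem to the perfect matching problem and argue that it preserves both fractional LP values (for part (a)) and the norm objective (for part (b)) up to constant factors. Given a directed graph $\pathG = (\pathV, \pathU)$ with designated source $s$ and sink $t$, I would construct a bipartite graph $\pmG = (L, R, \pmU)$ with $L = \{v^L : v \in \pathV \setminus \{t\}\}$ and $R = \{v^R : v \in \pathV \setminus \{s\}\}$. The edge set $\pmU$ consists of a \emph{real} edge $(u^L, v^R)$ for each $(u, v) \in \pathU$, assigned to the same group $\pmS{i}$ as its corresponding path edge $(u,v) \in \pathS{i}$ and inheriting its value $\boldv_{(u,v)}$, together with a \emph{dummy} edge $(v^L, v^R)$ for each $v \in \pathV \setminus \{s, t\}$ that lies in no group and has value $0$. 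Since $|L| = |R| = |\pathV| - 1$, perfect matchings in $\pmG$ are well-defined, and $|\pmU| = O(|\pathU| + |\pathV|)$, so the resulting \LBOmatch\ instance has size polynomially related to the original; consequently $\alpha(|\pmU|) = O(\alpha(n))$.

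For part (a), I would first show that any feasible fractional solution $x$ to \ref{LP-LBO-Path} with objective $z$ yields a feasible fractional solution $y$ to \ref{LP-LBO-PM} of objective value at most $z$. A subtlety is that an arbitrary fractional $s$-$t$ flow may route more than one unit through an internal vertex because of cycles, but any such cycle can be canceled by subtracting its minimum $x$-value along every cycle edge; this standard cycle-elimination yields an acyclic flow $x'$ with $x' \le x$ entry-wise and hence no larger group sums. For an acyclic unit flow the total outflow at every internal vertex is at most $1$, so setting $y_{(u^L, v^R)} = x'_{(u,v)}$ on real edges and $y_{(v^L, v^R)} = 1 - \sum_{w} x'_{(v,w)}$ on dummy edges yields a valid fractional perfect matching whose group sums coincide with those of $x'$. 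In the reverse direction, given any integral matching $M$ in $\pmG$, the set of real edges of $M$ corresponds to a subset of $\pathU$ whose structure (one outgoing real edge at $s$, one incoming at $t$, and matched in/out pairs at every other ``active'' vertex) forces a decomposition into a single $s$-$t$ walk plus vertex-disjoint simple cycles in $\pathG$; deleting the cycles and repeatedly shortcutting repeated vertices within the walk produces a simple $s$-$t$ path $P$ whose group counts are dominated entry-wise by those of $M$. Composing with the assumed integrality gap bound gives an integral path whose objective is at most $\alpha(|\pmU|)$ times the fractional path optimum, yielding a path integrality gap of $O(\alpha(n))$.

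For part (b), the construction is the same but the argument avoids LPs entirely. For an optimal path $P^\star$, the matching $M^\star$ consisting of the real edges corresponding to $P^\star$ together with the dummy edges at vertices outside $P^\star$ is a valid perfect matching in $\pmG$ with $f(\boldv[M^\star]) = f(\boldv[P^\star])$, since dummies contribute $0$. Running the assumed $\alpha(n)$-approximation for \minnormmatch\ on this instance yields a matching $M$ with $f(\boldv[M]) \le \alpha(n) f(\boldv[P^\star])$, and extracting a simple $s$-$t$ path $P$ from the real edges of $M$ as above gives a $\boldv[P]$ component-wise dominated by $\boldv[M]$; monotonicity of $f$ then yields $f(\boldv[P]) \le f(\boldv[M]) \le \alpha(n) f(\boldv[P^\star])$.

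The main technical obstacle will be establishing the structural claim underlying both directions, namely that the real edges of any integral perfect matching of $\pmG$ decompose into exactly one $s$-$t$ walk plus a collection of vertex-disjoint simple cycles in $\pathG$. This amounts to tracking, for each internal vertex $v$, whether $v^L$ and $v^R$ are matched to each other via the dummy (so $v$ is ``inactive'') or separately via real edges (in which case $v$ has exactly one incoming and one outgoing real edge), and then using the asymmetry that $s$ has no copy in $R$ and $t$ has no copy in $L$ (and hence no dummy incident to either) to force exactly one walk from $s$ to $t$ with every other active component being a closed cycle. Once this lemma is in place, the shortcutting of walks into simple paths and the fractional cycle-elimination preprocessing are both standard.
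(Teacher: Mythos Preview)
Your proposal is correct and uses the same vertex-splitting reduction as the paper. The differences are minor but worth noting: you place dummy edges outside all groups with value $0$ (the paper puts them in $\pmS{T}$, incurring a harmless additive $+1$ in that budget), and you argue the LP correspondence in the directions that directly yield the gap bound---fractional path $\to$ fractional matching via cycle elimination, and integral matching $\to$ integral path via cycle removal---whereas the paper states the reverse pair (fractional matching $\to$ fractional flow, integral path $\to$ integral matching) and leaves the reader to observe that these maps also respect integrality in the opposite direction. Your handling of the extraction steps is in fact more explicit than the paper's. One small simplification you can make: since every active internal vertex has in- and out-degree exactly $1$ among real matching edges, the real edges already decompose into a \emph{simple} $s$-$t$ path plus vertex-disjoint simple cycles, so the walk-shortcutting step is unnecessary.
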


\begin{proof}
For the \LBOpath\ problem with $\patheta = (\pathU; \pathS{1}, \ldots, \pathS{T}; \pathG; s; t)$, where\\
$\pathG = (\pathV, \pathU)$, we construct a \LBOmatch\ problem 
with  $\pmeta = (\pmU; \pmS{1}, \ldots, \pmS{T}; \pmG)$, where $\pmG = (L, R, \pmU)$ 
is a bipartite graph constructed by the following process:
\begin{enumerate}
\item We define $s_0 \in L$, $t_1 \in R$. For any other vertex $v \in \pathV \setminus \{s, t\}$, we define $v_0 \in L, v_1 \in R$, and add an edge $(v_0, v_1) \in \pmS{T}$ between the two vertices.

\item Also, for any $1 \leq i \leq T$ and edge $e = (u, v) \in \pathS{i}$, we add an edge $(u_0, v_1) \in \pmS{i}$.
\end{enumerate}
For any fractional solution $(x', z)$ of LP-LBO-PM($\pmeta$), we can construct a fractional solution $(x, z)$ for LP-LBO-Path($\patheta$) by setting $x_{(u, v)} = x'_{(u_0, v_1)}$ for all $(u, v) \in \pathU$ by the fact that the degree of $v_0$ and $v_1$ is the same in perfect matching for each $v$.

On the other hand, each path from $s$ to $t$ corresponds to a perfect matching in $\pmG$. If we have a $c$-valid solution $D$ for \LBOpath, then we set $x^{int}_{u_0, v_1} = 1$ if and only if $(u_0, v_1) \in D$ or $u = v$ and $u$ is not in the path. $(x^{int}, c)$ is an integral solution for LP-LBO-PM($\pmeta$).

Due to \cref{thm:equivalence}, this implies statement (b). Combining the two parts implies that the integrality gap of LP-LBO-PM($\pmeta$) is no less than the integrality gap of LP-LBO-Path($\patheta$), so statement (a) is true.
\end{proof}

\subsection{Integrality Gaps for Min-Norm $s$-$t$ Path and Min-Norm Perfect Matching}  \label{sec:integralgap-2}

\begin{theorem} \label{thm:intgap-1}
    For infinitely many $n$, there exists an instance $\cuteta$ of size $n$ such that the integrality gap of \cref{LP-LBO-Path} can be $\Omega(\log n)$.
    Thus, the relaxations of the natural linear programming of both \LBOpath\ and \LBOmatch\ have integrality gaps of $\Omega(\log n)$.
\end{theorem}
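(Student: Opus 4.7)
My plan is to exhibit an explicit family of \LBOpath\ instances whose natural LP relaxation \cref{LP-LBO-Path} has integrality gap $\Omega(\log n)$, and then to invoke \cref{thm:intgap-0}(a) to transfer this lower bound to \LBOmatch. The construction is a bouquet of parallel paths of geometrically increasing lengths: fix a parameter $k$ to be taken as roughly $\log n$, and build $k$ internally vertex-disjoint directed $s$-$t$ paths $P_1, P_2, \ldots, P_k$ whose only shared vertices are $s$ and $t$, where $P_i$ consists of $2^i$ edges, each placed into group $\pathS{i}$. The total edge count is $\sum_{i=1}^{k} 2^i = 2^{k+1}-2$, so choosing $k = \lfloor \log n \rfloor$ keeps the instance at size $\Theta(n)$ with $T = \lceil \log n \rceil$ budget levels; letting $k$ range over the positive integers then produces instances of infinitely many sizes.

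For the fractional side, I would set $x_e = 1/k$ on every edge. Each $P_i$ carries $1/k$ units of $s$-$t$ flow, flow conservation is automatic at every interior vertex, and the outflow at $s$ (equivalently the inflow at $t$) equals $k \cdot (1/k) = 1$; the group constraint reduces to $\sum_{e \in \pathS{i}} x_e = 2^i/k \le z \cdot 2^i$, which is tight at $z = 1/k$. Hence the LP optimum is at most $1/k$. For the integral side, because the $P_i$ are internally vertex-disjoint, the only simple $s$-$t$ paths in the graph are $P_1, \ldots, P_k$ themselves; any integer feasible solution is some $P_j$, which concentrates all $2^j$ of its edges inside the single group $\pathS{j}$ and therefore forces $z \ge 2^j / 2^j = 1$. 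The integrality gap is thus at least $k = \Omega(\log n)$, establishing the \LBOpath\ half of the theorem.

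For the matching half, \cref{thm:intgap-0}(a) asserts that a uniform $\alpha(n)$ upper bound on the PM integrality gap implies an $O(\alpha(n))$ upper bound on the path integrality gap; by contrapositive, the $\Omega(\log n)$ lower bound just constructed on the path side rules out any $o(\log n)$ bound on the PM side, yielding the matching claim of \cref{thm:intgap-1}. The main obstacle in the construction is arranging that two opposing forces coexist: a uniform fractional flow of value $1/k$ must exactly saturate the geometric budgets $2^i$ so that the LP attains the tiny value $1/k$, while the topology must be sparse enough (via internal vertex-disjointness of the $P_i$) that no "mixed" $s$-$t$ walk exists along which an integer solution could distribute its edges across multiple groups and thereby undercut the $z \ge 1$ lower bound. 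Geometric lengths $2, 4, 8, \ldots$ are essentially the unique choice matching both constraints simultaneously.
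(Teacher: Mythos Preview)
Your proposal is correct and takes essentially the same approach as the paper: a bouquet of internally vertex-disjoint $s$-$t$ paths of geometrically increasing lengths, each path living in its own group, with the uniform fractional flow $x_e=1/k$ achieving $z=1/k$ while any integral path is forced to $z\ge 1$. The paper's construction additionally carries a scaling parameter $c$ and an extra ``mixed'' path $P$ with $2^i$ edges in each $\pathS{i}$; these are not needed for the gap argument (your $c=1$ version already has each $P_j$ as a $1$-valid integral solution), so your write-up is in fact slightly cleaner. Your transfer to \LBOmatch\ via the contrapositive of \cref{thm:intgap-0}(a) is exactly right.
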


\begin{proof}
Consider two positive integers $c,k$ such that $c=O(1),k=\Omega(\log n)$ and $1\leq c,k\leq T$. We construct an example
$\patheta = (\pathU; \pathS{1}, \ldots, \pathS{T}; \pathG; s; t)$, and demonstrate that the integrality gap of LP-LBO-Path($\patheta$) is $\Omega(\log n)$.
The construction of the graph is as follows: for each $1 \leq i \leq k$, there is a path $P_i$ with $c \times 2^i$ edges in $\pathS{i}$ from $s$ to $t$. Additionally, there is another path $P$ with $2^i$ edges in $\pathS{i}$ for each $1 \leq i \leq k$.
Then we have $k = \Theta(\log n / c)$. The optimal integral solution chooses $P$. However, there is a feasible fractional solution where $x_e = 1/k$ for $e \in P_i,1\leq i\leq k$. Thus, the integrality gap is at least $k / c = \Omega(\log n)$ when $c = O(1)$.
\end{proof}

\begin{figure}[t]
\begin{minipage}{\textwidth}
\centering
\begin{tikzpicture}[  
    point/.style={circle, draw, fill=white, inner sep=2.5pt},  
    node distance=1cm and 1cm, 
]  
	
    \node[circle,draw,fill=white,inner sep=3.5pt] (s) at (0,2) {\large \bf $s$};
	\node[circle,draw,fill=white,inner sep=3.5pt] (t) at (15,2) {\large \bf $t$};
	\foreach \i in {1,...,15} {  
        \node[point] (p1\i) at (1+13/16*\i, 3) {};  
    }  
	\draw[->,dashed,line width=1.5pt,>={Stealth}] (s) -- (p11);
	\draw[->,dashed,line width=1.5pt,>={Stealth}] (p115) -- (t);
    \foreach \i in {1,...,14} {  
        \draw[->, dashed, line width=1.5pt,>={Stealth}] (p1\i) -- (p1\the\numexpr\i+1\relax);  
    }  
      
    \foreach \i in {1,...,7} {  
        \node[point] (p2\i) at (1+13/8*\i, 2.5) {};  
    }  
	\draw[->,line width=1.5pt,>={Stealth}] (s) -- (p21);
	\draw[->,line width=1.5pt,>={Stealth}] (p27) -- (t);
    \foreach \i in {1,...,6} {  
        \draw[->,line width=1.5pt,>={Stealth}] (p2\i) -- (p2\the\numexpr\i+1\relax);  
    }  
      
    \foreach \i in {1,...,3} {  
        \node[point] (p3\i) at (1+13/4*\i, 2) {};  
    }  
	\draw[->,color=red, line width=1.5pt,>={Stealth}] (s) -- (p31);
	\draw[->,color=red, line width=1.5pt,>={Stealth}] (p33) -- (t);
    \foreach \i in {1,...,2} {  
        \draw[->, color=red, line width=1.5pt,>={Stealth}] (p3\i) -- (p3\the\numexpr\i+1\relax);  
    } 

    \foreach \i in {1,...,13} {  
        \node[point] (p4\i) at (1+13/14*\i, 1.5) {};  
    } 
	\draw[->,dashed, line width=1.5pt,>={Stealth}] (s) -- (p41);
	\draw[->,color=red, line width=1.5pt,>={Stealth}] (p413) -- (t);

    \foreach \i in {1,...,7} {  
        \draw[->, dashed, line width=1.5pt,>={Stealth}] (p4\i) -- (p4\the\numexpr\i+1\relax);  
    } 
	\foreach \i in {8,...,11} {
		\draw[->, line width=1.5pt,>={Stealth}] (p4\i) -- (p4\the\numexpr\i+1\relax);  
	}
	\foreach \i in {12,...,12} {
		\draw[->,color=red, line width=1.5pt,>={Stealth}] (p4\i) -- (p4\the\numexpr\i+1\relax);
	}
\end{tikzpicture}
\caption{An example for $c = 2, k = 3$. The red solid edges are in $\pathS{1}$. The black solid edges are in $\pathS{2}$. The dashed edges are in $\pathS{3}$.}
\end{minipage}
\end{figure}
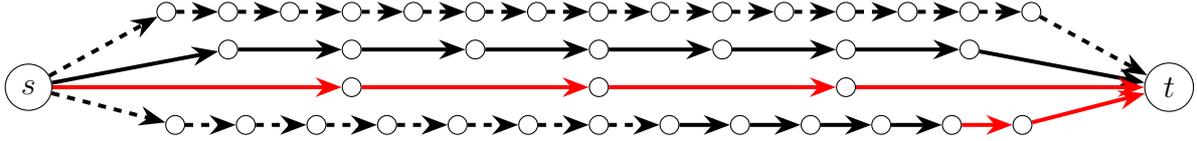

\subsection{Integrality Gaps for Min-Norm $s$-$t$ Cut} 
Also, we have the result about the integrality gap for $s$-$t$ cut.
\begin{definition}[Min-Norm $s$-$t$ Cut Problem (\minnormcut)]
Given a directed graph $\cutG = (\cutV, \cutU)$ and nodes $s, t \in \cutV$, define the feasible set:
$$
\cutF = \{D \subseteq \cutU : \text{There is no path from } s \text{ to } t \text{ in the graph } G' = (\cutV, \cutU \setminus D)\}.
$$
For the \minnorm\ version, we are also given a monotone symmetric norm $f$ and a value vector $\boldv \in \mathbb{R}_{\geq 0}^{\cutU}$. The goal is to select an $s$-$t$ cut $D \in \cutF$ that minimizes $f(\boldv\assigned{D})$.
\end{definition}

Similarly, we define the \LBOcut\ problem with $\cuteta = (\cutU; \cutS{1}, \ldots, \cutS{T}; \cutG; s; t)$ as the \LBO\ problem with $(\cutU; \cutS{1}, \ldots, \cutS{T}; \cutF)$, where $\cutG = (\cutV, \cutU)$ is a directed graph, we can write the following LP:

\begin{equation}
\begin{aligned}
    \min && z && \\
    \text{s.t.} && 0 \leq p_v \leq 1 && \forall v \in \cutV,\\
    && x_{u, v} \geq p_u - p_v && \forall (u, v) \in \cutU,\\
    && \sum_{(u, v) \in \cutS{i}} x_{u, v} \leq z \cdot 2^{i} && \forall 1 \leq i \leq T,\\
    && p_s = 0 && \\
    && p_t = 1 && \\
    && 0 \leq x_{u, v} \leq 1 && \forall (u, v) \in \cutU.
\end{aligned}
\tag{LP-LBO-Cut($\cuteta$)}
\label{LP-LBO-Cut}
\end{equation}

\begin{theorem}\label{thm:intgap-2}
For infinitely many $n$, there exists an instance $\cuteta$ of size $n$ such that the integrality gap of \cref{LP-LBO-Cut} can be $\Omega(\log n)$. 
\end{theorem}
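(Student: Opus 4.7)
The plan is to construct an explicit layered-graph family of \LBOcut\ instances and verify by direct computation that the optimum of LP-LBO-Cut is $O(1/\log n)$ while every integral $s$-$t$ cut forces the objective to be at least $1$. The conceptual driver is the ``AND across layers'' semantics of a cut: fractionally one can pay by taking a thin slice along a long chain of increasing potentials, whereas any integral cut must completely sever at least one bundle of parallel edges at once, exhausting a full group budget. This mirrors \cref{thm:intgap-1} but dualizes the role of parallel versus serial structure.

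Concretely, I will set $k = \lfloor \log n \rfloor$, introduce vertices $s = v_0, v_1, \ldots, v_{k-1}, v_k = t$, and between each consecutive pair $v_{i-1}, v_i$ install $2^i$ parallel directed edges, all assigned to group $\cutS{i}$. A routine count gives $\sum_{i=1}^k 2^i \leq n$ edges and at most $T = \lceil \log n \rceil$ groups, so this is a legitimate \LBOcut\ input of size $n$.

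For the integral side I will argue that any $D$ that disconnects $s$ from $t$ must fully sever some layer $i^\ast$: if at every layer at least one edge survives, picking one surviving edge per layer yields an $s$-$t$ path, contradiction. Consequently $|D \cap \cutS{i^\ast}| \geq 2^{i^\ast}$, and the cardinality constraint $|D \cap \cutS{i^\ast}| \leq z \cdot 2^{i^\ast}$ forces $z \geq 1$. For the fractional side I will write down the explicit solution $p_{v_j} = j/k$ (so $p_s = 0$ and $p_t = 1$) together with $x_e = 1/k$ for every edge $e$. Every potential constraint is tight on each layer edge, the total $x$-mass in $\cutS{i}$ equals $2^i/k$, and hence $z = 1/k$ is LP-feasible. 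Dividing the two bounds yields integrality gap $\Omega(k) = \Omega(\log n)$ for infinitely many $n$.

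The only delicate point is the orientation of the potential inequality: for the construction to go through one must read LP-LBO-Cut in the canonical directed-cut form in which edges crossing from low to high potential carry the cut mass (equivalently, the binding constraint is $x_{u,v} \geq p_v - p_u$ with $p_s = 0$, $p_t = 1$), since under the other orientation the LP would trivially admit $z = 0$ on parallel-edge configurations and the claim would degenerate. Once this convention is fixed, the whole argument is a direct verification of two inequalities and no iterative or rounding machinery is needed.
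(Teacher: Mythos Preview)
Your proposal is correct and follows essentially the same layered parallel-edge construction as the paper: a chain $s=A_0,A_1,\dots,A_k=t$ with $\Theta(2^i)$ parallel edges from group $\cutS{i}$ in layer $i$, so that any integral cut must wipe out one full bundle while the fractional solution spreads potential $1/k$ per layer. The paper's version additionally attaches a final mixed bundle (with $2^{i-1}$ edges from each $\cutS{i}$ between $A_k$ and $t$, mirroring the extra path $P$ in \cref{thm:intgap-1}), but that layer is not needed for the $\Omega(\log n)$ gap itself; your explicit remark on the orientation of the potential constraint is also on point and is the convention the paper's construction implicitly requires.
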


\begin{proof}
We use a construction $\cuteta = (\cutU; \cutS{1}, \ldots, \cutS{T}; \cutG; s; t)$ similar to the one used for \LBOpath. We select positive integers $k$ and $c$.

Let node $A_0 = s$. For $1 \leq i \leq k$, there is a node $A_i$. There are $c \cdot 2^{i-1}$ edges in $\cutS{i}$ between $A_{i-1}$ and $A_i$. Additionally, there are $2^{i-1}$ edges in $\cutS{i}$ for all $1 \leq i \leq k$ between node $A_k$ and $t$. Let $A_{k+1} = t$. 

To create a valid cut, there must exist some $0 \leq i< k$ such that all edges between $A_i$ and $A_{i+1}$ have been selected. For each $i$, all edges between $A_i$ and $A_{i+1}$ form a valid solution. Thus, similar to the integrality gap of LP-LBO-Path($\cdot$), this example shows that the integrality gap is $\Omega(\log n)$.
\end{proof}

\emph{Remark.} In the example in the proof,
the gap between any feasible subset of $\calU'=\{e\in\calU:x_e>0\}$ ($\{x_e\}_{e\in\calU}$ is the fractional solution) and the fractional solution is larger than any given constant. Thus any rounding algorithm that deletes zero-value variables (including the rounding algorithm  we developed in this paper and the iterative rounding method in \cite{chakrabarty2019approximation}) 
cannot successfully yield a constant-factor approximation.


\section{An Algorithm for Min-Norm $s$-$t$ Path Problem}
\label{sec:algo-path}
Recall that we define the \minnormpath\ problem and prove that the natural linear program has a large integrality gap in \cref{sec:integralgap}. 
In this section, 
we provide a factor $\alpha$ approximation algorithm 
that runs in $n^{O(\log\log n/\alpha)}$ time,
for any $9\le \alpha\le \log\log n$. 
In particular, this implies an $O(\log\log n)$-factor polynomial-time approximation algorithm and a constant-factor quasi-polynomial $n^{O(\log\log n)}$-time algorithm for \minnormpath.
Note that this does not contradict 
Theorem~\ref{thm:intgap-1}, since we do not use the LP 
rounding approach in this section.

In light of \cref{thm:equivalence} with $\varepsilon=1$, we consider $\frac{\alpha-1}{4}$-\LBOpath\ problem,
with input tuple $\eta=(\calU;S_{1},S_{2},\ldots,S_{T};G;s;t)$, where $n=|\calU|$ and $T=\lceil \log n\rceil$, where 
$\alpha$ is the approximation factor we aim to achieve.

We first provide an overview of our main ideas.
A natural approach to solve \LBOpath\ is to employ dynamic programming, 
in which the states keep track of the number of edges used from each group.
However, since we have $T=\lceil \log n\rceil$ groups, the number of states
may be as large as $n^{O(T)}$.
To resolve this issue, 
we perform an {\em approximation dynamic programming}, in which we only approximate the number of edges in each group. In particular, the numbers are rounded to the nearest power of \( p \) after each step, for carefully chosen value \( p > 1 \) 
(to ensure that we do not lose too much in the rounding).
This rounding technique is inspired by a classic approximation algorithm for the subset sum problem~\cite{10.1145/321906.321909}
Now, the dynamic programming state is a vector that approximates the number of edges used from each group in a path from \( x \) to \( y \) with at most \( 2^i \) edges, where \( x, y \in V \) and \( 0 \leq i \leq \lceil \log n \rceil \).
The dynamic programming process involves storing the number of selected items in each group and rounding them to the nearest power of \( p \) at each iteration. However, this method results in a state space of size \( (\log_p n)^{T} = n^{\Omega(\log \log n)} \), 
which is better than $n^{O(T)}$, but still super-polynomial. 
To resolve the above issue, we need the second idea, which is to trade off the approximation factor and the running time. 
In particular, we introduce an integer parameter \( \beta \), defined as $\beta = \frac{\alpha - 1}{4(1 + \delta)}$
for some \( \delta \in \left[ \frac{1}{2}, 2 \right] \). We then partition the original \( T \) groups into \( T/\beta \) {\em supergroups},
each containing \(\beta\) groups.
This reduces the number of states to $(\log_p n)^{O(T/\beta)}$, but incurs a loss of 
$O(\beta)$ in the approximation factor. Our main result in this section is the following theorem:

\begin{theorem}\label{thm:algo-path}
For any $9\le \alpha\le \log\log n$, there exists a $n^{O(\log\log n/\alpha)}$-time algorithm (\cref{Algo-Path}) for $\frac{\alpha-1}{4}$-\LBOpath. Thus we have an approximation approximation which runs in $n^{O(\log\log n/\alpha)}$-time 
and achieves an approximation factor of $\alpha$ for \minnormpath.
\end{theorem}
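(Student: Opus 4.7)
The plan is to invoke Theorem~\ref{thm:equivalence} with $\varepsilon=1$, which reduces \minnormpath\ to solving $\tfrac{\alpha-1}{4}$-\LBOpath. For the latter I construct an approximate dynamic program on pairs of vertices and doubling path lengths. Two ideas keep the state space small. First, I pack the $T=\lceil\log n\rceil$ level-sets $S_1,\ldots,S_T$ into $\lceil T/\beta\rceil$ consecutive \emph{supergroups} of size $\beta$, chosen as the integer closest to $\tfrac{\alpha-1}{4(1+\delta)}$ for a suitable $\delta\in[\tfrac12,2]$. Second, within each supergroup the DP tracks only a single rounded load value rather than individual per-group counts. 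Concretely, for every walk $P$ define its load profile $\pi(P)\in\mathbb{R}_{\ge 0}^{\lceil T/\beta\rceil}$ by $\pi_j(P)=\sum_{k\in\mathrm{block}_j}|P\cap S_k|/2^k$. The key observation is that any walk with $\pi_j(P)\le C$ for every $j$ is automatically $C$-valid on the original groups (each summand is non-negative), while an optimal $1$-valid solution $D^*$ satisfies $\pi_j(D^*)\le\beta$; hence it suffices to find a walk whose profile is, component-wise, at most $(1+\delta)\beta=\tfrac{\alpha-1}{4}$.

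The DP table $A_i(x,y)$ stores a collection of rounded load profiles achievable by some walk from $x$ to $y$ using at most $2^i$ edges. Setting $p=(1+\delta)^{1/T}$, the base $A_0(x,y)$ is populated from the empty walk and single edges, rounding each profile coordinate up to the nearest power of $p$. For the inductive step I enumerate every intermediate vertex $z\in V$ and every pair $(\tilde\pi^{(1)},\tilde\pi^{(2)})\in A_i(x,z)\times A_i(z,y)$, form their coordinate-wise sum, and round each entry up to the next power of $p$ to obtain a candidate profile for $A_{i+1}(x,y)$. A straightforward induction shows that for every walk $P$ from $x$ to $y$ with at most $2^i$ edges, $A_i(x,y)$ contains some $\tilde\pi$ satisfying $\tilde\pi\le p^i\pi(P)$ component-wise; at level $T$ the distortion is at most $p^T=1+\delta$, so the optimum induces a profile $\tilde\pi^*\in A_T(s,t)$ with every coordinate at most $(1+\delta)\beta$. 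At termination I scan $A_T(s,t)$ for such a profile, trace back an explicit walk, and shortcut any repeated vertices to obtain a simple $s$-$t$ path---an operation that only decreases $|D\cap S_k|$ in every $k$ and therefore preserves $\tfrac{\alpha-1}{4}$-validity. If no suitable profile exists, I report that no $1$-valid solution exists.

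For the running-time bound, each coordinate of a rounded profile takes at most $O(\log_p n)=O(\log^2 n/\delta)$ distinct values, and there are $\lceil T/\beta\rceil=O(\log n/\alpha)$ coordinates, so $|A_i(x,y)|\le n^{O(\log\log n/\alpha)}$. Multiplying across $n^2$ vertex pairs, $T$ levels, $n$ choices of intermediate vertex, and $|A_i|^2$ pairs of profiles per transition yields total time $n^{O(\log\log n/\alpha)}$, as claimed. The main technical obstacle I anticipate is the walk-versus-simple-path mismatch: concatenating two walks at $z$ in general yields a walk that revisits vertices, so the DP really tracks profiles of walks rather than simple paths. Both directions work out cleanly: the optimum simple path itself is a walk, so its profile is approximated by the DP after repeatedly splitting it in half at each doubling step, while conversely any walk returned at the end can be shortcut to a simple $s$-$t$ path without increasing any coordinate of $\pi$. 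A secondary routine point is to verify that the rounding errors compound correctly so that $p^T=1+\delta$ is exactly absorbed by the slack between $(1+\delta)\beta$ and $\tfrac{\alpha-1}{4}$; the regime $9\le\alpha\le\log\log n$ is precisely what makes $\beta$ a positive integer and keeps $T/\beta$ large enough for the state count to dominate the analysis.
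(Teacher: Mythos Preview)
Your proposal is correct and follows essentially the same approach as the paper: supergroup the $T$ level-sets into $\lceil T/\beta\rceil$ blocks with $\beta=(\alpha-1)/(4(1+\delta))$, track a single rounded load $\pi_j(P)=\sum_{k\in\mathrm{block}_j}|P\cap S_k|/2^k$ per block via a doubling DP on vertex pairs, round coordinates to powers of $p$ with $p^T\approx 1+\delta$, and bound the state space by $(\log_p n)^{O(T/\beta)}=n^{O(\log\log n/\alpha)}$. You are in fact more careful than the paper on the walk-versus-simple-path issue (the paper's Algorithm~\ref{Algo-Path} and Lemma~\ref{lem:algo-path-1} speak of ``paths'' throughout but the concatenation step really produces walks, and the shortcutting argument you spell out is exactly what is needed).
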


Now, we present the details of our algorithm.
Let $K=\lceil T/\beta \rceil$, and $B_i=\min (T,i\cdot \beta)$ for all $0\le i\le K$. 

For $1\le i\le K$ and $D\subseteq \calU$, define $C_i(D)=\sum_{j=B_{i-1}+1}^{B_i} \frac{1}{2^j} |D\cap S_{j}|$ (specifically,  $C_i(\emptyset)=0$). Furthermore, we define the vector $C(D)=(C_1(D),\ldots C_{K}(D))\in \mathbb{R}^{K}$. 
It is important to notice that:
\begin{itemize}
    \item If $D\subseteq\calU$ is a $c$-valid solution ($c>0$), then $C_i(D)\le c\beta$ for all $1\le i\le K$.
    \item If $C_i(D)\le c\beta$ for all $1\le i\le K$, then $D\subseteq\calU$ is a $c\beta$-valid solution.
\end{itemize}

In iteration \( i \) (\( 1 \leq i \leq \lceil \log n \rceil \)), for each pair of vertices \( x, y \in V \), we define \( Q_{i, x, y} \) as a set of vectors in \( \mathbb{R}^{K} \) that encodes information about paths from \( x \) to \( y \) containing at most \( 2^i \) edges. Specifically, for any path \( D \subseteq \mathcal{U} \) from \( x \) to \( y \) with at most \( 2^i \) edges, the set \( Q_{i, x, y} \) includes a corresponding vector that approximates \( C(D) \).

\begin{itemize}
\item Initially, $Q_{0, x, x} = \{ C(\emptyset) \}$. For $Q_{0, x, y}$, it is set to $\{ C(\{(x, y)\}) \}$ if $(x, y)$ is an edge, and $\emptyset$ if $(x, y)$ is not an edge.

\item In the $i$-th iteration ($1 \leq i \leq \lceil \log n \rceil$), we begin by initializing $Q_{i, x, y} = \emptyset$ for all $x, y$. Then, for each pair $x, y$, we enumerate all vertices $z$ and add the sum of $Q_{i-1, x, z}$ and $Q_{i-1, z, y}$ to $Q_{i, x, y}$. Here, the sum of two sets is defined as the set of all pairwise sums of elements from the two sets.
 
\item To reduce the size of $Q_{i, x, y}$ , we round the components of these vectors in $Q_{i,x,y}$ to $0$ or powers of $p = 1 + \frac{\delta/2}{\log n}$ (recall that $\delta\in \lbrack 1/2,2\rbrack$).

\end{itemize}
\begin{algorithm}[H]
\caption{An Algorithm for \LBOpath}
\label{Algo-Path}
\KwOut{A $(\alpha-1)/4$-valid Solution $D$ or "No Solution"}
Initially, set all $Q_{i,x,y}$ to $\emptyset$.\;
$\forall x\in V$, $Q_{0,x,x}\gets Q_{0,x,x}\cup \lbrace (C(\emptyset) \rbrace$. \;
$\forall$ $(x,y)\in \calU$, $Q_{0,x,y}=Q_{0,x,y}\cup \lbrace C(\lbrace (x,y)\rbrace)\rbrace$.\; 
\For{$i \gets 1$ \KwTo $\lceil \log n\rceil$}{
    \ForAll{$x,y,z\in V,u\in Q_{i-1,x,z},u'\in Q_{i-1,z,y}$}{
            Add $u$ and $u'$ up, and round it to the power of $p$. (Recall that $p=(1+\frac{\delta/2}{\log n})$). Formally, define $w\in \mathbb{R}^{K}$ by $w_j=p^{\lceil \log_p (u_j+u'_j)\rceil}$ for $1\le j\le K$ (Specially, if $u_j+u'_j=0$, then $w_j=0$.)\;
            \If{$\forall 1\le j\le K, w_j\le p^{i}\beta$}{
                $Q_{i,x,y}\gets Q_{i,x,y}\cup \lbrace w\rbrace$\;
            }
    }
}
\If{$Q_{\lceil \log n\rceil,s,t}=\emptyset$}{
\Return "No Solution"\;
}
Select $u\in Q_{\lceil \log n\rceil,s,t}$\;
\Return $D$
\end{algorithm}

\begin{lemma}\label{lem:algo-path-1}
If there exists a $1$-valid solution, then \cref{Algo-Path} finds an $(\alpha-1)/4$-valid solution. 
\end{lemma}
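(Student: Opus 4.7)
The plan is to trace a hypothetical $1$-valid $s$-$t$ path $D^*$ through the dynamic program and show that after all the rounding, some surviving vector in $Q_{\lceil\log n\rceil,s,t}$ has all coordinates bounded by $(1+\delta)\beta=\frac{\alpha-1}{4}$. The key invariant, proved by induction on $i\in\{0,1,\ldots,\lceil\log n\rceil\}$, states that for every contiguous sub-walk $D$ of $D^*$ from $x$ to $y$ of length at most $2^i$, there exists $u\in Q_{i,x,y}$ with $u_j\le p^i\cdot C_j(D)$ for every supergroup index $j\in\{1,\ldots,K\}$.

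The base case $i=0$ follows immediately from the initialization: length-zero sub-walks correspond to the vector $0\in Q_{0,x,x}$, and single-edge sub-walks lie in $Q_{0,x,y}$ as the exact vector $C(\{(x,y)\})$. For the inductive step, split $D$ at an interior vertex $z$ into halves $D_1,D_2$ of length at most $2^{i-1}$; the inductive hypothesis provides $u\in Q_{i-1,x,z}$ and $u'\in Q_{i-1,z,y}$ with $u_j\le p^{i-1}C_j(D_1)$ and $u'_j\le p^{i-1}C_j(D_2)$. The algorithm's rounding then produces $w_j\le p(u_j+u'_j)\le p^i(C_j(D_1)+C_j(D_2))=p^i C_j(D)$. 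To confirm $w$ passes the filter $w_j\le p^i\beta$, use that $D\subseteq D^*$ and $D^*$ is $1$-valid, so $|D\cap S_\ell|\le 2^\ell$ for every group $\ell$ in supergroup $j$; summing over the at most $\beta$ such groups gives $C_j(D)\le\beta$, hence $w_j\le p^i\beta$.

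Applying the invariant at level $\lceil\log n\rceil$ to $D=D^*$ yields a vector $u\in Q_{\lceil\log n\rceil,s,t}$ with $u_j\le p^{\lceil\log n\rceil}\beta$. Using $(1+\frac{\delta/2}{\log n})^{\log n}\le e^{\delta/2}$ together with the numerical inequality $e^{\delta/2}\le 1+\delta$ (which holds throughout $\delta\in[1/2,2]$, since equality is approached only around $\delta\approx 2.5$), the cumulative rounding factor satisfies $p^{\lceil\log n\rceil}\le 1+\delta$ for $n$ large enough, so every stored vector in $Q_{\lceil\log n\rceil,s,t}$ has all coordinates at most $(1+\delta)\beta$. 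Reconstructing the standard provenance pointers of the dynamic program (remembering for each $w$ the triple $(u,u',z)$ that produced it) recovers a walk from $s$ to $t$ whose $C$-vector is dominated by the selected $u$; shortcutting any cycles in this walk yields a simple $s$-$t$ path $D$ with only smaller per-group counts. Finally, $C_j(D)\le(1+\delta)\beta$ in every supergroup $j$ forces $\frac{1}{2^\ell}|D\cap S_\ell|\le(1+\delta)\beta$ for every individual group $\ell$, i.e.\ $|D\cap S_\ell|\le\frac{\alpha-1}{4}\cdot 2^\ell$, which is exactly $(\alpha-1)/4$-validity.

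The main obstacle is the tight quantitative accounting of the rounding blowup: one must simultaneously guarantee that the intermediate filter threshold $p^i\beta$ is respected at every level (enforced by the $1$-validity of $D^*$ together with the bound $C_j(D)\le\beta$ on every subpath) and that the final cumulative factor $p^{\lceil\log n\rceil}$ is at most $1+\delta$. The second condition is precisely what forces the choice $p=1+\delta/(2\log n)$ paired with the range $\delta\in[1/2,2]$, since $e^{\delta/2}\le 1+\delta$ ceases to hold for substantially larger $\delta$, explaining the permissible lower bound $\alpha\ge 9$.
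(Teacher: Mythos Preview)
Your proof is correct and follows essentially the same approach as the paper: the same induction on $i$ with the invariant $u_j\le p^iC_j(D')$ for subpaths of the $1$-valid solution, the same verification that the filter $w_j\le p^i\beta$ is never violated via $C_j(D')\le\beta$, and the same final bound $p^{\lceil\log n\rceil}\le 1+\delta$. You actually spell out more than the paper does on two points the paper leaves as one line (``repeat the algorithm process in reverse''): the provenance-pointer reconstruction together with the observation that rounding is always upward so the recovered walk's $C$-vector is dominated by the stored vector, and the cycle-shortcutting step that turns the walk into a simple path without increasing any per-group count.
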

\begin{proof}
Assume $D$ is a $1$-valid solution, then for each $1\le i\le K$, $C_i(D)=\sum_{j=B_{i-1}+1}^{B_i} \frac{1}{2^j}|D\cap S_{j}|\le \sum_{j=B_{i-1}+1}^{B_i} \frac{1}{2^j} 2^j=B_i-B_{i-1}\le \beta$. 

We use mathematical induction. For a subpath $D'\subseteq D$ from $x$ to $y$ with no more than $2^i$ edges, we need to prove that there exists a vector $w\in Q_{i,x,y}$, $w_j\le p^i C_j(D')$ for $1\le j\le K$.

It is obvious for $i=0$. When $i>0$, by induction hypothesis, there exists $z\in V, u\in Q_{i-1,x,z},u'\in Q_{i-1,z,y}$ that $D'$ is a combination of two subpaths corresponding to $u,u'$. 
Let the two paths be $D'_1,D'_2$ such that $u_j\le p^{i-1}C_j(D'_1),u'_j\le p^{i-1}C_j(D'_2)$ for $1\le j\le K$.
By the algorithm, there is $w$ such that
$$w_j=p^{\lceil \log_p(u_j+u'_j)\rceil}\le p(u_j+u'_j)\le p(p^{i-1}C_j(D'_1)+p^{i-1}C_j(D'_2))=p^i C_j(D')\le p^i\beta,$$
so it can be added in $Q_{i,x,y}$ and the assumption of the induction holds.

Since the length of $D$ is no more than $n$, we can find $w\in Q_{\lceil \log n\rceil,s,t}$ and for $1\le j\le K$, 
$$w_j\le p^{\lceil \log n\rceil} C_j(D)\le \left( 1+\frac{\delta/2}{\log n}\right)^{\log n+1} \beta\le e^{\delta/2}\cdot \left( 1+\frac{\delta/2}{\log n}\right)\beta\le (1+\delta)\beta=\frac{\alpha-1}{4}.$$
When we repeat the algorithm process in reverse, we can find a $\frac{\alpha-1}{4}$-valid solution. Therefore, this statement is correct.
\end{proof}

\begin{proofofthm}{\ref{thm:algo-path}}
The values of vectors in any $Q_{i,x,y}$ are rounded to 0 or a power of $p=(1+\frac{1}{\log n})$.
Also, at step $i$, any value $w_j$ is at most
$$2^i C_j(D)\leq 2^i \beta\leq 2^i\log\log n,$$
and if it is larger than 0, it is at least $1/2^T$.
Therefore, for any $1\le i\le \lceil \log n\rceil,x,y\in V, 1\le j\le K$, the number of different values of $w_j$ in $Q_{i,x,y}$ is no more than $2+\log_p\left(\frac{p^i\log\log n}{1/2^{T}}\right)=2+\frac{T}{\log (1+(\delta/2)/\log n)}+i+\frac{\log \log\log n}{\log(1+(\delta/2)/\log n)}\le \frac{1}{\delta}\log^2 n$.

Therefore,
\begin{align*}
     |Q_{i,x,y}|& \le \left(\frac{1}{\delta}\log^2 n\right)^{K}=\left(\frac{1}{\delta}\log^2 n\right)^{O\left(\frac{\log n}{\beta}\right)}\\
    & =\exp\left(O\left(\frac{\log(1/\delta)}{\beta}\cdot \log\log n\cdot \log n\right)\right)
    =n^{O\left(\frac{\log\log n}{\alpha}\right)}
\end{align*}
Therefore, this algorithm is $n^{O(\log\log n/\alpha)}$-time. According to \cref{lem:algo-path-1} and \cref{thm:equivalence}, we find a $\alpha$-approximation $n^{O(\log\log n/\alpha)}$-time algorithm for \minnormpath.
\end{proofofthm}



\section{A Bi-criterion Approximation for Matching}
\label{sec:matching}
In this section, we consider the matching problem. 
While we do not know how to design a constant factor approximation algorithm for the \minnorm\ version of 
perfect matching problem yet, we demonstrate that it is possible to find a nearly perfect matching within a constant approximation factor --  a
bi-criterion approximation algorithm.
In particular, for any given norm, we can find a matching that matches $1-\epsilon$
fraction of nodes and its corresponding norm is at most $c$ (a constant) times the norm of the optimal integral perfect matching.
Note that a constant factor 
approximation algorithm (without relaxing the perfect matching requirement) is impossible using the natural linear program, due to Theorem~\ref{thm:intgap-1}.

First, we introduce some notations.
We are given a bipartite graph $G=(L,R,E)$, where $L$ is the  set of nodes of the first color class, $R$ is the set of node of the other color class, and $E$ is the set of edges. Let $m=|L|=|R|$. We define $\calU=E, n=|\calU|$. 
We also study the \LBO\ version and we let
$S_1,S_2\cdots,S_T$ be the disjoint subsets of $\calU$.


We consider two relaxations of the perfect matching requirement.

\begin{definition}[$\epsilon$-relaxed Matching]
Let $0<\epsilon<1$. We define the following problem as $\epsilon$-relaxed matching. Given a bipartite graph $G=(L,R,E)$, a set $S\subseteq E$ is a relaxed matching if in the subgraph $G'=(L,R,S)$, the degree of each vertex is 1 or 2 and the number of vertices with degree 2 is at most $2\epsilon m$ (recall $m=|L|=|R|$).
\end{definition}

\begin{definition}[$\epsilon$-nearly Matching]
Let $0<\epsilon<1$. We define the following problem as $\epsilon$-nearly matching. Given a bipartite graph $G=(L,R,E)$, a set $S\subseteq E$ is a nearly matching if it is a matching with at least $(1-\epsilon)m$ edges.
\end{definition}

Let $T=\lceil \log n\rceil$. Let $\boldd\in\R_{\geq 0}^T$ be a vector. We define (LP-Perfect-Matching($\boldd$)) as follows:
\begin{equation}
    \begin{aligned}
    \min && 0 && \\
     s.t. && \sum_{i\in L, (i,j)\in E} x_{i,j}=1 && \forall j\in R\\
     && \sum_{j\in R, (i,j)\in E} x_{i,j}=1 && \forall i\in L\\
    &&\sum_{e\in S_i} x_e\leq d_i && \forall 1\leq i\leq T\\
    && 0\leq x_e\leq 1 && \forall e\in\calU
\end{aligned}
\tag{LP-Perfect-Matching($\boldd$)}
\label{LP-Perfect-M}
\end{equation}

We call the first and second line of constraints {\em degree constraints} and the third line {\em budget constraints}. For any of the problems above, a solution $D\subseteq E$ is called $c$-valid for a constant $c$ if $|D\cap S_i|\leq c\times 2^i$ for $1\leq i\leq T$.

Let $\boldd^*=(2^1,2^2,\cdots,2^T)$. A 1-valid solution for perfect matching problem is an integral solution for LP-Perfect-Matching($\boldd^{*}$).

Here is a rounding process for $\epsilon$-relaxed matching.

\begin{lemma} \label{lem:matching-2}
For any $\boldd\in \mathbb{R}^{T}_{\ge 0}$, if \ref{LP-Perfect-M} has feasible fractional solutions, then there exists an polynomial-time rounding algorithm to find an integral solution of $\epsilon$-relaxed matching, and the integral solution satisfies the following conditions for each $i$:

\begin{itemize}

\item If $d_i>0$, the number of elements chosen from $S_i$ is at most $2d_i+9/\epsilon$. 

\item If $d_i=0$, the number of elements chosen from $S_i$ is 0.
\end{itemize}
\end{lemma}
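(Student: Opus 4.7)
The plan is to prove this via iterative LP rounding, adapting the technique of Lemma~D.1 of \cite{chakrabarty2019approximation} to the $\epsilon$-relaxed matching setting. Starting from a fractional solution to \ref{LP-Perfect-M}, I maintain a partial chosen set $D$ and repeatedly solve the reduced LP. At each iteration, I obtain an extreme point $x^*$ via \cref{lem:extreme:points}. If $x^*_e=1$, I add $e$ to $D$ and decrement the right-hand side of the incident degree constraints and the relevant budget constraint; if $x^*_e=0$, I delete $e$. The special case $d_i=0$ is handled automatically: every $e\in S_i$ must have $x_e=0$ in any LP-feasible solution, so nothing is ever added to $S_i$.

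When all remaining $x^*_e\in (0,1)$, I exploit the structure of the fractional support $F$: each vertex incident to $F$ must have fractional-support degree at least $2$, because the tight degree equality $\sum_{e\ni v}x_e=1$ with every $x_e<1$ forces multiple fractional incident edges. By counting linearly independent tight constraints (\cref{lem:extreme:points}), the number of tight budget constraints satisfies $B=|F|-V_F+c_F\geq c_F\geq 1$, where $V_F$ counts the vertices incident to $F$ and $c_F$ counts the connected components of the fractional support; in particular, at least one budget constraint is tight whenever any fractional variable remains.

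The algorithm then performs one of two moves: (a)~if some tight budget constraint $S_i$ has $|F\cap S_i|\leq 9/\epsilon$ fractional variables, drop it and continue in the reduced LP; or (b)~relax a degree constraint at a carefully chosen vertex $v$ (allowing $v$ to have degree $2$ in the final solution), charging $v$ against the $\epsilon$-relaxed matching allowance of $2\epsilon m$. A token-counting argument — each fractional $x_e$ distributes $\tfrac{1}{2}$ token to each incident tight-degree endpoint and the residual to its tight budget — shows that one of (a), (b) is always available. Once all tight budget constraints are eliminated, the remaining LP has only bipartite matching constraints (with some relaxed upper bounds), whose incidence matrix is totally unimodular (\cref{theorem:laminar:intersection,theorem:TUM}), so its extreme point is integral and can be output directly.

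Finally, I verify the stated bounds. For $d_i>0$: the at most $9/\epsilon$ fractional variables dropped when the budget of $S_i$ is eliminated each contribute at most $1$ to $|D\cap S_i|$, while the integral matching obtained subsequently contributes at most $2d_i$ additional elements (each rounded-up fractional $x_e\geq \tfrac{1}{2}$ at most doubles the surviving LP mass in $S_i$), yielding $|D\cap S_i|\leq 2d_i+9/\epsilon$. The delicate part — and the main obstacle I expect to wrestle with — is proving that the total number of invocations of move (b) never exceeds $2\epsilon m$. This requires an amortized analysis coupling each (b)-relaxation to subsequent budget eliminations or integer fixings, arguing that a degree relaxation is forced at most once whenever every tight budget has more than $9/\epsilon$ fractional variables, and then using the structural identity $B=|F|-V_F+c_F$ together with the $9/\epsilon$ threshold to bound the cumulative count of such relaxations against the edge budget $|E|\le m^2$ and the globally allowed $2\epsilon m$ slack.
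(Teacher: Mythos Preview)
Your high-level framework (iterative rounding, fix integer variables, drop sparse budget constraints, appeal to total unimodularity once only degree constraints remain) matches the paper's, and your handling of the $d_i=0$ case is correct. However, your mechanism for controlling the number of degree-$2$ vertices is not the paper's, and as you yourself flag, it is incomplete.

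The paper never relaxes degree constraints one at a time. Instead, at each fractional extreme point it looks at the \emph{bad-constraint} subgraph $H$: the subgraph induced by vertices whose degree equality currently has exactly two nonzero variables. Since each such vertex has degree exactly $2$ in $H$, $H$ decomposes into vertex-disjoint paths and cycles. Cycles are rounded exactly (alternately set the larger value to $1$ and the smaller to $0$) with no degree violation. A path of length at least $2/\epsilon+5$ is rounded along an internal subpath of $2k\ge 2/\epsilon+2$ vertices, creating at most two degree-$2$ vertices while perfectly matching the other $2k-2$; this local ratio $2/(2k)\le \epsilon$ is what guarantees the global $2\epsilon m$ bound. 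Only when all remaining paths are short (length $<2/\epsilon+5$) does the token argument kick in, and then the short-path bound is exactly what yields the $9/\epsilon$ constant: each fractional $x_e$ on such a path contributes at least roughly $\epsilon/9$ of a token to its budget constraint, so some budget receives total token $\le 1$, giving $\sum_{e\in S_i:x_e>0}(1-2x_e)\le 9/\epsilon$ for that budget, which is then dropped.

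Your move~(b) tries to replace this structural path/cycle rounding with single-vertex relaxations, but you give no argument bounding how many times (b) is invoked. The amortization you sketch (``coupling each (b)-relaxation to subsequent budget eliminations or integer fixings'') does not obviously work: after one relaxation the new extreme point may again have every tight budget supported on more than $9/\epsilon$ fractional variables, forcing another (b), and there is no evident potential function tying the count to $2\epsilon m$. The paper's path-rounding step is precisely the missing idea that makes the $\epsilon$-relaxed bound go through; without it (or a genuine substitute), your proof has a gap at exactly the point you identified.
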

\begin{proof}
At first, we have a fractional solution by solving the linear programming.
Then, we use the iterative rounding. At each step, if a new variable takes a value of 1 or 0, we fix it, remove it from the system, and proceed to the next rounding step. 
(If the value is 1, we subtract 1 from the corresponding right-hand side constraints.)

At one step, if there are no budget constraints, the coefficient matrix of tight constraints form a union of two laminar families. Thus, by \cref{theorem:laminar:intersection} and \cref{theorem:TUM}, all $x_e$'s are integers. So the rounding process has terminated. Otherwise, we can assume that there is at least one budget constraint.

Next, we consider the variables larger than $1/2$. 
If a degree constraint has only two nonzero variables, we label it as a {\em bad constraint}, while the other degree constraints are labeled as {\em good constraints}.
We then focus on a subgraph $G$ of the original graph, where an edge appears in this subgraph if and only if it connects two nodes and each node corresponds to a {\em bad constraint}. Since each node in this subgraph has a degree of at most 2, the graph is composed of cycles and paths.

If there is a cycle, we examine the variables corresponding to the edges. There are only at most $2$ different values for these variables, with one being at least $1/2$. We round the variables with the larger value to 1 and set those with the smaller value to 0. We then delete the cycle. For any budget constraint of the form "$\cdots \leq d_i$," we adjust $d_i$ by subtracting the contribution of the variables on the cycle, ensuring that the budget constraint remains valid.

Assuming there is no cycle, if there is a path of length at least $2/\epsilon + 5$, we focus on this path. The variables on the path can also take only two values, with one being at least $1/2$. Let the path include a subpath $u_1,v_1,u_2,v_2,\ldots,u_k,v_k$, where $k \geq 1/\epsilon + 1$. Here, $(u_i,v_i)$ corresponds to a variable that is at least $1/2$. We round each variable corresponding to $(u_i,v_i)$ to 1 and each variable corresponding to $(v_i,u_{i+1})$ to 0. We delete the degree constraints corresponding to the $2k$ nodes. Additionally, for each budget constraint of the form "$\cdots \leq d_i$," we adjust $d_i$ by subtracting the contribution of the variables for these $2k-1$ edges, thereby ensuring the budget constraint holds. Among the $2k$ deleted nodes, only $u_1$ and $v_k$ may have a degree of 2; the others must have a degree of 1.

Next, we assume that there is no cycle and no path of length at least $2/\epsilon + 5$. For any path in $G$, each of the two ending nodes has a degree of 2. We extend the paths by adding one edge at each ending node. By definition, these new paths do not share common edges. Now, for the new subgraph $H$, consider the variables not included in it. Each of these variables allocates $1/3$ of a token to each of the degree constraints they appear in, and they allocate the remaining token to the budget constraint they appear in. For the paths in $H$, let them be $u_1,u_2,\ldots ,u_k$, where $k \leq 3/\epsilon$. Let $(u_i,u_{i+1})$ allocate $1/3$ of a token to $u_i$ and $2/3$ to $u_{i+1}$ for $i < k-1$, while $(u_{k-1},u_k)$ allocates $1/3$ to both $u_{k-1}$ and $u_k$. Additionally, $(u_{k-1},u_k)$ allocates at least $1/(3k)$ to the budget constraints for each variable in this path appears in. Consequently, each degree constraint has a total token of at least 1, and there must be one budget constraint with a total token of at most 1. For this budget constraint, each variable with a value less than $1/2$ contributes a token of at least $\epsilon/9$. Therefore, for this budget constraint,
$$\sum_{e\in S_i:x_e>0}(1-2x_e)\leq 9/\epsilon.$$

Thus, we can eliminate this constraint. By applying the same analysis, at the conclusion of the iterative rounding process, for any budget constraint that has been eliminated, let the right hand side when it is deleted be $d_i'\leq d_i$. So
$$\sum_{e\in S_i:x_e>0}1=\sum_{e\in S_i:x_e>0}(1-2x_e)+2\sum_{e\in S_i:x_e>0}x_e\leq 2d_i'+9/\epsilon.$$
Therefore, the total number of elements chosen in $S_i$ at last is at most $2d_i'+9/\epsilon+(d_i-d_i')\leq 2d_i+9/\epsilon$. Also, if $d_i=0$, the elements in $S_i$ cannot be chosen in the process because $x_e=0$ for $e\in S_i$. So the lemma follows.
\end{proof}

Similarly, given $\epsilon$, if we partially enumerate solutions from $S_i$ for constant number of $i$ (from 1 to some constant $k$), we can let the right hand side be not larger than $(2+\delta)d_i$. This gives the following theorem:

\begin{theorem}
\label{thm:relaxed}
If there exists a 1-valid solution for a \LBOmatch\ problem, then there exists an $O(n^{\frac{18}{\delta\epsilon}})$-time algorithm to obtain a $(2+\delta)$-valid solution for the corresponding $\epsilon$-relaxed matching problem for any $\delta,\epsilon<1$.
\end{theorem}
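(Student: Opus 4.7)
\medskip
\noindent\emph{Proof plan for Theorem~\ref{thm:relaxed}.}
The plan is to mimic the ``partial enumeration + iterative rounding'' strategy that appears in our multi-dimensional knapsack algorithm and in the analogous (commented-out) theorem for plain relaxed matching, but with the additive slack parameter $9/\epsilon$ coming from Lemma~\ref{lem:matching-2} in place of the constant $3$. The reason some form of enumeration is necessary is that Lemma~\ref{lem:matching-2} rounds to at most $2d_i+9/\epsilon$ elements per group: this is only acceptable as a $(2+\delta)$-approximation when $d_i$ is large enough that the additive term $9/\epsilon$ is absorbed by the multiplicative slack $\delta d_i$. For the small groups, where $2^i$ is below the threshold, we brute-force enumerate.

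Concretely, first I would set the threshold index $k$ to be the least non-negative integer such that $2^{k+1}\ge 9/(\delta\epsilon)$, i.e.\ $k=\lceil\log(9/(\delta\epsilon))\rceil-1$. With this choice, for every $i\ge k+1$ and $d_i=2^i$ one has $9/\epsilon\le\delta\, 2^i$, so the rounding bound becomes $2\cdot 2^i + 9/\epsilon \le (2+\delta)\, 2^i$. Next, enumerate every tuple $(D_1,\ldots,D_k)$ with $D_i\subseteq S_i$ and $|D_i|\le 2^i$. As in the argument preceding Theorem~\ref{thm:relaxed} (and in Lemma~\ref{lem:Knap-2}), the count is at most $\prod_{i=1}^{k}(n+1)^{2^i}\le(n+1)^{2^{k+1}-2}\le(n+1)^{18/(\delta\epsilon)}$, using $2^{\lceil x\rceil}\le 2\cdot 2^x$.

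For each candidate $(D_1,\ldots,D_k)$, I would (i) discard it if $\bigcup_i D_i$ is not already a partial matching; (ii) otherwise delete the matched vertices and their incident edges from $G$, leaving a smaller bipartite graph $G'$; and (iii) invoke Lemma~\ref{lem:matching-2} on $G'$ with budget vector $\boldd$ defined by $d_i=0$ for $i\le k$ and $d_i=2^i$ for $k<i\le T$, after first checking feasibility of the corresponding \ref{LP-Perfect-M}. The LP's feasibility for at least one enumerated tuple is guaranteed: if $D^\star$ is the assumed $1$-valid solution, then $(D^\star\cap S_1,\dots,D^\star\cap S_k)$ is one of the enumerated tuples, and $D^\star$ restricted to the remaining groups certifies feasibility of the corresponding LP on $G'$. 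In that iteration, Lemma~\ref{lem:matching-2} produces an $\epsilon$-relaxed matching of $G'$ with at most $2\cdot 2^i + 9/\epsilon$ edges from each $S_i$ with $i>k$ and $0$ edges from each $S_i$ with $i\le k$ (since $d_i=0$ forces $x_e=0$ there).

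Finally, I would combine the enumerated partial matching with the rounded $\epsilon$-relaxed matching on $G'$. Because the two vertex sets are disjoint by construction, the union is again an $\epsilon$-relaxed matching: the partial part contributes only degree-$1$ vertices, and the relaxed part contributes at most $2\epsilon|G'|\le 2\epsilon m$ degree-$2$ vertices. The group-wise cardinality bound $(2+\delta)\,2^i$ holds trivially for $i\le k$ (we chose $|D_i|\le 2^i$) and holds for $i>k$ by the very choice of $k$. The total running time is $(n+1)^{18/(\delta\epsilon)}$ for the enumeration times a polynomial per iteration (LP solving plus the rounding of Lemma~\ref{lem:matching-2}), giving the claimed $O(n^{18/(\delta\epsilon)})$. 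I expect the main subtlety to be the bookkeeping that converts the combined object into a valid $\epsilon$-relaxed matching on the original graph and the careful choice of $k$ that makes the exponent exactly $18/(\delta\epsilon)$; everything else is standard given Lemma~\ref{lem:matching-2}.
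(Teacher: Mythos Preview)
Your proposal is correct and follows essentially the same approach as the paper: set a threshold $k$ so that $2^{k+1}\ge 9/(\delta\epsilon)$, brute-force enumerate partial solutions $(D_1,\dots,D_k)$ with $|D_i|\le 2^i$ (giving the $O(n^{18/(\delta\epsilon)})$ bound), and for each candidate delete the matched vertices and apply Lemma~\ref{lem:matching-2} with $\boldd=(0,\dots,0,2^{k+1},\dots,2^T)$ so that the additive $9/\epsilon$ is absorbed into $\delta\cdot 2^i$ for $i>k$. The only cosmetic difference is that the paper takes $k=\lfloor\log(9/(\delta\epsilon))\rfloor$ rather than your $\lceil\log(9/(\delta\epsilon))\rceil-1$, which coincide except when the logarithm is an integer; the enumeration count and the final bound are the same either way.
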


\begin{proof}
Let $k=\floor{\log (\frac{9}{\delta\epsilon})}$. We partially enumerate the partial solutions from $S_1$ to $S_k$ to get a set $X$ consisting of all partial solutions $(D_1,D_2,\cdots,D_k)$ such that $|D_i|\leq 2^i$ and $D_i\subseteq S_i$ for $1\leq i\leq k$. As the number of choices of $D_i$ is at most
$$\sum_{j=0}^{2^i}\binom{|S_i|}{j}\leq \sum_{j=0}^{2^i}n^j\leq (n+1)^{2^j},$$
we have
$$|X|\leq \prod_{i=1}^{k}(n+1)^{2^i}\leq (n+1)^{2^{k+1}}=O(n^{\frac{18}{\delta\epsilon}}).$$
For each $(D_1,D_2,\cdots,D_k)\in X$, we run the following algorithm. If one node corresponds to at least two edges in $D_1\cup D_2\cdots\cup D_k$, we return "No Solution". 
Then if the edges in $(D_1,D_2,\cdots,D_k)$ form a matching, we delete the edges and corresponding nodes in the graph. Then we just need to consider $S_{k+1},S_{k+2},\cdots,S_T$. 
We check whether \ref{LP-Perfect-M} has feasible fractional solutions for $\boldd=(0,0,\cdots,0,2^{k+1},2^{k+2},\cdots,2^T)$. 
If there is no solution, we return "No solution". Otherwise, by Lemma~\ref{lem:matching-2}, we can round the solution such that the number of elements chosen in $S_i$ is at most $2d_i+\frac{9}{\delta\epsilon}$ for $k+1\leq i\leq T$, where $d_i=2^i$. 
By $d_i\geq \frac{9}{\delta\epsilon}$, we have $\frac{9}{\epsilon}+2d_i\leq (2+\delta)d_i$. If the perfect matching has a 1-valid solution, the corresponding partial solution (for $1\leq i\leq k$, $D_i$ is the set of elements chosen in $S_i$ in the 1-valid solution for perfect matching) won't give return "No Solution".
Therefore, the theorem follows.
\end{proof}

Notice that for the solution we got in Theorem~\ref{thm:relaxed}, as only $2\epsilon m$ number of nodes are of degree 2, we just delete one edge from each of them. 
We can get a matching with at least $(1-\epsilon)m$ edges (for an vertex in $L$ with no edge left, it corresponds to at least one vertex in $R$ with 2 original edges, and the number of such vertices in $R$ is at most $\epsilon m$). 
By deleting edges, the number of edges in $S_i$ for each $i$ does not increase. So this gives the following theorem from Theorem~\ref{thm:relaxed}:

\begin{theorem} \label{thm:matching-i3}
If there exists a 1-valid solution for a \LBOmatch\ problem, then there exists an $O(n^{\frac{18}{\delta\epsilon}})$-time algorithm to obtain a $(2+\delta)$-valid solution for the corresponding \LBO\ $\epsilon$-nearly matching problem for any $\delta,\epsilon<1$.
\end{theorem}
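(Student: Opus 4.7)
The theorem is essentially a corollary of Theorem~\ref{thm:relaxed}, and the paragraph just before the statement already sketches the idea. My plan is to formalize that sketch cleanly.

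The first step is to invoke Theorem~\ref{thm:relaxed} to obtain, in $O(n^{18/(\delta\epsilon)})$ time, an $\epsilon$-relaxed matching $M' \subseteq E$ that is $(2+\delta)$-valid for the group constraints, i.e.\ every vertex has degree $1$ or $2$ in $M'$ and at most $2\epsilon m$ vertices have degree $2$. I would then observe that, since every vertex in $M'$ has degree at most $2$, the edge set $M'$ decomposes as a disjoint union of simple paths and simple cycles, and (since $G$ is bipartite) every cycle has even length.

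The second step is to extract an actual matching $M \subseteq M'$. On each even cycle I take every other edge, which gives a perfect matching of the cycle's vertex set. On each path of length $\ell$ I take every other edge starting from one endpoint, yielding a matching of size $\lceil \ell/2 \rceil$ that leaves $(\ell+1) - 2\lceil \ell/2\rceil \in \{0,1\}$ vertices uncovered; concretely, odd-length paths cover all their vertices while even-length paths leave exactly one endpoint uncovered. Crucially, each even-length path has length $\ell \ge 2$, so it contains at least one internal vertex, which must have degree $2$ in $M'$. Therefore the number of even-length paths is at most the number of degree-$2$ vertices, which is at most $2\epsilon m$, so the total number of vertices left uncovered by $M$ is at most $2\epsilon m$, and hence $|M| \ge (2m - 2\epsilon m)/2 = (1-\epsilon)m$. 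This establishes the $\epsilon$-nearly matching property.

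Finally, I would verify the budget side: since $M \subseteq M'$, for every group index $i$ we have $|M \cap S_i| \le |M' \cap S_i| \le (2+\delta)\cdot 2^i$, so the $(2+\delta)$-validity carried by $M'$ is inherited by $M$. The running time is dominated by the call to Theorem~\ref{thm:relaxed} plus a linear-time decomposition-and-extraction pass, giving the claimed $O(n^{18/(\delta\epsilon)})$ bound. There is no real obstacle here; the only point requiring a brief argument is the counting step that bounds even-length paths by $2\epsilon m$ via degree-$2$ internal vertices, and this is elementary once the path/cycle decomposition is in hand.
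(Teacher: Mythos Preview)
Your proposal is correct and follows the same high-level approach as the paper: invoke Theorem~\ref{thm:relaxed} to get a $(2+\delta)$-valid $\epsilon$-relaxed matching, then prune edges to obtain an $\epsilon$-nearly matching while observing that validity only improves under edge deletion. The only difference is in the pruning step: the paper simply says ``delete one edge from each degree-$2$ vertex'' and argues (using that in a bipartite graph the degree-$2$ vertices split evenly, so at most $\epsilon m$ lie in $R$) that at most $\epsilon m$ vertices in $L$ end up unmatched; you instead do an explicit path/cycle decomposition and bound uncovered vertices by counting even-length paths via their internal (degree-$2$) vertices. Your version is a bit more explicit and self-contained, but the two arguments are essentially interchangeable and of the same elementary character.
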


Now we consider the \minnorm\ version of these problems. By Theorem~\ref{thm:diffequiv}, we can get:

\begin{theorem}
\label{thm:normrelaxmatching}
Given a \minnormmatch\ problem with a monotone symmetric 
norm $f(\cdot)$ and a value vector $\boldv$, let $D^{*}$ be an optimal solution for this problem.
For any constants $\epsilon,\delta<1$, there exists a polynomial-time algorithm to obtain a $\epsilon$-relaxed matching $D$ such that $\frac{f(\boldv\assigned{D})}{f(\boldv\assigned{D^{*}})}\leq (8+\delta)$.
\end{theorem}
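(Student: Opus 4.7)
The plan is to combine Theorem~\ref{thm:matching-i3}, which gives a bicriteria LBO result for the nearly-matching feasibility, with the reduction framework already developed in the proof of Theorem~\ref{thm:equivalence}, which translates between \LBO-validity and norm approximation. The crucial observation is that the forward direction of that proof — showing how a $c$-valid solution yields a $(4c + \varepsilon)$-approximation for \minnorm\ — never uses that the returned set satisfies the original feasibility constraints $\calF$; it only uses that (i) the returned set is $c$-valid with respect to the guessed threshold partition, and (ii) the triangle inequality of the monotone symmetric norm holds. This lets us plug a relaxed-feasibility algorithm into the reduction and compare its output against the norm of the optimal perfect matching $D^\star$.

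Concretely, I would first enumerate the polynomial family of threshold vectors via Lemma~\ref{lm:polyguess}, using the entries of $\boldv$ to define $Q$, so that at least one threshold $\boldt$ is valid for $\boldo = \boldv[D^\star]^\da$. For each $\boldt$, I would construct the groups $S_1, \ldots, S_T$ exactly as in the proof of Theorem~\ref{thm:equivalence}, assigning each edge $e \in \calU$ to the group determined by how $v_e$ compares to the thresholds $\boldt_\ell$. Because $D^\star$ is a perfect matching and $\boldt$ is a valid guess, $D^\star$ is a $1$-valid solution of the induced \LBOmatch\ instance.

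Next, I would invoke Theorem~\ref{thm:matching-i3} on this instance with parameters $\epsilon$ (the given one) and a chosen $\delta' = \delta/16$. Since a $1$-valid witness exists, the algorithm returns a $(2+\delta')$-valid $\epsilon$-nearly matching $D$ in polynomial time. Now I apply the exact chain of inequalities from the proof of Theorem~\ref{thm:equivalence}: split $D$ into the low-value part $A$ (elements with $v_e < \varepsilon o_1^\da/n$) and the high-value part $B$, bound $f(\boldv[A])$ trivially by $\varepsilon f(\boldo)$, and bound $f(\boldv[B]) \le 2(2+\delta') f(g(\boldt))$ via Lemma~\ref{lem:d-3}, followed by Lemma~\ref{lem:d-2} to get $f(g(\boldt)) \le (1+\varepsilon) f(\boldo)$. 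This yields $f(\boldv[D]) \le 4(2+\delta')(1+2\varepsilon) f(\boldv[D^\star])$. Picking $\varepsilon$ small enough (e.g.\ $\varepsilon = \delta/64$) and taking the best $D$ over all threshold guesses produces an $\epsilon$-nearly matching of norm at most $(8+\delta) f(\boldv[D^\star])$.

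The main conceptual obstacle is justifying that the reduction still works when the returned object is only an $\epsilon$-nearly matching rather than a perfect matching — one must be comfortable comparing the output to the optimum of a strictly more constrained problem. This is unproblematic here because Lemma~\ref{lem:d-3} controls $f(\boldv[D])$ purely from $c$-validity, and $D^\star$ supplies the threshold that makes $f(g(\boldt))$ small. The only routine bookkeeping is absorbing the bicriteria slack $\delta'$ and the threshold slack $\varepsilon$ into the single target constant $\delta$, which is straightforward.
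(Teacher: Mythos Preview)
Your approach is essentially the paper's proof: the paper simply invokes Theorem~\ref{thm:diffequiv} (with $\calF_1$ the perfect matchings and $\calF_2$ the $\epsilon$-relaxed matchings), and the proof of Theorem~\ref{thm:diffequiv} is exactly the threshold-enumeration argument you spell out, using Lemmas~\ref{lm:polyguess}, \ref{lem:d-2}, and \ref{lem:d-3}. Your observation that the forward direction of the reduction never uses membership in the original feasible family is precisely what Theorem~\ref{thm:diffequiv} packages.

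There is one genuine slip, however. You invoke Theorem~\ref{thm:matching-i3}, which produces an $\epsilon$-\emph{nearly} matching (a matching with at least $(1-\epsilon)m$ edges, so some vertices have degree~$0$). The theorem you are proving asks for an $\epsilon$-\emph{relaxed} matching (every vertex has degree $1$ or $2$, with at most $2\epsilon m$ vertices of degree~$2$). These are different objects, and a nearly matching is not a relaxed matching. The correct intermediate result to cite is Theorem~\ref{thm:relaxed}, which gives a $(2+\delta')$-valid $\epsilon$-relaxed matching under the same hypothesis. With that substitution your argument goes through verbatim.
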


\begin{theorem}
\label{matching-i5}
Given a \minnormmatch\ problem with a monotone symmetric 
norm $f(\cdot)$ and a value vector $\boldv$, let $D^{*}$ be an optimal solution for this problem.
For any constants $\epsilon,\delta<1$, there exists a polynomial-time algorithm to obtain a $\epsilon$-nearly matching $D$ such that $\frac{f(\boldv\assigned{D})}{f(\boldv\assigned{D^{*}})}\leq (8+\delta)$.
\end{theorem}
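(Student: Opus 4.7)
The plan is to mirror the argument for the relaxed matching case (Theorem~\ref{thm:normrelaxmatching}) and combine the \LBO-to-\minnorm\ reduction of Theorem~\ref{thm:equivalence} (in the form of Theorem~\ref{thm:diffequiv} referenced earlier) with the bi-criteria \LBO\ algorithm for nearly matching (Theorem~\ref{thm:matching-i3}). Concretely, given a \minnormmatch\ instance with monotone symmetric norm $f$ and value vector $\boldv$, I first apply the reduction that converts norm approximation into the multi-budgeted \LBO\ problem: guess the appropriate threshold vector $\boldt$ (from the polynomial-sized candidate set produced via Lemma~\ref{lm:polyguess}) and partition the edges into groups $S_1,\dots,S_T$ according to how their values $\boldv_e$ compare with $\boldt$. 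For the correct guess, the optimal perfect matching $D^*$ induces a $1$-valid solution for the corresponding \LBOmatch\ instance.

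Next, I would invoke Theorem~\ref{thm:matching-i3} with parameters $\epsilon$ and $\delta':=\delta/4$ to obtain, in polynomial time, an $\epsilon$-nearly matching $D$ that is $(2+\delta/4)$-valid, meaning $|D\cap S_i|\le (2+\delta/4)\cdot 2^i$ for every group. Then I plug this into the backward direction of the \LBO-to-\minnorm\ reduction: following the argument in the proof of Theorem~\ref{thm:equivalence}, a $c$-valid \LBO\ solution corresponds via Lemma~\ref{lem:d-3} and Lemma~\ref{lem:d-2} to a feasible subset whose norm is at most $4c(1+2\varepsilon)$ times $f(\boldv[D^*])$. Taking $c=2+\delta/4$ and absorbing the lower-order $\varepsilon$ slack by letting $\varepsilon$ be tiny yields $f(\boldv[D])\le (8+\delta)\cdot f(\boldv[D^*])$, as desired.

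The main subtlety, which I expect to be the principal technical point rather than a serious obstacle, is that the reduction of Theorem~\ref{thm:equivalence} is stated with respect to \emph{feasible} sets in $\calF$ (i.e.\ perfect matchings), whereas here the output $D$ is only a nearly matching. One must verify that the final step of that proof, which uses only the triangle inequality of the norm together with the per-group cardinality bounds and the threshold vector $\boldt$, does not in fact use membership in $\calF$, so the same norm-comparison argument applies verbatim to any $c$-valid subset produced by the rounding, nearly matching or not. The benchmark $f(\boldv[D^*])$ on the right-hand side is the norm of the optimal \emph{perfect} matching, which is exactly what Theorem~\ref{matching-i5} claims. Finally, I would fold in the standard polynomial-time enumeration of $\boldt$ and the polynomial running time of Theorem~\ref{thm:matching-i3} (which is $O(n^{18/(\delta\epsilon)})$, polynomial for constants $\epsilon,\delta$) to conclude the overall runtime is polynomial, completing the proof.
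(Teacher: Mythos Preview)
Your proposal is correct and follows essentially the same approach as the paper: the paper's proof is a one-line invocation of Theorem~\ref{thm:diffequiv} (with $\calF_1$ the perfect matchings and $\calF_2$ the $\epsilon$-nearly matchings) combined with Theorem~\ref{thm:matching-i3}, and you have simply unpacked that argument explicitly. Your observation about the ``subtlety''---that the norm-comparison step uses only the per-group cardinality bounds and not membership in the original feasible family---is exactly the content of Theorem~\ref{thm:diffequiv}.
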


\section{Concluding Remarks}

In this paper, we propose a general formulation for general norm minimization in combinatorial optimization. Our formulation captures a broad class of combinatorial structures, encompassing various fundamental problems in discrete optimization. Via a reduction of the norm minimization problem to a multi-criteria optimization problem with logarithmic budget constraints, we develop constant-factor approximation algorithms for multiple important covering problems, such as interval cover, multi-dimensional knapsack cover, and set cover (with logarithmic approximation factors). We also provide a bi-criteria approximation algorithm for min-norm perfect matching, and an $O(\log\log n)$-approximation algorithm
for the min-norm $s$-$t$ path problem, via a nontrivial approximate dynamic programming approach.

Our results open several intriguing directions for future research. First, one can explore other combinatorial optimization problems, such as Steiner trees and other network design problems, within our general framework. Additionally, our formulation could be extended to encompass the min-norm load balancing problem studied in \cite{chakrabarty2019approximation} (where job processing times are first summed into machine loads before applying a norm), and even the generalized load balancing \cite{deng2022generalized} and cascaded norm clustering problems \cite{chlamtavc2022approximating,abbasi2023parameterized} (which allow for two levels of cost aggregation via norms).
Second, obtaining a nontrivial true approximation algorithm for perfect matching -- rather than a bi-criterion approximation -- remains an important open problem. Third, it is an important open problem whether a polynomial-time constant-factor approximation exists for the min-norm $s$-$t$ path problem. Lastly, it would be interesting to study other general objective functions beyond symmetric monotone norms and submodular functions (such as general subadditive functions \cite{GNS17} and those studied in \cite{li2011generalized}).
\bibliography{reference}

\begin{thebibliography}{41}
\providecommand{\natexlab}[1]{#1}
\providecommand{\url}[1]{\texttt{#1}}
\expandafter\ifx\csname urlstyle\endcsname\relax
  \providecommand{\doi}[1]{doi: #1}\else
  \providecommand{\doi}{doi: \begingroup \urlstyle{rm}\Url}\fi

\bibitem[Abbasi et~al.(2023)Abbasi, Banerjee, Byrka, Chalermsook, Gadekar, Khodamoradi, Marx, Sharma, and Spoerhase]{abbasi2023parameterized}
Fateme Abbasi, Sandip Banerjee, Jaros{\l}aw Byrka, Parinya Chalermsook, Ameet Gadekar, Kamyar Khodamoradi, D{\'a}niel Marx, Roohani Sharma, and Joachim Spoerhase.
\newblock Parameterized approximation schemes for clustering with general norm objectives.
\newblock In \emph{2023 IEEE 64th Annual Symposium on Foundations of Computer Science (FOCS)}, pages 1377--1399. IEEE, 2023.

\bibitem[Aouad and Segev(2019)]{aouad2019ordered}
Ali Aouad and Danny Segev.
\newblock The ordered $k$-median problem: surrogate models and approximation algorithms.
\newblock \emph{Math. Program.}, 177\penalty0 (1-2):\penalty0 55--83, 2019.

\bibitem[Ayyadevara et~al.(2023)Ayyadevara, Bansal, and Prabhu]{ayyadevara2023minimizing}
Nikhil Ayyadevara, Nikhil Bansal, and Milind Prabhu.
\newblock On minimizing generalized makespan on unrelated machines.
\newblock \emph{Approximation, Randomization, and Combinatorial Optimization. Algorithms and Techniques}, page~21, 2023.

\bibitem[Azar and Epstein(2005)]{azar2005convex}
Yossi Azar and Amir Epstein.
\newblock Convex programming for scheduling unrelated parallel machines.
\newblock In \emph{Proceedings of the 37th Annual {ACM} Symposium on Theory of Computing}, pages 331--337, 2005.

\bibitem[Batra et~al.(2023)Batra, Das, and Jha]{batra2023tight}
Jatin Batra, Syamantak Das, and Agastya~Vibhuti Jha.
\newblock Tight approximation algorithms for ordered covering.
\newblock In \emph{Algorithms and Data Structures Symposium}, pages 120--135. Springer, 2023.

\bibitem[Berger et~al.(2011)Berger, Bonifaci, Grandoni, and Sch{\"a}fer]{berger2011budgeted}
Andr{\'e} Berger, Vincenzo Bonifaci, Fabrizio Grandoni, and Guido Sch{\"a}fer.
\newblock Budgeted matching and budgeted matroid intersection via the gasoline puzzle.
\newblock \emph{Mathematical Programming}, 128:\penalty0 355--372, 2011.

\bibitem[Bertsimas and Tsitsiklis(1997)]{10.5555/548834}
Dimitris Bertsimas and John Tsitsiklis.
\newblock \emph{Introduction to Linear Optimization}.
\newblock Athena Scientific, 1st edition, 1997.
\newblock ISBN 1886529191.

\bibitem[Braverman et~al.(2019)Braverman, Jiang, Krauthgamer, and Wu]{braverman2019coresets}
Vladimir Braverman, Shaofeng H-C Jiang, Robert Krauthgamer, and Xuan Wu.
\newblock Coresets for ordered weighted clustering.
\newblock In \emph{International Conference on Machine Learning}, pages 744--753. PMLR, 2019.

\bibitem[Buchbinder et~al.(2014)Buchbinder, Feldman, Naor, and Schwartz]{buchbinder2014submodular}
Niv Buchbinder, Moran Feldman, Joseph Naor, and Roy Schwartz.
\newblock Submodular maximization with cardinality constraints.
\newblock In \emph{Proceedings of the twenty-fifth annual ACM-SIAM symposium on Discrete algorithms}, pages 1433--1452. SIAM, 2014.

\bibitem[Byrka et~al.(2018)Byrka, Sornat, and Spoerhase]{byrka2018constant}
Jaroslaw Byrka, Krzysztof Sornat, and Joachim Spoerhase.
\newblock Constant-factor approximation for ordered $k$-median.
\newblock In \emph{Proceedings of the 50th Annual {ACM} {SIGACT} Symposium on Theory of Computing}, pages 620--631, 2018.

\bibitem[Calinescu et~al.(2007)Calinescu, Chekuri, P{\'a}l, and Vondr{\'a}k]{calinescu2007maximizing}
Gruia Calinescu, Chandra Chekuri, Martin P{\'a}l, and Jan Vondr{\'a}k.
\newblock Maximizing a submodular set function subject to a matroid constraint.
\newblock In \emph{International Conference on Integer Programming and Combinatorial Optimization}, pages 182--196. Springer, 2007.

\bibitem[Camerini et~al.(1992)Camerini, Galbiati, and Maffioli]{camerini1992random}
Paolo~M. Camerini, Giulia Galbiati, and Francesco Maffioli.
\newblock Random pseudo-polynomial algorithms for exact matroid problems.
\newblock \emph{Journal of Algorithms}, 13\penalty0 (2):\penalty0 258--273, 1992.

\bibitem[Chakrabarty and Swamy(2018)]{chakrabarty2018interpolating}
Deeparnab Chakrabarty and Chaitanya Swamy.
\newblock Interpolating between $k$-median and $k$-center: Approximation algorithms for ordered $k$-median.
\newblock In \emph{45th International Colloquium on Automata, Languages, and Programming}, volume 107 of \emph{LIPIcs}, pages 29:1--29:14, 2018.

\bibitem[Chakrabarty and Swamy(2019{\natexlab{a}})]{chakrabarty2019approximation}
Deeparnab Chakrabarty and Chaitanya Swamy.
\newblock Approximation algorithms for minimum norm and ordered optimization problems.
\newblock In \emph{Proceedings of the 51st Annual {ACM} {SIGACT} Symposium on Theory of Computing}, pages 126--137, 2019{\natexlab{a}}.

\bibitem[Chakrabarty and Swamy(2019{\natexlab{b}})]{chakrabarty2019simpler}
Deeparnab Chakrabarty and Chaitanya Swamy.
\newblock Simpler and better algorithms for minimum-norm load balancing.
\newblock In \emph{27th Annual European Symposium on Algorithms}, volume 144 of \emph{LIPIcs}, pages 27:1--27:12, 2019{\natexlab{b}}.

\bibitem[Chekuri and Ene(2011)]{chekuri2011submodular}
Chandra Chekuri and Alina Ene.
\newblock Submodular cost allocation problem and applications.
\newblock In \emph{Automata, Languages and Programming - 38th International Colloquium}, volume 6755 of \emph{Lecture Notes in Computer Science}, pages 354--366, 2011.

\bibitem[Chekuri et~al.(2011)Chekuri, Vondr{\'a}k, and Zenklusen]{chekuri2011multi}
Chandra Chekuri, Jan Vondr{\'a}k, and Rico Zenklusen.
\newblock Multi-budgeted matchings and matroid intersection via dependent rounding.
\newblock In \emph{Proceedings of the twenty-second annual ACM-SIAM symposium on Discrete Algorithms}, pages 1080--1097. SIAM, 2011.

\bibitem[Chlamt{\'a}{\v{c}} et~al.(2022)Chlamt{\'a}{\v{c}}, Makarychev, and Vakilian]{chlamtavc2022approximating}
Eden Chlamt{\'a}{\v{c}}, Yury Makarychev, and Ali Vakilian.
\newblock Approximating fair clustering with cascaded norm objectives.
\newblock In \emph{Proceedings of the 2022 annual ACM-SIAM symposium on discrete algorithms (SODA)}, pages 2664--2683. SIAM, 2022.

\bibitem[Deng and Zhang(2020)]{deng2020ordered}
Shichuan Deng and Qianfan Zhang.
\newblock Ordered $k$-median with outliers and fault-tolerance.
\newblock \emph{CoRR}, abs/2011.04289, 2020.
\newblock URL \url{https://arxiv.org/abs/2011.04289}.

\bibitem[Deng et~al.(2022)Deng, Li, and Rabani]{deng2022generalized}
Shichuan Deng, Jian Li, and Yuval Rabani.
\newblock Generalized unrelated machine scheduling problem.
\newblock \emph{CoRR}, abs/2202.06292, 2022.
\newblock URL \url{https://arxiv.org/abs/2202.06292}.

\bibitem[Dinur and Steurer(2014)]{dinur2014analytical}
Irit Dinur and David Steurer.
\newblock Analytical approach to parallel repetition.
\newblock In \emph{Symposium on Theory of Computing}, pages 624--633, 2014.

\bibitem[El~Maalouly(2023)]{maalouly2022exact}
Nicolas El~Maalouly.
\newblock Exact matching: Algorithms and related problems.
\newblock In \emph{40th International Symposium on Theoretical Aspects of Computer Science}, 2023.

\bibitem[Feige(1998)]{feige1998threshold}
Uriel Feige.
\newblock A threshold of ln n for approximating set cover.
\newblock \emph{Journal of the ACM (JACM)}, 45\penalty0 (4):\penalty0 634--652, 1998.

\bibitem[Golovin et~al.(2008)Golovin, Gupta, Kumar, and Tangwongsan]{golovin2008all}
Daniel Golovin, Anupam Gupta, Amit Kumar, and Kanat Tangwongsan.
\newblock All-norms and all-l\_p-norms approximation algorithms.
\newblock In \emph{IARCS Annual Conference on Foundations of Software Technology and Theoretical Computer Science (2008)}. Schloss-Dagstuhl-Leibniz Zentrum f{\"u}r Informatik, 2008.

\bibitem[Grandoni and Zenklusen(2010)]{grandoni2010approximation}
Fabrizio Grandoni and Rico Zenklusen.
\newblock Approximation schemes for multi-budgeted independence systems.
\newblock In \emph{European Symposium on Algorithms}, pages 536--548. Springer, 2010.

\bibitem[Grandoni et~al.(2009)Grandoni, Ravi, and Singh]{grandoni2009iterative}
Fabrizio Grandoni, Ramamoorthi Ravi, and Mohit Singh.
\newblock Iterative rounding for multi-objective optimization problems.
\newblock In \emph{European Symposium on Algorithms}, pages 95--106. Springer, 2009.

\bibitem[Grandoni et~al.(2014)Grandoni, Ravi, Singh, and Zenklusen]{grandoni2014new}
Fabrizio Grandoni, R.~Ravi, Mohit Singh, and Rico Zenklusen.
\newblock New approaches to multi-objective optimization.
\newblock \emph{Math. Program.}, 146\penalty0 (1-2):\penalty0 525--554, 2014.

\bibitem[Gupta et~al.(2017)Gupta, Nagarajan, and Singla]{GNS17}
Anupam Gupta, Viswanath Nagarajan, and Sahil Singla.
\newblock Adaptivity gaps for stochastic probing: submodular and xos functions.
\newblock In \emph{Proceedings of the Twenty-Eighth Annual ACM-SIAM Symposium on Discrete Algorithms}, SODA '17, page 1688–1702, USA, 2017. Society for Industrial and Applied Mathematics.

\bibitem[Hardy et~al.(1934)Hardy, Littlewood, and P\'{o}lya]{hardy1934inequalities}
G.~H. Hardy, J.~E. Littlewood, and G.~P\'{o}lya.
\newblock \emph{Inequalities}.
\newblock Cambridge University Press, 1934.
\newblock ISBN 978-0-521-35880-4.

\bibitem[Ibarra and Kim(1975)]{10.1145/321906.321909}
Oscar~H. Ibarra and Chul~E. Kim.
\newblock Fast approximation algorithms for the knapsack and sum of subset problems.
\newblock \emph{J. ACM}, 22\penalty0 (4):\penalty0 463–468, October 1975.
\newblock ISSN 0004-5411.
\newblock \doi{10.1145/321906.321909}.
\newblock URL \url{https://doi.org/10.1145/321906.321909}.

\bibitem[Ibrahimpur and Swamy(2020)]{ibrahimpur2020approximation}
Sharat Ibrahimpur and Chaitanya Swamy.
\newblock Approximation algorithms for stochastic minimum-norm combinatorial optimization.
\newblock In \emph{61st {IEEE} Annual Symposium on Foundations of Computer Science}, pages 966--977, 2020.

\bibitem[Ibrahimpur and Swamy(2021)]{ibrahimpur2021minimum}
Sharat Ibrahimpur and Chaitanya Swamy.
\newblock Minimum-norm load balancing is (almost) as easy as minimizing makespan.
\newblock In \emph{48th International Colloquium on Automata, Languages, and Programming}, volume 198 of \emph{LIPIcs}, pages 81:1--81:20, 2021.

\bibitem[Kesselheim et~al.(2024)Kesselheim, Molinaro, and Singla]{kesselheim2024supermodular}
Thomas Kesselheim, Marco Molinaro, and Sahil Singla.
\newblock Supermodular approximation of norms and applications.
\newblock In \emph{Proceedings of the 56th Annual ACM Symposium on Theory of Computing}, pages 1841--1852, 2024.

\bibitem[Krause and Golovin(2014)]{krause2014submodular}
Andreas Krause and Daniel Golovin.
\newblock Submodular function maximization.
\newblock \emph{Tractability}, 3\penalty0 (71-104):\penalty0 3, 2014.

\bibitem[Lee et~al.(2010)Lee, Sviridenko, and Vondr{\'a}k]{lee2010submodular}
Jon Lee, Maxim Sviridenko, and Jan Vondr{\'a}k.
\newblock Submodular maximization over multiple matroids via generalized exchange properties.
\newblock \emph{Mathematics of Operations Research}, 35\penalty0 (4):\penalty0 795--806, 2010.

\bibitem[Li and Khuller(2011)]{li2011generalized}
Jian Li and Samir Khuller.
\newblock Generalized machine activation problems.
\newblock In \emph{Proceedings of the Twenty-Second Annual {ACM-SIAM} Symposium on Discrete Algorithms}, pages 80--94, 2011.

\bibitem[Linhares et~al.(2020)Linhares, Olver, Swamy, and Zenklusen]{linhares2020approximate}
Andr{\'e} Linhares, Neil Olver, Chaitanya Swamy, and Rico Zenklusen.
\newblock Approximate multi-matroid intersection via iterative refinement.
\newblock \emph{Mathematical Programming}, 183:\penalty0 397--418, 2020.

\bibitem[Mitzenmacher and Upfal(2005)]{mitzenmacher2005probability}
Michael Mitzenmacher and Eli Upfal.
\newblock \emph{Probability and Computing: Randomized Algorithms and Probabilistic Analysis}.
\newblock Cambridge University Press, 2005.

\bibitem[Patton et~al.(2023)Patton, Russo, and Singla]{patton2023submodular}
Kalen Patton, Matteo Russo, and Sahil Singla.
\newblock Submodular norms with applications to online facility location and stochastic probing.
\newblock In \emph{Approximation, Randomization, and Combinatorial Optimization. Algorithms and Techniques (APPROX/RANDOM 2023)}. Schloss-Dagstuhl-Leibniz Zentrum f{\"u}r Informatik, 2023.

\bibitem[Ravi and Goemans(1996)]{ravi1996constrained}
Ram Ravi and Michel~X Goemans.
\newblock The constrained minimum spanning tree problem.
\newblock In \emph{Algorithm Theory—SWAT'96: 5th Scandinavian Workshop on Algorithm Theory Reykjav{\'\i}k, Iceland, July 3--5, 1996 Proceedings 5}, pages 66--75. Springer, 1996.

\bibitem[Schrijver(2003)]{schrijver2003combinatorial}
Alexander Schrijver.
\newblock \emph{Combinatorial optimization: polyhedra and efficiency}, volume~24.
\newblock Springer, 2003.

\end{thebibliography}

\newpage
\appendix
\section{Additional Preliminaries}
We need the following version of Chernoff bounds.

\begin{lemma}\label{lemma:chernoff}
\emph{(Chernoff bounds (see, e.g., \cite{mitzenmacher2005probability})).}
Let $X_1,\dots,X_n$ be independent Bernoulli variables with $\E[X_i]=p_i$. 
Let $X=\sum_{i=1}^nX_i$ and $\mu=\E[X]=\sum_{i=1}^np_i$. 
For $\nu\geq6\mu$, one has
\(
\Pr[X\geq\nu]\leq 2^{-\nu}
\).
\end{lemma}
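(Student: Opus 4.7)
The plan is to apply the standard exponential moment method (the Chernoff--Hoeffding template) and then choose the parameter so that the constant 6 produces exactly the $2^{-\nu}$ tail. For any $t>0$, Markov's inequality applied to $e^{tX}$ gives
\[
\Pr[X\ge\nu] \;=\; \Pr[e^{tX}\ge e^{t\nu}] \;\le\; e^{-t\nu}\,\E[e^{tX}].
\]
By independence of the $X_i$'s and the elementary inequality $1+x\le e^x$,
\[
\E[e^{tX}] \;=\; \prod_{i=1}^n \bigl(1+p_i(e^t-1)\bigr) \;\le\; \prod_{i=1}^n e^{p_i(e^t-1)} \;=\; e^{\mu(e^t-1)}.
\]
So the first step is just to record the uniform bound $\Pr[X\ge\nu]\le \exp\bigl(-t\nu+\mu(e^t-1)\bigr)$, valid for every $t>0$.

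Next I would pick $t=\ln(\nu/\mu)$. Since $\nu\ge 6\mu$, we have $t\ge \ln 6>0$ and $e^t=\nu/\mu$, so the exponent becomes
\[
-t\nu+\mu(e^t-1) \;=\; -\nu\ln\!\tfrac{\nu}{\mu} + \nu - \mu \;\le\; -\nu\bigl(\ln\!\tfrac{\nu}{\mu}-1\bigr),
\]
where the last step drops the nonpositive term $-\mu$. Writing $k=\nu/\mu\ge 6$, the map $k\mapsto \ln k-1$ is increasing, so $\ln(\nu/\mu)-1\ge \ln 6-1$. A direct numerical check gives $\ln 6-1\approx 0.7917>0.6931\approx\ln 2$, hence $\ln(\nu/\mu)-1\ge \ln 2$, which yields
\[
\Pr[X\ge\nu] \;\le\; \exp\bigl(-\nu\ln 2\bigr) \;=\; 2^{-\nu},
\]
as required.

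There is no real obstacle beyond choosing the right $t$ and checking the constant. The only subtlety worth noting is that one cannot simply set $t=\ln 2$ (which would only yield $(e/2)^\mu$-type bounds and require $\nu\ge c\mu$ for a different $c$); the choice $t=\ln(\nu/\mu)$ is precisely what couples the tail at level $\nu$ to the factor $2^{-\nu}$ on the right-hand side. The hypothesis $\nu\ge 6\mu$ is used exactly once, in the final inequality $\ln(\nu/\mu)-1\ge \ln 2$, and in fact any constant $c\ge e^{1+\ln 2}=2e\approx 5.44$ in place of $6$ would suffice, so the choice of $6$ is a convenient round integer.
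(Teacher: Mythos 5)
Your proof is correct: the exponential moment bound, the choice $t=\ln(\nu/\mu)$, and the numerical check $\ln 6-1>\ln 2$ all go through, and this is exactly the standard argument behind the textbook result the paper cites without proof. (The only degenerate case, $\mu=0$, is trivial since then $X=0$ almost surely.)
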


\section{\topdash{\ell} Norm Optimization} \label{sec:topl-norm-opt}

\paragraph{Minimization Problems.}
We first consider combinatorial optimization problems for which 
minimizing a linear objective is poly-time solvable.
We use $\myproblem$ to denote the ordinary combinatorial optimization problem
under consideration.
In particular, we assume that there is a poly-time algorithm $\Algo$ 
that solves the original min-sum optimization problem $\myproblem$:
$\min_{S\in \calF} \boldv(S)$, 
where we write $\boldv(S)=\sum_{e\in S}v_e$.
The idea of the following theorem is not new and has been used in previous papers
for specific combinatorial problems 
(e.g., \cite{byrka2018constant,maalouly2022exact}).
However, to the best of our knowledge, the exact formulation of the following theorem 
in this general form
has not appeared before and we think it is worth recording.

\begin{theorem} \label{thm:topl-1}
If the minimization problem $\myproblem$ can be solved in poly-time, 
there is a poly-time algorithm for solving the top-$\ell$ {\em minimization} problem optimally.
\end{theorem}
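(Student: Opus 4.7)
The plan is to use the standard ``guess the threshold'' reduction. For any vector $\boldu \in \Rpos^n$, one has the identity
\[
\topp{\ell}(\boldu) \;=\; \min_{\tau \geq 0}\Bigl[\,\ell\tau \,+\, \sum_{i=1}^n (u_i - \tau)^+\,\Bigr],
\]
where $(x)^+ = \max\{x,0\}$, and the minimum is attained when $\tau^*$ equals the $\ell$-th largest entry of $\boldu$ (or $\tau^*=0$ if fewer than $\ell$ entries are positive). Applied to $\boldu = \boldv[S]$, this reduces the top-$\ell$ minimization over $S \in \calF$ to a polynomial family of min-sum problems with modified nonnegative weights $v^\tau_e := (v_e - \tau)^+$, which can be handled by the assumed exact algorithm $\Algo$. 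Note that since $\tau \geq 0$, entries outside $S$ contribute zero to both sides, so the identity with $\boldu = \boldv[S]$ reads $\topp{\ell}(\boldv[S]) = \min_{\tau \geq 0}[\ell \tau + \sum_{e \in S} v^\tau_e]$.

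The algorithm is the following. Enumerate $\tau$ over the finite set $\Theta = \{0\} \cup \{v_e : e \in \calU\}$, which has size at most $n+1$. For each $\tau \in \Theta$, form a min-sum instance of $\myproblem$ with weights $\{v^\tau_e\}_{e \in \calU}$, invoke $\Algo$ to obtain $S_\tau \in \calF$ minimizing $\sum_{e \in S} v^\tau_e$, and then compute $\topp{\ell}(\boldv[S_\tau])$ explicitly in $O(n \log n)$ time. Output the $S_\tau$ of smallest $\topp{\ell}$ value. This clearly runs in polynomial time, using $O(n)$ oracle calls to $\Algo$ plus linear-time pre/post-processing.

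For correctness, fix any optimal top-$\ell$ solution $S^*$, and let $\tau^*$ be the $\ell$-th largest entry of $\boldv[S^*]$ (or $0$ if $|S^*| < \ell$). Then $\tau^* \in \Theta$, and optimality of $S_{\tau^*}$ under the weights $v^{\tau^*}_e$ gives
\[
\topp{\ell}(\boldv[S_{\tau^*}]) \;\leq\; \ell\tau^* + \sum_{e \in S_{\tau^*}} v^{\tau^*}_e \;\leq\; \ell\tau^* + \sum_{e \in S^*} v^{\tau^*}_e \;=\; \topp{\ell}(\boldv[S^*]),
\]
where the first inequality is the identity evaluated at the specific value $\tau^*$ (an upper bound, not the minimum), the second uses optimality of $S_{\tau^*}$ for the modified min-sum instance, and the equality uses that $\tau^*$ is the threshold attaining the minimum for $\boldv[S^*]$. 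Hence the algorithm returns an optimal top-$\ell$ solution. There is no real obstacle in the proof: the only conceptual point is that the optimal threshold for any feasible $S$ lies in the polynomial-size candidate set $\Theta$, which makes exhaustive enumeration tractable.
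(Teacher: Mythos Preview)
Your proposal is correct and takes essentially the same approach as the paper: guess the $\ell$-th largest value $t$ in the optimal solution, replace each weight $v_e$ by the truncated weight $(v_e-t)^+$, and solve the resulting min-sum instance exactly. Your presentation is slightly more explicit in invoking the variational identity $\topp{\ell}(\boldu)=\min_{\tau\ge 0}\bigl[\ell\tau+\sum_i(u_i-\tau)^+\bigr]$ and in including $\tau=0$ in the candidate set, but the argument and the chain of inequalities are the same as the paper's.
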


\begin{proof}
The idea has already appeared implicitly or explicitly in several prior work \cite{byrka2018constant,maalouly2022exact}. 
We first guess the $\ell$-th largest weight $t$.
Define the truncated value vector $\boldv^t$ as follows:
$v^t_e = \max \{v_e-t, 0\}$.
We apply algorithm $\Algo$ to the same instance except that we use the truncated weight $\boldv^t$.
Now we show that if our guess is correct, the solution $S$ found by $\Algo$
is the optimal solution.
We denote the optimal solution by $S^\star$.
The optimality of $S$ can be seen from the following chain of inequalities:
\begin{align*}
\topp{\ell}(\boldv[S]) & \leq \topp{\ell}(\boldv^t[S]) +\ell \cdot t
\leq \boldv^t(S) +\ell \cdot t
\leq \boldv^t(S^\star) +\ell \cdot t \\
& =\topp{\ell}(\boldv^t[S^\star])+\ell \cdot t
  =\topp{\ell}(\boldv[S^\star]).
\end{align*}
This completes the proof.
\end{proof}

We can generalize the above theorem to approximation algorithms.

\begin{theorem}
If there is a poly-time approximation algorithm for the minimization problem $\myproblem$ (with approximation factor $\alpha\geq 1$), 
there is a poly-time factor $\alpha$ approximation algorithm for the top-$\ell$ {\em minimization} problem optimally.
\end{theorem}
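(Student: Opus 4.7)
The plan is to directly adapt the proof of Theorem~\ref{thm:topl-1} (the exact case) and observe that the slack introduced by replacing the exact algorithm with an $\alpha$-approximation is absorbed cleanly because $\alpha \geq 1$. Concretely, I would guess the value $t$ of the $\ell$-th largest entry in the optimal solution $S^\star$ (trying all values in $\{v_e : e \in \calU\} \cup \{0\}$ gives polynomially many guesses, one of which is correct), form the truncated value vector $\boldv^t$ with $v^t_e = \max\{v_e - t, 0\}$, run the given $\alpha$-approximation algorithm for the min-sum problem $\myproblem$ on the instance with value vector $\boldv^t$, and return the best solution found over all guesses.

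For the analysis, fix the correct guess $t$ and let $S$ be the feasible set produced by the approximation algorithm on input $\boldv^t$, so $\boldv^t(S) \leq \alpha \cdot \boldv^t(S^\star)$. The key observation, which was already implicit in Theorem~\ref{thm:topl-1}, is that when $t$ is the $\ell$-th largest value in $S^\star$, every element of $S^\star$ outside the top $\ell$ has $v^t_e = 0$, hence
\[
\boldv^t(S^\star) = \topp{\ell}(\boldv^t[S^\star]) \quad\text{and}\quad \topp{\ell}(\boldv[S^\star]) = \topp{\ell}(\boldv^t[S^\star]) + \ell t.
\]
On the other hand, for any feasible $S$ we have the universal inequality
\[
\topp{\ell}(\boldv[S]) \leq \topp{\ell}(\boldv^t[S]) + \ell t \leq \boldv^t(S) + \ell t,
\]
using that each of the top $\ell$ coordinates of $\boldv[S]$ exceeds the corresponding coordinate of $\boldv^t[S]$ by at most $t$, and that $\boldv^t$ is non-negative.

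Chaining these together yields
\[
\topp{\ell}(\boldv[S]) \leq \boldv^t(S) + \ell t \leq \alpha \cdot \boldv^t(S^\star) + \ell t = \alpha \cdot \topp{\ell}(\boldv^t[S^\star]) + \ell t \leq \alpha \bigl(\topp{\ell}(\boldv^t[S^\star]) + \ell t\bigr) = \alpha \cdot \topp{\ell}(\boldv[S^\star]),
\]
where the penultimate inequality uses $\alpha \geq 1$. This is precisely the $\alpha$-approximation guarantee for the top-$\ell$ objective. There is no genuine obstacle here: the only thing to watch is making sure the additive $\ell t$ term is handled correctly, and the inequality $\alpha \cdot A + B \leq \alpha(A+B)$ for $A,B \geq 0$ and $\alpha \geq 1$ takes care of it. The running time is polynomial since there are at most $n+1$ candidates for $t$ and each call to the approximation algorithm is polynomial.
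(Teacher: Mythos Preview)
Your proposal is correct and follows essentially the same approach as the paper: guess the threshold $t$, truncate to $\boldv^t$, run the $\alpha$-approximation on the truncated instance, and chain the same sequence of inequalities. You are in fact slightly more explicit than the paper in justifying the final step $\alpha\cdot\topp{\ell}(\boldv^t[S^\star]) + \ell t \leq \alpha\cdot\topp{\ell}(\boldv[S^\star])$ via $\alpha \geq 1$.
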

\begin{proof}
We apply the approximation algorithm $\Algo$ to the instance with 
truncated weight $\boldv^t$.
Suppose our guess is correct.
We denote the optimal solution by $S^\star$.
The performance guarantee of $S$ can be seen from the following chain of inequalities:
\begin{align*}
\topp{\ell}(\boldv[S]) & \leq \topp{\ell}(\boldv^t[S]) +\ell \cdot t
\leq \boldv^t(S) +\ell \cdot t
\leq \alpha\boldv^t(S^\star) +\ell \cdot t \\
& =\alpha\cdot \topp{\ell}(\boldv^t[S^\star])+\ell \cdot t.
  \leq \alpha\cdot\topp{\ell}(\boldv[S^\star]).
\end{align*}
This completes the proof.
\end{proof}

\noindent
{\bf Remark:}
The above theorem does not imply a constant approximation for the top-$\ell$ minimization of $k$-median. 
The reason is that the modified edge weights do not satisfy triangle inequality
any more, which is required for known constant approximation algorithms of $k$-median.
In fact, achieving a constant approximation factor for the top-$\ell$ minimization of $k$-median requires significant effort, as done in \cite{chakrabarty2018interpolating,byrka2018constant,chakrabarty2019approximation}.

\paragraph{Maximization Problems.}

The above approach does not directly work for the maximization problem.
We adopt the same idea in \cite{maalouly2022exact}.
The exact version of problem $\myproblem$ asks the question whether there is a feasible solution of with weight exactly equal to a given integer $K$. We say an algorithm runs in pseudopolynomial time for
the exact version of $\myproblem$ if the running time is 
polynomial in $n$ and $K$ (the algorithm can detect if there is no feasible solution with weight exactly $K$ and return ``no solution"). For many combinatorial problems, 
a pseudopolynomial algorithm for the exact version is known. Examples include shortest path, spanning tree, matching and knapsack.

\begin{theorem}
If the exact version of problem $\myproblem$ admits a pseudopolynomial time
algorithm, we can solve the top-$\ell$ {\em maximization} problem for 
$\myproblem$ if all weights are polynomially bounded nonnegative integers.
\end{theorem}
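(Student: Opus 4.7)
The plan is to adapt the truncation trick from the minimization case. Recall that $\topp{\ell}(\boldv[S]) = \min_{t \geq 0} \bigl\{\ell t + \sum_{e \in S}(v_e - t)^+\bigr\}$, with the minimum attained exactly when $t$ equals the $\ell$-th largest entry of $\boldv[S]$. Unlike the min-sum case, the outer $\max_{S \in \calF}$ does not exchange with the inner $\min_t$, so guessing $t$ alone is insufficient; I would additionally enforce that the guessed $t$ is indeed the $\ell$-th largest value in $S$, using the pseudopolynomial exact-version oracle to count precisely.

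First I would enumerate a guess $t$ of the $\ell$-th largest value in an optimal $S^* \in \calF$; since all $v_e$ are polynomially bounded nonnegative integers there are at most $n+1$ distinct values to try. For each $t$ I would also enumerate $k \in \{0, 1, \ldots, \ell-1\}$, intended as a guess of $|S^* \cap \calU_H|$, where $\calU_H = \{e : v_e > t\}$, $\calU_M = \{e : v_e = t\}$, and $\calU_L = \{e : v_e < t\}$. The key observation is that if a candidate $S \in \calF$ satisfies $|S \cap \calU_H| = k$ and $|S \cap \calU_M| \geq \ell - k$, then its $\ell$-th largest value is exactly $t$, so
\[
\topp{\ell}(\boldv[S]) = \sum_{e \in S \cap \calU_H} v_e + (\ell - k)\, t,
\]
and for the correct guesses $(t^*, k^*)$, the optimum $S^*$ itself satisfies these constraints and realizes this equality.

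Second, to reduce the constrained sub-problem to one oracle call, I would pack the three quantities $V_H = \sum_{e \in S \cap \calU_H} v_e$, $C_H = |S \cap \calU_H|$, and $C_M = |S \cap \calU_M|$ into a single integer via a base-encoding: pick $N > n \cdot \max_e v_e$ and $M > nN$, and set $w_e = v_e + N$ for $e \in \calU_H$, $w_e = M$ for $e \in \calU_M$, and $w_e = 0$ for $e \in \calU_L$. Then $w(S) = V_H + N\cdot C_H + M \cdot C_M$, and the triple $(V_H, C_H, C_M)$ is uniquely recoverable from $w(S)$ by successive division. For every candidate $(V, m)$ with $V \in \{0, 1, \ldots, n\cdot \max_e v_e\}$ and $m \in \{\ell - k, \ldots, n\}$, I would invoke the exact-version oracle with weights $w$ and target $K = V + Nk + Mm$; among all successful calls I return the maximum value of $V + (\ell - k)t$.

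The only real subtlety, and the main obstacle, lies in this encoding: a single scalar oracle query must simultaneously pin $C_H = k$, force $C_M \geq \ell - k$, and read off $V_H$. Choosing $N, M$ to exceed every possible carry rules out ambiguity in decoding, and since $\max_e v_e$ is polynomial, the moduli $N, M$ and every target $K$ remain polynomial in $n$. Hence each oracle call runs in polynomial time, the total number of quadruples $(t, k, V, m)$ is polynomial, and correctness follows by noting that the optimum $S^*$ corresponds to some enumerated tuple while every successful enumerated tuple produces a feasible $S$ whose top-$\ell$ value equals exactly the reported $V + (\ell - k)t$.
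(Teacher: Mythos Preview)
Your proposal is correct and follows the same core technique as the paper: encode cardinality information into a single scalar weight via large additive constants, then enumerate targets for the exact-version oracle. The paper's encoding is slightly leaner---it sets $v^t_e = (v_e - t + nB)\cdot\mathbf{1}\{v_e \geq t\}$ and queries targets $K = nB\ell + i$ for $i\in[0,B\ell]$, so that any $S$ with $\boldv^t(S)=K$ is forced to contain exactly $\ell$ elements of value $\geq t$---whereas you separately track the counts of elements strictly above and equal to $t$. Your three-way split buys a cleaner treatment of ties at the threshold (you enforce $C_M \geq \ell - k$ by enumerating $m$), at the cost of one extra guessed parameter $k$ and one extra encoding layer. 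One cosmetic fix: for unambiguous decoding you need $M \geq (n+1)N$ rather than merely $M > nN$, since $V_H + N\cdot C_H$ can be as large as $(n+1)N-1$.
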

\begin{proof}
Suppose all weights are nonnegative and upper bounded by $B$.
Again, we define the modified value vector $\boldv^t$ as follows:
$$
v^t_e = (v_e-t+nB)\cdot \mathbf{1}\{v_e\geq t\}.
$$
Again, we guess the $\ell$-th largest weight $t$.
For each $i\in [0,B\ell]$,
we apply the pseudopolynomial time algorithm $\Algo$ to the same instance with the modified weight $\boldv^t$ and target $K=nB\ell+i$.
We return the feasible answer with the smallest $i$.
Obviously, if a feasible solution is found, it contains exactly $k$
elements with weight at least $t$.
It is also easy to see that the running time is $O(\poly(n, B))= \poly(n)$.
We show that if our guess $t$ is correct and 
$i=\boldv^t(S^\star)$, the solution $S$ found by $\Algo$
is the optimal solution.
We denote the optimal solution by $S^\star$.
The optimality of $S$ can be seen from the following chain of equations:
\begin{align*}
\topp{\ell}(\boldv[S]) & = \topp{\ell}(\boldv^t[S]) + t\ell -B\ell
= i +t\ell - B\ell
= \boldv^t(S) +t\ell - B\ell
= \boldv^t(S^\star) +t\ell - B\ell \\
& =\topp{\ell}(\boldv^t[S^\star])+t\ell - B\ell
  =\topp{\ell}(\boldv[S^\star]).
\end{align*}
This completes the proof.
\end{proof}

\section{Ordered Norm Optimization} \label{sec:ordered-norm}

In this section, we consider the ordered norm minimization problem.
Recall that $\ordered{\boldw}{\boldv}=\boldw^\top\boldv^\da$.
We use $\myproblem$ to denote the original combinatorial optimization problem
under consideration.
In particular, we assume that there is a poly-time algorithm $\Algo$ 
that solves the minimization problem $\myproblem$:
$
\min_{S\in \calF} \boldv(S), 
$
where we write $\boldv(S)=\sum_{e\in S}v_e$.

In this section, we fix $\delta,\varepsilon>0$.
Define the set $\POS=\{\min\{\ceil{(1+\delta)^s},n\}:s\geq 0\}$.
For each $\ell\in\POS$, we define $\text{next}(\ell)$ to be the smallest element in $\POS$ that is larger than $\ell$ for $\ell<n$.
And we define $\text{next}(n)=n+1$.
We sparsify the weight vector $\boldv$ so that $\widetilde{w}_i=w_\ell$ where $\ell$ is the smallest element in $\POS$ such that $\ell\ge i$.
Also, let $\widetilde{w}_{n+1}=0$.
For any $\boldt\in \R^{\POS}$, define
$$h_{\boldt}(\ww;a):=\sum_{\ell\in\POS}(\widetilde{w}_\ell-\widetilde{w}_{\text{next}(\ell)})(a-t_\ell)^+.$$
And
$$\text{prox}_{\boldt}(\ww;\boldv)=\sum_{\ell\in\POS}(\widetilde{w}_\ell-\widetilde{w}_{\text{next}(\ell)})\ell\cdot t_\ell+\sum_{i=1}^n h_{\boldt}(\ww;v_i).$$
Let $\boldsymbol{o}=\boldv[S^\star]$ be the vector for the optimal solution.

We need some lemmas from prior work \cite{chakrabarty2019approximation}.
In particular, they are Lemma 4.2, 6.9, 6.5 and 6.8 in \cite{chakrabarty2019approximation}.

\begin{lemma}
\label{lm:C.1}
For any $\boldv\in\Rpos^n$, 
$$\ordered{\widetilde{\boldw}}{\boldv}\leq \ordered{\boldw}{\boldv}\leq 
(1+\delta)\ordered{\widetilde{\boldw}}{\boldv}.$$
\end{lemma}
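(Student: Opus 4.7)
The plan is to prove the two inequalities separately. The left inequality $\ordered{\ww}{\boldv}\leq\ordered{\boldw}{\boldv}$ is essentially immediate: by definition $\widetilde{w}_i=w_{\ell(i)}$, where $\ell(i)$ is the smallest $\POS$-index with $\ell(i)\geq i$, and since $\boldw$ is non-increasing and non-negative, $\widetilde{w}_i\leq w_i$ for every $i$. Pairing entrywise with the non-negative sorted vector $\boldv^{\da}$ and summing yields the bound.

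For the harder upper bound $\ordered{\boldw}{\boldv}\leq(1+\delta)\ordered{\ww}{\boldv}$, my plan is to pass through the conical decomposition of \cref{lemma:ordered:conic:decomp} and reduce to comparing top-$\ell$ norms $V_\ell:=\topl{\ell}{\boldv}$. A direct calculation gives
\[
\ordered{\boldw}{\boldv}=\sum_{\ell=1}^{n}(w_\ell-w_{\ell+1})V_\ell,\qquad \ordered{\ww}{\boldv}=\sum_{\ell\in\POS}(w_\ell-w_{\text{next}(\ell)})V_\ell,
\]
where the second identity uses that $\widetilde{w}_i-\widetilde{w}_{i+1}$ vanishes unless $i\in\POS$, in which case it equals $w_i-w_{\text{next}(i)}$. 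The task therefore reduces to bounding each $V_\ell$ in the first sum by a $\POS$-indexed companion in the second. Let $\ell^-\in\POS$ be the largest $\POS$-element with $\ell^-\leq\ell$; I aim to establish $V_\ell\leq(1+\delta)V_{\ell^-}$.

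The key step splits into two elementary pieces. First, because $\boldv^{\da}$ is non-increasing, the average $V_m/m$ is a non-increasing function of $m$, so $V_\ell\leq(\ell/\ell^-)V_{\ell^-}$. Second, writing $\ell^-=\lceil(1+\delta)^s\rceil$, every $\ell\in[\ell^-,\text{next}(\ell^-)-1]$ satisfies $\ell<(1+\delta)^{s+1}\leq(1+\delta)\ell^-$ (the boundary case $\ell^-=n$ being trivial, since then $\ell=\ell^-$), so $\ell/\ell^-<1+\delta$. Substituting this bound into the first sum, grouping by $\ell^-=\ell'$, and telescoping $\sum_{\ell=\ell'}^{\text{next}(\ell')-1}(w_\ell-w_{\ell+1})=w_{\ell'}-w_{\text{next}(\ell')}$ over each block yields
\[
\ordered{\boldw}{\boldv}\leq(1+\delta)\sum_{\ell'\in\POS}(w_{\ell'}-w_{\text{next}(\ell')})V_{\ell'}=(1+\delta)\ordered{\ww}{\boldv},
\]
completing the proof. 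The main subtlety is the choice of the predecessor $\ell^-$ as the $\POS$-anchor rather than the successor: only with the predecessor is the averaging bound $V_\ell\leq(\ell/\ell^-)V_{\ell^-}$ oriented correctly while the spacing ratio $\ell/\ell^-$ remains below $1+\delta$.
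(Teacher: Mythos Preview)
Your proof is correct. The paper does not actually supply its own proof of this lemma; it simply cites it as Lemma~4.2 of \cite{chakrabarty2019approximation}, so there is no in-paper argument to compare against. Your approach via the conical decomposition into top-$\ell$ norms, the averaging bound $V_\ell\leq(\ell/\ell^-)V_{\ell^-}$, and the spacing estimate $\ell/\ell^-<1+\delta$ from the definition of $\POS$ is the standard one and matches what appears in the cited source.
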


\begin{lemma}
Suppose we can obtain in polynomial time a set $Q$ of polynomial size that contains a real number in $[o_1^\da,(1+\varepsilon)o_1^\da]$.
Then, in time $O\left(|Q|\max\{(n/\varepsilon)^{O(1/\varepsilon)},n^{O(1/\delta)}\}\right)$, we can obtain a set $T$ of polynomial number of vectors in $\R^{\POS}$ which contains a threshold vector (denoted by $\boldsymbol{t}^*$) satisfying:
$o_\ell^\da\leq t^*_\ell\leq (1+\varepsilon)o_\ell^\da$ if $o_\ell^\da\geq \varepsilon o_1^\da/n$ and $t^*_\ell=0$ otherwise.
Moreover, for all $\ell\in \POS$,
$t^*_\ell$ is either 0 or a power of $1+\varepsilon$.
\end{lemma}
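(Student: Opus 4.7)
\vspace{0.2cm}
\noindent\textbf{Proof proposal.} The plan is to use the same enumeration scheme as in Lemma~\ref{lm:polyguess}, adapted to the coarser $(1+\delta)$-geometric grid $\POS$. The key combinatorial idea is that $\boldsymbol{t}^*$ can, without loss of generality, be taken non-increasing along $\POS$, which collapses what would otherwise be a super-polynomial product-space enumeration into a polynomial-sized set of candidate threshold vectors.

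For each $a\in Q$ viewed as a guess of $o_1^\da$, I would form the discrete set
\[
V_a=\bigl\{(1+\varepsilon)^i\,:\,\varepsilon a/((1+\varepsilon)n)\leq(1+\varepsilon)^i\leq(1+\varepsilon)a\bigr\},
\]
and set $V_a^+=V_a\cup\{0\}$. Since the dynamic range of this interval is $(1+\varepsilon)^2 n/\varepsilon$, one has $|V_a|=K$ with $K=O(\varepsilon^{-1}\log(n/\varepsilon))$. Then I would enumerate $T_a$ as the set of all \emph{non-increasing} sequences $\boldsymbol{t}\in(V_a^+)^{\POS}$ indexed by $\POS$, and output $T=\bigcup_{a\in Q}T_a$.

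Correctness reduces to exhibiting a non-increasing $\boldsymbol{t}^*\in(V_{a^*}^+)^{\POS}$ satisfying the stated bounds for some $a^*\in Q\cap[o_1^\da,(1+\varepsilon)o_1^\da]$, which exists by the hypothesis on $Q$. I would define $t^*_\ell$ to be the smallest power of $(1+\varepsilon)$ that is at least $o_\ell^\da$ whenever $o_\ell^\da\geq\varepsilon o_1^\da/n$, and set $t^*_\ell=0$ otherwise. This clearly satisfies $o_\ell^\da\leq t^*_\ell\leq(1+\varepsilon)o_\ell^\da$ in the first branch and the stated zero rule in the second; monotonicity of $\boldsymbol{t}^*$ along $\POS$ then follows because the map ``smallest power of $(1+\varepsilon)$ above $x$'' is non-decreasing in $x$, and zero entries appear only for large $\ell$ since $o^\da$ is non-increasing.

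For the size and time bound, the number of non-increasing length-$|\POS|$ sequences with entries in $V_a^+$ equals $\binom{|\POS|+K}{K}\leq 2^{|\POS|+K}$. Plugging in $|\POS|=O(\delta^{-1}\log n)$ and $K=O(\varepsilon^{-1}\log(n/\varepsilon))$ gives $2^{|\POS|+K}=n^{O(1/\delta)}\cdot(n/\varepsilon)^{O(1/\varepsilon)}$, which is absorbed into $\max\{(n/\varepsilon)^{O(1/\varepsilon)},n^{O(1/\delta)}\}$ after enlarging the hidden constants. Since each candidate sequence can be written down in $\mathrm{poly}(n)$ time and the enumeration is over all $a\in Q$, the total running time matches the claim. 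The main technical point is the non-increasingness reduction above; without it, a naive product-space enumeration costs $|V_a^+|^{|\POS|}$, which is super-polynomial even for constant $\varepsilon$ and $\delta$.
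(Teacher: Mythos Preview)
The paper does not prove this lemma; it is quoted directly from Chakrabarty and Swamy~\cite{chakrabarty2019approximation} (as one of Lemmas 4.2, 6.9, 6.5, 6.8 there), just as Lemma~\ref{lm:polyguess} is. Your argument is correct and is essentially the standard one from that reference: restrict each threshold entry to a $(1+\varepsilon)$-geometric grid of $O(\varepsilon^{-1}\log(n/\varepsilon))$ values together with~$0$, use the monotonicity of $o^\da$ to restrict attention to non-increasing sequences over $\POS$, and bound their number by $\binom{|\POS|+K}{K}\le 2^{|\POS|+K}$, which gives the stated running time once $|\POS|=O(\delta^{-1}\log n)$ is plugged in.
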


\begin{lemma}
\label{lm:C.3}
For any $\boldv\in\Rpos^n,\boldsymbol{t}\in \R^{\POS}$, 
the following inequality holds:
$$\ordered{\widetilde{\boldw}}{\boldv}\leq \text{\em prox}_{\boldt}(\widetilde{\boldw};\boldv).$$
\end{lemma}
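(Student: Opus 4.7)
The plan is to reduce the stated inequality to a pointwise bound on each top-$\ell$ norm, weighted by the differences of the sparsified weight vector $\widetilde{\boldw}$. First I would rewrite both sides in terms of the atoms indexed by $\ell \in \POS$. Since $\widetilde{\boldw}$ is piecewise constant on the blocks defined by $\POS$, the conical decomposition of an ordered norm into top-$\ell$ norms (Claim~\ref{lemma:ordered:conic:decomp}) collapses to
\[
\ordered{\widetilde{\boldw}}{\boldv} \;=\; \sum_{\ell \in \POS} (\widetilde{w}_\ell - \widetilde{w}_{\nextp{\ell}}) \cdot \topl{\ell}{\boldv},
\]
where the coefficients $\widetilde{w}_\ell - \widetilde{w}_{\nextp{\ell}}$ are nonnegative because $\boldw$, and hence $\widetilde{\boldw}$, is non-increasing.

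Next I would rearrange $\text{prox}_{\boldt}(\widetilde{\boldw};\boldv)$ by swapping the two sums in $\sum_{i=1}^n h_{\boldt}(\widetilde{\boldw}; v_i)$, so that
\[
\text{prox}_{\boldt}(\widetilde{\boldw};\boldv) \;=\; \sum_{\ell \in \POS} (\widetilde{w}_\ell - \widetilde{w}_{\nextp{\ell}}) \left[ \ell \cdot t_\ell + \sum_{i=1}^n (v_i - t_\ell)^+ \right].
\]
At this point, because all the coefficients $(\widetilde{w}_\ell - \widetilde{w}_{\nextp{\ell}})$ are nonnegative, it suffices to prove the pointwise inequality
\[
\topl{\ell}{\boldv} \;\leq\; \ell \cdot t_\ell \;+\; \sum_{i=1}^n (v_i - t_\ell)^+
\]
for every $\ell \in \POS$ and every choice of $t_\ell \in \R$.

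This pointwise bound is a standard Lagrangian-type estimate for top-$\ell$ norms. Let $v^\da_1 \geq \cdots \geq v^\da_n$. Using the trivial bound $a \leq t + (a-t)^+$ for every real $a,t$, I get
\[
\topl{\ell}{\boldv} \;=\; \sum_{i=1}^\ell v^\da_i \;\leq\; \sum_{i=1}^\ell \bigl[ t_\ell + (v^\da_i - t_\ell)^+ \bigr] \;\leq\; \ell \cdot t_\ell + \sum_{i=1}^n (v_i - t_\ell)^+,
\]
where the last step just extends the sum of nonnegative terms over all $n$ coordinates. Combining this pointwise inequality with the nonnegative weighting $(\widetilde{w}_\ell - \widetilde{w}_{\nextp{\ell}})$ and summing over $\ell \in \POS$ yields $\ordered{\widetilde{\boldw}}{\boldv} \leq \text{prox}_{\boldt}(\widetilde{\boldw};\boldv)$.

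I do not expect any real obstacle here: the only things to be careful about are (i) verifying that $\widetilde{\boldw}$ is indeed non-increasing so that the coefficients are nonnegative (which follows immediately from the monotonicity of $\boldw$ together with the definition of $\widetilde{w}_i$ as $w_\ell$ at the next boundary $\ell \geq i$ in $\POS$), and (ii) correctly telescoping the ordered-norm decomposition over the $\POS$-atoms rather than over all indices, which is exactly where the sparsification makes the sum polynomially short and the identity clean.
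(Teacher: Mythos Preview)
Your proof is correct and is essentially the standard argument: decompose $\ordered{\widetilde{\boldw}}{\cdot}$ into a nonnegative combination of $\topp{\ell}$'s over $\ell\in\POS$ using the piecewise-constant structure of $\widetilde{\boldw}$, rewrite $\text{prox}_{\boldt}$ the same way, and invoke the elementary bound $\topl{\ell}{\boldv}\le \ell\,t+\sum_i(v_i-t)^+$ termwise. The paper does not give its own proof here---it simply cites this as Lemma~6.5 of \cite{chakrabarty2019approximation}---and the argument you wrote is exactly the one in that reference.
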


\begin{lemma}
\label{lm:C.4}
Let $\boldsymbol{t}\in \R^{\POS}$ be a valid threshold vector
(i.e., $t_\ell \geq t_{\text{next}(\ell)}$ for all $\ell\in \POS$)
satisfying
$v_\ell^\da\leq t_\ell\leq (1+\varepsilon)v_\ell^\da$ for $v_\ell^\da\geq \varepsilon v_1^\da/n$, and $t_\ell=0$ otherwise.
Then, for any value vector $\boldv\in\Rpos^n$, we have that
$$
\text{\em prox}_{\boldsymbol{t}}(\widetilde{\boldw};\boldv)
\leq (1+2\varepsilon)\ordered{\widetilde{\boldw}}{\boldv}.$$
\end{lemma}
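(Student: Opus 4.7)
The plan is to reduce the claim to a clean per-position inequality indexed by $\POS$ and then verify it by direct computation. By Claim~\ref{lemma:ordered:conic:decomp} applied to the non-increasing sparsified vector $\ww$ (which is piecewise constant on the intervals between consecutive elements of $\POS$ by construction), the telescoping differences $\widetilde{w}_i - \widetilde{w}_{i+1}$ vanish outside $\POS$, so
\[
\ordered{\ww}{\boldv} \;=\; \sum_{\ell \in \POS} \bigl(\widetilde{w}_\ell - \widetilde{w}_{\text{next}(\ell)}\bigr)\,\topl{\ell}{\boldv}.
\]
On the other hand, swapping the order of summation in the definition of $\text{prox}_{\boldt}(\ww;\boldv)$ exhibits the left-hand side as a sum over the same index set with the same non-negative coefficients $\widetilde{w}_\ell - \widetilde{w}_{\text{next}(\ell)} \ge 0$. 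Hence it suffices to establish, for each $\ell \in \POS$, the per-position bound
\[
\ell\, t_\ell + \sum_{i=1}^n (v_i - t_\ell)^+ \;\le\; (1+2\varepsilon)\,\topl{\ell}{\boldv}.
\]

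For the first case $t_\ell > 0$, the hypothesis gives $v_\ell^\da \le t_\ell \le (1+\varepsilon) v_\ell^\da$. Since $v_i^\da \le v_\ell^\da \le t_\ell$ for $i > \ell$, the positive-part sum collapses to $\sum_{i=1}^\ell (v_i^\da - t_\ell)^+$, and combining with $\ell t_\ell = \sum_{i=1}^\ell t_\ell$ we would write
\[
\ell t_\ell + \sum_{i=1}^n (v_i - t_\ell)^+ \;=\; \sum_{i=1}^\ell \max(t_\ell, v_i^\da) \;\le\; \topl{\ell}{\boldv} + \sum_{i=1}^\ell (t_\ell - v_i^\da)^+.
\]
For $i \le \ell$ one has $v_i^\da \ge v_\ell^\da$, so $(t_\ell - v_i^\da)^+ \le t_\ell - v_\ell^\da \le \varepsilon v_\ell^\da$, and the residual is at most $\varepsilon \ell v_\ell^\da \le \varepsilon \topl{\ell}{\boldv}$, yielding a factor of $(1+\varepsilon)$ in this case.

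For the second case $t_\ell = 0$, the hypothesis forces $v_\ell^\da < \varepsilon v_1^\da / n$. Then the left-hand side equals $\sum_{i=1}^n v_i = \topl{\ell}{\boldv} + \sum_{i=\ell+1}^n v_i^\da$, and each of the $n-\ell$ tail entries is at most $\varepsilon v_1^\da / n$, giving a tail bound of $\varepsilon v_1^\da \le \varepsilon \topl{\ell}{\boldv}$ (using the trivial inequality $v_1^\da \le \topl{\ell}{\boldv}$). Summing the per-position bounds --- each in fact at most $(1+\varepsilon)\topl{\ell}{\boldv}$, which is comfortably inside the stated slack of $(1+2\varepsilon)$ --- against the non-negative coefficients $\widetilde{w}_\ell - \widetilde{w}_{\text{next}(\ell)}$ gives the lemma. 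I do not anticipate a substantive obstacle; the only delicate point is verifying that the coefficient expansions of $\text{prox}$ and $\ordered{\ww}{\cdot}$ align term by term (hence reducing to the per-position inequality) and handling the case when $t_\ell$ strictly exceeds $v_\ell^\da$, which is what the residual-term analysis above absorbs.
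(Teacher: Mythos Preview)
Your proof is correct. The paper does not actually prove this lemma---it is quoted directly from \cite{chakrabarty2019approximation} (their Lemma~6.8)---so there is no in-paper argument to compare against. Your approach (expand both sides as conical combinations over $\POS$ with the common coefficients $\widetilde{w}_\ell-\widetilde{w}_{\text{next}(\ell)}\ge 0$, then verify the per-position inequality $\ell t_\ell+\sum_i(v_i-t_\ell)^+\le(1+\varepsilon)\topl{\ell}{\boldv}$ by the two-case analysis on $t_\ell$) is exactly the standard one, and in fact your case analysis yields the sharper factor $(1+\varepsilon)$ in each case, comfortably inside the stated $(1+2\varepsilon)$.
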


\begin{theorem}
If the original minimization problem $\myproblem$ can be solved in poly-time, 
there is a poly-time factor $(1+\epsilon)$ approximation algorithm for the ordered {\em minimization} problem, for any positive constant $\epsilon$.
\end{theorem}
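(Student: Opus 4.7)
The plan is to combine the threshold-enumeration machinery from \cite{chakrabarty2019approximation} (recalled in Lemmas~\ref{lm:C.1}--\ref{lm:C.4}) with a direct use of the min-sum oracle $\Algo$ for $\myproblem$. The crucial observation is that for any fixed threshold vector $\boldt \in \R^{\POS}$, the proxy $\text{prox}_{\boldt}(\ww;\boldv[S])$ is, up to a $\boldt$-dependent additive constant, a \emph{linear} function of the indicator of $S$. Indeed, since $t_\ell \geq 0$ we have $h_{\boldt}(\ww;0) = \sum_{\ell\in\POS}(\widetilde{w}_\ell - \widetilde{w}_{\text{next}(\ell)})(-t_\ell)^+ = 0$, so
\[
\text{prox}_{\boldt}(\ww;\boldv[S]) \;=\; \underbrace{\sum_{\ell\in\POS}(\widetilde{w}_\ell - \widetilde{w}_{\text{next}(\ell)})\,\ell\, t_\ell}_{\text{constant in }S} \;+\; \sum_{i\in S} h_{\boldt}(\ww;v_i).
\]
Thus minimizing $\text{prox}_{\boldt}(\ww;\boldv[S])$ over $S\in\calF$ reduces to an instance of $\myproblem$ with element weights $c_i := h_{\boldt}(\ww;v_i) \geq 0$, which $\Algo$ solves exactly in polynomial time.

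First, I would set $Q$ to be the set $\{v_e : e\in\calU\}\cup\{0\}$ (which has size at most $n+1$ and trivially contains $o_1^\da$), and invoke the threshold-enumeration lemma with parameters $\delta,\varepsilon$ to obtain a polynomial-size family $T$ of candidate threshold vectors. By that lemma, $T$ is guaranteed to contain some $\boldt^\star$ such that $o_\ell^\da \leq t^\star_\ell \leq (1+\varepsilon)o_\ell^\da$ whenever $o_\ell^\da \geq \varepsilon o_1^\da/n$, and $t^\star_\ell = 0$ otherwise (where $\boldo = \boldv[S^\star]$ and $S^\star$ is the optimal ordered-norm solution). For each $\boldt\in T$ I would construct the weights $c_i = h_{\boldt}(\ww;v_i)$, run $\Algo$ to obtain $S_{\boldt}\in\calF$ minimizing $\sum_{i\in S} c_i$, and finally output the $S_{\boldt}$ minimizing $\ordered{\boldw}{\boldv[S_{\boldt}]}$ among the $|T|$ candidates.

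For correctness, fix the good guess $\boldt^\star$ and let $S := S_{\boldt^\star}$. Chaining the lemmas gives
\[
\ordered{\boldw}{\boldv[S]} \;\leq\; (1+\delta)\,\ordered{\widetilde{\boldw}}{\boldv[S]} \;\leq\; (1+\delta)\,\text{prox}_{\boldt^\star}(\ww;\boldv[S]),
\]
by Lemma~\ref{lm:C.1} and Lemma~\ref{lm:C.3}. Because $\Algo$ returns the exact minimizer of the linearized objective (equivalently of $\text{prox}_{\boldt^\star}(\ww;\boldv[\cdot])$, which differs only by a constant), $\text{prox}_{\boldt^\star}(\ww;\boldv[S]) \leq \text{prox}_{\boldt^\star}(\ww;\boldv[S^\star])$. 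Applying Lemma~\ref{lm:C.4} (whose hypothesis is exactly the validity of $\boldt^\star$) followed by Lemma~\ref{lm:C.1} again yields
\[
\text{prox}_{\boldt^\star}(\ww;\boldv[S^\star]) \;\leq\; (1+2\varepsilon)\,\ordered{\widetilde{\boldw}}{\boldv[S^\star]} \;\leq\; (1+2\varepsilon)\,\ordered{\boldw}{\boldv[S^\star]}.
\]
Combining, $\ordered{\boldw}{\boldv[S]} \leq (1+\delta)(1+2\varepsilon)\,\ordered{\boldw}{\boldv[S^\star]}$, and choosing $\delta,\varepsilon$ sufficiently small (e.g.\ $\delta = \varepsilon = \epsilon/5$) gives the desired $(1+\epsilon)$ factor.

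The only subtlety I anticipate is verifying that the proxy reduction really is a pure linear program over $\calF$ with nonnegative weights. This is where the identity $h_{\boldt}(\ww;0)=0$ and the form of $\text{prox}_{\boldt}$ must be used carefully; in particular, the additive constant does not interfere because the argmin of a linear objective is unchanged by constant shifts, so the black-box $\Algo$ for $\myproblem$ (which only knows how to handle linear objectives) is sufficient. Everything else is an essentially mechanical chaining of the four lemmas already recorded in the excerpt, and the polynomial-size bound on $|T|$ immediately yields polynomial running time.
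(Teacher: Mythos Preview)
Your proposal is correct and follows essentially the same approach as the paper: enumerate threshold vectors $\boldt$ via the guessing lemma, linearize the proxy $\text{prox}_{\boldt}(\ww;\boldv[S])$ into element weights $h_{\boldt}(\ww;v_i)$, solve each resulting min-sum instance with $\Algo$, and return the best candidate, then chain Lemmas~\ref{lm:C.1}, \ref{lm:C.3}, and \ref{lm:C.4} exactly as you do. Your explicit verification that $h_{\boldt}(\ww;0)=0$ (so the proxy really is linear in $S$ up to a constant) is a point the paper glosses over, and your final factor $(1+\delta)(1+2\varepsilon)$ is in fact slightly tighter than the paper's $(1+\delta)^2(1+2\varepsilon)$, but these are cosmetic differences.
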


\begin{proof}
Let $\boldsymbol{o}$ the value vector of the optimal solution.
We first guess the largest $v_e$ in $\boldsymbol{o}$ 
(polynomial number of choices because $e\in \calU$).
Using Lemma \ref{lm:polyguess}, we can obtain $T$ in poly-time, 
and we know $\boldt^*$ (satisfying the statement in Lemma \ref{lm:polyguess}) is in $T$.
For each vector $\boldsymbol{t}$ in $T$, we replace the value 
$v_e$ ($e\in \calU$) by 
$\hv_e = h_{\boldsymbol{t}}(\widetilde{\boldw};v_e)$.
Then we apply $\Algo$ to solve the original minimization problem
(i.e., minimize $\sum_{e\in S} \hv_e$). 
For each $\boldsymbol{t}$, we can obtain a solution. Among these solutions, 
we choose the solution $\widetilde{S}$ with $\widetilde{\boldv}=\boldv[\widetilde{S}]$ that minimizes $\ordered{\boldw}{\widetilde{\boldv}}$. Let the solution corresponding to $\boldsymbol{t}^*$ be $\boldv^*=\boldv[S']$. Then we have
$$\sum_{e\in S'} h_{\boldt^*}(\ww;v_e)\leq \sum_{e\in S^\star} h_{\boldt^*}(\ww;v_e)$$
and thus
$$\text{prox}_{\boldsymbol{t}^*}(\ww;\boldv^*)\leq \text{prox}_{\boldsymbol{t}^*}(\ww;\boldsymbol{o}).$$

By \cref{lm:C.1,lm:C.3,lm:C.4}, we have that
\begin{align*}
    \ordered{\boldw}{\widetilde{\boldv}}&\leq\ordered{\boldw}{\boldv^*}\leq (1+\delta)\ordered{\ww}{\boldv^*}\leq (1+\delta)\text{prox}_{\boldsymbol{t}^*}(\ww;\boldv^*)\\
    &\leq (1+\delta)\text{prox}_{\boldsymbol{t}^*}(\ww;\boldsymbol{o})\leq (1+\delta)(1+2\varepsilon)\ordered{\ww}{\boldsymbol{o}}\\
    &\leq (1+\delta)^2(1+2\varepsilon)\ordered{\boldw}{\boldsymbol{o}}.
\end{align*}
We can make $\delta,\varepsilon$ sufficiently small positive constants (e.g., $\epsilon/10$), so that the final approximation factor is at most $1+\epsilon$.
\end{proof}

We can generalize the above theorem to approximation algorithms.

\begin{theorem}
\label{thm:C3}
If there is a poly-time approximation algorithm for the minimization problem $\myproblem$ (with approximation factor $\alpha\geq 1$), 
there is a poly-time factor $\alpha(1+\epsilon)$ approximation algorithm for the ordered {\em minimization} problem optimally for any positive constant $\epsilon$.
\end{theorem}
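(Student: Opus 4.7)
The plan is to mirror the proof of the preceding exact-oracle theorem almost verbatim, changing only one step: in place of running an exact algorithm for $\myproblem$ on the surrogate instance, I would invoke the given $\alpha$-approximation. The rest of the pipeline (guessing the largest coordinate of $\boldsymbol{o}$, invoking Lemma~\ref{lm:polyguess} to obtain the polynomial-size family $T$ of threshold candidates containing a good $\boldt^*$, defining surrogate weights $\hv_e := h_{\boldt}(\ww; v_e)$ for each $\boldt\in T$, and finally returning the solution $\widetilde{S}$ among all $|T|$ candidates that minimizes $\ordered{\boldw}{\boldv[\widetilde{S}]}$) is unchanged and runs in polynomial time for fixed $\delta,\varepsilon$.

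The key quantitative step is what is lost in the surrogate-minimization phase. Let $\boldt^*$ be the good threshold guaranteed by Lemma~\ref{lm:polyguess}, let $S^\star$ be an optimum for the ordered norm objective with value vector $\boldsymbol{o}=\boldv[S^\star]$, and let $S'$ be what the $\alpha$-approximation returns on the surrogate instance defined by $\boldt^*$. Then by the approximation guarantee,
\[
\sum_{e\in S'} h_{\boldt^*}(\ww; v_e) \;\leq\; \alpha \sum_{e\in S^\star} h_{\boldt^*}(\ww; v_e).
\]
Writing $A(\boldt^*) := \sum_{\ell\in\POS}(\widetilde{w}_\ell-\widetilde{w}_{\text{next}(\ell)})\,\ell\, t^*_\ell \geq 0$ for the $S$-independent offset in the definition of $\mathrm{prox}_{\boldt^*}(\ww;\cdot)$, adding $A(\boldt^*)$ to both sides and using $\alpha\geq 1$ together with $A(\boldt^*)\geq 0$ gives
\[
\mathrm{prox}_{\boldt^*}(\ww;\boldv^*) \;\leq\; A(\boldt^*) + \alpha\sum_{e\in S^\star} h_{\boldt^*}(\ww; v_e) \;\leq\; \alpha\cdot \mathrm{prox}_{\boldt^*}(\ww;\boldsymbol{o}),
\]
where $\boldv^* = \boldv[S']$. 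Here I use that $h_{\boldt^*}(\ww; 0) = 0$ so that the sum over indices in $S$ coincides with the full sum in the definition of $\mathrm{prox}$ applied to $\boldv[S]$.

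Chaining this with \cref{lm:C.1,lm:C.3,lm:C.4} in exactly the same order as in the proof of the previous theorem yields
\[
\ordered{\boldw}{\widetilde{\boldv}} \;\leq\; \ordered{\boldw}{\boldv^*} \;\leq\; (1+\delta)^2(1+2\varepsilon)\,\alpha\,\ordered{\boldw}{\boldsymbol{o}}.
\]
Picking $\delta,\varepsilon$ to be sufficiently small positive constants so that $(1+\delta)^2(1+2\varepsilon) \leq 1+\epsilon$ produces the claimed factor $\alpha(1+\epsilon)$. I do not anticipate any genuine obstacle; the only subtlety is verifying that the additive offset $A(\boldt^*)$ can be absorbed into the $\alpha$ factor, which is immediate from $A(\boldt^*)\geq 0$ and $\alpha\geq 1$.
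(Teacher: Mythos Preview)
Your proposal is correct and follows essentially the same approach as the paper's own proof. In fact, you are slightly more careful than the paper: where the paper simply asserts $\mathrm{prox}_{\boldt^*}(\ww;\boldv^*)\le \alpha\cdot\mathrm{prox}_{\boldt^*}(\ww;\boldsymbol{o})$, you explicitly justify why the nonnegative additive offset $A(\boldt^*)$ can be absorbed into the factor $\alpha$ using $\alpha\ge 1$.
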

\begin{proof}
We use the algorithm similar as in \cref{thm:C3}. We construct the same $T$ and run the algorithm $\Algo$ for each $\boldt\in T$ by changing each $v_e$ into $h_{\boldt}(\widetilde{\boldv};v_e)$. 
Similarly, we choose the solution $\widetilde{S}$ with $\widetilde{\boldv}=\boldv[\widetilde{S}]$ that minimizes $\ordered{\boldv}{\widetilde{\boldv}}$. Let the solution corresponding to $\boldsymbol{t}^*$ be $\boldv^*=\boldv[S']$. Thus we have
$$\text{prox}_{\boldsymbol{t}^*}(\ww;\boldv^*)\leq \alpha\cdot\text{prox}_{\boldsymbol{t}^*}(\ww;\boldsymbol{o}).$$
Therefore,
\begin{align*}
    \ordered{\boldv}{\widetilde{\boldv}}&\leq\ordered{\boldv}{\boldv^*}\leq (1+\delta)\ordered{\widetilde{\boldv}}{\boldv^*}\leq (1+\delta)\text{prox}_{\boldsymbol{t}^*}(\widetilde{\boldv};\boldv^*)\\
    &\leq \alpha(1+\delta)\text{prox}_{\boldsymbol{t}^*}(\widetilde{\boldv};\boldsymbol{o})\leq \alpha(1+\delta)(1+2\varepsilon)\ordered{\widetilde{\boldv}}{\boldsymbol{o}}\\
    &\leq \alpha(1+\delta)(1+2\varepsilon)\ordered{\boldv}{\boldsymbol{o}}.
\end{align*}
Setting $\delta,\varepsilon$ to be $\epsilon/10$, 
we can get the approximation ratio of
$\alpha(1+\epsilon)$.
\end{proof}

\section{Another form of \cref{thm:equivalence}}
\label{sec4:proof}
In Section~\ref{sec:matching}, we need to use the following theorem, whose proof is very similar to Theorem~\ref{thm:equivalence}.

\begin{theorem}
\label{thm:diffequiv}
Assume that we have two $\minnorm$ instances with the same $\calU$, the same value vector $\boldv$ and same norm $f$, but differ in their feasible sets, denoted by $\calF_1$ and $\calF_2$, where $\calF_1,\calF_2\neq \emptyset$.
Let $c$ be a positive integer and $\epsilon>0$. 
Consider the \LBO\ versions of these two problems.
Suppose that for any disjoint subsets $S_1,S_2,\cdots,S_T$ of $\calU$, if there exists a 1-valid solution in $\calF_1$, then we can compute a $c$-valid solution in $\calF_2$
(with the same $S_1,S_2,\cdots,S_T$) in polynomial time.
Under this assumption, we can find a set $S\in\calF_2$ such that $f(\boldv[S])\leq (4c+\epsilon)\cdot f(\boldv[S^*])$, where $S^*$ is the optimal solution in $\calF_1$ that minimizes the norm $f$ in polynomial time.
\end{theorem}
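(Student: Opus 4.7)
The plan is to mimic the first direction (\LBO\ to \minnorm) of the proof of Theorem~\ref{thm:equivalence}, adapting it to handle two different feasible families. Let $S^{\star}\in\calF_1$ be the optimal solution and let $\boldo=\boldv[S^{\star}]^{\da}$. First I would enumerate candidate threshold vectors $\boldt\in\R^{\POS}$ using Lemma~\ref{lm:polyguess}, taking $Q$ to be the set of all element values so that a real number in $[o_1^{\da},(1+\varepsilon)o_1^{\da}]$ is guaranteed to lie in $Q$. Among the polynomially many produced vectors, there exists a valid threshold $\boldt^{\star}$ satisfying the conclusion of Lemma~\ref{lm:polyguess}, and we try all of them.

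For each guess $\boldt$, I construct the disjoint sets $S_1,\ldots,S_T$ exactly as in the proof of Theorem~\ref{thm:equivalence}: an element $e$ goes to $S_{i+1}$ if $\boldt_{\text{next}(2^i)}<v_e\leq \boldt_{2^i}$, to $S_T$ if $v_e\leq\max\{\boldt_n,\varepsilon \boldt_1/n\}$, and is discarded if $v_e>\boldt_1$. When the guess is correct (i.e.\ $\boldt=\boldt^{\star}$), the validity of $\boldt^{\star}$ implies that $|S^{\star}\cap S_i|\leq 2^i$ for every $i$, and $S^{\star}\in\calF_1$, so $S^{\star}$ is a $1$-valid solution of the \LBO\ instance defined by $(\calU;S_1,\ldots,S_T;\calF_1)$. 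By the hypothesis of the theorem, we then obtain in polynomial time a $c$-valid solution $S\in\calF_2$ for the corresponding \LBO\ instance with feasible family $\calF_2$ and the same sets $S_1,\ldots,S_T$. The algorithm simply iterates over all enumerated $\boldt$'s, runs the assumed procedure, and among all the returned sets $S\in\calF_2$ outputs the one with smallest $f(\boldv[S])$.

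For the analysis, fix the correct guess and the corresponding $c$-valid $S\in\calF_2$. Since the construction of $S_1,\ldots,S_T$ ensures that every element in $S_j$ has value at most $\boldt_{2^{j-1}}$, and $S$ intersects each $S_i$ in at most $c\cdot 2^i$ elements, the number of entries of $\boldv[S]$ exceeding $\boldt_{\ell}$ for $\ell\in\POS$ is at most $2c\ell$, and the maximum entry is at most $\boldt_1$. I then partition $S=A\cup B$ where $A$ contains elements with value below $\varepsilon o_1^{\star}/n$ and $B$ the rest, exactly as in the proof of Theorem~\ref{thm:equivalence}. Applying the triangle inequality of $f$, then Lemma~\ref{lem:d-3} on $B$ together with Lemma~\ref{lem:d-2}, yields
\[
f(\boldv[S])\leq f(\boldv[A])+f(\boldv[B])\leq \varepsilon f(\boldo)+4c(1+\varepsilon)f(\boldo)\leq 4c(1+2\varepsilon)f(\boldv[S^{\star}]).
\]
Choosing $\varepsilon$ small enough as a function of $c$ and the desired slack gives the stated $(4c+\epsilon)$ factor.

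The only conceptual subtlety, which I would flag but not fight with, is that the output of the enumeration might include guesses for which the hypothesis does not give us anything useful (for instance, there may be no $1$-valid solution in $\calF_1$ for that particular partition); in that case the assumed subroutine may return nothing or an arbitrary set, so we just ignore it and keep the minimum over all successful returns. Since the correct guess does lie in the enumerated family and does produce a genuinely $c$-valid set in $\calF_2$, the minimum we output inherits the bound above. The rest is routine bookkeeping following the proof of Theorem~\ref{thm:equivalence}.
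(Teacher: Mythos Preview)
Your proposal is correct and follows essentially the same argument as the paper's own proof: enumerate threshold vectors via Lemma~\ref{lm:polyguess}, build $S_1,\ldots,S_T$ from the guess, verify that $S^{\star}\in\calF_1$ is $1$-valid for the correct guess, invoke the hypothesis to obtain a $c$-valid $S\in\calF_2$, and then bound $f(\boldv[S])$ using the $A/B$ split together with Lemmas~\ref{lem:d-3} and~\ref{lem:d-2}. The only cosmetic difference is that the paper additionally guesses $o_1^{\da}$ exactly (from the $n$ possible values) rather than using $\boldt_1$, but this does not change the structure of the argument.
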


\begin{proofofthm}{\ref{thm:diffequiv}}
We enumerate all possible threshold vectors $\boldsymbol{t}\in \mathbb{R}^{\text{POS}}$ as defined in Lemma~\ref{lm:polyguess} (for $\calF_1$).
Suppose $\boldsymbol{t}$ is a valid guess for the optimal solution of $\calF_1$. We can also assume that we have an exactly correct guess of $o_1^\da$ ($o_1$ is also for $\calF_1$).
We construct sets $S_1,\cdots,S_T$ in the following way. 
For each element $e\in \calU$, if its value $v_e$ is larger than $\boldt_1$, we do not add it to any set. If it is at most $\max\{\boldt_n,\varepsilon o_1^\da/n\}$, we add it to $S_{T}$. For any other element, if it is at most $\boldt_{\ell}$ and larger than $\boldt_{\text{next}(\ell)}$,
where $\ell=2^i$, we add it to $S_{i+1}$, for $0\leq i\leq T-1$. 
Now, consider the optimal solution $\boldo$ for \minnorm for $\calF_1$ (if there are multiple optimal solutions, consider the one corresponding to the valid guess $\boldt$).
For $1\leq i\leq T-1$, consider how many elements in $\boldo$ that are added in $S_i$. By definition, elements in $S_i$ are with value larger than $\boldt_{2^i}$.
Also, the values are larger than $\varepsilon o_1^\da/n$. So for $\ell=2^i$, $o_\ell\leq \boldt_\ell$ by definition of valid guess. 
Thus, there are at most $2^i$ elements in $\boldo$ that are added in $S_i$. This also holds for $i=T$ as there are only $n$ total number of elements.

Assume that we can get a solution $S\in \calF_2$ such that there are at most $c\cdot 2^i$ elements in each $S_i$. We partition $S$ into $A,B$ such that the elements in $A$ have value less than $\varepsilon o_1/n$ and elements in $B$ have value at least $\varepsilon o_1/n$. 
We need this partition because of the condition 
$o_\ell^\da\geq \varepsilon o_1^\da/n$ in Lemma~\ref{lm:polyguess}.
Note that $A,B$ are not needed in the algorithm. They are only useful in the analysis.
So we have
$$
f(\boldv[S])\leq f(A)+f(B)\leq n\cdot f(\varepsilon o_1/n)+4cf(g(\boldt))\leq \varepsilon f(\boldo)+4c(1+\varepsilon)f(\boldo)\leq 4c(1+2\varepsilon)f(\boldo)
$$
where the first inequality following from the triangle inequality of the norm,
the second from the definition of $A$ and Lemma~\ref{lem:d-3} and the third from Lemma~\ref{lem:d-2}.
Therefore, the theorem follows.
\end{proofofthm}

\section{Algorithms and Proofs Omitted in \cref{sec:intervalcover}}
\label{appendix:intcov}

\subsection{Proofs Omitted in \cref{sec:Inter-Cover-1}}
\label{sec:proofof6.1}
We call an input tuple $\inteta = (\intU; \intS{1}, \dots, \intS{T}; \Gamma)$ \emph{locally disjoint} if $\forall 1 \leq j \leq T$, $I, J \in \intS{j},I\neq J$, $I \cap J = \emptyset$. We say that $\inteta$ satisfies the \emph{laminar family} constraint if: 
\begin{itemize}
    \item $\intU$ forms a laminar family (i.e. each two intervals either have no common interior points or one contain the other), and \item $\forall I, J \in \intU$, if $I \subseteq J$, then the group index of $I$ is not less than that of $J$.
\end{itemize}
Now we focus on the \LBOintcov\ problem with input $\inteta = (\intU; \intS{1}, \intS{2}, \dots, \intS{T}; \Gamma)$. 
We want to prove that it is equivalent to \LBOtreecov\ 
up to a constant approximation factor 
(see Theorem~\ref{thm:inter-to-tree}).
We first change the input into $\ldeta$, which is a locally disjoint input.
Then we change it into $\lameta$, which is a laminar family input.
At last, we change it to the input of \LBOtreecov.

\begin{figure}[t]
\begin{minipage}{\textwidth}
    \centering
\begin{tikzpicture}[
box/.style={rectangle, draw, minimum height=1cm, minimum width=3cm, align=center}
]
\node[box] (box1) {A \LBOintcov\ \\ Problem};

\node[box, right=of box1] (box2) {A \LBOintcov\ \\ Problem with \\ Locally Disjoint \\ Input};  
  
\node[box, right=of box2] (box3) { A \LBOintcov\ \\ Problem with \\ Laminar Family \\ Input};  
  
\node[box, right=of box3] (box4) { A \LBOtreecov\ \\Problem};  
  
\draw[->,>={Stealth[width=4pt, length=6pt]}] (box1) -- (box2);  
\draw[->,>={Stealth[width=4pt, length=6pt]}] (box2) -- (box3);  
\draw[->,>={Stealth[width=4pt, length=6pt]}] (box3) -- (box4);

\end{tikzpicture}
\caption{Outline of \cref{sec:proofof6.1}.}
\end{minipage}
\end{figure}

First, We use the following process to construct a new \LBOintcov\ problem with input $\ldeta = (\ldU; \ldS{1}, \ldS{2}, \dots, \ldS{T}; \Gamma)$, where $\ldU = \bigcup_{1 \leq i \leq T} \ldS{i}$, and for all $1 \leq j \leq T$, $\ldS{j}$ is constructed by the following process (the process for different $j$ are independent):

We use two interval sets, $\intS{tmp}$ and $\ldS{tmp}$, as temporary variables for this conversion process. Initially, we set $\intS{tmp} \gets \intS{j}$ and $\ldS{tmp} \gets \emptyset$. Then we start a loop. At each iteration, we do the following steps:

\begin{enumerate}
\item If $\intS{tmp} = \emptyset$, we set $\ldS{j} = \ldS{tmp}$ and finish the loop.

\item Choose the interval $I$ in $\intS{tmp}$ whose left endpoint is the leftmost, and among those, select the one whose right endpoint is the rightmost.

\item Add the chosen interval $I$ to $\ldS{tmp}$. i.e., $\ldS{tmp} \gets \ldS{tmp} \cup \{I\}$.

\item Remove the intersection part from $\intS{tmp}$. i.e., $\intS{tmp} \gets \{J \setminus I : J \in \intS{tmp}, J \not\subseteq I\}$.
\end{enumerate}

\begin{lemma}\label{lem:inter-convert-1}
The \LBOintcov\ problem with input $\ldeta = (\ldU; \ldS{1}, \dots, \ldS{T}; \Gamma)$, constructed from $\inteta$ by the above process, satisfies the following properties:

\begin{enumerate}[label=(\alph*), format=\normalfont]

\item $\ldeta$ is locally disjoint. i.e., $\forall 1 \leq j \leq T$, $I, J \in \ldS{j}$, $I \cap J = \emptyset$.

\item For any $c \geq 1$, if there exists a $c$-valid solution for the problem with input $\ldeta$, then there exists a $c$-valid solution for the problem with input $\inteta$.

\item For any $c_0 \geq 1$, if there exists a $c_0$-valid solution for the problem with input $\inteta$, then there exists a $2c_0$-valid solution for the problem with input $\ldeta$.
\end{enumerate}
\end{lemma}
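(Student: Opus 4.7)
The plan is to establish the three parts using structural invariants of the greedy construction, with part (c) as the main content.

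For part (a), at each iteration the picked interval $I$ is lex-smallest (leftmost left endpoint, then rightmost right endpoint), so every surviving $J \in \intS{tmp}$ satisfies $J \not\subseteq I$ together with left endpoint strictly greater than $I$'s; one checks case-by-case that $J \setminus I$ is then a single interval with infimum at least $I$'s right endpoint, hence disjoint from $I$. Inductively, every subsequently picked interval lies strictly to the right of $I$, so $\ldS{j}$ is pairwise disjoint. For part (b), I associate to each $I' \in \ldS{j}$ its unique \emph{origin} $\phi(I') \in \intS{j}$, namely the original interval from which $I'$ descends via successive left-truncations; this map is injective (each original is processed at most once) and satisfies $I' \subseteq \phi(I')$. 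Given a $c$-valid $D' \subseteq \ldU$ for $\ldeta$, the set $D := \bigcup_j \phi(D' \cap \ldS{j})$ satisfies $|D \cap \intS{j}| \le |D' \cap \ldS{j}| \le c \cdot 2^j$ and $\bigcup_{I \in D} I \supseteq \bigcup_{I' \in D'} I' \supseteq \Gamma$, giving a $c$-valid solution for $\inteta$.

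Part (c) reduces to the geometric claim that \emph{each original $I \in \intS{j}$ is overlapped by at most two intervals of $\ldS{j}$}. Granting this, for each $I \in D \cap \intS{j}$ I include those (at most two) overlapping intervals in $D'_j$; then $|D'_j| \le 2 |D \cap \intS{j}| \le 2 c_0 \cdot 2^j$ and $\bigcup_j D'_j \supseteq \bigcup_{I \in D} I \supseteq \Gamma$. To prove the claim, suppose for contradiction there are intervals $I'_1 < I'_2 < \cdots < I'_k$ of $\ldS{j}$ all overlapping $I$ with $k \ge 3$. The lex-picking rule forces the $I'_t$'s to be picked in left-to-right order, and using that $I'_t$'s right endpoint is strictly less than $I$'s right endpoint for $t \le k-1$ (because $I'_t$ lies strictly to the left of $I'_k$, which itself overlaps $I$'s right end), neither $I \subseteq I'_t$ for $t < k$ nor any non-overlapping pick can remove $I$ before $I'_2$ is processed. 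Just before $I'_2$'s pick, the earlier truncation by $I'_1$ has pushed $I$'s current left endpoint to $I'_1$'s right endpoint, which by pairwise disjointness coincides with $I'_2$'s left endpoint; so $I$ and $I'_2$ tie on current left endpoint. The tiebreak then compares right endpoints, but $I'_2$ is a middle interval with $I'_2 \subsetneq I$, so its right endpoint is strictly smaller than $I$'s; hence $I$, not $I'_2$, would have been picked---a contradiction.

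The main obstacle I anticipate is the careful handling of boundary-point conventions: the operation $J \setminus I$ naturally produces half-open intervals, and the critical equality between $I'_1$'s right endpoint and $I'_2$'s left endpoint, as well as the strict right-endpoint comparison in the tiebreak, require a consistent identification of the infima and suprema of the interval objects with the values used by the lex-picking rule. I expect these issues to be purely cosmetic, but they must be worked through under a unified convention so that all inequalities fall on the correct side and degenerate configurations such as single-point intersections are either explicitly ruled out or handled by a trivial direct one-interval covering.
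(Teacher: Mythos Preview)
Your proposal is correct and takes essentially the same approach as the paper: parts (a) and (b) match directly, and for (c) both you and the paper reduce to the same key claim that each $I \in \intS{j}$ is overlapped by at most two intervals of $\ldS{j}$. The paper argues this claim in the direct (non-contradiction) form of your argument: if the first overlapping pick $J$ does not cover $I$, then the \emph{immediately next} pick covers $I \setminus J$, since $I$'s truncated form now has the leftmost possible left endpoint and forces the right-endpoint tiebreak.

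One small point to tighten in your write-up: the equality $\mathrm{right}(I'_1) = \mathrm{left}(I'_2)$ does not follow from pairwise disjointness alone (that gives only $\ge$). The reverse inequality uses that $I$'s truncated form, with left endpoint $\mathrm{right}(I'_1)$, is still present in $\intS{tmp}$ at the moment $I'_2$ is picked; this same observation also rules out any intermediate pick between $I'_1$ and $I'_2$ (such a pick would tie $I$'s truncated left endpoint and then dominate on the right, hence overlap $I$). With that filled in, your boundary-point concerns are indeed purely cosmetic.
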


\begin{proof}

Statement (a) is straightforward. 

To demonstrate statement (b), consider any $1 \leq j \leq T$ and $I' \in \ldS{j}$. There exists a corresponding element $I \in \intS{j}$ such that $I' \subseteq I$. Thus, if we have a $c$-valid solution $\ldD_b$ for the input $\ldeta$, we can construct a solution $\intD_b$ by replacing each $I' \in \ldS{j}$ with the associated $I \in \intS{j}$. Then, $\intD_b$ is also $c$-valid.

To prove statement (c), for any $1 \leq j \leq T$ and $I \in \intS{j}$, whenever we add an interval $J$ to $\ldS{tmp}$ that intersects with $I$, $J$ either covers $I$, or in the next round, an interval that covers $J \setminus I$ is added. Therefore, $I$ intersects with at most two intervals in $\ldS{j}$. Assume we have a $c_0$-valid solution $\intD_c$ for the \LBOintcov\ with input $\inteta$. We can construct a solution $\ldD_c$ by replacing each $I \in \intD_c$ with the associated two intervals. Therefore, 
$$
|\ldD_c \cap \ldS{j}| \leq 2 |\intD_c \cap \intS{j}| \leq 2c_0 \cdot 2^j \text{ for all } 1 \leq j \leq T.
$$
\end{proof}

Next, we use the new input tuple $\ldeta$ to construct another \LBOintcov\ problem with input tuple $\lameta = (\lamU; \lamS{1}, \lamS{2}, \ldots, \lamS{T}; \Gamma)$, where $\lamU = \bigcup_{1 \leq j \leq T} \lamS{j}$, and $\lamS{1}, \lamS{2}, \ldots, \lamS{T}$ are constructed by the following process:

We use a series of interval sets $S'_1, S'_2, \dots, S'_T$ as temporary variables for this conversion process. Initially, we set $S'_j \gets \emptyset$ for all $1 \leq j \leq T$. 
Then, we begin a loop to sequentially enumerate elements from $\ldS{1}, \ldS{2}, \dots, \ldS{T}$. In the $i$th loop
for $1\leq i\leq T$, we sequentially consider all intervals in $\ldS{i}$. This means that when we enumerate $I \in \ldS{k+1}$, all intervals $J \in \ldS{j}$ for $j \leq k$ have already been enumerated. 
During this loop, we ensure that $\calU' = \bigcup_{1 \leq j \leq T} S'_j$ always forms a laminar family. When we reach $I \in \ldS{k+1}$, we perform the following steps:

\begin{enumerate}
\item For all $t \leq k$ and $J \in S'_t$, if $J \subseteq I$, we delete $J$ from $S'_t$ (i.e., $S'_t \gets S'_t \setminus \{J\}$). We say that interval $I$ \emph{removes} interval $J$ at this iteration.

\item We select intervals from $S'_1, S'_2, \dots, S'_k$ such that the selected interval $J$ has exactly one endpoint in $I$ (i.e., $J \cap I \neq \emptyset$, $I \setminus J \neq \emptyset$, and $J \setminus I \neq \emptyset$, also each of them doesn't only consist of one point). Denote this set as $M$, and split it into two parts: $M_L$ is the part where the left endpoint is in $I$, and $M_R$ is the part where the right endpoint is in $I$.

\item Choose the leftmost left endpoint in $M_L$. Denote $A$ as the interval from this endpoint to the right endpoint of $I$ (if $M_L = \emptyset$, then $A = \emptyset$). 
For each $j = 1, 2, \ldots, k$ and $J \in M \cap S'_j$, 
if $J \in M_L$, we add the interval $A$ (i.e., $S'_j \gets (S'_j \setminus \{J\}) \cup \{J \cup A\}$).
If $J \in M_R$, we add the interval $I\setminus A$. (i.e., $S'_j \gets (S'_j \setminus \{J\}) \cup \{J \cup (I\setminus A)\}$).
We say that interval $I$ \emph{extends} interval $J$ at this iteration.

\item By the fact that before this iteration $\bigcup_{1 \leq j \leq T} S'_j$ forms a laminar family, $A,B$ don't have common interior points. So we split $I$ into three parts: $A$, $B$, and $(I \setminus A) \setminus B$. Add the non-empty parts to $S'_{k+1}$.
\end{enumerate}

It is easy to prove that during the process, $\bigcup_{1 \leq j \leq T} S'_j$ is always a laminar family. So the process can finish.
After the loop finishes, we set $\lamS{j} \gets S'_j$ for all $1 \leq j \leq T$.

\begin{lemma}\label{lem:inter-convert-2}
The \LBOintcov\ problem with input $\lameta = (\lamU; \lamS{1}, \ldots, \lamS{T}; \Gamma)$, constructed from $\ldeta$ by the above process, satisfies the following properties:

\begin{enumerate}[label=(\alph*), format=\normalfont]
\item $\lameta$ satisfies the laminar family constraint. i.e., (i) $\lamU$ forms a laminar family, and (ii) $\forall I, J \in \lamU$, if $I \subseteq J,I\neq J$, then $\Lev(\lameta; I) < \Lev(\lameta; J)$.

\item For any $c \geq 1$, if there exists a $c$-valid solution for the problem with input $\inteta$, then there exists a $3c$-valid solution for $\ldeta$. 

\item For any $c_0 \geq 1$, if there exists a $c_0$-valid solution for the problem with input $\ldeta$, then there exists a $4c_0$-valid solution for $\inteta$.
\end{enumerate}
\end{lemma}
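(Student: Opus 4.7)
For part (a), I plan to proceed by induction on the iterations of the construction, maintaining as invariant that $\bigcup_j S'_j$ always forms a laminar family satisfying the containment--level condition. The base case is trivial, and the inductive step centers on the iteration that processes a new interval $I\in\ldS{k+1}$. The key geometric observation is that, by the laminar invariant prior to this iteration, the leftmost left endpoint $p$ of any interval in $M_L$ strictly exceeds the rightmost right endpoint $q$ of any interval in $M_R$; otherwise, a pair of intervals in $M_L\cup M_R$ would cross and contradict the invariant. This guarantees that the three pieces $A=[p,r_I]$, $B=[l_I,q]$, and the middle $(I\setminus A)\setminus B=[q,p]$ are pairwise interior-disjoint and tile $I$. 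A case analysis over pairs of intervals in the resulting collection---pieces of $I$ against pieces of $I$, pieces of $I$ against extended elements of $M$, and new pieces against previously-untouched intervals---then verifies that the laminar structure is preserved. The level condition follows since the three new pieces are inserted at level $k+1$ while any ambient interval properly containing them lies in some $S'_t$ with $t\le k$.

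For part (b), given a $c$-valid $\lamD$ for $\lameta$, I would define a map $\Phi:\lamU\to 2^{\ldU}$ with $|\Phi(I')|\le 3$ and $I'\subseteq\bigcup\Phi(I')$, then take $\ldD=\bigcup_{I'\in\lamD}\Phi(I')$. Each $I'\in\lamS{j}$ descends from a unique ``root'' interval $I\in\ldS{j}$ (the $\ldeta$-interval one of whose split pieces seeded $I'$) together with at most two ``extension sources'' of strictly larger levels whose subsequent processings extended $I'$'s endpoints outward; I would take $\Phi(I')$ to be the root plus these extension sources. To bound $|\ldD\cap\ldS{j}|$, I would observe that an interval $I\in\ldS{j}$ enters $\ldD$ only if either it is the root of some $I'\in\lamD\cap\lamS{j}$ (contributing at most $|\lamD\cap\lamS{j}|\le c\cdot 2^j$) or it is an extension source for some $I'\in\lamD\cap\lamS{t}$ with $t<j$ (contributing at most $2\sum_{t<j}|\lamD\cap\lamS{t}|\le 2c\cdot 2^j$). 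Summing gives $|\ldD\cap\ldS{j}|\le 3c\cdot 2^j$, and $\ldD$ covers $\Gamma$ because each $\Phi(I')$ covers its $I'$.

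For part (c), given a $c_0$-valid $\ldD$, I would build $\lamD$ by including, for each $I\in\ldD\cap\ldS{j}$, all currently surviving pieces of $I$ in $\lamS{j}$---at most three per $I$---together with one additional interval of $\lamS{j}$ to handle the corner case in which some piece of $I$ was removed by a higher-level container whose own representatives do not lie in $\ldD$. This gives $|\lamD\cap\lamS{j}|\le 4c_0\cdot 2^j$, and coverage of $\Gamma$ follows from the fact that the surviving pieces (possibly augmented by the compensating piece) reconstruct the region covered by $I$ in $\ldD$. The main obstacle I anticipate is this coverage verification in part (c): tracking how extensions dynamically reshape, merge, or absorb pieces across many iterations around the ``wall'' points $p$ introduced by the construction, and rigorously confirming that a single compensating interval per $I$ always suffices to cover every pattern of removals and extensions, will require a delicate case analysis.
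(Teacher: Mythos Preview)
Your approach to part (a) by induction on iterations mirrors the paper's. (A minor note: the paper's construction, as clarified by its figure and its proof of (c), actually splits each processing interval into at most two pieces $A$ and $I\setminus A$, not three; the mention of ``$B$'' in step 4 appears to be a leftover typo. Your three-piece variant would also preserve laminarity, so this does not affect (a).)

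For part (b), your overall strategy and per-level counting match the paper's, but there is an internal inconsistency in your sketch. The claim $|\Phi(I')|\le 3$ is false: an interval $I'\in\lamS{t}$ may be extended by up to two distinct intervals from \emph{each} level $k>t$ (one per side, using local disjointness of $\ldS{k}$), so $\Phi(I')$ can have size $1+2(T-t)$ in the worst case. You genuinely need all of them for coverage: if the left endpoint of $I'$ was pulled leftward successively by $J_1,J_2,\dots$ at increasing levels, then the root together with only the last extender can leave an uncovered gap between $r_{J_{\text{last}}}$ and the original piece. Fortunately, your per-level counting does not actually use $|\Phi(I')|\le 3$; it only uses that each $I'$ at level $t<j$ has at most two extension sources in $\ldS{j}$, which is correct. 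Dropping the size-$3$ claim and keeping that per-level count recovers exactly the paper's argument (their set $\ldR_{I'}$).

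For part (c), your plan has a real gap. Restricting to $\lamS{j}$ for $I\in\ldS{j}$ cannot work: when a piece $I'\subseteq I$ is \emph{removed} because some later processing interval $J$ at level $>j$ contains it, no element of $\lamS{j}$ need cover $I'$ --- the other level-$j$ intervals are disjoint from $I'$ by the maintained local-disjointness of $S'_j$, and their subsequent extensions need not bridge that region. A single ``compensating'' interval at level $j$ does not suffice in general. The paper instead tracks a set $\lamR_I$ dynamically through the forward simulation: initialize it with the (at most two) pieces of $I$; whenever a member is removed by some $J$, replace it by the piece of $J$ that contains it (thus migrating to a higher level); whenever a member is extended, replace it by the extended interval (same level). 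This keeps $|\lamR_I|\le 2$ throughout, keeps all members at levels $\ge j$, and maintains $I\subseteq\bigcup\lamR_I$ at every step; the bound $|\lamD\cap\lamS{j}|\le\sum_{k\le j}2\,|\ldD\cap\ldS{k}|\le 4c_0\cdot 2^j$ then follows directly.
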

\begin{proof}
During the loop, we denote $\calU' = \bigcup_{1 \leq j \leq T} S'_j$, and $\eta' = (\calU'; S'_1, \ldots, S'_T; \Gamma)$ as temporary variables for the conversion process.

\emph{To prove statement (a)}, note that during the loop, we maintain $S_k'$ is always \emph{locally disjoint} for any $1\leq k\leq T$. Thus, based on our algorithm, it is easy to conclude statement (a).

\emph{To prove statement (b)}, assume we have a $c$-valid solution $\lamD_c$ for the input $\lameta$. We construct $\ldR_{I'}$ for each $I' \in \lamD_c$ and then construct the solution $\ldD_c$ as the union of all $\ldR_{I'}$. For each $I' \in \lamD_c$, we initialize $\ldR_{I'} \gets \emptyset$ and use a temporary variable $I_{tmp} \gets I'$ initially. We then simulate the previous loop in reverse order. For an iteration with $1 \leq k \leq T$ and interval $J \in \ldS{k}$, we consider the following cases:

\begin{enumerate}[label=(\alph*), format=\normalfont]
    \item If $J$ extends some interval $I'_{tmp}$ to $I_{tmp}$, we set $I_{tmp} \gets I'_{tmp}$ and add $J$ to $\ldR_{I'}$ (i.e., $\ldR_{I'} \gets \ldR_{I'} \cup \{J\}$). 
    Since $\ldeta$ is locally disjoint, this case can occur at most twice for each $k > j$.  
    \item If $I_{tmp}$ is one of the no more than two parts of $J$, then add $J$ to $\ldR_{I'}$.
\end{enumerate}

After this simulation finishes, $\ldR_{I'}$ must be a subset of $\ldU$ and cover $\Gamma$. We set 

$$\ldD_c = \bigcup_{I' \in \lamD_c} \ldR_{I'}.$$

Recall that $I' \in \lamS{j}$. Notice that the group indices of intervals in $\ldR_{I'}$ must be at least the group index of $I'$. Also, we can observe that $|\ldR_{I'} \cap \ldS{j}| = 1$, and for each $k > j$, $|\ldR_{I'} \cap \ldS{k}| \leq 2$. Thus, for each $k = 1, 2, \dots, T$, 

$$|\ldD_c \cap \ldS{k}| \leq \sum_{j \leq k} \sum_{I' \in \lamD_c \cap \lamS{j}} |\ldR_{I'} \cap \ldS{k}| \leq c \cdot 2^k + \sum_{j < k} c \cdot 2^j \cdot 2 \leq 3c \cdot 2^k.$$

This means that $\ldD_c$ is a $3c$-valid solution.

\emph{To prove statement (c)}, assume we have a $c_0$-valid solution $\ldD_b$ for $\ldeta$. For each $I \in \ldD_b$, we construct an interval set $\lamR_I \subseteq \lamU$ such that $I \subseteq \bigcup_{J \in \lamR_I} J$. 
We then construct $\lamD_b = \bigcup_{I \in \ldD_b} \lamR_I$. Clearly, $\lamD_b$ must cover $\Gamma$. Initially, we set $\lamR_I \gets \emptyset$ and then simulate the previous conversion process:    
\begin{enumerate}
\item When we enumerate $k$ and $I \in \ldS{k}$, we split $I$ into at most two parts and add these parts to $\lamR_I$.

\item For a sequential iteration with interval $J \in \ldU$, we check the changes for each element in $\lamR_I$:
\begin{itemize}
    \item If $J$ removes $I' \in \lamR_I$, $J$ is split into no more than two parts, and only one part contains $I'$ (otherwise $\calU'$ can't be a laminar family), so we replace $I'$ with this part of $J$.
    \item If $J$ extends $I' \in \lamR_I$, then we replace $I'$ with the interval after extension.
\end{itemize}
\end{enumerate}
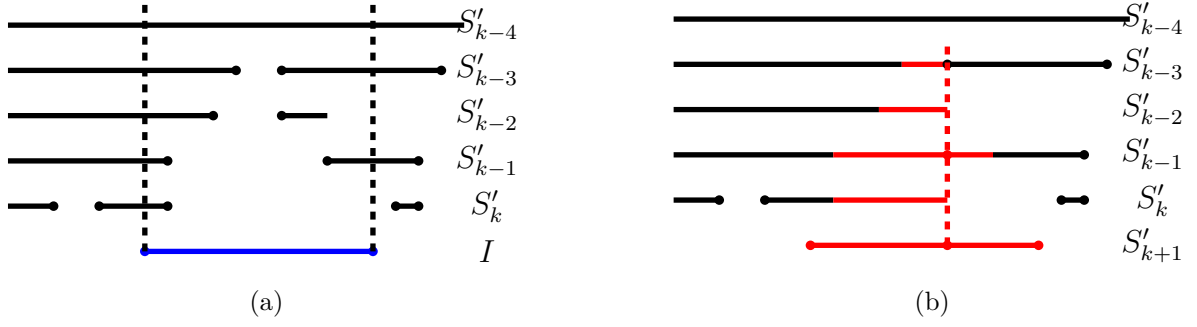
\begin{figure}[t] 
\begin{minipage}{\textwidth}
    \centering  
    \begin{subfigure}{0.45\textwidth}  
        \centering  
\begin{tikzpicture}[scale=0.6]  

\fill[blue] (3,0) circle (3pt);
\fill[blue] (8,0) circle (3pt); 
\draw[thickline,color=blue] (3,0) -- (8,0); 
\draw[thickline,color=black] (0,1) -- (1,1); 
\fill[black] (1,1) circle (3pt);
\draw[thickline,color=black] (8.5,1) -- (9,1);
\fill[black] (8.5,1) circle (3pt);
\fill[black] (9,1) circle (3pt);

\draw[thickline,color=black] (2,1) -- (3.5,1);
\fill[black] (2,1) circle (3pt);
\fill[black] (3.5,1) circle (3pt);

\draw[thickline,color=black] (0,2) -- (3.5,2);
\fill[black] (3.5,2) circle (3pt);
\draw[thickline,color=black] (6,3) -- (7,3);
\fill[black] (6,3) circle (3pt);
\draw[thickline,color=black] (7,2) -- (9,2);
\fill[black] (7,2) circle (3pt);
\fill[black] (9,2) circle (3pt);

\draw[thickline,color=black] (0,3) -- (4.5,3);
\fill[black] (4.5,3) circle (3pt);

\draw[thickline,color=black] (0,4) -- (5,4);
\fill[black] (5,4) circle (3pt);

\draw[thickline,color=black] (6,4) -- (9.5,4);
\fill[black] (6,4) circle (3pt);
\fill[black] (9.5,4) circle (3pt);
\draw[thickline,color=black] (0,5) -- (10,5);

\draw[thickline,dashed] (3,0) -- (3,5.5);
\draw[thickline,dashed] (8,0) -- (8,5.5);
\node at (10.5,0) {\large $I$};
\node at (10.5,1) {\large $S'_{k}$};
\node at (10.5,2) {\large $S'_{k-1}$};
\node at (10.5,3) {\large $S'_{k-2}$};
\node at (10.5,4) {\large $S'_{k-3}$};
\node at (10.5,5) {\large $S'_{k-4}$};
\end{tikzpicture}  

\caption{}  
    \end{subfigure}  
    \hfill 
    \begin{subfigure}{0.45\textwidth}  
        \centering  
\begin{tikzpicture}[scale=0.6]  

\fill[red] (3,0) circle (3pt);
\fill[red] (8,0) circle (3pt); 
\draw[thickline,color=red] (3,0) -- (6,0); 
\draw[thickline,color=red] (6,0) -- (8,0); 
\draw[thickline,color=black] (0,1) -- (1,1); 
\fill[black] (1,1) circle (3pt);
\draw[thickline,color=black] (8.5,1) -- (9,1);
\fill[black] (8.5,1) circle (3pt);
\fill[black] (9,1) circle (3pt);
\draw[thickline,color=black] (2,1) -- (3.5,1);
\fill[black] (2,1) circle (3pt);
\draw[thickline,color=red] (3.5,1)--(6,1);

\draw[thickline,color=black] (0,2) -- (3.5,2);
\draw[thickline,color=red] (3.5,2)--(6,2);
\fill[red] (6,2) circle (3pt);
\draw[thickline,color=black] (7,2) -- (9,2);
\draw[thickline,color=red] (6,2) -- (7,2);
\fill[black] (9,2) circle (3pt);

\draw[thickline,color=black] (0,3) -- (4.5,3);
\draw[thickline,color=red] (4.5,3)-- (6,3);

\draw[thickline,color=black] (0,4) -- (5,4);
\draw[thickline,color=red] (5,4) -- (6,4);

\draw[thickline,color=black] (6,4) -- (9.5,4);
\fill[black] (6,4) circle (3pt);
\fill[black] (9.5,4) circle(3pt);

\draw[thickline,color=black] (0,5) -- (10,5);

\node at (10.5,0) {\large $S'_{k+1}$};
\node at (10.5,1) {\large $S'_{k}$};
\node at (10.5,2) {\large $S'_{k-1}$};
\node at (10.5,3) {\large $S'_{k-2}$};
\node at (10.5,4) {\large $S'_{k-3}$};
\node at (10.5,5) {\large $S'_{k-4}$};

\fill[red] (6,0) circle (3pt);
\draw[thickline,dashed,color=red] (6,0) -- (6,4.5);
\end{tikzpicture}  
        \caption{}  
    \end{subfigure}  
    \caption{An example of adding a new interval. The blue interval represents $I \in \ldS{k+1}$. We remove some intervals, extend some intervals, and split $I$ into at most two parts. }  
    \label{fig:seg-2}

\end{minipage}
\end{figure}  
When the simulation finishes, we set $\lamD_b \gets \bigcup_{I \in \lamU} \lamR_I$, and we observe that $|\lamR_I| \leq 2$. Also, the group indices of intervals in $\lamR_I$ is not smaller than $I$. Therefore, for all $1 \leq j \leq T$, we have:

$$|\lamD_b \cap \lamS{j}| \leq \sum_{k \leq j} \sum_{I \in \ldD_b \cap \ldS{k}} |\lamR_I \cap \lamS{j}| \leq \sum_{k \leq j} c_0 \cdot 2^k \cdot 2 \leq 4c_0 \cdot 2^j.$$

Thus, $\lamD_b$ is a $4c_0$-valid solution.

\end{proof}

Now we change the \LBOintcov\ with laminar family input into a \LBOtreecov\ problem.
For an interval $I\in \lamU$, $I\not =\bigcup_{J\in \lamU, J\subset I,J\not =I} J$. Since $\lamU$ is a laminar family, we must select at least one interval that covers $I$. So we can ignore other intervals contained in $I$.
We define
$\trM = \lamU\setminus \{I \in \lamU: \exists I', I\subseteq I', I' \neq \bigcup_{J \in \lamU, J \subset I', J\not =I'} J\}$.   
Then we construct a \LBOtreecov\ with input $\treta=(\trU; \trS{1}, \trS{2}, \ldots, \trS{T}; G)$, where $G=(V,E,r)$ is a rooted tree $G=(V,E,r)$ constructed from the laminar family $\trM$:

\begin{enumerate}
    \item Define the Vertex Set The vertex set V consists of $|\trM|+1$ nodes. This includes 
    \begin{enumerate}
        \item A root node $r$ 
        \item A node for each set in $\trM$.
    \end{enumerate}
    \item Establish the Parent-Child Relationships. The tree structure is determined by the hierarchical containment in $\trM$, since $\trM$ is laminar (i.e., sets are either disjoint or nested).
    \begin{enumerate}
    \item Choose the Root $r$: Define $r$ as a virtual node representing the universal set that contains all elements considered in $\trM$.
    \item Determine the Parent of Each Node: For each set interval $I\in \trM$, find the smallest interval $I'\in \trM\cup \lbrace \Gamma\rbrace$ (Recall that $\Gamma$ is the universal set) that strictly contains $I$. The node corresponding to $I$ is a child of the node corresponding to $I'$.
    \item Construct the Edge Set $E$: Add an edge from the parent node $I'$ to the child node $I$.
    \end{enumerate}
\end{enumerate}

In addition, we set $\trS{j}$ be the set of corresponding items in $\lamS{j}\cap \trM$ for $j=1,2,\ldots,T$, and $\trU=\bigcup_{1\le j\le T} \trS{j}$.
 \begin{figure}[t]
\begin{minipage}{\textwidth}
  
    \centering  
    \begin{subfigure}[t]{0.4\textwidth}  
        \centering  
\begin{tikzpicture}[scale=0.5]

\draw[thickline,color=red] (1,4) -- (6,4);
\draw[thickline,color=black] (6,4) -- (10,4);

\draw[thickline,color=black] (1,3) -- (5,3); 
\draw[thickline,color=red] (5,3) -- (9,3); 
\draw[thickline,color=red] (9,2) -- (10,2);
\draw[thickline,color=black] (1,2) -- (5,2);
\draw[thickline,color=black] (6,2) -- (7,2);

\draw[thickline,color=black] (3,1) -- (4,1);
\draw[thickline,color=red] (8,1) -- (9,1);
\fill[red] (1,4) circle (3pt);
\fill[black] (10,4) circle (3pt);
\fill[black] (6,4) circle (3pt); 
\fill[black] (1,3) circle (3pt);
\fill[black] (5,3) circle (3pt);
\fill[red] (6,3) circle (3pt);
\fill[red] (9,3) circle (3pt);
\fill[black] (1,2) circle (3pt);
\fill[black] (2,2) circle (3pt);
\fill[black] (5,2) circle (3pt);
\fill[black] (6,2) circle (3pt);
\fill[black] (7,2) circle (3pt);
\fill[red] (10,2) circle (3pt);
\fill[black] (3,1) circle (3pt);
\fill[black] (4,1) circle (3pt);
\fill[red] (8,1) circle (3pt);
\fill[red] (9,1) circle (3pt);
\node at (11,1) {\large $\lamS{4}$};
\node at (11,2) {\large $\lamS{3}$};
\node at (11,3) {\large $\lamS{2}$};
\node at (11,4) {\large $\lamS{1}$};
\end{tikzpicture}  

\caption{The Interval Cover Problem with A Laminar Family Input}  
    \end{subfigure}  
    \hfill 
    \begin{subfigure}[t]{0.45\textwidth}  
        \centering  
\begin{tikzpicture}[scale=0.5]  
\draw[thickline,color=red] (1,4) -- (6,4);
\draw[thickline,color=black] (6,4) -- (10,4);

\draw[thickline,color=black] (1,3) -- (5,3); 
\draw[thickline,color=red] (5,3) -- (9,3); 
\draw[thickline,color=black] (1,2) -- (5,2);

\fill[red] (1,4) circle (3pt);
\fill[black] (10,4) circle (3pt);
\fill[black] (6,4) circle (3pt); 
\fill[black] (1,3) circle (3pt);
\fill[black] (5,3) circle (3pt);
\fill[red] (6,3) circle (3pt);
\fill[red] (9,3) circle (3pt);
\fill[black] (1,2) circle (3pt);
\fill[black] (2,2) circle (3pt);
\fill[black] (5,2) circle (3pt);


\node at (12,1) {\large $\lamS{4}\cap \trM$};
\node at (12,2) {\large $\lamS{3}\cap \trM$};
\node at (12,3) {\large $\lamS{2}\cap \trM$};
\node at (12,4) {\large $\lamS{1}\cap \trM$};
\end{tikzpicture}  
        \caption{Laminar Family $\trM$}  
    \end{subfigure}  
    \vspace{0.5cm} 
    \begin{subfigure}{0.45\textwidth}  
        \centering  
\begin{tikzpicture}[scale=0.5]  

\draw[thickline,color=black] (3,4) -- (2.5,3);
\draw[thickline,color=black] (3,4) -- (4,3);
\draw[thickline,color=black] (6.5,4) -- (6,3);
\draw[thickline,color=black] (6.5,4) -- (8,2);
\draw[thickline,color=black] (2.5,3) -- (2,2);
\draw[thickline,color=black] (2.5,3) -- (3.5,2);
\draw[thickline,color=black] (5,5) -- (3,4);
\draw[thickline,color=black] (5,5) -- (6.5,4);

\fill[red] (3,4) circle (5pt);
\fill[black] (6.5,4) circle (5pt);
\fill[black] (2.5,3) circle (5pt);
\fill[red] (4,3) circle (5pt);
\fill[red] (6,3) circle (5pt);
\fill[black] (2,2) circle (5pt);
\fill[black] (3.5,2) circle (5pt);
\fill[red] (8,2) circle (5pt);
\fill[black] (5,5) circle (5pt);
\node at (10,1) {\large $\trS{4}$};
\node at (10,2) {\large $\trS{3}$};
\node at (10,3) {\large $\trS{2}$};
\node at (10,4) {\large $\trS{1}$};
\node at (10,5) {\large $r$};
\end{tikzpicture}  
        \caption{The Corresponding Tree Cover Problem}  
    \end{subfigure}  
    \caption{An example of converting $\lameta$ to a \LBOtreecov\ problem. The red intervals or nodes represent a solution.}

\end{minipage}
\end{figure}
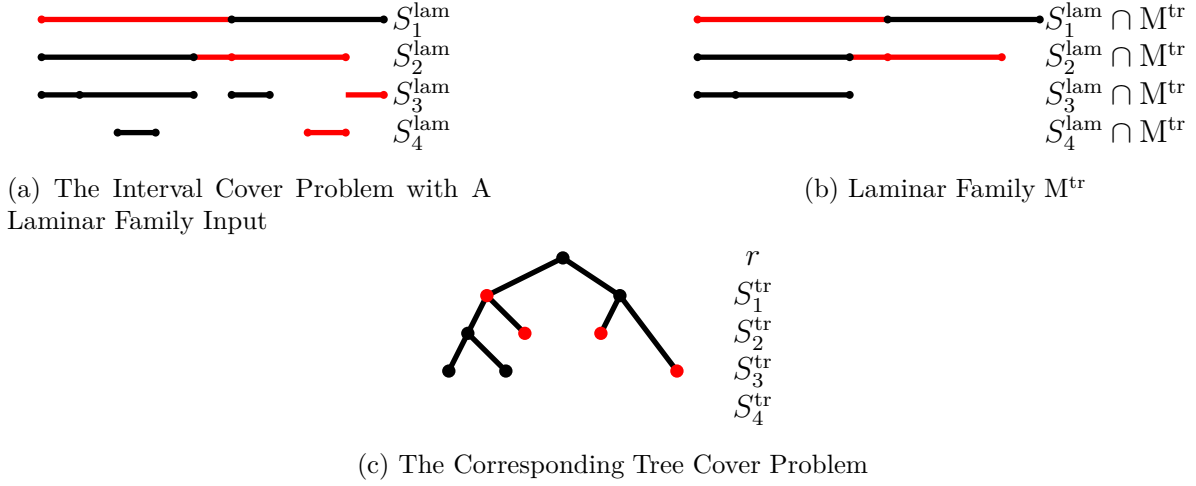 

\begin{lemma}\label{lem:inter-convert-3}
The tuple $\treta = (\trU; \trS{1}, \trS{2}, \ldots, \trS{T}; G)$, constructed by the above process, satisfies the following properties:

\begin{enumerate}[label=(\alph*), format=\normalfont]
\item $\Lev(u) > \Lev(\Par(u))$ for all $u \in \trU$.

\item For any $c \geq 1$, there exists a $c$-valid solution for the \LBOintcov\ problem with input $\lameta$ if and only if there exists a $c$-valid solution for the \LBOtreecov\ with input $\treta$.
\end{enumerate}
\end{lemma}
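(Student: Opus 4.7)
The plan is to prove (a) by direct appeal to the laminar-family constraint on $\lameta$ (established in Lemma~\ref{lem:inter-convert-2}(a)), and (b) by constructing maps between valid solutions of the two problems, bridged by a structural lemma about $G$.

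For (a): Each node $u \in \trU$ corresponds to some interval $I(u) \in \trM \subseteq \lamU$. If $\Par(u) = r$, then $\Lev(\Par(u)) = 0 < 1 \le \Lev(u)$. Otherwise $\Par(u)$ corresponds to the smallest interval in $\trM$ strictly containing $I(u)$, so $I(u) \subsetneq I(\Par(u))$ in $\lamU$. By the laminar-family constraint on $\lameta$, the group index strictly decreases as the interval grows, so $\Lev(u) > \Lev(\Par(u))$.

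The central structural lemma for (b) is: \emph{the leaves of $G$ cover $\Gamma$.} I will establish two facts. First, a node $u$ is a leaf iff $I(u)$ is a ``bottleneck'' in the sense that $I(u) \neq \bigcup_{J \in \lamU, J \subsetneq I(u)} J$ (one direction follows because any strict subset of a bottleneck lies outside $\trM$ by definition; the other uses that $I(u) \in \trM$ means no strict $\lamU$-superset of $I(u)$ is a bottleneck). Second, for any non-leaf $u$, the children of $u$ in $G$ cover $I(u)$: since $I(u)$ is not a bottleneck, $I(u) = \bigcup_{J \in \lamU, J \subsetneq I(u)} J$, and for each such $J$ I will argue $J$ lies under some child of $u$. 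If $J \in \trM$ this is direct. If $J \notin \trM$, I take the maximal bottleneck $K^* \in \lamU$ with $J \subsetneq K^*$; laminarity plus $I(u) \in \trM$ force $K^* \subsetneq I(u)$, and a careful argument combining the maximality of $K^*$ and $I(u) \in \trM$ shows $K^* \in \trM$, so $J \subseteq K^*$ lies under a child. Combined with the observation that the children of $r$ are the maximal elements of $\trM$ (equivalently, of $\lamU$), which must union to $\Gamma$ by feasibility, a downward induction on the tree yields that leaves cover $\Gamma$.

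With the structural lemma in hand, (b) proceeds as follows. For $(\Rightarrow)$, given a $c$-valid $D \subseteq \lamU$ covering $\Gamma$, I first prune $D$ to lie in $\trM$ while preserving $c$-validity: for each $I \in D \setminus \trM$, there is a bottleneck $I' \supsetneq I$ in $\lamU$; since $D$ covers $I'$ and $I'$ is not the union of its proper $\lamU$-subsets, some $J \in D$ has $J \supseteq I'$, making $I$ redundant. Then set $D_t := \{v \in \trU : I(v) \in D\}$; budgets transfer trivially, and for each leaf $v$, the laminar property combined with $D \subseteq \trM$ and the leaf characterization forces $I(v) \subseteq J$ for some $J \in D$, placing $v \in \Des(D_t)$. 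For $(\Leftarrow)$, given $c$-valid $D_t$, set $D := \{I(v) : v \in D_t\}$; budgets again transfer, and feasibility follows from the structural lemma since each $p \in \Gamma$ lies in some leaf $L = I(v)$, which lies below some $u \in D_t$, whence $p \in I(u) \in D$.

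The main obstacle is the case-2 analysis in the structural lemma, specifically showing that the maximal bottleneck $K^*$ lying strictly inside $I(u)$ belongs to $\trM$. This requires delicately combining $I(u) \in \trM$ (which rules out bottlenecks strictly containing $K^*$ that are also supersets of $I(u)$) with the maximality of $K^*$ (which rules out bottlenecks strictly between $K^*$ and $I(u)$), using the laminar structure to ensure these two cases exhaust all possibilities.
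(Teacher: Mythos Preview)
Your proposal is correct and follows essentially the same route as the paper's proof, which tersely asserts that covering $\Leaf(G)$ is equivalent to covering $\Gamma$ and that solutions transfer without loss; you have supplied the details behind those assertions, most notably the leaf--bottleneck characterization and the structural lemma that the leaves of $G$ cover $\Gamma$. One small refinement for the $(\Rightarrow)$ pruning step: the witness $J \in D$ with $J \supseteq I'$ may itself lie outside $\trM$, so it is cleaner to argue that for each point $p \in \Gamma$ the \emph{maximal} element of $D$ containing $p$ must lie in $\trM$ (otherwise a strictly larger element of $D$ would cover the bottleneck above it), which immediately shows $D \cap \trM$ still covers $\Gamma$.
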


\begin{proof}
- \cref{lem:inter-convert-2} (a) and the above process clearly imply statement (a).

- To establish statement (b), note that $\Leaf(G) \subseteq \trM$, and covering $\Leaf(G)$ is equivalent to covering $\Gamma$ for the \LBOintcov\ problem. Moreover, solutions can easily be converted between the two problems without any constant-factor loss.    
\end{proof}

\begin{proofofthm}{\ref{thm:inter-to-tree}}
This theorem follows directly from \cref{lem:inter-convert-1}, \cref{lem:inter-convert-2}, and \cref{lem:inter-convert-3}.
\end{proofofthm}

\subsection{Proofs Omitted in \cref{sec:Inter-Cover-2}}
\label{sec:proofof6.2}
For each $u \in \trU$, recall that the first type of cost $C_1(u) = \frac{1}{2^{\Lev(u)}}$ and the second type of cost

$$C_2(u)=\left\lbrace \begin{aligned}
&C_1(u) && & \text{if $u\in \Leaf(G)$}\\
&\min \lbrace C_1(u),\sum_{v\in \Ch(u)} C_2(v)\rbrace&& & \text{if $u\not \in \Leaf(G)$}\\
\end{aligned}\right.$$

If $C_2(u) \neq C_1(u)$, we can find a series of descendants of $u$ such that the sum of their $C_1$ values equals $C_2(u)$. We define the \textbf{replacement set} $R(u)$ of a node $u$ as follows:

$$R(u)=\left\lbrace \begin{aligned}
&\lbrace u\rbrace&& & C_2(u)=C_1(u)\\
&\cup_{v\in \Ch(u)} R(v)&& & C_2(u)\not =C_1(u)\\
\end{aligned}\right.$$

\begin{lemma}\label{lem:tree-1}
We consider a \LBOtreecov\ instance with input $\treta=(\trU;\trS{1},\ldots,\trS{T};G)$, $n=|\trU|$, $T=\ceil{\log n}$, and $T_0=\floor{\log\log\log n}$. Without loss of generality, we assume $n>10$. Then, for any constant $c\ge 1$, we have the following properties:
\begin{enumerate}[label=(\alph*), format=\normalfont]
\item For any $D\subseteq \trU$, if $\sum_{u\in D} C_1(u)\leq c$, then $|D \cap \trS{i}| \leq c \times 2^i$ for all $1 \leq i \leq T$.

\item For any $c$-valid solution $D^{*}$, $$\sum_{u \in D^*} C_2(u) \leq \sum_{u \in D^*} C_1(u) \leq c \cdot T.$$

\item For any $c$-valid solution $D^{*}$, $$\sum_{u \in D^{*}, \Lev(u)\le T_0, C_2(u) \leq \frac{1}{\log n}} C_2(u) \leq c.$$

\item For any $u\in \trU$ with $\Lev(u)\le T_0$, if $C_2(u) \leq \frac{1}{\log n}$, then $C_2(u) \neq C_1(u)$. 

\item For any $u\in \trU$, if a set $R'\subseteq \trU$ covers all leaves in $\Des(u)$, then 
$$\sum_{v \in R'} C_2(v) \geq C_2(u) = \sum_{v \in R(u)} C_2(v).$$
\end{enumerate}
\end{lemma}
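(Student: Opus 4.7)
The plan is to dispatch parts (a)--(d) by short direct calculations that rely on $C_1(u) = 1/2^{\Lev(u)}$ and the inequality $C_2(u) \le C_1(u)$, and to handle (e) by a structural induction combined with a monotonicity property of $C_2$ along ancestor chains.

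For (a), I would observe that $\sum_{u \in D \cap \trS{i}} C_1(u) = |D \cap \trS{i}|/2^i$, since every node at level $i$ has the same $C_1$ value; the hypothesis $\sum_{u \in D} C_1(u) \le c$ then upper-bounds this level-$i$ contribution by $c$ and rearranges to the claim. Part (b) aggregates the same identity across all levels: $c$-validity gives $\sum_{u \in D^*} C_1(u) = \sum_{i=1}^T |D^* \cap \trS{i}|/2^i \le cT$, and $C_2 \le C_1$ is immediate from the defining minimum. For (d), any $u$ with $\Lev(u) \le T_0$ satisfies $C_1(u) \ge 1/2^{T_0} \ge 1/\log\log n > 1/\log n$ (using $n > 10$), so $C_2(u) \le 1/\log n$ forces $C_2(u) < C_1(u)$. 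For (c), $c$-validity bounds the size of the index set by $\sum_{i=1}^{T_0} c\cdot 2^i \le 2c\log\log n$ and each summand is at most $1/\log n$; the resulting bound $2c\log\log n/\log n$ is at most $c$ as soon as $\log n \ge 2\log\log n$, and the edge case in which $n$ is small enough that $T_0 = 0$ is trivial since $\Lev(u) \ge 1$ for all $u \in \trU$, making the index set empty.

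The main task is (e), which I would split into two sub-claims. The identity $C_2(u) = \sum_{v \in R(u)} C_2(v)$ follows by structural induction on $u$: when $C_2(u) = C_1(u)$, $R(u) = \{u\}$ makes the claim a tautology; otherwise the disjoint union $R(u) = \bigcup_{v \in \Ch(u)} R(v)$ combined with $C_2(u) = \sum_{v \in \Ch(u)} C_2(v)$ and the inductive hypothesis on each child closes it. For the inequality $\sum_{v \in R'} C_2(v) \ge C_2(u)$, I would proceed in two stages. First, I would establish the monotonicity lemma $C_2(w) \ge C_2(u)$ for every ancestor $w$ of $u$, via a short induction along the ancestor chain: both arguments of the defining minimum for $C_2(w)$ are bounded below by $C_2(u)$, since $C_1(w) > C_1(u) \ge C_2(u)$ by $\Lev(w) < \Lev(u)$, and $\sum_{v' \in \Ch(w)} C_2(v') \ge C_2(v) \ge C_2(u)$ where $v$ is the child of $w$ on the path to $u$. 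Using this, any proper ancestor in $R'$ can be replaced by $u$ without increasing $\sum C_2$, reducing to the case $R' \subseteq \Des(u)$. A second structural induction on the subtree rooted at $u$ then finishes the proof: if $u \in R'$ the bound is immediate, and otherwise $R'$ partitions disjointly across the subtrees of the children of $u$ (since the leaves of $\Des(u)$ partition among the $\Des(v)$), each piece covering the corresponding leaves, so summing the inductive bound on each child yields $\sum_{v \in R'} C_2(v) \ge \sum_{v \in \Ch(u)} C_2(v) \ge C_2(u)$, where the last inequality is again the defining minimum.

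The main obstacle will be the monotonicity step in (e). It is not immediate from the recursive definition, because moving from a child to its parent is a minimum whose second argument can shrink; the argument succeeds only because both arguments of the minimum remain at least $C_2(u)$ at every ancestor, and verifying this requires the simultaneous bound on $C_1$ and on the child sum described above.
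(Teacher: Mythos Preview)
Your proposal is correct and matches the paper's approach for parts (a)--(d) essentially line for line. For part (e) the paper simply states that it ``follows directly from the definitions,'' whereas you supply a complete argument via the ancestor-monotonicity of $C_2$ and structural induction; this is the natural way to unpack that assertion, and your two-stage reduction (first handling ancestors of $u$ in $R'$, then inducting on the subtree) is sound.
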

\begin{proof}

\begin{itemize}
\item To see statement (a), assume there exists some $i$ such that $|D \cap \trS{i}| > c \times 2^i$. Then, $\sum_{u \in D} C_1(u) > \frac{c \times 2^i}{2^i} = c$, which contradicts the assumption.

\item To see statement (b), the first inequality follows from $C_2(u) \leq C_1(u)$ for any $u\in \trU$. Then,
$$\sum_{u \in D^*} C_1(u) = \sum_{j = 1}^T \sum_{u \in \trS{j}} C_1(u) \leq \sum_{j = 1}^T \frac{1}{2^j} \times (c \times 2^j) = c \times T.$$

\item To prove statement (c), 
$$\sum_{u \in D^{*}, \Lev(u)\le T_0, C_2(u) \leq \frac{1}{\log n}} C_2(u)\le \sum_{1 \leq j \leq T_0} \frac{c \times 2^j}{\log n} \leq \frac{2c \log\log n}{\log n} \leq c.$$

\item To prove statement (d), $C_1(u) = \frac{1}{2^j} \geq \frac{1}{2^{T_0}} = \frac{1}{\log \log n} > \frac{1}{\log n}$. It follows that $C_2(u) \neq C_1(u)$

\item (e) follows directly from the definitions.
\end{itemize}
\end{proof}

\begin{algorithm}[H]
\caption{Partial Enumeration Algorithm for Tree cover Problem} \label{Algo-Enum-Tree}
\KwData{A \LBOtreecov\ problem with input $\treta=(\trU;\trS{1},\trS{2},\ldots,\trS{T};G)$, where $G=(V,E,r)$ is a rooted tree}
\KwResult{A partial solution set $X\subseteq 2^{\trS{1}}\times \cdots 2^{\trS{T_0}}$}
\SetKwFunction{FRecursion}{PartEnumTree}
\SetKwProg{Fn}{Function}{:}{}
\Fn{\FRecursion{$P$,$D$}}{
    \If{$\left(\sum_{v\in D} C_2(v)\right)+\left(\sum_{u\in P} C_2(u)\right)> 2c_0T$ \textbf{or} $\exists i\le T_0, |D\cap \trS{i}|>2c_0\cdot 2^{i}$}{
        \Return $\emptyset$
    }
    \If{$P=\emptyset$ \textbf{or} $\forall u\in P, \Lev(u)>T_0$}{
        \Return $\lbrace D\rbrace$\;
    }
    Select $u$ from $P$ with minimum $\Lev(u)$\;
    $X'\gets \emptyset$\;
    \If{$u\not \in Leaf(G)$}{
        $X'\gets X'\cup$ \FRecursion{$(P\setminus \lbrace u\rbrace)\cup \Ch(u),D$}\;
    }
    \If{$C_2(u)>\frac{1}{\log n}$}{
        $X'\gets X'\cup $\FRecursion{$P\setminus \lbrace u\rbrace,D\cup \lbrace u\rbrace$}\;
    }
    \Return $X'$\;
}
$X'\gets$ \FRecursion{$Ch(r)$,$\emptyset$}\;
\Return $X=\lbrace (D\cap \trS{1},D\cap \trS{2},\cdots, D\cap \trS{T_0}) : D\in X'\rbrace$
\end{algorithm}
We employ a depth-first search (DFS) strategy to explore most of the states in the search space. During the search process, we maintain two sets:
\begin{itemize}
    \item \( P\subseteq \trU \), representing the set of candidate elements that can still be explored, i.e., \( \Des(P) \) contains all uncovered leaves.
    \item \( D\subseteq \trU \), storing the elements that have already been selected as part of the partial solution.
\end{itemize}
Initially, $P=\Ch(r)$ is the child set of the root, and $D=\emptyset$. 
At each recursive step, we select \( u\in P \) with the smallest group index. The recursion proceeds by exploring two possibilities:
\begin{enumerate}
    \item Adding \( u \) to the partial solution, i.e., including \( u \) in \( D \) and continuing the search.
    \item Excluding \( u \) from the partial solution, i.e., replacing \( u \) with its child nodes while keeping \( D \) unchanged. (If \( u \) is a leaf, this option is not applicable.)
\end{enumerate}

The search terminates when \( (P, D) \) fails to satisfy at least one of the following conditions:
\begin{enumerate}
\item $\exists u\in P$, $\Lev(u)\le T_0$
\item \( \forall u\in D, \quad C_2(u) > \frac{1}{\log n} \)
\item \( \left(\sum_{v \in D} C_2(v)\right) + \left(\sum_{u \in P} C_2(u)\right) \le 2c_0 T \)
\item \( \forall 1\le i \leq T_0, \quad |D \cap \trS{i}| \le 2c_0 \cdot 2^i \)
\end{enumerate}

\begin{lemma}\label{lem:tree-6}
For any sets $D, P \subseteq \trU$ satisfying the following conditions:

\begin{itemize}
\item $D \cap \Des(P) = \emptyset$,

\item $\forall u, v \in P$, $\Des(u) \cap \Des(v) = \emptyset$,

\item $\forall u, v \in D$, $\Des(u) \cap \Des(v) = \emptyset$,

\end{itemize}
If $\left(\sum_{v \in D} C_2(v)\right) + \left(\sum_{u \in P} C_2(u)\right) > c_0 T$ for some constant $c_0 \geq 1$, then there is no $c_0$-valid solution $D'$ satisfying $D' \subseteq D \cup \Des(P)$.
\end{lemma}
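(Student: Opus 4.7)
My plan is to prove the contrapositive: assuming a $c_0$-valid solution $D' \subseteq D \cup \Des(P)$ exists, I will show $\sum_{v\in D} C_2(v) + \sum_{u\in P} C_2(u) \leq c_0 T$. The starting point is \cref{lem:tree-1}(b) applied to $D'$, which gives $\sum_{w\in D'} C_2(w) \leq \sum_{w\in D'} C_1(w) \leq c_0 T$. Hence the task reduces to a ``charging'' inequality $\sum_{v\in D\cup P} C_2(v) \leq \sum_{w\in D'} C_2(w)$.

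The key structural observation is that the elements of $D\cup P$ root pairwise disjoint subtrees. The two antichain conditions make $D$ and $P$ antichains individually, and $D\cap\Des(P)=\emptyset$ rules out any $D$-node being a descendant of a $P$-node. Combined with the algorithmic invariant maintained by \cref{Algo-Enum-Tree} (the context in which this lemma is invoked) that no $P$-node is a descendant of a $D$-node either, one obtains $\Des(v)\cap\Des(v')=\emptyset$ for all distinct $v,v'\in D\cup P$. For each $v\in D\cup P$, define $D'_v := D'\cap\Des(v)$; these sets are then pairwise disjoint and all contained in $D'$.

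To apply \cref{lem:tree-1}(e) to each $D'_v$, I will show $D'_v$ covers every leaf $\ell\in\Leaf(G)\cap\Des(v)$. Since $D'$ is feasible, some $w\in D'\subseteq D\cup\Des(P)$ satisfies $\ell\in\Des(w)$, so $w$ lies on the $r$-to-$\ell$ path. A short case analysis rules out $w$ being a strict ancestor of $v$: if $w\in D$, then both $w$ and $v$ sit on the $r$-to-$\ell$ path and either the $D$-antichain (when $v\in D$) or the invariant $\Des(D)\cap\Des(P)=\emptyset$ (when $v\in P$) is violated; if $w\in\Des(P)$, say $w\in\Des(p)$ for some $p\in P$, then $v\in\Des(w)\subseteq\Des(p)$, contradicting $D\cap\Des(P)=\emptyset$ (when $v\in D$) or the $P$-antichain (when $v\in P$). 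Hence $w\in\Des(v)$, so $w\in D'_v$; thus $D'_v$ covers all leaves of $\Des(v)$, and \cref{lem:tree-1}(e) yields $\sum_{w\in D'_v} C_2(w)\geq C_2(v)$. Summing over $v\in D\cup P$ and using disjointness of the $D'_v$'s,
\[
\sum_{v\in D\cup P} C_2(v) \leq \sum_{v\in D\cup P}\sum_{w\in D'_v} C_2(w) \leq \sum_{w\in D'} C_2(w) \leq c_0 T,
\]
contradicting the hypothesis.

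The main obstacle will be the case analysis establishing $w\in\Des(v)$: the natural charge $D'_v := D'\cap\Des(v)$ must actually collect enough $C_2$-mass to account for $C_2(v)$, and this depends on the disjoint-subtree structure of $D\cup P$. This structure is slightly stronger than what the three listed conditions alone give, and it crucially leverages the invariant $\Des(D)\cap\Des(P)=\emptyset$ preserved by \cref{Algo-Enum-Tree}; once that is granted, the containment $D'\subseteq D\cup\Des(P)$ pins $w$ into $\Des(v)$ via the antichain properties.
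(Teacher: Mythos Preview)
Your proposal is correct and follows essentially the same approach as the paper: assume a $c_0$-valid $D'\subseteq D\cup\Des(P)$ exists, invoke \cref{lem:tree-1}(b) to bound $\sum_{w\in D'}C_2(w)\le c_0T$, and then use \cref{lem:tree-1}(e) together with the fact that $D'$ restricted to each relevant subtree still covers all leaves there to push the sum $\sum_{v\in D}C_2(v)+\sum_{u\in P}C_2(u)$ below $\sum_{w\in D'}C_2(w)$. The only organizational difference is that the paper first argues $D\subseteq D'$ (so the $D$-part is accounted for directly) and then applies \cref{lem:tree-1}(e) once to $D'\cap\Des(P)$, whereas you treat $D$ and $P$ uniformly via the sets $D'_v=D'\cap\Des(v)$; both amount to the same charging.

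One point worth highlighting: you correctly observe that the three conditions listed in the lemma statement do not by themselves rule out a $P$-node lying strictly below a $D$-node, and you explicitly invoke the invariant maintained by \cref{Algo-Enum-Tree} to close this gap. The paper's proof needs the same fact (both for ``to cover descendants of $D$, we must have $D\subseteq D'$'' and for the application of \cref{lem:tree-1}(e) to $D'\cap\Des(P)$) but leaves it implicit. Your version is therefore slightly more careful on this point.
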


\begin{proof}
Suppose $D' \subseteq D \cup \Des(P)$ is a $c_0$-valid solution. Then, to cover descendants of $D$, we must have $D\subseteq D'$. Thus

$$c_0\times T\ge \sum_{v\in D'} C_2(v)=\left(\sum_{v\in D\cap D'} C_2(v)\right)+\left(\sum_{u\in D'\cap \Des(P)}C_2(u)\right)\ge \left(\sum_{v\in D} C_2(v)\right)+\left(\sum_{u\in P}C_2(u)\right)$$

The first inequality follows from \cref{lem:tree-1} (a), and the last inequality follows from \cref{lem:tree-1} (e). These inequalities contradict the assumptions,  so $D'$ cannot be a $c_0$-valid solution.

\end{proof}

\begin{proofofthm}{\ref{thm:tree-Enum}}

We ignore all nodes $u \in \trS{j}$ $(j \leq T_0)$ with $C_2(u) \leq \frac{1}{\log n}$. By \ref{lem:tree-1} (d) and the definition of $R(u)$, $\forall v\in R(u)$, $v$ cannot be ignored, and $v$ isn't in the first $T_0$ groups. Assume $D^{*}$ is a $c_0$-valid solution that $\forall u,v\in D^{*},u\neq v$, $\Des(u)\cap \Des(v)=\emptyset$. If $D^{*}$ contains an ignored node $u$, we replace $D^{*}$ by $(D^{*}-\lbrace u\rbrace)\cup R(u)$. Repeating the process, we can get another set $D'$ from $D^{*}$, and $D'$ covers all leaves. 

Formally, denote $I=\lbrace u\in \trU: \Lev(u)\le T_0,C_2(u)\le \frac{1}{\log n}\rbrace$, then $D'=(D^{*}-I)\cup R(I\cap D^{*})$ for $R(I\cap D^{*})=\cup_{u\in (I\cap D^{*})}R(u)$. Also,

$$\sum_{v\in R(I\cap D^{*})} C_2(v)\le \sum_{u\in (I\cap D^{*})} \sum_{v\in R(u)} C_2(v)=\sum_{u\in (I\cap D^{*})} C_2(u)\le c$$

The last inequality is due to \cref{lem:tree-1} (c). Then for all $1\le j\le T$, 

$$|D'\cap \trS{j}|\le |(D^{*}-I)\cap \trS{j}|+|R(I\cap D^{*})\cap \trS{j}|\le (c_0\times 2^j)+(c_0\times 2^j)=2c_0\times 2^j$$

$|R(I\cap D^{*})\cap \trS{j}|\le c_0\cdot 2^j$ follows from \cref{lem:tree-1} (a). Therefore, $D'$ is a $2c_0$-valid solution. Without loss of generality, we assume there is a $2c_0$-valid solution $D''$ without ignored nodes, satisfying $\forall u, v \in D''$, $\Des(u) \cap \Des(v) = \emptyset$. Because of \cref{lem:tree-6}, the boundary cases won't remove the branches with $2c_0$-valid solutions. If we simulate the process of \texttt{PartEnumTree}, whenever we try to decide whether to choose $u$, there must exist a branch in which we choose $u$ if and only if $u \in D''$. This means $D''$ is an extended solution for a partial solution in the output set $X$. Therefore, \cref{Algo-Enum-Tree} must output a partial solution with a $2c_0$-valid solution when there exists a $c_0$-valid solution. Now we only need to discuss its complexity.

For each $1 \leq i \leq T_0$, we focus on the number of branches when we decide on the vertices in $\trS{i}$. By ignoring the nodes $u$ with $C_2(u) \leq \frac{1}{\log n}$, due to \cref{lem:tree-1} (b), the number of nodes we need to consider is not greater than:

$$\left(2c_0\times T\right)/\left(\frac{1}{\log n}\right)\le 2c_0\times T^2$$

We only need to select at most $2c_0 \times 2^i$ elements, so the total number of branches is $\leq (1 + 2c_0 \cdot T^2)^{2c_0 \cdot 2^i}$. Therefore, the total number of branches is no more than:

$$\prod_{i=1}^{T_0} \left( 2c_0T^2+1 \right)^{2c_0\times 2^i}=\exp\left( \sum_{i=1}^{T_0} (2c_0\cdot 2^{i})\ln(1+2c_0T^2)\right)$$

$$=\exp\left(O(2^{T_0}\log\log n)\right)=\exp\left(O(\log\log^2 n)\right)=O(n)$$

Thus the algorithm can be done in polynomial time.

\end{proofofthm}

\subsection{Proofs Omitted in \cref{sec:Inter-Cover-3}}

For a given partial solution $(D_1, D_2, \dots, D_{T_0})$, we remove the vertices that have already been covered. Let $V_0=\left(\bigcup_{i=T_0+1}^{T} \trS{i}\right) \setminus \left(\bigcup_{i=1}^{T_0} \Des(D_i)\right)$ be the set of remaining nodes under this partial solution, and let $L_0=V_0\cap \Leaf(G)$ be the leaf set of $V_0$. In this subsection, we denote $T_1=\floor{\log\log n}$. We expect the rounding algorithm (\cref{Algo-Round-Tree}) to either find an integral solution in LP-Tree-Cover($4c_0+1$, $V_0$, $L_0$) or confirm that there is no integral solution in LP-Tree-Cover($2c_0$, $V_0$, $L_0$). Recall that we design the complete algorithm as follows:

\begin{enumerate}
\item Check if LP-Tree-Cover($2c_0$, $V_0$, $L_0$) has a feasible solution. If so, obtain an extreme point $x^{*}$. Otherwise, confirm that there is no such integral solution.

\item Remove the leaves $u$ with $x^{*}_u = 0$, and delete all the descendants of their parents. Then $\Par(u)$ becomes a leaf. Repeat this process until $x^{*}_u\neq 0$ for each leaf $u$. Let the modified node set and leaf set be $V_1$ and $L_1$, respectively.

\item For $u \in V_1$, attempt to round $x^{*}_u$. If $x^{*}_u \geq 1/2$, round it to $1$. If $x^{*}_u > 0$, and $u$ is not a leaf in $\trS{T_1+1} \cup \cdots \cup \trS{T}$, also round it to $1$. In all other cases, round $x^{*}_u$ to $0$. Let $D'$ be the set of nodes $u$ for which $x^{*}_u$ was rounded to $1$. Note that $D'$ may not cover $L_1$.

\item Remove all descendants in $D'$, and attempt to choose another set from $\trS{T_0+1} \cup \cdots \cup \trS{T_1}$ to cover all leaves. Formalize this objective as LP-Tree-Cover. Specifically, 
\begin{itemize}
\item $V_2 = (V_1 \setminus \Des(D')) \cap \{u\in \trU: T_0+1\le \Lev(u)\le T_1\}$, and

\item $L_2 = (V_2\cap L_1)\cup \{u \in V_2 :\exists v \in \Ch(u) \cap (V_1 \setminus V_2), (V_2 \setminus \Des(D')) \cap \Des(v) \cap L_1 \neq \emptyset\}$. 
\end{itemize}
To understand this, observe that \( V_2 \) consists of the nodes in the \((T_0+1)\)th to \( T \)th groups that remain uncovered. The set \( L_2 \) includes nodes in \( V_2 \) that are either leaves or have at least one uncovered child with a group index greater than \( T_0 \) (i.e., at least one descendant leaf remains uncovered).

Then solve LP-Tree-Cover($2c_0$, $V_2$, $L_2$). The fact that this problem must have feasible solutions is proved later, so we do not need to consider the case of no solution.

\item Let $x^{**}$ be an extreme point of LP-Tree-Cover($2c_0$, $V_2$, $L_2$). For each $u \in V_2$, round it to $1$ if and only if $x^{**}_u > 0$. Let $D'' = \{u \in V_2 : x^{**}_u > 0\}$, then $D''$ covers $L_2$.

\item Combine the three parts of the solution. That is, return $\left(\bigcup_{i=1}^{T_0} D_i\right) \cup D' \cup D''$.
\end{enumerate}

\begin{algorithm}[H]
    \caption{Rounding Algorithm for Interval Cover Problem} \label{Algo-Round-Tree}
    \KwData{A $(c,c_0)$-\LBOtreecov\ problem with input $\treta=(\trU;\trS{1},\ldots,\trS{T};G)$ and a partial solution $(D_1,D_2,\cdots D_{T_0})$. }
    \KwResult{A Solution $D$ or "No Solution"}
    Set $V_0\gets \left(\bigcup_{i=T_0+1}^{T} \trS{i}\right)\setminus \left( \bigcup_{i=1}^{T_0} \Des(D_i)\right)), L_0\gets V_0\cap \Leaf(G)$\;
    \If{LP-Tree-Cover($2c_0$,$V_0,L_0$) has no solution}{
        \Return "No Solution"\;
    }
    Solve LP-Tree-Cover($2c_0$,$V_0,L_0$) and obtain an extreme point $x^{*}$\;
    $L_1\gets L_0,V_1\gets V_0$\;
    \While{$\exists u\in L_1,x^{*}_u=0$}{
        $L_1\gets (L_1\setminus \Des(\Ch(\Par(u))))\cup \lbrace \Par(u)\rbrace$\;
        $V_1\gets V_1\setminus \Des(\Ch(\Par(u)))$
    }
    Set $D'\gets \lbrace v\in V_1\cap (\bigcup_{i=T_0+1}^{T} \trS{i}): x^{*}_v\ge 1/2 \rbrace \cup \lbrace v\in (V_1\setminus L_1)\cap (\bigcup_{i=T_1+1}^{T} \trS{i}) : x^{*}_v>0 \rbrace$\;
    $V_2\gets (V_1\setminus \Des(D')) \cap \left( \bigcup_{i=T_0+1}^{T_1} \trS{i}\right)$\;
    $L_2\gets \lbrace u\in V_2: u\in L_1 \text{ or } \exists v\in \Ch(u)\cap (V_1\setminus V_2), (V_2\setminus \Des(D'))\cap \Des(v)\cap L_1\not =\emptyset\rbrace$\;
    Solve LP-Tree-Cover($4c_0$,$V_2$,$L_2$), and obtain basic feasible solution $x^{**}$\;
    $D''=\lbrace u\in V_2 : x^{**}_{u}>0\rbrace$\;
    \Return $\left(\bigcup_{i=1}^{T_0} D_i\right)\cup D'\cup D''$
\end{algorithm}

\begin{lemma}\label{lem:tree-round-1}
In \cref{Algo-Round-Tree}, $x^{*}\assigned{V_1}$ is an extreme point of LP-Tree-Cover($2c_0$, $V_1$, $L_1$).
\end{lemma}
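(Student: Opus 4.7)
The plan is to reduce extremality of $x^*\assigned{V_1}$ in LP-Tree-Cover($2c_0,V_1,L_1$) to extremality of $x^*$ in LP-Tree-Cover($2c_0,V_0,L_0$) via a zero-padding argument. The structural heart of the proof is the claim that every vertex deleted during the pruning step already has $x^*_v=0$; granted this, both feasibility of $x^*\assigned{V_1}$ for the new LP and the lift of any candidate decomposition back to the old LP follow cleanly.

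I would first establish the structural claim: every $v\in V_0\setminus V_1$ satisfies $x^*_v=0$, by induction on the sequence of pruning steps. In a step where a current leaf $w$ with $x^*_w=0$ triggers removal of all descendants of $\Par(w)$ in $G$, fix any removed $v\in V_0$ strictly below $\Par(w)$. Such a $v\in V_0$ necessarily has an original-leaf descendant $\ell\in L_0$: otherwise every leaf of $v$'s subtree would be covered by some $d\in\bigcup_{i\le T_0}D_i$, but since $d$ lies on the same ancestor-chain as $v$, this forces either $v\in\Des(d)$ or $\Lev(v)<\Lev(d)\le T_0$, each of which contradicts $v\in V_0$. Subtracting the feasibility equation for $w$ from the one for $\ell$---their common $V_0$-ancestors are precisely $\Anc(\Par(w))\cap V_0$---gives $\sum_{v'\in(\Anc(\ell)\setminus\Anc(\Par(w)))\cap V_0}x^*_{v'}=x^*_w=0$, and non-negativity forces each summand, in particular $x^*_v$, to vanish. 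Feasibility of $x^*\assigned{V_1}$ for the new LP then follows: cardinality and non-negativity constraints are trivially preserved because the removed variables contribute zero, and for each $u'\in L_1$ one either inherits the equation verbatim (if $u'\in L_0\cap L_1$, noting that ancestors of $u'$ are never pruned) or picks $\ell\in L_0\cap\Des(u')$ and uses the structural claim to collapse $\ell$'s equation down to $\sum_{v\in\Anc(u')\cap V_1}x^*_v=1$.

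For extremality, suppose $x^*\assigned{V_1}=\lambda y+(1-\lambda)z$ with $y,z$ feasible for the new LP and $\lambda\in(0,1)$, and extend $y,z$ to $\tilde y,\tilde z$ on $V_0$ by setting every coordinate in $V_0\setminus V_1$ to $0$. Non-negativity and cardinality constraints of the old LP are immediate; for the feasibility constraint at $u\in L_0$, if $u\in L_1$ it is inherited, and otherwise let $u'\in L_1$ be the unique first non-removed ancestor of $u$: every vertex on the path from $u$ up to (but not including) $u'$ was pruned and hence lies in $V_0\setminus V_1$, giving $\Anc(u)\cap V_1=\Anc(u')\cap V_1$ and thus $\sum_{v\in\Anc(u)\cap V_0}\tilde y_v=\sum_{v\in\Anc(u')\cap V_1}y_v=1$. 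Hence $\tilde y,\tilde z$ are feasible for LP-Tree-Cover($2c_0,V_0,L_0$); by the structural claim they agree with $x^*$ on $V_0\setminus V_1$ (both vanish there), so $\lambda\tilde y+(1-\lambda)\tilde z=x^*$, and extremality of $x^*$ forces $\tilde y=\tilde z=x^*$ and hence $y=z=x^*\assigned{V_1}$. The main obstacle I anticipate is the clean handling of cascading pruning when a new leaf $\Par(w)$ itself lies outside $V_0$ (group index $\le T_0$) and thus has no $x^*$-value of its own: the induction must be phrased so that at each stage the vanishing of the triggering leaf's value propagates to all subtree vertices in $V_0$, regardless of whether the leaves subsequently added to $L_1$ belong to $V_0$.
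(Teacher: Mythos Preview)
Your proposal is correct and follows essentially the same approach as the paper: first argue that every vertex removed during the pruning loop has $x^*_v=0$, then transfer extremality from the old LP to the new one. The paper compresses your convex-combination/zero-padding argument into a single invocation of Lemma~\ref{lem:LinearProg-2}, whereas your version spells out explicitly why the feasibility constraints of LP-Tree-Cover($2c_0,V_1,L_1$) coincide with those of the original LP once the removed variables are fixed to zero; the content is the same.
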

\begin{proof}
We know $x^{*}$ is an extreme point of LP-Tree-Cover($2c_0$,$V_0$, $L_0$). Consider the loop in lines 6–8. If $x^{*}_u = 0$ for some $u \in L_1$, then $\sum_{v \in \Anc(\Par(u))} x^{*}_v = 1 - x^{*}_u = 1$, and $\forall v \in \Ch(\Par(u)), x^{*}_v = 0$. LP-Tree-Cover($2c_0$,$V_1$,$L_1$) just assigns some variables to $0$ from LP-Tree-Cover($2c_0$,$V_0$,$V_1$). 
Therefore, by \cref{lem:LinearProg-2}, $x^{*}\assigned{V_1}$ is also an extreme point of LP-Tree-Cover($2c_0$, $V_1$, $L_1$).
\end{proof}

\begin{lemma}\label{lem:tree-round-2}
In \cref{Algo-Round-Tree}, if there exists a solution $x^{*}$ for LP-Tree-Cover($2c_0$, $V_0$, $L_0$), then the fractional solution set of LP-Tree-Cover($4c_0$, $V_2$, $L_2$) is non-empty.
\end{lemma}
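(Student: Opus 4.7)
My plan is to produce a feasible fractional solution for LP-Tree-Cover($4c_0$, $V_2$, $L_2$) simply by restricting $x^*$ to the coordinates indexed by $V_2$. Concretely, I will set $\bar{x}_v := x^*_v$ for every $v \in V_2$ and verify both the budget constraints and the feasibility (covering) constraints. The factor-two slack from $2c_0$ to $4c_0$ in the target LP is deliberately left unused here; it is reserved for the rounding argument in the next lemma rather than for re-scaling at this step.

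The budget constraints are immediate. By Lemma~\ref{lem:tree-round-1}, $x^*|_{V_1}$ is feasible for LP-Tree-Cover($2c_0$, $V_1$, $L_1$), so for every $T_0+1 \le i \le T$ we have $\sum_{v \in \trS{i} \cap V_1} x^*_v \le 2c_0 \cdot 2^i$. Since $V_2 \subseteq V_1$, this gives $\sum_{v \in \trS{i} \cap V_2} \bar{x}_v \le 2c_0 \cdot 2^i \le 4c_0 \cdot 2^i$ for $T_0+1 \le i \le T_1$, and for $i > T_1$ we have $\trS{i} \cap V_2 = \emptyset$ by definition of $V_2$, so the constraint is vacuous.

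The feasibility constraints rest on the structural observation that $\Anc(u) \cap V_1 \subseteq V_2$ for every $u \in V_2$. Indeed, since $\Lev(u) \le T_1$, any ancestor of $u$ has group strictly smaller than $T_1$, so the ancestors that lie in $V_1$ (which requires group $\ge T_0+1$) automatically fall in the allowed range $[T_0+1, T_1]$ for $V_2$. Moreover, $u \notin \Des(D')$ forces no ancestor of $u$ to lie in $\Des(D')$ either, because $\Des$ is downward-closed on the tree. Hence every ancestor of $u$ in $V_1$ satisfies both defining conditions of $V_2$. Combined with the LP-Tree-Cover($2c_0$, $V_1$, $L_1$) feasibility constraint $\sum_{v \in \Anc(u) \cap V_1} x^*_v = 1$ valid for every $u \in L_1$, this yields $\sum_{v \in \Anc(u) \cap V_2} \bar{x}_v = 1$ for every $u \in V_2 \cap L_1$.

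The main obstacle is handling the second piece of $L_2$, namely nodes $u \in V_2$ possessing a child $v \in \Ch(u) \cap (V_1 \setminus V_2)$ with $(V_2 \setminus \Des(D')) \cap \Des(v) \cap L_1 \neq \emptyset$. I expect to dispose of this case by showing it is vacuous: any $v \in V_1 \setminus V_2$ is either (i) in $\Des(D')$, in which case $\Des(v) \subseteq \Des(D')$ and therefore disjoint from $V_2$, or (ii) in some group $[T_1+1, T]$, in which case every descendant of $v$ has group strictly greater than $T_1$ and thus cannot lie in $V_2$ either. Either way $\Des(v) \cap V_2 = \emptyset$, so no such $u$ exists and $L_2 = V_2 \cap L_1$. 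The feasibility argument of the previous paragraph then applies to every member of $L_2$, completing the proof. The bookkeeping in this case analysis, especially sorting out the interaction between the downward-closed set $\Des(D')$ and the narrow group window $[T_0+1,T_1]$ defining $V_2$, is the delicate step.
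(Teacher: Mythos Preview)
Your argument is correct for the statement exactly as written, and in fact simpler than the paper's: you verify the equality constraints directly with $\bar{x}=x^*|_{V_2}$, whereas the paper scales to $\bar{x}=2x^*|_{V_2}$, checks only $\ge 1$, and then runs an iterative reduction procedure to restore equality. Your key step, showing that the second clause in the definition of $L_2$ is vacuous, is sound: if $v\in\Ch(u)\cap(V_1\setminus V_2)$ then either $v\in\Des(D')$, forcing $\Des(v)\cap V_2=\emptyset$, or $\Lev(v)>T_1$, forcing the same conclusion since descendants have strictly larger group index. Either way $(V_2\setminus\Des(D'))\cap\Des(v)\cap L_1=\emptyset$.

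However, you should be aware that this vacuousness is almost certainly due to a typo in the paper. Both the informal description of $L_2$ (``at least one descendant leaf remains uncovered'') and the paper's own proof treat the witness $w$ as lying in $(V_1\setminus\Des(D'))\cap\Des(v)\cap L_1$, not $(V_2\setminus\Des(D'))\cap\Des(v)\cap L_1$; the paper's computation $\sum_{z\in\Anc(u)}\bar{x}_z\ge 2(1-x^*_w)$ only makes sense under that reading. With the intended definition the second clause is no longer vacuous: take a leaf $w\in L_1$ with $\Lev(w)>T_1$ and $0<x^*_w<1/2$, whose strict ancestors with $\Lev>T_1$ all have $x^*$-value zero (otherwise they would lie in $D'$), and whose lowest ancestor $u$ with $\Lev(u)\le T_1$ lies in $V_2$. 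Then $u$ lands in the second clause, and your restriction gives only $\sum_{z\in\Anc(u)\cap V_2}x^*_z=1-x^*_w$, possibly barely above $1/2$. The paper's doubling turns this into $\ge 1$, and that is precisely why the target LP carries $4c_0$ rather than $2c_0$; your remark that the factor-two slack is ``deliberately left unused here'' would be incorrect under the intended reading.
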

\begin{proof}
First, consider $\overline{x} = 2 x^{*}\assigned{V_2}$. Clearly, $\overline{x}$ satisfies the cardinality constraints: $\sum_{v \in \trS{i}} \overline{x}_v \leq 4c_0 \cdot 2^i$ for all $T_0 + 1 \leq i \leq T$. We now show that $\sum_{v \in \Anc(u)} \overline{x}_v \geq 1$ for all $u \in L_2$. We consider the cases according to the construction of $L_2$:
\begin{itemize}
\item If $u \in L_1$, LP-Tree-Cover($2c_0$, $V_1$, $L_1$) has the constraint $\sum_{v \in \Anc(u)} x^{*}_v = 1$. So $\overline{x}$ satisfies this inequality.

\item If $u \notin L_1$ and $\exists v \in \Ch(u) \cap (V_1 \setminus V_2)$ such that $(V_2 \setminus \Des(D')) \cap \Des(v) \cap L_1 \neq \emptyset$, then choose $w \in (V_2 \setminus \Des(D')) \cap \Des(v) \cap L_1$. Since $w \notin \Des(D')$, we have $\sum_{z \in \Anc(\Par(w)) \setminus \Anc(u)} x^{*}_z = 0$. Therefore,

$$\sum_{z \in \Anc(u)} \overline{x}_z = \left(\sum_{z \in \Anc(w)} \overline{x}_z\right) - \left(\sum_{z \in \Anc(w) \setminus \Anc(u)} \overline{x}_z\right) \geq 2(1 - x^{*}_w) \geq 1.$$
\end{itemize}
Combining these results, we have:

\begin{equation}
\begin{aligned}
\sum_{v \in \Anc(u) \cap V_2} \overline{x}_v \geq 1 && & \forall u \in L_2 \\
\sum_{v \in \trS{i} \cap V} \overline{x}_v \leq 4c_0 \cdot 2^i && & \forall T_0 + 1 \leq i \leq T \\
\overline{x}_v \geq 0 && & \forall v \in V_2
\end{aligned}
\end{equation}

Next, we adjust $\overline{x}$ to satisfy the constraints $\sum_{v \in \Anc(u) \cap V_2} \overline{x}_v = 1$ for all $u \in L_2$. We perform the following steps iteratively:
\begin{enumerate}
\item Choose $u \in L_2$ with the maximum $\overline{x}_u$. If $\overline{x}_u = 1$, then terminate the loop.

\item Find the ancestor $v \in \Anc(u) \cap V_2$ with the highest group index such that $\overline{x}_v > 0$.

\item Decrease $\overline{x}_v$ until either the constraint for $u$ is satisfied or $\overline{x}_v = 0$, i.e., set $\overline{x}_v \gets \overline{x}_v - \min \{\overline{x}_v, \overline{x}_u - 1\}$. Then return to step 1.
\end{enumerate}

In each iteration, either a non-zero variable is set to zero or a slack constraint becomes tight. Therefore, this loop terminates. For an iteration with $u \in L_2$ and its ancestor $v$, the adjustments affects the constraints for $w \in \Des(v) \cap L_2$. Since $v$ is the ancestor with the highest group index, we have $\sum_{z \in \Anc(w) \cap V_2} \overline{x}_z \geq \sum_{z \in \Anc(v) \cap V_2} \overline{x}_z = \sum_{z \in \Anc(u) \cap V_2} \overline{x}_z$ throughout the process. Therefore, after the loop terminates, $\sum_{v \in \Anc(u) \cap V_2} \overline{x}_v = 1$ for all $u \in L_2$. Thus, after these operations, $\overline{x}$ is a feasible solution for LP-Tree-Cover($4c_0$, $V_2$, $L_2$), and this linear program has a non-empty solution set.
\end{proof}

\begin{proofoflem}{\ref{lem:tree-round-3}}
According to \cref{Algo-Round-Tree}, it is clear that $D' \cup D''$ covers all leaves in $V_0$, i.e., $L_0 \subseteq \Des(D') \cup \Des(D'')$. Therefore, $D = \left(\bigcup_{i=1}^{T_0} D_i\right) \cup D' \cup D''$ covers all leaves in the original tree.

If the input partial solution has a $2c_0$-valid extended solution, by simply deleting the nodes with chosen ancestors in this solution, it can be a $2c_0$-valid solution for the \ref{LP-Tree-Cover}. So LP-Tree-Cover($2c_0$, $V_0$, $L_0$) has feasible solutions. In the linear program LP-Tree-Cover($2c_0$,$L_1$, $V_1$), $x^{*}_u > 0$ for all $u \in L_1$. The number of feasibility constraints, $\sum_{v \in \Anc(u)} x_v = 1$, equals the number of leaves. According to  \cref{lem:extreme:points}, the number of non-zero, non-leaf variables is no more than the number of cardinality constraints. The number of cardinality constraints is $T - T_0 \leq \log n$, so we obtain the following inequality:

$$|\{u \in V_1 \setminus L_1 : x^{*}_u > 0\}| \leq \log n.$$  

For $T_1 + 1 \leq i \leq T$, we choose node $u \in V_1$ if and only if $x^{*}_u \geq 1/2 \lor (x^{*}_u > 0 \land u \in L)$. Then,

$$|D \cap \trS{i}| \leq \left( 2 \sum_{u \in V_1 \cap \trS{i}} x^{*}_u \right) + \log n \leq 2 \cdot 2c_0 \cdot 2^i + 2^{T_1 + 1} \leq (4c_0 + 1) \cdot 2^i.$$

For $T_0 + 1 \leq i \leq T_1$, by \cref{lem:tree-round-2}, we can solve LP-Tree-Cover($4c_0$,$V_2$, $L_2$) and obtain an extreme point $x^{**}$. We perform the same steps as when deriving $(V_1, L_1)$ from $(V_0, L_0)$. We now derive $(V_3, L_3)$ from $(V_2, L_2)$ so that $x^{**}$ corresponds to an extreme point of LP-Tree-Cover($4c_0$, $V_3$, $L_3$), and $x^{**}_u > 0$ for all $u \in L_3$. Then the number of non-zero, non-leaf nodes is no more than the number of cardinality constraints, and here the number of effective cardinality constraints is no more than $T_1 - T_0\le \log\log n$. Therefore, for $T_0 + 1 \leq i \leq T$,

$$|D \cap \trS{i}| \leq 2 \times 2c_0 \times 2^i + \log\log n \leq 4c_0 \times 2^i + 2^{T_0+1} \leq (4c_0 + 1) \times 2^i.$$

Combining these results, we have $|D \cap \trS{i}| \leq (4c_0 + 1) \times 2^i$ for all $i = 1, 2, \dots, T$, and $D$ is a $(4c_0 + 1)$-valid solution.
\end{proofoflem}
\begin{proofofthm}{\ref{thm:tree-solve}}
We first run \cref{Algo-Enum-Tree} to obtain a partial solution set $X$, and then run \cref{Algo-Round-Tree} for each partial solution in $X$. If \cref{Algo-Round-Tree} returns a $(4c_0 + 1)$-valid solution, then we output it directly. Otherwise, if \cref{Algo-Round-Tree} always returns "No Solution," then we confirm that there is no $c_0$-valid solution.

According to \cref{thm:tree-Enum}, \cref{Algo-Enum-Tree} runs in polynomial time, and clearly, \cref{Algo-Round-Tree} can be done in polynomial time. Therefore, the whole algorithm can be done in polynomial time.

If there exists a $c_0$-valid solution, then \cref{Algo-Enum-Tree} outputs a partial solution with a $2c_0$-valid extended solution, and by \cref{lem:tree-round-3}, \cref{Algo-Round-Tree} returns a $(4c_0 + 1)$-valid solution based on this partial solution. Therefore, this algorithm can solve the $(4c_0 + 1, c_0)$-\LBOtreecov.
\end{proofofthm}
\begin{proofofthm}{\ref{thm:inter-cover}}
Here we combine the conclusions of \cref{thm:inter-to-tree} and \cref{thm:tree-solve}. We first convert the interval cover problem to a tree cover problem. If the original problem has a $c_0$-valid solution, then the tree cover problem has a $8c_0$-valid solution. Then, we solve the $(4 \cdot 8c_0 + 1, 12c_0)$-\LBOtreecov\ to get a $(32c_0 + 1)$-valid solution. Finally, we convert this to an interval cover solution, which is $3(32c_0 + 1)$-valid.
\end{proofofthm}

\section{An $O(\log n)$-Approximation for Min-norm Set Cover} \label{sec:set-cover}

We denote the norm minimization problem for set cover as \minnormsetcov.
We present a simple algorithm based on randomized rounding and show 
it is an $O(\log n)$-Approximation. Note this is optimal assuming $P\ne NP$.

By Theorem~\ref{thm:equivalence}, we only need to consider the \LBO\ version of set cover (\LBOsetcov\ ).
We use the following natural LP relaxation:

\begin{equation}
\begin{aligned}
\min && 0 && & \\
s.t. && \sum_{v\in B(u)} x_v\ge 1 && & \forall u\in A\\
     && x_v\ge 0 && & \forall v\in Q\\
	 && \sum_{v\in S_i} x_v\le 2^i && & \forall 1\le i\le T\\
\end{aligned}
\end{equation}

Here $A$ is the set of all elements, $Q$ is the set of all sets and $B(u)$ is the sets that contain $u$ for $u\in A$.

\begin{theorem}
\label{thm:setcover}
Assume that there is a 1-valid solution for \LBOsetcov. Then there is a randomized algorithm that can find an $O(\log n)$-valid solution for \LBOsetcov\ with probability larger than 1/2. 
\end{theorem}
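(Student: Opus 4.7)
The plan is to prove Theorem~\ref{thm:setcover} by a standard randomized rounding of the natural LP above, then use Chernoff to show the budget constraints are met up to an $O(\log n)$ factor while a union bound handles coverage. Since a $1$-valid integral solution is assumed, the LP relaxation is feasible, so we can obtain a fractional solution $\{x_v\}_{v\in Q}$ in polynomial time. Fix an absolute constant $c$ (say $c = 10$) and, independently for each $v\in Q$, include $v$ in the output $D$ with probability $p_v := \min\{1, c\log n \cdot x_v\}$.

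For feasibility (coverage of every element), consider any $u\in A$. If some $v\in B(u)$ already has $c\log n \cdot x_v \ge 1$, then $u$ is covered with probability $1$. Otherwise $p_v = c\log n \cdot x_v$ on all of $B(u)$, and using $\sum_{v\in B(u)} x_v \ge 1$ together with $1-z\le e^{-z}$ gives
\[
\Pr[u \text{ uncovered}] \le \prod_{v\in B(u)} (1-p_v) \le \exp\bigl(-c\log n \cdot \textstyle\sum_{v\in B(u)} x_v\bigr) \le n^{-c}.
\]
A union bound over the $n$ elements shows $\Pr[D \text{ not a cover}] \le n^{1-c} \le 1/4$ for $c$ large enough.

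For the logarithmic budget constraints, fix a group $S_i$ and let $X_i = |D\cap S_i| = \sum_{v\in S_i} \mathbf{1}[v \in D]$, a sum of independent Bernoullis with mean $\mu_i = \sum_{v\in S_i} p_v \le c\log n \cdot 2^i$. Setting $\nu_i = 6c\log n \cdot 2^i \ge 6\mu_i$, Lemma~\ref{lemma:chernoff} yields $\Pr[X_i \ge \nu_i] \le 2^{-\nu_i} \le 2^{-6c\log n}$. Union bounding over the $T=\lceil\log n\rceil$ groups, the total probability that any group exceeds $6c\log n \cdot 2^i$ is at most $T\cdot 2^{-6c\log n} \le 1/4$. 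Hence with probability at least $1/2$ the random set $D$ is simultaneously a set cover and an $O(\log n)$-valid solution, proving the theorem.

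The only mildly subtle point I foresee is handling the case where some fractional $x_v$ are already large (so that $c\log n \cdot x_v > 1$), which is why the rounding probability is clipped at $1$; this same clipping only decreases $\mu_i$, so the Chernoff calculation is unaffected, and the coverage argument is still clean because a single fully-picked set in $B(u)$ trivially covers $u$. Combined with Theorem~\ref{thm:equivalence}, repeating this procedure $O(1)$ times to boost the success probability yields an $O(\log n)$-approximation for \minnormsetcov\ in polynomial time, matching the hardness of set cover up to a constant factor.
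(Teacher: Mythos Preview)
Your proof is correct and follows essentially the same approach as the paper: solve the natural LP, round each set independently with probability $\min\{1, c\log n \cdot x_v\}$, bound the failure probability for coverage via $1-z \le e^{-z}$ plus a union bound, and bound the budget overflow via Lemma~\ref{lemma:chernoff} plus a union bound over the $T$ groups. The only differences are in the choice of constants (the paper uses $c=2$ and a threshold of $4\cdot 2^i \log n$), which is immaterial.
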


\begin{proof}
As there is a 1-valid solution, there is a fractional solution for this LP. We consider the solution. For each $x_v$, it has probability $\min\{2x_v\log n, 1\}$ to be rounded to 1, and the remaining probability to be rounded to 0. So for each $u\in A$, we consider the probability that it is not covered. If there is a $v\in Q$ such that $2x_v\log n\geq 1$, it is surely covered. Otherwise, the probability that it is not covered is
$$\prod_{v\in B(u)}(1-2x_v\log n)\leq \prod_{v\in B(u)}e^{-2x_v\log n}\leq \frac{1}{n^2}.$$
So the probability that all elements are covered is at least $1-1/n$. For each $1\leq i\leq T$, we consider the probability $1_i$ that we select $2^{i+2}\log n$ elements in $S_i$. For each $e\in S_i$, consider $X_e$ with $p_i=\min\{2x_v\log n, 1\}$ indicating whether $x_e$ is rounded to 1. By Chernoff bounds (Lemma~\ref{lemma:chernoff}), when $\log n\geq 3$, $q_i$ is at most
$$q_i\leq 2^{-2^{i+1}\log n}\leq \frac{1}{n}.$$
So the total failure probability is at most
$$n\cdot\frac{1}{n^2}+\frac{T}{n}<1/2$$
for $n$ that is large enough.
\end{proof}
\newpage

\end{document}